\pdfoutput=1
\documentclass[hyperfootnotes=false,letterpaper]{article}
\usepackage[english]{babel}
\usepackage{cite}
\usepackage[utf8]{inputenc}
\usepackage{fullpage}
\usepackage{amsmath, amsfonts, amssymb, amsthm}
\usepackage{lmodern}

\usepackage{xcolor}
\definecolor{darkgreen}{rgb}{0,0.5,0}
\definecolor{darkblue}{rgb}{0,0,0.6}

\usepackage[linktocpage=true,backref=page,
pagebackref=true,colorlinks, linkcolor=darkblue,citecolor=darkgreen,urlcolor=blue,filecolor=magenta,bookmarksnumbered] {hyperref}

\usepackage[capitalize, nameinlink]{cleveref}

\usepackage{thm-restate}
\usepackage{tikz}
\usetikzlibrary{arrows.meta, bending, calc, backgrounds}

\usepackage[inline]{enumitem}

\usepackage{caption}
\usepackage{subcaption}
\usepackage{xspace}

\usepackage{mathtools} 

\input{defs}

\title{Range Queries for Permutation Pattern Matching:\\Separations, Equivalences, and Applications}
\author{Omri Ben-Eliezer\thanks{Faculty of Computer Science, Technion--Israel Institute of Technology, Haifa, Israel. \texttt{omribene@cs.technion.ac.il}}
    \and Tomer Even\thanks{Faculty of Computer Science, Technion--Israel Institute of Technology, Haifa, Israel. \texttt{tomer.even@campus.technion.ac.il}}}
\date{}

\begin{document}
\maketitle

\begin{abstract}
    The problem of Permutation Pattern Matching (PPM) asks, given a permutation $\pi:[n]\to[n]$, viewed as the point set $P=\{(i,\pi(i)):i\in[n]\}$, and a pattern $\sigma$ of small length $k$, whether $P$ contains an order-isomorphic copy of $\sigma$. The PPM problem and its (exact and approximate) counting variants have been studied extensively in the literature.
    We consider a \emph{preprocessing-query} version: after preprocessing $P$,
    one must determine whether a query $P'\subseteq P$ contains $\sigma$,
    where $P'$ is obtained by intersecting $P$ with a geometric shape
    (e.g., an axis-aligned rectangle or $k$ such rectangles).
    This naturally generalizes the classical range query problem in
    computational geometry, which corresponds to the case $|\sigma|=1$.
    Length-$2$ patterns have also been considered: here it is known that detection and approximate counting can be done efficiently [Ben-Eliezer, Mitrovi\'c, and Srivastava, SODA 2026], with $\tilde{O}(n)$ preprocessing and $\tilde{O}(1)$ query time.
    In contrast, exact counting of length-$2$ patterns is harder [Duraj, Kleiner, Polak, Vassilevska Williams, SODA 2020]: even in an easier offline batch setting, it requires super-linear time, assuming 3SUM requires $n^{3/2+\Omega(1)}$ expected time.

    In this work we extend the investigation to patterns of length $3$ (and in some cases beyond), establishing an even more nuanced fine-grained picture. We prove several new fine-grained equivalences, separations, and applications. A few of the main results are highlighted below.

    \begin{enumerate}
        \item \textbf{Near-optimal detection for single rectangle, via rectangle-to-slab reductions:} We first consider the case where the queried structure is a single axis-parallel rectangle.
        Here, for all patterns of length $3$, detection can be done with $\tilde{O}(n)$ preprocessing and $\tilde{O}(1)$ query time. While this result is similar to the case $k=2$, the proof differs considerably, and goes through a new and somewhat surprising tool we develop, and which may be of independent interest: reductions from \emph{rectangle queries} to a simpler and less expressive family of \emph{slab queries}.
        \item \textbf{Hardness of multi-rectangle queries via equivalence to online set disjointness:}
              We next establish hardness of queries consisting of multiple rectangles: for $\sigma=132$, we show that this problem is equivalent to the \emph{online set disjointness} problem. We further show connections to online $k$-set disjointness.
              To our knowledge, this is the first equivalence between a PPM-type problem and online set disjointness -- a problem playing a fundamental role in fine-grained complexity.

        \item \textbf{Faster algorithms for exact and approximate counting:} As an application of our new data structures and techniques, we obtain improved exact and approximate algorithms for pattern counting. These include
              an $O(n^{1.725})$ time algorithm for exact counting of $5$-patterns based on the rectangle-to-slab reduction, improving upon the best known algorithm [Beniamini and Lavee, ICALP 2025]; and an $\tilde{O}(n^{7/4})$ approximate counting algorithm for $6$-patterns, which is the first to break the $O(n^2)$ barrier for $6$-patterns, and which makes use of techniques from binary CSPs, approximate counting to colorful detection oracles, and additional tools.
    \end{enumerate}
\end{abstract}

\pagebreak
\tableofcontents

\clearpage
\section{Introduction}

In the Permutation Pattern Matching (PPM) problem, we are given a large permutation $\pi:[n]\to[n]$ and a small permutation $\sigma:[k]\to[k]$, and ask whether there exist indices $1\leq t_1<\cdots<t_k\leq n$ such that $(\pi(t_1),\ldots,\pi(t_k))$ has the same relative order as $\sigma$, i.e., for all $i,j \in [k]$, $\pi(t_i) < \pi(t_j)$ if and only if $\sigma(i) < \sigma(j)$.
We refer to $(t_1,\ldots,t_k)$ as a \emph{copy} or an \emph{embedding} of $\sigma$ in $\pi$.
Many variants of PPM including detection, exact and approximate counting, over various models have been extensively studied in recent years  \cite{BBL98,AAAH01,AR08,BL12,GM14, BenEliezerC18,NewmanRRS19,BenEliezerCLW19,BKM21,JOP21,EL21,gawrychowski2022faster,BenEliezerLW22,NewmanV24,Zhang24,BL2024counting,ben2024approximate,Opler26}, both due to the combinatorial elegance of this problem, but also because of many fundamental applications and beautiful connections to a range of other (seemingly unrelated) areas. Let us mention a few representative examples. In nonparametric statistics, efficient counting versions of PPM have led to the development of fast independence tests \cite{EL21,evenzohar2020independence}. In parameterized complexity, the popular emerging notion of twin-width has been invented as  a breakthrough paper showing the fixed parameter tractability of the detection version of PPM \cite{GM14}. In graph algorithms, the problem of counting $4$-cycles in sparse graphs is known to be equivalent to counting permutation patterns of length $4$ \cite{dudek2020counting}. In property testing, the first natural property shown to have an adaptivity hierarchy was the property of $\sigma$-freeness where $\sigma=132$ \cite{BenEliezerC18}. These and other examples demonstrate that PPM and its many variants underlie a rich and highly interconnected theory, waiting to be explored further.

In this paper we study a preprocessing-query version of PPM.
Such preprocess-once, answer-many-queries models are the backbone of \emph{fine-grained complexity} and have continuously gained popularity in recent years~\cite{KasliwalPS25,GolovnevGHPV20,AESZ23,GajulapalliGKS26,ChungL23}.
One of the most prominent problems in this area is \emph{Online Set Disjointness} (\prba). Here, we are provided with a family of sets $S_1,\ldots,S_m$ that we are allowed to preprocess. Then given a query pair $(i,j)$ the task is to decide whether $S_i\cap S_j=\emptyset$.
Starting with P\u{a}tra\c{s}cu's multiphase framework \cite{patrascu2010towards}, set disjointness has become a central hardness source for dynamic lower bounds, preprocessing-query tradeoffs, and static indexing problems \cite{kopelowitz2016higher,GoldsteinKLP17,Kopelowitz2020TowardsOS,bringmann2025tight}.

In our setting, the preprocessing-query data structure preprocesses the set of points $P$, and given a subset $P'\subseteq P$, it answers the detection or emptiness query $\ind_\sigma(P')$, of whether $P'$ contains an embedding of $\sigma$, or how many such embeddings it contains, denoted by $\#_\sigma(P')$. The focus is on \emph{geometric} preprocessing-query, where the subset of points $P'$ is obtained by intersecting $P$ with a geometric query range, such as an axis-aligned rectangle (see \Cref{fig:single-rect-query}), a slab (\Cref{fig:slab-query}), or multiple axis-aligned rectangles (\Cref{fig:multi-rect-query}). This is a natural generalization of orthogonal range searching in computational geometry, which corresponds to the case of $|\sigma|=1$.

\begin{figure}[ht]
    \centering
    \begin{subfigure}[]{0.32\textwidth}
        \centering
        \begin{tikzpicture}[scale=0.64,
                rect/.style={rounded corners=1.3pt, line width=0.8pt},
                point/.style={circle, fill=black, inner sep=1.2pt}]
            \draw[gray!45] (0,0) rectangle (6,6);
            \fill[blue!8] (1.0,1.0) rectangle (5.0,5.0);
            \draw[rect, blue!70!black, densely dashed] (1.0,1.0) rectangle (5.0,5.0);
            \node[blue!70!black, font=\footnotesize] at (3.0,5.35) {$Q$};

            \node[point] at (1.4,1.8) {};
            \node[point] at (2.2,3.2) {};
            \node[point] at (3.0,2.6) {};
            \node[point] at (3.8,4.7) {};
            \node[point] at (4.5,3.9) {};
        \end{tikzpicture}
        \caption{Single-rectangle query $Q$}
        \label{fig:single-rect-query}
        \label{sig:single-rect-query}
    \end{subfigure}
    \hfill
    \begin{subfigure}[]{0.32\textwidth}
        \centering
        \begin{tikzpicture}[scale=0.64,
                rect/.style={rounded corners=1.3pt, line width=0.8pt},
                point/.style={circle, fill=black, inner sep=1.2pt}]
            \draw[gray!45] (0,0) rectangle (6,6);

            \fill[red!10] (0.7,0.8) rectangle (2.4,3.0);
            \draw[rect, red!75!black] (0.7,0.8) rectangle (2.4,3.0);
            \node[red!75!black, font=\footnotesize] at (1.55,3.3) {$R_1$};

            \fill[teal!10] (2.8,2.0) rectangle (4.4,5.0);
            \draw[rect, teal!70!black] (2.8,2.0) rectangle (4.4,5.0);
            \node[teal!70!black, font=\footnotesize] at (3.6,5.3) {$R_2$};

            \fill[orange!12] (4.7,1.1) rectangle (5.7,4.4);
            \draw[rect, orange!80!black] (4.7,1.1) rectangle (5.7,4.4);
            \node[orange!80!black, font=\footnotesize] at (5.2,4.75) {$R_3$};

            \node[point] at (1.3,1.5) {};
            \node[point] at (3.3,3.1) {};
            \node[point] at (5.2,2.2) {};
        \end{tikzpicture}
        \caption{\rppm query $(R_1,R_2,R_3)$}
        \label{fig:multi-rect-query}
        \label{sig:multi-rect-query}
    \end{subfigure}
    \hfill
    \begin{subfigure}[]{0.32\textwidth}
        \centering
        \begin{tikzpicture}[scale=0.64,
                rect/.style={rounded corners=1.3pt, line width=0.8pt},
                point/.style={circle, fill=black, inner sep=1.2pt}]
            \draw[gray!45] (0,0) rectangle (6,6);
            \fill[purple!10] (2.0,0) rectangle (4.0,6);
            \draw[rect, purple!70!black, densely dashed] (2.0,0) rectangle (4.0,6);
            \node[purple!70!black, font=\footnotesize] at (5.3,5.5) {};

            \node[point] at (2.2,1.2) {};
            \node[point] at (2.7,4.8) {};
            \node[point] at (3.3,2.9) {};
            \node[point] at (3.8,3.8) {};
        \end{tikzpicture}
        \caption{Vertical slab $[a,b]\times[1,n]$}
        \label{fig:slab-query}
        \label{sig:slab-query}
    \end{subfigure}
    \caption{Three query types over the permutation point set $P=\{(i,\pi(i)) : i\in[n]\}$. (a)~A single-rectangle query $Q$. (b)~A multi-rectangle (\rppm) query $(R_1,R_2,R_3)$: the $i$-th point must lie in~$R_i$. (c)~A vertical slab query $[a,b]\times[1,n]$.}
    \label{fig:summary-query-models}
\end{figure}

In a \emph{multi-rectangle} (or, respectively, \emph{multi-slab}) query for a pattern $\sigma$ of length $k$, the query is an ordered tuple $(R_1,\ldots,R_k)$ of $k$ axis-aligned rectangles (respectively, slabs), one per pattern index, and $\ind_\sigma(R_1,\ldots,R_k)=1$ if and only if there is an embedding $\phi$ of $\sigma$ such that the point realizing the \emph{$i$-th} element of the pattern lies in $R_i$, i.e., $(\phi(i),\pi(\phi(i)))\in R_i$ for every $i\in[k]$ (see \Cref{prob:rppm}).
    Repeated parameters are allowed; for example $\ind_{132}(I_1,I_1,I_2)$ means that the first two elements of the pattern must both come from $I_1$ while the third must come from $I_2$ (see, e.g., \Cref{ssec:tech-ov} and \Cref{sec:hardness}).

For standard point queries (that is, $|\sigma|=1$), both detection and counting type queries can be supported efficiently: with $\tilde{O}(n)$ preprocessing time and $\tilde{O}(1)$ per-query time.\footnote{Here and throughout the paper, $\tilde{O}$ hides polylogarithmic terms in $n$.}
In length-$2$ patterns~\cite{ben2024approximate,Duraj2019EquivalencesBT} the situation is already substantially different: there is a polynomial separation between detection (or approximate counting) and counting.
For detection and approximate counting, \cite{ben2024approximate} gives $\tf(n)$ preprocessing and $\tf(1)$ query time for both $\ind_{12}(Q)$ and $\ind_{12}(Q_1,Q_2)$, showing that single-rectangle and multi-rectangle queries both have near-linear preprocessing and near-constant query time.
For counting, ~\cite{Duraj2019EquivalencesBT} shows that single-slab queries $\#_{12}([a,b]\times[1,n])$ and two-slab queries $\#_{12}([a_1,b_1]\times[1,n],[a_2,b_2]\times[1,n])$ are equivalent, and belong to a larger complexity class, at least as hard as \prba.
In particular, there is a reduction from 3SUM to $\#_{12}$-slab queries~\cite{patrascu2010towards,kopelowitz2016higher,Duraj2019EquivalencesBT}. For example, the conjecture that 3SUM requires $n^{2-o(1)}$ time would imply that any such data structure with $\tf(n^p)$ preprocessing and $\tf(n^q)$ query time must satisfy $p+2q\geq 2$ for $q\in[0,1/2]$. Although this conjecture has recently been refuted~\cite{alman2026truly}, weaker conditional lower bounds on the preprocessing-query tradeoff still follow from the assumption that 3SUM requires $\Omega(n^{3/2+\eps})$ expected time for some constant $\eps>0$~\cite{kopelowitz2016higher,Duraj2019EquivalencesBT}.

\begin{figure}[t]
    \begin{tikzpicture}[
            x=10mm,
            y=12mm,
            >=Latex,
            src/.style={
                    draw,
                    rounded corners=2pt,
                    line width=0.8pt,
                    fill=gray!12,
                    inner xsep=6pt,
                    inner ysep=4pt,
                    align=center,
                    font=\footnotesize
                },
            det/.style={
                    draw,
                    rounded corners=2pt,
                    line width=0.8pt,
                    fill=blue!8,
                    inner xsep=6pt,
                    inner ysep=4pt,
                    align=center,
                    font=\footnotesize
                },
            cnt/.style={
                    draw,
                    rounded corners=2pt,
                    line width=0.8pt,
                    fill=orange!10,
                    inner xsep=6pt,
                    inner ysep=4pt,
                    align=center,
                    font=\footnotesize
                },
            newred/.style={
                    ->,
                    draw=red!70!black,
                    line width=0.95pt,
                    shorten <=2pt,
                    shorten >=2pt
                },
            known/.style={
                    ->,
                    draw=teal!65!black,
                    line width=0.95pt,
                    shorten <=2pt,
                    shorten >=2pt
                },
            knowneq/.style={
                    <->,
                    draw=teal!65!black,
                    line width=0.95pt,
                    shorten <=2pt,
                    shorten >=2pt
                },
            genstyle/.style={line width=0.95pt,
                    shorten <=2pt,
                    shorten >=2pt
                }
        ]

        \begin{scope}[xshift=-5mm]
            \node[src] (osd) at (0,1.3) {\prba};
            \node[src] (single3) at (0,3.3) {$\ind_{\sigma}(Q)$, $|\sigma|=3$};
            \node[src, text width=26mm] (sksd) at (0,-.75) {\textsc{Online~ $k$-SetDisj}};
            \node[src, text width=28mm] (aet) at (12.4,.5) {\alledges};

            \node[det, text width=30mm] (i132) at (4.6,1.3) {3-\sppm\\$\ind_{132}(S_1,S_2,S_3)$};
            \node[det, text width=30mm] (ir132) at (4.6,3.3) {3-\rppm\\$\ind_{132}(Q_1,Q_2,Q_3)$};
            \node[det, text width=36mm] (ik) at (4.6,-.75) {
            $(k+1)$-\sppm
            \\$\ind_{\sigma}(Q_1,Q_1,Q_2,\ldots,Q_k)$\\$\sigma=(1,k{+}1,k,k{-}1,\ldots,2)$};

            \node[cnt] (c12s) at (9.0,1.8) {$\#_{12}(S)$};
            \node[cnt] (ceqs) at (9.0,3.3) {$\#_{\textsc{EQ}}(S)$};
            \node[cnt] (ceqq) at (12.6,3.3) {$\#_{\textsc{EQ}}(Q)$};
            \node[cnt] (c123s) at (9.0,.7) {$\#_{123}(S)$};
            \node[cnt] (c132s) at (9.0,-.8) {$\#_{132}(S)$};

            \node[cnt] (c12q) at (12.6,1.8) {$\#_{12}(Q)$};

            \draw[newred,dashed] (single3.east) -- (ir132.west);
            \draw[newred,<->] (osd.east) -- (i132.west);
            \draw[newred,<->] (ir132.south) -- (i132.north);
            \draw[newred] (sksd.east) -- (ik.west);
            \draw[newred,dashed] (i132.east) -- (c132s.west);
            \draw[known,bend left=2, dashed] (osd.north east) to (ceqs.west);
            \draw[genstyle,draw=violet!70,<->] (c12s.south east) to (aet.north west);
            \node[font=\scriptsize] (offlabel) at (11.25,1) {Offline$^\dagger$};
            \draw[knowneq] (ceqs.south) -- (c12s.north);
            \draw[newred,dashed, bend left=30] (c12s.east) to (c12q.west);
            \draw[newred, bend left=30] (c12q.west) to (c12s.east);
            \draw[newred, dashed] (c12s.south) -- (c123s.north);

            \draw[newred,dashed, bend left=30] (ceqs.east) to (ceqq.west);
            \draw[newred, bend left=30] (ceqq.west) to (ceqs.east);
            \draw[newred, dashed] (c12s.south west) to[out=-145,in=130] (c132s.north west);

            \node[font=\scriptsize\bfseries] at (0,4.1) {Source Problems};
            \node[font=\scriptsize\bfseries] at (4.6,4.1) {Detection Queries};
            \node[font=\scriptsize\bfseries] at (9.0,4.1) {Counting (Slabs)};
            \node[font=\scriptsize\bfseries] at (12.6,4.1) {Counting (Rectangles)};

            \begin{scope}[on background layer]
                \fill[gray!4, rounded corners=3pt] (-1.8,-1.35) rectangle (1.8,4.35);
                \fill[blue!4, rounded corners=3pt] (2.7,-1.35) rectangle (6.5,4.35);
                \fill[orange!5, rounded corners=3pt] (7.6,-1.35) rectangle (10.4,4.35);
                \fill[orange!3, rounded corners=3pt] (11.2,1.2) rectangle (14.0,3.6);
            \end{scope}
        \end{scope}

        \begin{scope}[xshift=12mm]
            \draw[->, draw=black, line width=0.95pt,dashed] (-2.4,-1.85) -- (-1.5,-1.85);
            \node[font=\scriptsize, anchor=west] at (-1.5,-1.85) {dashed: trivial reductions};
            \draw[known] (2.8,-1.85) -- (3.8,-1.85);
            \node[font=\scriptsize, anchor=west] at (3.85,-1.85) {green: known reductions};
            \draw[newred] (7.6,-1.85) -- (8.6,-1.85);
            \node[font=\scriptsize, anchor=west] at (8.45,-1.85) {red: new in this paper};
        \end{scope}
    \end{tikzpicture}
    \caption{Reduction map among problems studied in this paper.
    An arrow $\textsc{A} \to \textsc{B}$ means that a data structure for \textsc{B} with preprocessing time $t_p$ and query time $t_q$ implies one for \textsc{A} with the same complexity (up to polylogarithmic factors); double arrows denote equivalences.
    We use $S$ for slab queries (vertical or horizontal slabs) and $Q$ for axis-aligned rectangle queries.
    $\#_{\sigma}(\cdot)$ denotes counting occurrences of pattern $\sigma$, and
    $\#_{\textsc{EQ}}(\cdot)$ counts pairs of points with equal $y$-coordinates within a query range.
    \alledges is the problem of computing, for each edge of a given graph, the number of triangles containing it~\cite{Duraj2019EquivalencesBT}.
    $^\dagger$The violet arrow labeled \emph{Offline} denotes an equivalence in the offline setting, where all queries are given in advance (rather than answered online via a data structure).
    }
    \label{fig:reductions-map}
\end{figure}
\subsection{Our Contributions}
In this paper, we extend the study of preprocessing-query PPM to patterns of length $3$ and beyond. As we shall see, the picture is nuanced, with a sharp dichotomy between single-rectangle and multi-rectangle queries. We also include algorithmic applications of our new data structures.

Our first main result is positive, showing that single-rectangle queries can be supported efficiently for all patterns of length $3$.
\begin{restatable}{theorem}{ThmOptEmpty}\label{thm:optimal emptiness}
    For all permutation patterns $\sigma$ of length $3$, there exists a deterministic data structure for computing $\ind_{\sigma}(Q)$ with $\tf(n)$ preprocessing time and $\tf(1)$ query time.
\end{restatable}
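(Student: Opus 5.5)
By the symmetries of the square (reversal, complement, inverse), which map axis-aligned rectangles to axis-aligned rectangles, it suffices to handle two patterns: $\sigma=123$ and $\sigma=132$. The other four length-$3$ patterns follow by applying the corresponding transformation to $P$ before preprocessing and to each query $Q$.

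The plan follows the rectangle-to-slab strategy from the introduction. Given a query $Q=[a,b]\times[c,d]$, I would build a balanced decomposition on the $x$-axis, such as a segment tree. For each node $v$ with midpoint $m_v$, the query splits at the unique node whose midpoint lies in $[a,b]$. Any occurrence of $\sigma$ inside $Q$ is then of one of two kinds:
\begin{itemize}
\item it lies entirely in $[a,m_v]$ or entirely in $(m_v,b]$, which is handled recursively, or via precomputed answers for one-sided ranges;
\item it crosses $m_v$, splitting $\sigma$ into a prefix of length $j$ on the left and a suffix of length $3-j$ on the right, with $j\in\{1,2\}$.
\end{itemize}
For a crossing occurrence, each side sees a one-sided $x$-range, $[a,m_v]$ or $(m_v,b]$, together with the $y$-window $[c,d]$. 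So the subproblem is a three-sided query on a node of total size $O(n\log n)$. The case $|\sigma|=2$ of \cite{ben2024approximate} already shows that such multi-range queries for length-$2$ patterns take $\tilde{O}(1)$ time.

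For crossing type $1+2$, the left part is a single point $p$ of the left piece with $y(p)\in[c,d]$. The right part is a length-$2$ pattern in the right piece, constrained relative to $y(p)$. For example, with $132$ we need a $21$-pair on the right whose values both exceed $y(p)$ and lie below $d$. It is then optimal to choose $p$ with minimal (or maximal) $y$ in $[a,m_v]\times[c,d]$, which one range-min query finds. Everything reduces to a query of the form "is there a $21$-pair in $(m_v,b]\times[y_0,d]$", a single-rectangle length-$2$ detection query answerable by \cite{ben2024approximate}.

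Crossing type $2+1$ is symmetric in some cases. It becomes genuinely harder when the two left points must interleave in value with the right point, as in $132$ split as $13\mid 2$: we need $x_1<x_2\le m_v<x_3$ with $y_1<y_3<y_2$. Here I would define, for the right piece, the maximal $y\in[c,d]$ attained on $(m_v,b]$, and for the left piece ask for a $12$-pair whose values straddle some right value. The left $12$-pairs in $[a,m_v]\times[c,d]$ form a staircase, and the question becomes whether some right value falls in one of the "gaps" $(y_1,y_2)$. I expect this step to be the main obstacle. The set of right values is a range-dependent set, not a single extremal number, so a naive approach leads to an intersection-type query reminiscent of set disjointness.

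My first attempt would exploit monotonicity. For fixed $c,d$, the minimal-gap structure of the left side depends monotonically on $a$, and the right values depend monotonically on $b$. I would then precompute, for each node and each threshold, the "witness" boundary $b^\ast(a,c,d)$ as a staircase of size near-linear per node. The hope is that the canonical witness can be taken to be a pair with $y_1$ maximal below the value $y_3$ and $y_2$ above it. That would let us enumerate right points via a successor structure on values and reduce to $O(\log n)$ slab queries. These slab queries would be precomputed using persistent search trees built in $\tilde{O}(n)$ total time. Correctness would follow by checking that every occurrence is captured at exactly its splitting node. The time bounds follow from $O(\log n)$ levels, each contributing $\tilde{O}(n)$ preprocessing and $\tilde{O}(1)$ query work.
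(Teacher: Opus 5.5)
There is a genuine gap, and it is where you expect it. Your symmetry reduction to $123$ and $132$ is fine, and so is your $1+2$ crossing case (take the lowest left point in $[a,m_v]\times[c,d]$, then ask one $\ind_{21}$ rectangle query). The $2+1$ crossing of $132$, however, is never resolved.

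That subproblem is exactly the multi-rectangle query $\ind_{132}(L,L,R)$ with $L=[a,m_v]\times[c,d]$ and $R=(m_v,b]\times[c,d]$. For arbitrary $L,R$ this query is as hard as \textsc{Online Set Disjointness}: it is what the reduction of \Cref{thm:hardness interval k-1} with $k=2$ produces. So an exact $\tf(1)$ answer must exploit the special geometry, and your sketch does not show how.
\begin{itemize}
\item The threshold $b^\ast(a,c,d)$ has three free parameters. A table indexed by $(a,c,d)$ therefore has $\Theta(n_v^3)$ entries per node, and nothing you write shows it collapses to a near-linear staircase.
\item Testing right values one at a time through a successor structure costs time proportional to $|P\cap R|$, not $O(\log n)$.
\item The same missing tool appears in your ``entirely on one side'' case. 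Recursing along the $x$-tree bottoms out at fully covered nodes $u$ queried with an arbitrary $y$-window $[c,d]$. Neither further recursion in $x$ nor precomputed one-sided answers handles such a query in near-linear space.
\end{itemize}

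The missing idea is to decompose in $y$ as well, and to \emph{relax} in-strip triples rather than answer them exactly.
\begin{itemize}
\item The paper partitions $Q$ into the grid $\Grid(Q)$ cut out by the canonical nodes of both $T_x$ and $T_y$. Every row then has exactly the $y$-projection of some node $u\in T_y$, and symmetrically for columns and $T_x$.
\item A triple of cells not contained in one row or column has an acyclic primal graph. It is decided by $O(1)$ extremal-point range queries (\Cref{claim:17}).
\item A triple inside one row $[a,b]\times[c_u,d_u]$, with cells spanning $x$-range $[a_1,b_3]$, is replaced by the single query $\ind_{132}([a_1,b_3]\times[c_u,d_u])$. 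This is sound because that rectangle lies in $Q$, and complete because it contains the triple. Since $\ind_{132}(Q)$ is a disjunction, nothing more is needed.
\item Relative to $P_u$, this is a slab query with one free interval. It is answered from the table $T[a]=\min(\hat T[a],T[a+1])$ of smallest right endpoints, where $\hat T[a]$ is found by binary search using the $\ind_{21}$ rectangle structure of \cite{ben2024approximate} (\Cref{thm:132 interval queries}).
\end{itemize}

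In your terms, canonical $T_y$ nodes are what legitimately ``fix $c,d$'', leaving one monotone parameter per node. The relaxation removes the need to detect occurrences that specifically cross $m_v$. Without these two ingredients the proposal does not reach $\tf(n)$ preprocessing and $\tf(1)$ query time.
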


The following result is the main technical engine behind~\Cref{thm:optimal emptiness}, and we believe it may be of independent interest.
We prove an equivalence between rectangle and slab detection queries for patterns of length at most $3$, where two types of queries are \emph{equivalent} if a data structure for either one with $\tf(n^p)$ preprocessing and $\tf(n^q)$ query time can be transformed into a data structure for the other with the same bounds (up to polylogarithmic terms).

\begin{restatable}[Rectangle-to-Slab Reduction for Detection]{theorem}{ThmRSppm}\label{thm:re ppm}
    For any pattern $\sigma$ of length at most $3$,
    $\sigma$-multi-rectangle detection queries are equivalent to $\sigma$-multi-slab detection queries.
\end{restatable}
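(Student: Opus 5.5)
The easy direction is immediate: a multi-slab query is a multi-rectangle query whose rectangles span the full height (or width), so a rectangle data structure answers slab queries with the same bounds. The work is the converse. Given a query $(R_1,\ldots,R_k)$ with $R_i=[a_i,b_i]\times[c_i,d_i]$ and $k\le 3$, we reduce the vertical constraints to a polylogarithmic number of multi-slab queries on auxiliary point sets built during preprocessing. The slab queries are vertical slabs $[a_i,b_i]\times[1,n]$, and each may be applied to a permuted copy of $P$.

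\textbf{Step 1: decompose by a range tree on $y$.} Build a balanced binary tree over $y$-values. Each interval $[c_i,d_i]$ splits into $O(\log n)$ canonical nodes. Guessing one canonical node per index gives $O(\log^k n)=\tilde O(1)$ subqueries, and the answer is the OR of these. In each subquery, the $i$-th pattern point must lie in the vertical slab $[a_i,b_i]$ and have $y$ in canonical node $v_i$.

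\textbf{Step 2: make each subquery a slab query.} For each node $v$, store the subsequence $P_v$ of points whose $y$ lies in $v$, relabelled to a permutation. The $x$-interval $[a_i,b_i]$ becomes, by binary search, an interval of ranks in $P_v$. Total size is $O(n\log n)$.

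The difficulty is that the nodes $v_1,\ldots,v_k$ differ across indices, so we need a structure on the union of up to three canonical sets. The key structural fact is that canonical nodes are either nested or disjoint. If two nodes are disjoint, their $y$-order is forced, so the relative $y$-order of the two pattern points is already determined. If that order contradicts $\sigma$, discard the subquery. If it agrees, the constraint between them is automatically satisfied and can be dropped. Nested nodes can instead be handled as a single node $v$ (the larger one) together with a restriction of one point to a child-subtree. This suggests a second-level refinement: if $v_j\subseteq v_i$, recurse by requiring the $j$-th point to lie in $P_{v_i}$ restricted to the $y$-range of $v_j$.

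\textbf{Step 3: case analysis on the residual pattern.} Once forced comparisons are removed, the remaining constraints split along a partition of $[k]$ into groups that share a common ancestor node. Each group is a subpattern of $\sigma$ of length at most $3$. Groups whose points come from disjoint $y$-ranges interact only through $x$-order. For $k\le 3$ this gives a finite list of shapes, for example a $12$-type constraint between two points plus a free third point, or the full pattern inside a single $P_v$. A full pattern inside one $P_v$ is exactly a multi-slab query on $P_v$. When the groups are separated, we must combine a slab query inside one group with an $x$-ordering constraint to another group. We handle this by precomputing, for each node, slab structures that return extremal witnesses: for example, the leftmost $x$ of a $12$-copy whose first point is in $[a_1,b_1]$ and second in $[a_2,b_2]$. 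Given such witnesses, the remaining single point can be checked by 1D range queries.

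\textbf{Main obstacle.} I expect the hard step to be showing that this decomposition really reduces to slab detection with only $\tilde O(1)$ overhead and without new pattern types. The concern is that slab oracles give only yes/no answers, while combining groups seems to need extremal witnesses. To get witnesses from a pure detection oracle, I would binary search on the endpoint of a slab: shrink $[a_j,b_j]$ until the answer flips. This costs only an $O(\log n)$ factor and keeps the reduction black-box. The delicate part is verifying, pattern by pattern (up to symmetry, only $123$ and $132$), that every nested or disjoint configuration of three canonical nodes reduces to such monotone binary searches plus 1D range queries.
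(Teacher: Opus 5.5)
Your Steps 1--2 match the paper's handling of triples that lie in a single row. Step 3, however, has a genuine gap.

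\textbf{The interleaving case is not covered.} Combining two $y$-separated groups via one extremal witness plus 1D range checks works only when the lone point lies outside the $x$-span of the other group. For $132$ the groups can interleave. Take a single rectangle $Q=[a,b]\times[c,d]$ and the guess $(v,w,v)$ of $y$-canonical nodes, with $w$ above $v$. The forced comparison $\pi(t_3)<\pi(t_2)$ drops, leaving $t_1<t_2<t_3$ and $\pi(t_1)<\pi(t_3)$. These constraints form a cycle: you must decide whether some $12$-pair of $P_v$ (with $x$ in $[a,b]$) has an $x$-span containing a point of $P_w$. No slab oracle stored at a single $y$-node sees exactly these points, since the LCA of $v,w$ also contains the $y$-gap between them. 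No single extremal $12$-copy decides it either. For $P_v=\{(1,5),(2,6),(3,3),(5,4),(6,1),(7,2)\}$ and one $P_w$-point at $x=4$, the only straddling pair is $(3,3),(5,4)$. Yet every leftmost/rightmost choice of first or second point returns $(1,5),(2,6)$ or $(6,1),(7,2)$, so binary search on slab endpoints cannot find it.

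\textbf{The nested-node step is circular.} Restricting the $j$-th point to the $y$-range of $v_j$ inside $P_{v_i}$ is again a rectangle query. The paper avoids nesting by first refining the query to be well-aligned, so that canonical nodes of different indices are equal or disjoint.

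\textbf{The missing idea} is a second range tree $T_x$, with slab structures on the transposed point sets of its nodes, together with the common grid refinement. After well-alignment, the paper decomposes every $x$-interval by $T_x$ and every $y$-interval by $T_y$. Each row (column) of the resulting grid has exactly the $y$-projection ($x$-projection) of one node of $T_y$ ($T_x$).
\begin{itemize}
\item A triple of cells in one row or one column is literally a multi-slab query on that node's point set.
\item Every other triple has an acyclic primal graph (\Cref{claim:17}). If the $x$-edge $(1,2)$ and the $y$-edge $(1,3)$ both survive, then cell $2$ shares cell $1$'s column and cell $3$ shares its row. So cells $2,3$ are separated in both coordinates, and the doubled edge $(2,3)$ is forced.
\item Acyclic instances are solved in $\tilde O(1)$ after $\tilde O(n)$ preprocessing by \Cref{lemma:acyclic multigraphs}, so no witness extraction from the slab oracle is ever needed.
\end{itemize}
In your bad case, let $X_u$ be the $x$-range of a $T_x$-node $u$ and $Y_v,Y_w$ the $y$-ranges of $v,w$. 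Splitting $[a,b]$ by $T_x$ has two outcomes. If the three points are not all in one $x$-node, an $x$-edge is forced and the cycle breaks. If they share a node $u$, the query becomes the horizontal-slab query $\ind_{132}(X_u\times Y_v,\,X_u\times Y_w,\,X_u\times Y_v)$ on $P_u$.
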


Both directions of the reduction preserve the number of ranges; in particular, a single-rectangle query is reduced to a single-slab query (and vice versa). This immediately yields a rectangle-to-slab equivalence for single-rectangle detection as well. We also prove the analogous equivalence between $\#_{12}$-rectangle queries and $\#_{12}$-slab queries.
This lets us plug the fastest slab data structure of Duraj, Kleiner, Polak, and Vassilevska Williams~\cite{Duraj2019EquivalencesBT} directly into the pattern-tree framework of Beniamini and Lavee~\cite{BL2024counting}, yielding the fastest known exact algorithm for $5$-pattern counting, improving upon the $\tilde{O}(n^{1.75})$-time algorithm of Beniamini-Lavee.
\begin{corollary}\label{cor:bl24}
    There is a deterministic algorithm for exact counting of any permutation pattern of length $5$, with running time $\tf(n^{2-2/(\omega+5)}) = O(n^{1.725})$.
\end{corollary}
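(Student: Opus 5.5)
The plan is to plug the rectangle-to-slab equivalence for counting $\#_{12}$ into the pattern-tree framework of Beniamini and Lavee~\cite{BL2024counting}, replacing their $\#_{12}$-rectangle subroutine by the fastest slab structure of Duraj et al.~\cite{Duraj2019EquivalencesBT}.

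First, I recall what Beniamini--Lavee need. Their $\tilde O(n^{1.75})$ algorithm for every $5$-pattern uses a decomposition (pattern trees) in which all but a constant number of terms are computed in $\tilde O(n^{1.75})$ or less. The bottleneck is a batch of $O(n)$-to-$O(n^{3/2})$ offline queries counting $12$-pairs (or $21$-pairs) inside axis-aligned rectangles, which they answer with a structure of cost roughly $n^{1.75}$. I would verify from their paper that the bottleneck consists of exactly $\#_{12}(Q)$ queries on $O(n)$ points, with $M=O(n)$ queries in one batch (after balancing their other parameter). Any leftover terms computable in $\tilde O(n^{1.725})$ can be left as they are.

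Second, I would invoke the $\#_{12}$ rectangle-to-slab equivalence announced right after \Cref{thm:re ppm}. It converts a $\#_{12}$-slab structure with preprocessing $\tilde O(n^p)$ and query time $\tilde O(n^q)$ into a $\#_{12}$-rectangle structure with the same bounds up to polylogarithmic factors. The slab structure of Duraj et al.\ comes from their equivalence with \alledges and fast matrix multiplication. It answers $n$ offline slab $\#_{12}$ queries, or gives a tradeoff, in total time $\tilde O(n^{2-2/(\omega+5)})$; I will take this bound from their paper. Composing the two, $O(n)$ rectangle $\#_{12}$ queries are answered in $\tilde O(n^{2-2/(\omega+5)})$ total time. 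Substituting this into the Beniamini--Lavee recursion gives the claimed bound. Numerically, $\omega<2.372$ gives $2-2/7.372\approx 1.7287$; I would double check the constant, and the stated $1.725$ presumably reflects the exact exponent and tradeoff used. Determinism holds because all components are deterministic.

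The main obstacle is the interface between the two results. The Duraj et al.\ bound may be offline, or a tradeoff curve rather than a single structure, while the reduction is stated for data structures. So I must check two things: that the rectangle-to-slab reduction also works in the batched offline setting, which it should, since it decomposes each rectangle query into $\tilde O(1)$ slab queries on polylogarithmically many auxiliary point sets; and that the number and size of the batches in the Beniamini--Lavee recursion make the total cost $\tilde O(n^{2-2/(\omega+5)})$ rather than something larger. If their recursion needs a preprocessing/query tradeoff, I would optimize the balancing parameter against the Duraj et al.\ tradeoff so that all terms meet at that exponent.
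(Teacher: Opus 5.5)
\textbf{Gap: the exponent is never derived.} Your high-level route is the paper's: replace the $\#_{12}$-rectangle subroutine of \cite{BL2024counting} by slab queries via \Cref{lemma:reduction 12}, then use the offline slab bound of Duraj et al.\ (\Cref{thm:12 slab}). However, the step that actually produces the exponent is missing, and both quantitative inputs you posit are wrong.

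First, the Beniamini--Lavee framework is not a fixed batch of $O(n)$ rectangle queries. For a free parameter $m$, it costs $\tilde{O}(n^2/m)$ plus the cost of answering $z=nm^2$ offline $\#_{12}$-rectangle queries. Second, Duraj et al.\ never give a bound of $\tilde{O}(n^{2-2/(\omega+5)})$. Their bound is $\tilde{O}(n z^{(\omega-1)/(\omega+1)})$ for $z\le n$ and $\tilde{O}(n^{(2\omega-2)/(\omega+1)} z^{2/(\omega+1)})$ for $z>n$. In particular, $n$ slab queries cost only $\tilde{O}(n^{2\omega/(\omega+1)})$, which is below $n^{1.41}$. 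So under your literal assumptions the argument would not land on the stated exponent at all, which shows that your model of the bottleneck is off.

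\textbf{The missing computation.} Since $z=nm^2>n$, the second regime applies, and the query batch costs $\tilde{O}(n^{2\omega/(\omega+1)} m^{4/(\omega+1)})$. Write $m=n^{\alpha}$ and balance this against $n^2/m$: $2-\alpha=\frac{2\omega+4\alpha}{\omega+1}$, which gives $\alpha=2/(\omega+5)$ and total time $\tilde{O}(n^{2-2/(\omega+5)})$. Your fallback sentence (``optimize the balancing parameter against the Duraj et al.\ tradeoff'') points the right way, but this balancing is exactly the content of the corollary. It has to be carried out with the correct tradeoff, and doing so raises the number of queries to $n^{1+4/(\omega+5)}$, above $n^{3/2}$.

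\textbf{Numerics.} Your value $\approx 1.7287$ is correct for $\omega<2.372$. The paper's $1.725$ relies on the bound $\omega\le 9/4$ it quotes in its related-work section; any $\omega\le 2.27$ suffices.

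\textbf{Offline setting.} Your worry about using the data-structure reduction offline is legitimate; the paper does not spell this out. It goes through. Each rectangle query splits into $\tilde{O}(1)$ slab queries at range-tree nodes, and the node point sets have total size $O(n\log n)$. In both regimes the offline cost has the form $n_v^{a}z_v^{b}$ with $a+b=2\omega/(\omega+1)\ge 1$. Hence summing over nodes costs only a polylogarithmic factor more than the global bound.
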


So far, all results on preprocessing-query detection queries (in the literature and here) have exhibited an ideal (and near-optimal) efficiency tradeoff, with $\tilde{O}(n)$ preprocessing and $\tilde{O}(1)$ query time. This is true for standard range queries (patterns of length 1) and $12$-queries (patterns of length 2) for both \emph{single-rectangle} and \emph{multi-rectangle} queries. In addition,
 Theorem \ref{thm:optimal emptiness} shows that this ideal tradeoff is also achievable for single-rectangle queries for all patterns of length $3$. This raises a natural question:
 \begin{center}
    \emph{Is it always true that detection queries can be supported with $\tilde{O}(n)$ preprocessing and $\tilde{O}(1)$ query time?}
\end{center}

It turns out that the answer is (most likely) negative. But the story is even more interesting: we prove an \emph{equivalence} between preprocessing-query data structures with multi-rectangle queries for the pattern $\sigma=132$, and the \emph{Online Set Disjointness} problem. %

\begin{restatable}[\prba]{problem}{probOSDa}\label{prob:osd a}
    Given a family $\FF=(S_1,\ldots, S_m)$ of $m$ sets with total size $N=\sum_{S\in \FF} |S|$,
    preprocess $\FF$ so that given a query pair $(i,j)$, the task is to determine whether $S_i\cap S_j=\emptyset$.
    The generalization to $k$-wise queries $(i_1,\ldots,i_k)$, asking whether $\bigcap_{\ell=1}^k S_{i_\ell}=\emptyset$, is denoted \prbb.
\end{restatable}

\begin{restatable}[\rppm and \prba]{theorem}{ThmOSDEquiv}\label{thm:equiv osd and rppm}
    Multi-rectangle $132$ detection queries are equivalent to \prba.
\end{restatable}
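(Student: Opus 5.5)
Both directions go through the multi-slab version. By \Cref{thm:re ppm}, multi-rectangle $132$ detection is equivalent to multi-slab $132$ detection $\ind_{132}(S_1,S_2,S_3)$, so it suffices to show that 3-\sppm is equivalent to \prba, as in \Cref{fig:reductions-map}. Throughout, $N=n$ up to constant factors.

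\textbf{\prba $\to$ 3-\sppm (hardness).} Given sets $S_1,\dots,S_m$ over a universe $U$ with total size $N$, I would encode element membership by value and set identity by position. Lay out the sets as consecutive blocks of $x$-coordinates. Block $i$ holds one point per element $u\in S_i$. Within each block the points appear in \emph{decreasing} order of $u$, so that a block contains no $12$ pair among its own elements. The $y$-coordinate of the point for $(i,u)$ is $2u$ plus a small perturbation, which makes $\pi$ a permutation.

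A query $(i,j)$ with $i<j$ should ask for a triple $1$--$3$--$2$ in which the ``$3$'' and ``$2$'' are copies of the same element $u$, one in block $i$ and one in block $j$. Equal values cannot occur, so I would perturb so that the copy in block $i$ lies slightly above the copy in block $j$. The ``$1$'' is supplied by a sentinel gadget: a point placed before block $i$ whose value lies below all element values. Then block $i$ followed by block $j$ must contain a decreasing pair that is close in value, i.e.\ two copies of the same $u$.

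The delicate issue is that a decreasing pair $u>u'$ with $u$ in block $i$ and $u'$ in block $j$ would also give a $132$. To exclude this, I would use the third slab more cleverly: restrict $S_3$, a slab over block $j$, to a horizontal band. Instead of a single sentinel, I would interleave each element point with a private ``1''-point placed just below it, so that the only point usable as $1$ for the pair ending at $u'$ lies just under $u'$. The precise gadget needs care: each element $u$ in block $i$ is emitted as the pair (low point $2u-\epsilon$, high point $2u+\epsilon$), and the $132$ is forced to use the low/high pair from block $i$ together with $u'$ in block $j$ satisfying $2u-\epsilon<u'$-value$<2u+\epsilon$. This gives equality of the elements.

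Repeated slab parameters allowed in \Cref{prob:rppm} let slabs $S_1=S_2=$ block $i$ and $S_3=$ block $j$. In-block decreasing order then prevents spurious triples with the low and high points taken from different elements, because a low point of $u$ precedes its own high point but comes after the points of larger elements. I expect checking that no spurious embedding exists to be the main obstacle in this direction.

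\textbf{3-\sppm $\to$ \prba (algorithm).} For the upper bound I would decompose each slab query via a segment tree on $x$ into $O(\log n)$ canonical intervals per slab. This yields $O(\log^3 n)$ canonical triples, plus cases where several pattern elements share one canonical node. The shared-node cases can be handled by precomputed structures, using \Cref{thm:optimal emptiness} and $12$-detection from \cite{ben2024approximate}.

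For three distinct canonical intervals $A<B<C$, a $132$ with points in $A$, $B$, $C$ exists iff some pair $(b,c)\in B\times C$ with $b>c$ has $\min_A<c$. For fixed $B$ and $C$, this is a dominance condition in value space. I would define, for each canonical node $C$, the set $T_C$ of value-ranks $r$ with the following property: there is a point of $C$ with value in $(r,\text{value}]$ for which some higher point lies to its left. This dependence on the left context is the crux.

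To obtain sets, I would instead discretize the ``$1$'' threshold $t=\min_A$. The query then becomes whether there exist $b\in B$ and $c\in C$ with $t<c<b$, which is $21$-detection across $B$ and $C$ restricted to the value range $(t,\infty)$. I would encode each node pair through elements indexed by (value of $c$): let $X_B=\{$values $v$ : some $b\in B$ has value $>v\}$, restricted so that $X_B\cap Y_C$ is witnessed. Because ``exists $b>c>t$'' reduces to $\max_B>$ (some $c>t$ in $C$), the middle node is actually simple: it suffices to check whether $C$ has a point with value in $(t,\max_B)$. That is a plain 2D range emptiness query, so no set disjointness is needed for distinct nodes.

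The genuine hardness therefore lives in the shared-node terms, where the ``3'' and ``2'' come from the same canonical pieces spanning slabs $S_2$ and $S_3$. There I would reduce to \prba with sets indexed by nodes, giving $\tilde O(n)$ sets of total size $\tilde O(n\log n)$. I expect identifying these shared-node subcases and showing that each is a disjointness query to be the main obstacle of the whole proof.
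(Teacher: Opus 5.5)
\textbf{Hardness direction.} This half is fine and is essentially the paper's gadget. Each element of $S_i$ becomes a low/high pair, in decreasing element order, inside the block used for pattern positions $1$ and $3$. The block for position $2$ supplies a value strictly between, and the query is $(S_1=S_2=\text{block } i,\ S_3=\text{block } j)$. It goes through once you break ties, either with a separate shifted copy for the position-$2$ blocks (the paper's $B_1,B_2$) or with a per-block offset.

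\textbf{Algorithmic direction: there is a genuine gap.} You place the hardness in the wrong subcase, and then give no construction for it.

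First refine the slabs to a well-aligned tuple, since separate canonical decompositions can otherwise produce nested nodes, and then pass to canonical nodes. The case where the $3$ and the $2$ share a node $v$, with the $1$ in an earlier node $u$, is easy. It only asks whether $R_v$, restricted to values above the lowest point of $u$, contains a $21$. That is one $\tilde O(1)$ query to the structure of \cite{ben2024approximate}, and its primal graph is a tree.

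The hard case is the $1$ and the $3$ in the same node $u$, with the $2$ in a later node $v$. This is exactly the shape your own gadget produces. So your statement that shared-node terms are handled by \Cref{thm:optimal emptiness} and $12$-detection cannot be right for it: it would give online set disjointness with $\tilde O(n)$ preprocessing and $\tilde O(1)$ query.

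\textbf{Missing construction.}
\begin{enumerate}
\item For every node $u$, store the inclusion-maximal intervals $[\pi(i),\pi(j)]$ over increasing pairs $i<j$ in $u$. Their left endpoints are distinct, so there are at most $|P_u|$ of them and $\tilde O(n)$ overall. For every node $v$, store its value set $D_v$. Then $\ind_{132}(R_u,R_u,R_v)=1$ iff some value of $D_v$ lies in some interval of $u$.
\item Writing the union of these intervals out as an element set can cost $\Theta(n)$ per node. So you also need a reduction from this succinct interval version to plain online set disjointness. For example, map each interval to its $O(\log n)$ canonical-cover nodes in a segment tree on values, and each value to its root-to-leaf path. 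Containment then becomes a nonempty intersection. The paper does the same thing with products of covers and paths for 2D dominance.
\end{enumerate}
Without these two steps, the upper-bound half of the equivalence is not established.
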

We stress that \Cref{thm:equiv osd and rppm} is an \emph{equivalence}, not merely a one-way reduction. While reductions from conjectured-hard problems are common and yield lower bounds, an equivalence is strictly stronger: it means that the preprocessing/query complexity of multi-rectangle $132$-detection is \emph{precisely} that of \prba, so any improvement to either problem automatically transfers to the other.
Although \prba has served as a central hardness source for numerous problems~\cite{kopelowitz2016higher,GoldsteinKLP17,Kopelowitz2020TowardsOS,williams2020monochromatic,bringmann2025tight}, two-way equivalences with \prba are rare; to the best of our knowledge, ours is the first such equivalence arising from a geometric pattern-matching problem.

Both directions of the equivalence carry content.
On the algorithmic side, the reduction to \prba transfers both the preprocessing-query tradeoffs of~\cite{Kopelowitz2020TowardsOS} and the recent improvements for \prba in~\cite{alman2026truly} to multi-rectangle $132$-detection.
On the hardness side, the reverse reduction transfers the 3SUM-based conditional lower bounds~\cite{kopelowitz2016higher}; in particular, the weaker 3SUM assumption discussed above rules out $\tf(n)$ preprocessing with $\tf(1)$ query time.

The phenomenon continues beyond the $132$ pattern.
For longer patterns, we show that online $k$-set disjointness reduces to a natural $(k+1)$-slab query problem:
\begin{restatable}{theorem}{ThmHardLongInter}\label{thm:hardness interval k-1}
    $\prbb$ queries can be reduced to multi-slab queries with $k+1$ slabs, and a pattern $\sigma$ of length $k+1$ defined by
    $\sigma=(1,k+1,k,k-1,\ldots,2)$.
\end{restatable}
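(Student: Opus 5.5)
The reduction encodes each set as a vertical block of points and an element $x$ as a horizontal band of heights. The decreasing tail $k{+}1,k,\ldots,2$ of $\sigma$ will walk through the bands of the queried blocks. The leading ``$1$'' of $\sigma$, together with the ``$k{+}1$'' that shares its slab, will pin the element on both sides, so that no mixing of different elements is possible.

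\textbf{Construction.} Let $\FF=(S_1,\ldots,S_m)$ be given and identify the universe with $[u]$. For every $i\in[m]$ and copy index $c\in[k]$ we create a block $B_{i,c}$. Blocks are laid out left to right in lexicographic order of $(i,c)$, and each block occupies a contiguous range of $x$-coordinates, hence a vertical slab $I_{i,c}$.

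Inside $B_{i,c}$ we list the elements $x\in S_i$ in \emph{decreasing} order. For each such $x$ we put two points, from left to right:
\begin{itemize}
\item a low point $\ell_{i,c,x}$ at height key $(x,-\infty)$;
\item a high point $h_{i,c,x}$ at height key $(x,\,-r(i,c))$, where $r(i,c)$ is the rank of the block in the left-to-right order.
\end{itemize}
Heights are then replaced by their ranks under lexicographic order, which gives a permutation on $O(kN)=O(N)$ points. Within band $x$, high points therefore decrease as we move right, and every low point of band $x$ lies below all high points of band $x$ but above band $x-1$. Preprocessing is just building this point set and then the multi-slab structure on it.

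\textbf{Query.} Given $(i_1,\ldots,i_k)$, sort the indices (intersection is symmetric) to get $i_1\le\cdots\le i_k$. Assign copies $c_1<c_2<\cdots$ so that the blocks $(i_j,c_j)$ are strictly increasing left to right; for instance take $c_j=j$, which keeps repeated indices on distinct, correctly ordered copies. Then ask
\[
\ind_\sigma(I_{i_1,c_1},\,I_{i_1,c_1},\,I_{i_2,c_2},\ldots,I_{i_k,c_k}).
\]

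\textbf{Completeness.} If $x\in\bigcap_j S_{i_j}$, the embedding is
\[
\ell_{i_1,c_1,x},\; h_{i_1,c_1,x},\; h_{i_2,c_2,x},\ldots,h_{i_k,c_k,x}.
\]
Positions increase: the low point precedes its own high point, and the blocks are increasing. Heights match $\sigma$: the high points of band $x$ decrease with block rank, and the low point lies below all of them.

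\textbf{Soundness.} This is the step I expect to need the most care. Take any embedding whose first two points lie in block $B_{i_1,c_1}$, and whose remaining points are high points with elements $x_2,\ldots,x_k$ in the later blocks.
\begin{itemize}
\item The first point must be a low point $\ell_{\cdot,x_0}$. If it were a high point of band $x_0$, then the ``$2$'', which lies in a later block, would have to sit above it, forcing $x_k>x_0$. I will need to check that this case cannot also be consistent with the remaining constraints, or else close it off by a small tweak, such as using only low points in the first copy of each set or only high points in the later copies.
\item The second point is $h_{\cdot,x_1}$ in the same block and to the right of $\ell_{\cdot,x_0}$. Because elements are listed in decreasing order within the block, this gives $x_1\le x_0$.
\item The decreasing tail gives $x_1\ge x_2\ge\cdots\ge x_k$ by band order.
\item The ``$2$'' lies above the ``$1$'', and band-$x_0$ low points sit below band $x_0$ but above band $x_0-1$, so $x_k\ge x_0$.
\end{itemize}
Combining these, $x_0\ge x_1\ge\cdots\ge x_k\ge x_0$, so all the $x_j$ are equal to a common element of every $S_{i_j}$.

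Both the preprocessing size and the query translation are linear or constant up to sorting, so the data-structure bounds transfer unchanged.
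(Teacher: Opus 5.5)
Your reduction is correct, but it is built and verified differently from the paper's.

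\textbf{The paper's route.} The paper scales the universe by $2k$ and lays the blocks out copy-major: all first copies, then all second copies, and so on. Only the first copy holds interleaved pairs $(e,e+k)$; the $j$-th copy holds a single point $e+(k+1-j)$ per element. Soundness is then argued arithmetically in two steps. First, a gap argument shows that the two points in the first slab must be a consecutive pair $(e,e+k)$. Second, the $2k$ spacing forces every later value $f+(k+1-j)$ lying strictly between $e$ and $e+k$ to have $f=e$.

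\textbf{Your route.} You encode heights lexicographically as (element, offset), with the offset determined by the block's global rank. This is why you need $k$ copies of every set laid out set-major, and a sorted query with $c_j=j$, to make the slabs strictly increasing. The paper's copy-major layout gets that ordering for free. What you gain is a one-line soundness proof: the cyclic chain $x_0\ge x_1\ge\cdots\ge x_k\ge x_0$ of band indices. It uses only two facts: bands weakly decrease from left to right inside a block, and a strictly higher point never lies in a lower band.

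\textbf{The case you left open is already closed.} The chain never uses which of the two points of a band was picked. So if the first point were a high point, your own derivation gives $x_k>x_0$, which contradicts $x_k\le x_1\le x_0$. Likewise you need not assume that the second point, or the points in the later slabs, are high points. The chain forces all $k+1$ points into a single band $x$. A band-$x$ point in block $B_{i_j,c_j}$ exists only if $x\in S_{i_j}$, which already gives $x\in\bigcap_j S_{i_j}$. You can drop the first bullet and the high-point assumptions.

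\textbf{One detail to add.} Specify a tie-break among the low points with key $(x,-\infty)$ of the same band coming from different blocks, for example by block rank. Any tie-break inside a band preserves the chain, and the completeness embedding uses only one low point.
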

Hardness for \prbb is less understood.
The $k$-set disjointness conjecture~\cite{GoldsteinKLP17,bringmann2025tight} states that $N^{o(1)}$ query time requires $N^{k-o(1)}$ preprocessing time.
Under the $k$-set disjointness conjecture, our reduction implies that any data structure for the $(k+1)$-slab problem above with $n^{o(1)}$ query time must have $n^{k-o(1)}$ preprocessing time.
Bringmann et al.~\cite{bringmann2025tight} established this preprocessing lower bound under the conjecture that Zero-$(k+1)$-Clique requires $n^{k+1-o(1)}$ time, although the latter conjecture has recently been refuted~\cite{alman2026truly}.
The reductions remain valid and can still yield weaker preprocessing-query lower bounds from weaker time lower bounds for Zero-$(k+1)$-Clique, as in the 3SUM discussion above.

\paragraph{Algorithmic applications.}
Our final set of results are algorithms for exact and approximate counting which improve upon the state of the art in several regimes.

As mentioned, the rectangle-to-slab reduction above improves exact counting for $5$-patterns (see Corollary \ref{cor:bl24}). The next algorithmic application requires heavier machinery, and is thus stated as a theorem.
\begin{restatable}[$6$-Patterns]{theorem}{thmApproxCountingSix}\label{thm:approx counting6}
    For any pattern $\sigma$ of length $6$, and any $\eps>0$, there exists a randomized algorithm that outputs $\hat{t}$ such that \whp, $\hat{t}=(1\pm\eps)\#_{\sigma}(\pi)$.
    The running time of the algorithm is $\tf(n^{7/4}/\eps^{O(1)})$.
\end{restatable}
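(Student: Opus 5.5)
We plan to follow the standard ``approximate counting via colorful detection oracles'' template, reducing the problem to designing a fast \emph{colorful} detection oracle for $6$-patterns. We randomly color the points of $P$ with $6$ colors, or more precisely partition the $x$-axis and $y$-axis into blocks. We then invoke the known reduction from $(1\pm\eps)$-approximate counting to $\tilde{O}(\eps^{-O(1)})$ calls of an oracle. Given color classes $V_1,\ldots,V_6$, the oracle decides whether some embedding $\phi$ has $\phi(i)\in V_i$ for all $i$. This reduces \Cref{thm:approx counting6} to answering each colorful detection query in $\tilde{O}(n^{7/4})$ time.

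For the oracle, we view a colorful embedding as a solution of a binary CSP. The variables are the $6$ pattern indices, and the domain of variable $i$ is $V_i$. Every pair $(i,j)$ carries the constraint that the $x$- and $y$-order of the two chosen points agree with $\sigma$. We then split the six variables into two triples. A natural choice, following the pattern-tree idea of Beniamini--Lavee, is a split by a vertical and a horizontal separator. We handle the constraints between the triples by a heavy/light decomposition on a grid of $\sqrt{n}\times\sqrt{n}$ (or $n^{1/4}$-scaled) cells. For \emph{light} configurations, where the separating points fall in cells with few points, we enumerate the middle pair of points in $O(n^{7/4})$ total. For each enumerated pair, the remaining two triples become single-rectangle or multi-rectangle detection queries for patterns of length at most $3$ restricted to colour classes.

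Here we plug in \Cref{thm:optimal emptiness} and \Cref{thm:re ppm}. The remaining triples are length-$3$ patterns constrained to rectangles. After the rectangle-to-slab reduction, they can be answered with $\tilde{O}(n)$ preprocessing and $\tilde{O}(1)$ query time. There is a catch: because of the colour restriction, the sub-triples are sometimes \emph{multi}-rectangle $132$-type queries, which are hard by \Cref{thm:equiv osd and rppm}. In those cases we will instead precompute the answers offline. This uses the fast set-disjointness-style tradeoff, batched matrix multiplication, or in the worst case $O(n^{1/4})$ cells times $O(n^{3/2})$ work, balanced to give $n^{7/4}$. For \emph{heavy} configurations there are only $O(n^{1/4})$ heavy cells per row and column. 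Here we guess the cell pattern, of which there are $n^{O(1)/4}$ choices pruned to $\tilde{O}(n^{1/4})$ relevant ones, and recurse or brute-force on the sparse cell-level CSP. Each branch costs $\tilde{O}(n^{3/2})$, for $\tilde{O}(n^{7/4})$ in total.

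The main obstacle is the case analysis over all $6$-patterns. We must ensure that for every $\sigma$ some split into two triples exists in which the cross constraints reduce to queries answerable quickly, namely single-rectangle ones or multi-rectangle ones of easy type, rather than generic multi-rectangle $132$ queries. We would first try to exploit the colour classes: choosing block colourings aligned to both axes makes every cross constraint automatically satisfied or violated at the block level. This would leave only within-triple constraints plus a rectangle restriction, which is exactly the single-rectangle setting of \Cref{thm:optimal emptiness}. Balancing the block size, we take blocks of size $n^{3/4}$, against the enumeration cost is then what should yield the exponent $7/4$.
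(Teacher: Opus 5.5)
There is a genuine gap. The step you flag as ``the main obstacle'' is the real content of the theorem, and the ways you propose around it do not work.

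First, the Dell--Lapinskas--Meeks reduction calls the colourful oracle on colour classes that \emph{it} chooses: essentially arbitrary vertex subsets produced by random colouring and sampling. Your oracle must therefore decide the partitioned problem for arbitrary $V_1,\ldots,V_6$. You cannot ``choose block colourings aligned to both axes,'' so the plan of your last paragraph, which makes all cross constraints trivial, is not available.

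Second, with arbitrary colour classes, \Cref{thm:optimal emptiness} does not apply, since it concerns uncoloured single rectangles. \Cref{thm:re ppm} only converts multi-rectangle queries into multi-slab queries that are no easier. Your colour-restricted length-$3$ pieces are coloured queries to which neither result applies, and, as you note, in the multi-range case they are as hard as online set disjointness (\Cref{thm:equiv osd and rppm}). Your fallback (``set-disjointness tradeoff, batched matrix multiplication, or $n^{1/4}$ cells times $n^{3/2}$ work'') names tools but never says how many queries arise or which trade-off they need.

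The enumeration step is also not coherent. Fixing a ``middle pair'' out of two triples leaves four free variables, not two triples. Nothing bounds the enumerated light pairs by $n^{7/4}$, since there are $\Theta(n^2)$ pairs in general. The heavy branch (``$n^{O(1)/4}$ choices pruned to $\tilde{O}(n^{1/4})$, each in $\tilde{O}(n^{3/2})$'') is asserted rather than derived.

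Finally, a vertical and a horizontal separator kill only one $x$-edge and one $y$-edge, and these are dictated by the witness, not chosen by you. Nothing in your argument shows the remaining cross constraints between your two triples are of an ``easy'' type. The paper explicitly observes that cut-based splits alone do not suffice for some $6$-patterns.

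What is missing is a concrete structural target that makes every residual query cheap, plus a data structure for that target. The paper builds this in three layers.
\begin{enumerate}
\item Global separators (\Cref{lem:global-sep}): the LCA of the witness in a two-dimensional range tree yields sub-instances in which one $x$-edge and one $y$-edge are redundant. Solving each in $O(|P_w|^{q})$ time costs $\tilde{O}(n^{q})$ overall.
\item Inside a sub-instance with $m$ points, one variable is fixed in all $m$ ways. A computer-checked claim (\Cref{claim:csp graph}) guarantees that, for every $6$-pattern and every removed pair of edges, some such variable leaves a \emph{nice} five-vertex primal graph. After peeling degree-one vertices, this is either a forest, or a part on at least three vertices whose simple graph is a subgraph of a $c$-cycle with $c\le 4$, attached to the remaining (at most two) vertices by a cut of size at most one.
\item Nice instances are answered by combining the acyclic/monotone oracles, the cut-of-size-one lemma (\Cref{lemma:P3}), and a heavy--light structure for cycle-like primal graphs (\Cref{thm3:csp Gm}). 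In that structure, answers are tabulated for all $c$-tuples of heavy canonical rectangles. Light queries are brute-forced by fixing a point of the light rectangle, which turns the cycle into a path. With $\tau=m^{1-1/c}$ this gives $\tilde{O}(m^{2-1/c})$ preprocessing and $\tilde{O}(m^{1-1/c})$ query time.
\end{enumerate}
For $c=4$ this is exactly $\tilde{O}(m^{7/4})+m\cdot\tilde{O}(m^{3/4})$ per sub-instance, which is where the exponent $7/4$ comes from. Your proposal has no counterpart to the reduction to cycles of length at most $4$, nor to the $\tau(n/\tau)^{4}$ tabulation that pays for it.
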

This is the first algorithm to beat the quadratic barrier for approximate counting of $6$-patterns. The proof makes use of several interesting ingredients, which we discuss in detail at the end of Section \ref{ssec:tech-ov}.

For completeness, we also record two black-box consequences of existing general frameworks. Combining \cite{GM14,DellLM24} gives the fastest known approximate-counting bound we are aware of, when $k$ is treated as a (small) parameter:
\begin{corollary}[\cite{GM14,DellLM24}]\label{cor:combine1}
    For any pattern $\sigma$ of length $k$, and any $\eps>0$, there exists a randomized algorithm that outputs $\hat{t}$ such that \whp, $\hat{t}=(1\pm\eps)\#_{\sigma}(\pi)$. The running time of the algorithm is
    \begin{align*}
        O_k(n^{\frac{k+2}{4}+\psi_k}\cdot \log^{6} n/\eps^2)\;,
    \end{align*}
    where $\psi_k=0$ if $k$ is even, and $\frac1{4k}$ if $k$ is odd.
\end{corollary}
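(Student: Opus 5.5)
The plan is to combine two black boxes. The first is the twin-width/Guillemot--Marx machinery \cite{GM14}, which casts $\sigma$-detection as a colorful binary CSP on a structure of bounded width. The second is the Dell--Lapinskas--Meeks framework \cite{DellLM24}, which turns a colorful-detection oracle for a $k$-partite hypergraph into a $(1\pm\eps)$-approximate counter. That counter uses $O_k(\log^{O(1)} n/\eps^2)$ oracle calls, with an overall overhead of the form $\log^{6}n/\eps^2$. It therefore suffices to show the following. Given a partition of $[n]$ into $k$ color classes $V_1,\ldots,V_k$ (where the $i$-th pattern element must come from $V_i$) and arbitrary subsets $V_i'\subseteq V_i$, we can decide in $O_k(n^{\frac{k+2}{4}+\psi_k})$ time whether some copy of $\sigma$ uses exactly one point from each $V_i'$ in the prescribed roles. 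Here we view copies of $\sigma$ as the edges of a $k$-uniform $k$-partite hypergraph on vertex set $P$. One subtlety is that \cite{DellLM24} needs the oracle on induced sub-hypergraphs. Restricting each $V_i$ to $V_i'$ is again a colorful $\sigma$-detection instance, with $P$ replaced by the union of the $V_i'$, so the oracle is self-reducible in the required sense.

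For the oracle, I would follow the standard split-and-combine approach for pattern detection. Write $\sigma$ as a union of two halves, say the first $\lceil k/2\rceil$ and last $\lfloor k/2\rfloor$ positions, and guess a grid of $\sqrt n\times\sqrt n$ blocks. Each partial embedding of a half is summarized by a ``profile'' of which block row and column each point occupies. For a fixed profile, the constraints between the two halves are binary order relations, so they become a binary CSP whose constraint graph has bounded width in $k$. This is where \cite{GM14} enters: the $k$ pattern elements are the variables, the $\approx\sqrt n$ candidate blocks are the domains, and the constraints have bounded twin-width. The count of possible profiles and the cost of checking each should balance at exponent $\frac{k+2}{4}$ when $k$ is even. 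When $k$ is odd, the halves are unbalanced by one element. Rebalancing with a grid of side $n^{1/2\pm 1/(2k)}$ instead of $\sqrt n$ should incur the additive $\frac1{4k}$ loss, matching $\psi_k$.

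The step I expect to be the main obstacle is the exponent accounting. Specifically, I must verify that the existing bound from \cite{GM14}, or its refinement for binary CSPs with colorful constraints, actually gives $n^{(k+2)/4}$ per detection call, and does not merely give $f(k)\cdot n$ with a huge $f$ or a weaker polynomial. To be safe, I would first try to quote the best known exact detection bound for colorful PPM (the $n^{k/4+o(k)}$-type results). I would then check that it holds in the colorful setting with the stated explicit exponent. Finally, I would plug it into \cite{DellLM24} and multiply the costs, absorbing constants depending on $k$ into $O_k$. Because the corollary is stated as a black-box combination, the proof itself should reduce to these two verifications: the oracle works on colored induced subinstances, and its running time is $O_k(n^{\frac{k+2}{4}+\psi_k})$.
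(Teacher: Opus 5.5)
\textbf{There is a genuine gap in the oracle step.} Your plan requires a \emph{colourful} (partitioned) $\sigma$-detection routine that runs in $O_k(n^{\frac{k+2}{4}+\psi_k})$ time, and no such routine is available.

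\begin{itemize}
\item The linear-time algorithm of \cite{GM14} handles only \emph{uncoloured} PPM. For the partitioned variant, \cite{GM14} itself proves that there is no $f(k)n^{o(k/\log k)}$ algorithm under ETH, and the same holds for the colourful variant \cite{berendsohn2019complexity}. So the role you give \cite{GM14}, as a solver for colour-constrained ``bounded twin-width'' binary CSPs, does not exist.
\item The $\sqrt n\times\sqrt n$ grid/profile sketch is not an argument. Nothing in it controls the cost of resolving the colour constraints inside blocks, and the $\frac{1}{4k}$ correction is guessed rather than derived.
\item Your fallback, the $n^{k/4+o(k)}$ CSP bound of \cite{BKM21}, does not give the claimed exponent. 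Its $o(k)$ term grows with $k$, so for large $k$ it exceeds $\frac12+\frac1{4k}$. The paper makes exactly this comparison right after the corollary.
\item A colourful detection algorithm with exponent $k/4+O(1)$ for all $k$ would be a new result. The paper needs substantial machinery just to reach $n^{7/4}$ for partitioned detection at $k=6$.
\end{itemize}
The reduction you describe, a colourful oracle plus polylogarithmically many calls, is the one from \cite{DellLM22} that the paper uses for the $k=6$ theorem, not for this corollary.

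\textbf{The missing idea is to use the \emph{uncoloured} independence-oracle model of \cite{DellLM24}.} The hypergraph has vertex set $[n]$, and its edges are the $k$-sets of indices that form copies of $\sigma$; its edge count is $\#_\sigma(\pi)$.
\begin{itemize}
\item An independence query on $S\subseteq[n]$ asks whether the restriction of $\pi$ to $S$, after rank compression, contains $\sigma$. This is plain uncoloured PPM, so \cite{GM14} answers it in $2^{O(k^2\log k)}\cdot|S|$ time. In other words, the oracle cost function is linear in the query size.
\item The edge-estimation algorithm of \cite{DellLM24} for an independence oracle with linear cost has total cost $O_k(n^{\frac{k+2}{4}+\psi_k}\log^{6}n/\varepsilon^2)$.
\item The exponent, including the odd/even split, comes from that query-cost tradeoff and not from the detection routine. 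It equals $\lfloor (k+1)^2/4\rfloor/k$.
\end{itemize}
Multiplying the oracle cost into this bound and absorbing $2^{O(k^2\log k)}$ into $O_k$ gives the corollary. With this route, the self-reducibility you worried about is automatic, because every induced sub-hypergraph is again an uncoloured PPM instance.
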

In comparison, the best known algorithm is an exact counting algorithm by Berendsohn, Kozma and Marx \cite{BKM21}, with running time of the form $O(n^{k/4+o(k)})$, where the $o(k)$ term is growing with $k$ (and is not bounded by an absolute constant). In our upper bound, the leading term in the exponent is equal to theirs, $k/4$, but the other term is bounded by $1/2 + 1/4k < 1$.

Finally, combining \cite{GM14,meeks2019randomised} yields a general output-sensitive enumerating/listing algorithm:\footnote{By \emph{enumeration} or \emph{listing}, we mean outputting a list of $t$ copies of the pattern in the input (if such a number of copies exists). Both notions refer in the literature to this same task, and we use them interchangeably in the text.}
\begin{corollary}[\cite{GM14,meeks2019randomised}]\label{cor:combine2}
    For any pattern $\sigma$ of length $k$, there exists a randomized algorithm for enumerating (i.e., listing) all $N$ embeddings of $\sigma$ in $\pi$ in time $\tf(n\cdot (N+1))$ \whp.
\end{corollary}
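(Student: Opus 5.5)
The plan is a black-box composition of two results. The first is the Guillemot--Marx decomposition~\cite{GM14}. The second is the algorithm of Meeks~\cite{meeks2019randomised} for enumerating witnesses of a multicoloured (colourful) subgraph problem, given a decision or colourful-detection oracle. The permutation $\pi$ is viewed as a relational structure (the point set $P$ with two linear orders), and copies of $\sigma$ are viewed as the witnesses to be listed.

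The framework of \cite{meeks2019randomised} needs a \emph{colourful detection oracle}. This oracle is given an arbitrary subset $P'\subseteq P$ together with a colouring $c:P'\to[k]$. It decides whether $P'$ contains a copy $\phi$ of $\sigma$ in which the point realizing pattern entry $i$ has colour $i$, for every $i\in[k]$. We must implement this oracle in time $\tf(n)$ (more precisely $O_k(n\log^{O(1)} n)$).

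For this we invoke \cite{GM14}. A permutation avoiding $\sigma$ (and here the relevant colourful variant) either has bounded grid complexity, or contains a large grid minor that forces the pattern. In the bounded-width case, their dynamic program over a merge sequence runs in $2^{O(k^2\log k)}\cdot n$ time. This dynamic program extends verbatim to colour-constrained embeddings: each DP state records which pattern indices are realized inside the current block, and the colour constraint only restricts which points may realize which indices.

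In the other case, a $k\times k$ grid (or Marcus--Tardos type) certificate only guarantees an uncoloured copy. We therefore avoid relying on that branch. Instead, we use the standard colour-coding trick. We note that colourful detection of $\sigma$ on $(P',c)$ is equivalent to uncoloured detection of a blown-up pattern on a modified point set. Alternatively, we can directly run the Guillemot--Marx linear-time detection on each colour class restriction, using the fact that their algorithm actually decides the \emph{partial-order-with-labels} embedding problem in FPT linear time. I expect this to be the main technical point to verify. Namely, the $O_k(n)$ detection algorithm of \cite{GM14} must handle vertex labels (colours), and the grid-minor short-circuit must be either avoidable or label-compatible. My first attempt would be the twin-width view (Bonnet et al.), where first-order model checking with unary predicates is FPT-linear on bounded twin-width structures. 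Colourful $\sigma$-detection is a first-order sentence with $k$ unary colour predicates. Permutations of bounded twin-width are handled directly, and unbounded twin-width forces the large grid containing every coloured-compatible pattern only up to recolouring. So I would fall back on the guarantee in \cite{GM14} that the DP decides the problem whenever the width is bounded, and otherwise pass to a subgrid.

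Given an $\tf(n)$ colourful oracle, we apply Meeks' theorem. It lists all $N$ witnesses of a $k$-vertex structure using $\tf(N+1)\cdot f(k)$ oracle calls with high probability. These calls are made on sub-instances obtained by restricting $P$ and colourings, so each call costs $\tf(n)$. Multiplying the two gives total time $\tf(n\cdot(N+1))$, with success with high probability from the randomized colour-coding inside \cite{meeks2019randomised}. This proves the corollary.
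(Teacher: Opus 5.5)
\textbf{The gap is the colourful oracle.} The step you flag as ``the main technical point to verify'' is not just unverified but false in general. Deciding whether $P'$ contains a copy of $\sigma$ whose $i$-th entry has colour $i$ is exactly the partitioned problem of \Cref{prob:pppm}. As recalled in the preliminaries, Guillemot and Marx themselves rule out any $f(k)\,n^{o(k/\log k)}$ algorithm for it under ETH. So an $O_k(n\,\mathrm{polylog}\,n)$ oracle valid for every $k$ would refute ETH. It would also, through the Dell--Lapinskas--Meeks reduction used for \Cref{thm:approx counting6}, give near-linear approximate counting for every $k$, contradicting Opler's lower bound quoted in the introduction.

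Each of your repairs runs into this. The merge-sequence DP does accept colour lists, but only on bounded-width inputs. The uncoloured algorithm escapes that restriction via ``large grid $\Rightarrow$ every pattern is present'', and this shortcut has no coloured analogue. So ``pass to a subgrid'' cannot be made to work in general. The same objection applies to first-order model checking on bounded twin-width, which again requires the input permutation to have bounded twin-width. Likewise, a reduction from colourful detection to uncoloured detection of a blown-up pattern, with polynomial size, would make the partitioned problem FPT via GM14.

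\textbf{The missing observation is that you do not need a colourful oracle at all.} Meeks's theorem assumes only a decision oracle for the \emph{uncoloured} problem on an arbitrary sub-universe: given $X\subseteq P$, does $X$ contain some copy of $\sigma$? The colour coding is internal to her algorithm, and only subsets of the universe are passed to the oracle. This is exactly what distinguishes her framework from the later \emph{colourful}-oracle framework of Dell, Lapinskas and Meeks.

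Such a query is plain PPM on the rank-compressed permutation $\pi|_X$ of length $|X|\le n$. Guillemot--Marx decide it in $2^{O(k^2\log k)}\,|X|$ time, after $O(n\log n)$ compression. Multiplying by the $f(k)\cdot(N+1)\cdot\mathrm{polylog}(n)$ oracle calls made by Meeks's algorithm with high probability gives $\tilde{O}(n(N+1))$ for fixed $k$. This is the intended two-ingredient combination, and your final accounting paragraph goes through unchanged once the oracle is replaced by this one.
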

This implies, in particular,
that exact counting can be done asymptotically faster than the general upper bound provided that $N \ll n^{k/4}$. Given that exact and approximate counting cannot be done in time $n^{o(k/\log k)}$ conditioned on ETH \cite{BKM21,Opler26}, we can also deduce that hard instances for the general case must have many copies of $\sigma$ in $\pi$.

\subsection{Recent Progress on 3SUM and Online Set Disjointness}
A very recent work~\cite{alman2026truly} provided a truly subquadratic algorithm for 3SUM and a truly subcubic algorithm for APSP, thereby refuting the corresponding hypotheses and, via known reductions, the Zero-Weight $k$-Clique hypotheses.
The conditional lower bounds discussed above are therefore no longer supported by these hypotheses, although weaker bounds still follow from weaker assumptions, as noted there.

None of our results relies on these hypotheses; in particular, the equivalence between \prba and multi-rectangle $132$-detection (\Cref{thm:equiv osd and rppm}) is unconditional.
Since \prba plays a central role in our results, we note that the same work also improved its preprocessing-query tradeoff when the universe is small and the preprocessing time is nearly quadratic.
We do not know how to leverage this improvement to speed up our approximate-counting algorithm for patterns of length~$6$ (\Cref{thm:approx counting6}), for two reasons.
First, that algorithm answers \rppm queries for patterns of length~$4$, which in general are at least as hard as \textsc{3-OnlineSetDisj} (\Cref{thm:hardness interval k-1}), whereas this improvement concerns \prba (i.e., \textsc{2-OnlineSetDisj}).
Second, the improvement requires nearly quadratic preprocessing time, while our algorithm can afford only $\tf(n^{7/4})$ preprocessing time.

\newcommand{\rangeT}{\wpa{RangeTree}}
\subsection{Technical Overview}
\label{ssec:tech-ov}
We start with the single-rectangle detection problem for patterns of length $3$, which proves~\Cref{thm:optimal emptiness}.
Given the permutation point set $P=\set{(i,\pi(i))\mid i\in[n]}$, we preprocess $P$ so that, for a query rectangle $Q$, we can decide whether $P\cap Q$ contains a prescribed pattern.
We focus on $132$ and write $\ind_{132}(Q)$ for this predicate; the other length-$3$ patterns are handled in the same way, up to the appropriate symmetries.

As a warm-up, we first overview \cite{ben2024approximate} algorithm for near-optimal detection for single- and multi-rectangle queries for patterns of length $2$, specifically $\ind_{12}(Q)$ queries, using the \emph{range-tree} data structures~\cite{BentleyFriedman79}.
We then explain why the na\"ive extension of this approach to length-$3$ patterns fails: it runs into the set-disjointness barrier captured by \Cref{thm:equiv osd and rppm}.
To overcome this obstacle, we have three main parts:
\begin{enumerate}
    \item Modifying the range-tree.
    \item Solving slab queries for $132$-patterns in $\tf(n)$ preprocessing and $\tf(1)$ query time.
    \item Showing that every rectangle query can be reduced to either a slab query or a ``geometrically simple'' query which can be solved in $\tf(1)$ time after $\tf(n)$ preprocessing.
\end{enumerate}
By combining these three parts, we get \Cref{thm:optimal emptiness}.

We conclude with an overview of the tools and ideas behind the proof of
\Cref{thm:approx counting6}, including the CSP formulation for PPM and the connection between approximate counting and decision reductions.

\paragraph{Warm-Up: $\ind_{12}(Q)$ Queries \cite{ben2024approximate}.}
We are given a permutation $\pi:[n]\to[n]$, viewed as a point set $P=\{(i,\pi(i))\}_{i\in[n]}$,
and wish to preprocess $P$ so that given an axis-parallel rectangle $Q=[a,b]\times[c,d]$,
we can quickly decide whether $Q\cap P$ contains two points that are increasing in both coordinates,
i.e.\ whether $\ind_{12}(Q)=1$.

Suppose we have a data structure \rangeT\footnote{See \Cref{ssec:range-trees} for a precise definition and overview.}~\cite{BentleyFriedman79} that preprocesses $P$ in $\tf(n)$ time, producing a family $\RR$ of $\tf(n)$ axis-parallel rectangles with the following properties:
$(i)$~each point of $P$ belongs to $\tf(1)$ rectangles from $\RR$; equivalently, $\sum_{R\in\RR}|P\cap R|=\tf(n)$;
$(ii)$~any axis-parallel rectangle $Q\subseteq [n]\times[n]$ has a partition $\CR{Q}\subseteq\RR$ containing $\tf(1)$ rectangles, such a partition can be computed in $\tf(1)$ time.
With such a structure, $\ind_{12}(Q)$ reduces to multi-rectangle queries over $\RR$:
\begin{align*}
    \ind_{12}(Q) = \bigvee_{R_1,R_2\in\CR{Q}}\ind_{12}(R_1,R_2)\;,
\end{align*}
where $\ind_{12}(R_1,R_2)$ indicates whether some two points of $P$ form a $12$-pattern with one point in $R_1$ and the other in $R_2$.
Indeed, any witness $(x_1,y_1),(x_2,y_2)\in Q\cap P$ with $x_1<x_2$ and $y_1<y_2$
must lie in some $R_1,R_2\in\CR{Q}$, giving $\ind_{12}(R_1,R_2)=1$.
Conversely, $\ind_{12}(R_1,R_2)=1$ for $R_1,R_2\in\CR{Q}$ implies $\ind_{12}(Q)=1$, by monotonicity with respect to set inclusion: if $R\subseteq R'$ then $\ind_{12}(R)\leq \ind_{12}(R')$.

It remains to show that each pair query $\ind_{12}(R_1,R_2)$ with $R_1,R_2\in\RR$ can be answered in $\tf(1)$ time after $\tf(n)$ preprocessing.
This task is split into two cases: $R_1=R_2$ and $R_1\neq R_2$.
The first case involves computing $\ind_{12}(R)$ for every $R\in\RR$ in $O(|P\cap R|)$ time. Computing $\ind_{12}$ of a point set in near-linear time is standard: sweep the points in increasing $x$-order while maintaining the minimum $y$-coordinate seen so far, and report that a copy was found once a point lies above the current minimum.
The total preprocessing time is $\sum_{R\in\RR} O(|P\cap R|)=\tf(n)$ by property~$(i)$.
We store these values in a table of size $|\RR|=\tf(n)$, so that $\ind_{12}(R)$ can be retrieved in $\tf(1)$ time for any $R\in\RR$.

For the second case, we use that any two distinct rectangles in $\CR{Q}$ are disjoint and hence are separated horizontally or vertically (illustrated in \Cref{fig:to2-12-pattern-pair}).
If $R_2$ is above $R_1$, then $\ind_{12}(R_1,R_2)=1$ if and only if the leftmost point in $P\cap R_1$ is left of the rightmost point in $P\cap R_2$; by storing these extremal points during preprocessing, we can answer such query in constant time. If $R_2$ is below $R_1$, then $\ind_{12}(R_1,R_2)=0$.
The case where $R_2$ is to the right of $R_1$ is symmetric. This completes the $\tf(1)$-time query algorithm for $\ind_{12}(Q)$.
\begin{figure}[h]
    \centering
    \begin{subfigure}[b]{0.48\textwidth}
        \centering
        \begin{tikzpicture}[scale=0.82,
                every node/.style={font=\small},
                point/.style={circle, inner sep=1.8pt, fill=#1, draw=#1!80!black, line width=0.4pt},
                point/.default={black},
                highlight/.style={circle, inner sep=2.6pt, fill=#1, draw=#1!50!black, line width=0.7pt},
                highlight/.default={black},
                rect/.style={rounded corners=1.5pt, line width=0.8pt},
            ]
            \fill[blue!4] (1,0.5) rectangle (6.25,5.75);
            \draw[rect, blue!40, densely dashed] (1,0.5) rectangle (6.25,5.75);
            \node[blue!60, font=\footnotesize\itshape] at (4.125, 6.05) {$Q$};

            \fill[red!6] (2.75,0.75) rectangle (5.75,2.5);
            \draw[rect, red!60!black] (2.75,0.75) rectangle (5.75,2.5);
            \node[red!70!black, font=\footnotesize] at (3.25, 0.95) {$R_1$};

            \fill[teal!6] (1.25,3.75) rectangle (4.5,5.5);
            \draw[rect, teal!70!black] (1.25,3.75) rectangle (4.5,5.5);
            \node[teal!70!black, font=\footnotesize] at (3.75, 5.3) {$R_2$};

            \node[point=red!70!black] at (4.8, 1.5) {};
            \node[point=red!70!black] at (5.2, 1.0) {};
            \node[point=red!70!black] at (4.2, 2.0) {};

            \node[point=teal!70!black] at (1.75, 5.15) {};
            \node[point=teal!70!black] at (2.3, 4.85) {};
            \node[point=teal!70!black] at (3.0, 4.4) {};

            \coordinate (leftmost) at (3.4, 2.25);
            \node[highlight=red!70!black] at (leftmost) {};
            \node[red!70!black, font=\scriptsize\itshape, anchor=south east] at ([xshift=-4pt, yshift=2pt]leftmost) {leftmost};

            \coordinate (rightmost) at (3.75, 4.0);
            \node[highlight=teal!70!black] at (rightmost) {};
            \node[teal!70!black, font=\scriptsize\itshape, anchor=north west] at ([xshift=4pt, yshift=-2pt]rightmost) {rightmost};

            \draw[{Stealth[length=4pt]}-{Stealth[length=4pt]}, thick, black!60]
            ([yshift=4pt]leftmost) -- ([yshift=-4pt]rightmost);
        \end{tikzpicture}
        \caption{Horizontal separation ($R_2$ above $R_1$).}
        \label{fig:to2-12-pattern-horizontal}
    \end{subfigure}
    \hfill
    \begin{subfigure}[b]{0.48\textwidth}
        \centering
        \begin{tikzpicture}[scale=0.82,
                every node/.style={font=\small},
                point/.style={circle, inner sep=1.8pt, fill=#1, draw=#1!80!black, line width=0.4pt},
                point/.default={black},
                highlight/.style={circle, inner sep=2.6pt, fill=#1, draw=#1!50!black, line width=0.7pt},
                highlight/.default={black},
                rect/.style={rounded corners=1.5pt, line width=0.8pt},
            ]
            \fill[blue!4] (1,0.5) rectangle (7.25,5.75);
            \draw[rect, blue!40, densely dashed] (1,0.5) rectangle (7.25,5.75);
            \node[blue!60, font=\footnotesize\itshape] at (4.125, 6.05) {$Q$};

            \fill[red!6] (1.25,2.5) rectangle (4.0,5.5);
            \draw[rect, red!60!black] (1.25,2.5) rectangle (4.0,5.5);
            \node[red!70!black, font=\footnotesize] at (3.5, 5.25) {$R_1$};

            \fill[teal!6] (4.75,0.75) rectangle (7.0,4.0);
            \draw[rect, teal!70!black] (4.75,0.75) rectangle (7.0,4.0);
            \node[teal!70!black, font=\footnotesize] at (6.5, 3.75) {$R_2$};

            \node[point=red!70!black] at (1.75, 5.15) {};
            \node[point=red!70!black] at (2.25, 4.85) {};
            \node[point=red!70!black] at (3.0, 4.4) {};

            \node[point=teal!70!black] at (6.5, 1.25) {};
            \node[point=teal!70!black] at (6.25, 1.75) {};
            \node[point=teal!70!black] at (5.75, 2.5) {};

            \coordinate (bottommost) at (3.75, 3.25);
            \node[highlight=red!70!black] at (bottommost) {};
            \node[red!70!black, font=\scriptsize\itshape, anchor=north] at ([yshift=-6pt]bottommost) {bottommost};

            \coordinate (topmost) at (5.0, 3.75);
            \node[highlight=teal!70!black] at (topmost) {};
            \node[teal!70!black, font=\scriptsize\itshape, anchor=south] at ([yshift=6pt]topmost) {topmost};

            \draw[{Stealth[length=4pt]}-{Stealth[length=4pt]}, thick, black!60]
            ([xshift=4pt]bottommost) -- ([xshift=-4pt]topmost);
        \end{tikzpicture}
        \caption{Vertical separation ($R_2$ right of $R_1$).}
        \label{fig:to2-12-pattern-vertical}
    \end{subfigure}
    \caption{Checking $\ind_{12}(R_1,\;R_2)$ for two rectangles in $\RR$ using one extreme-point comparison.}
    \label{fig:to2-12-pattern-pair}
\end{figure}

\paragraph{Why This Approach Breaks for (Non-Monotone) Length-$3$ Patterns.}
The above approach breaks down for non-monotone patterns such as $132$, as we show next.
Starting with the same data structure \rangeT, we can reduce $\indd(Q)$ to multi-rectangle queries $\indd(R_1,R_2,R_3)$ for $R_1,R_2,R_3\in\CR{Q}$, as
\begin{align*}
    \indd(Q) \;=\; \bigvee_{R_1,R_2,R_3\in\CR{Q}}\indd(R_1,R_2,R_3)\;,
\end{align*}
by the same witness and monotonicity argument we had for the length-$2$ pattern.
To get $\tf(1)$ query time for $\indd(Q)$, we need to answer each triple query $\indd(R_1,R_2,R_3)$ with $R_1,R_2,R_3\in\CR{Q}$ in $\tf(1)$ time, and since $|\CR{Q}|=\tf(1)$, this would give $\tf(1)$ query time for $\indd(Q)$.
This is where the approach breaks down: unlike the length-$2$ case, a triple query $\indd(R_1,R_2,R_3)$ over rectangles from $\RR$ cannot be answered in $\tf(1)$ time after $\tf(n)$ preprocessing, assuming that 3SUM requires $\Omega(n^{3/2+\eps})$ time for some constant $\eps>0$.
We explain this barrier, and then show how to bypass it.

\paragraph{The Online Set-Disjointness Barrier.}
\newcommand{\tds}{\wpa{SlabQuery}}
In \prba, the input is a universe $U$ and a family $\FF = (S_1,\ldots,S_m)$ of where $S_i\subseteq U$ with total size $N = \sum_{i=1}^m |S_i|$, and the goal is to preprocess $\FF$ to answer disjointness queries: given $S_i,S_j\in\FF$, is $S_i\cap S_j=\emptyset$?

We show a reduction from \prba to $132$-\sppm: a data structure that preprocesses a point set $P$ and answers queries $\indd(R_1,R_2,R_3)$, where each $R_i$ is a vertical slab, i.e., a rectangle of the form $[a,b]\times[1,n]$, and $\indd(R_1,R_2,R_3)=1$ iff some $132$-pattern in $P$ has its $i$th element in $R_i$.
We later give the reverse reduction, establishing equivalence between the two problems.
We note that encodings of set disjointness by $132$-copies were used in multiple occasions in the literature, including by Berendsohn~\cite{BerendsohnStreams26} to obtain space lower bounds for deterministic detection of non-monotone patterns in the streaming model, by reduction to the communication complexity of \textsc{Disjointness}; and by Jel\'inek, Opler, and Valtr~\cite{JelinekOV24} in the nondeterministic and randomized settings. Somewhat related constructions appear also in property testing lower bounds \cite{NewmanRRS19,BenEliezerC18}. The new connection we show here is between the \emph{online} variant of set disjointness, \prba, and the data structural, preprocessing-query variant of PPM.

The first step is to reduce the family $\FF$ to a point set $P$ in the plane, and The second step is to reduce a disjointness query $(S_i,S_j)$ to a slab query $\indd(R_1,R_2,R_3)$.
Given $\FF = (S_1,\ldots,S_m)$, we construct a new point set $P\subseteq[3N]\times[3N]$ of size $3N$, encoded as an array $A$ where every point $(x,y)\in P$ translates to the entry $A[x]=y$.
We assume that $U=[n]$, and that elements of $U$ are spaced at least $10$ apart (otherwise map $z$ to $10z$ for all $z\in U$).
We treat each set $S_i\in \FF$ as an ordered decreasing sequence of integers,
i.e., $S_i = (e_1,\ldots,e_{N_i})$ with $e_1 > e_2 > \cdots > e_{N_i}$,
and let $C = S_1 \circ \cdots \circ S_m = (e_1,\ldots,e_N)$ be the concatenation of all sets in order.
Define
\begin{align*}
    A = B_1 \circ B_2\;,&& B_1 = (e_1,e_1+2,e_2,e_2+2,\ldots,e_N,e_N+2)\;,&& B_2 = (e_1+1,e_2+1,\ldots,e_N+1)\;.
\end{align*}
In words, the array $A$ is the concatenation of $B_1$ and $B_2$. The array $B_1$ is obtained by duplicating each entry of $C$ and increasing the second copy by $2$, and $B_2$ is obtained by shifting $C$ up by $1$.
If values repeat, we replace $A$ by the stable-rank permutation obtained by ordering the pairs $(A[i],i)$ lexicographically; equivalently, equal values are broken from left to right.
This preserves the reduction, since intended witnesses use distinct values and tie-breaking equal values cannot create the required decreasing part of the pattern.
We keep the notation $A$ for this permutation.
This completes the construction of the point set $P$ and the permutation $A$.
Then, we initialize \tds on $P=\{(x,A[x])\mid x\in[3N]\}$, which after $\tf(N^p)$ preprocessing can answer slab queries $\indd(R_1,R_2,R_3)$ in $\tf(N^q)$ time.

We now explain how to use \tds to answer disjointness queries for \prba.
To answer a query $(S_i,S_j)$ for \prba, we turn it into a query $\indd(R_1,R_1,R_2)$ for \tds, where $R_1,R_2$ are vertical slabs. Specifically, we
set $R_1=[a,b]\times[1,3N]$ where $[a,b]$ is the interval of indices in $B_1$ corresponding to $S_i$, and $R_2=[c,d]\times[1,3N]$ where $[c,d]$ is the interval of indices in $B_2$ corresponding to $S_j$.
Then, $\indd(R_1,R_1,R_2)=1$ if and only if $S_i\cap S_j\neq\emptyset$.
See \Cref{fig:to2-hardness-two-rectangles} for an illustration of the reduction, and \Cref{ssec:hardness interval k-1} for the full details.

Two additional reductions we present $(i)$ a similar reduction from \prbb, a generalization of \prba in which queries ask whether $k$ sets have a common element, to \sppm with $k+1$ slabs, for the pattern $\sigma=(1,k+1,k,\ldots,2)$;
and $(ii)$ the reverse reduction from \sppm to \prba, which proceeds through a new intermediate problem we call \prbd, a generalization of \prba in which each set $S_i\in\FF$ is a collection of contiguous intervals $\set{[a_1,b_1],[a_2,b_2],\ldots,[a_k,b_k]}$ of $[U]$, the size of the input $N'$ is total number of intervals in the input, (and not the total size of the sets as in \prba),
and a query $(S_i,S_j)$ asks whether some interval in $S_j$ is contained in some interval in $S_i$ (instead of asking whether $S_i$ and $S_j$ have a common element as in \prba).
This reduction is slightly more involved and is given in \Cref{ssec:hardness prbd to prba}.

\begin{figure}[h]
    \centering
\begin{tikzpicture}[
    x=1cm,y=1cm,
    font=\small,
    >={Stealth[length=1.8mm]},
    every node/.style={inner sep=0pt},
    box/.style={rounded corners=1.6mm, line width=0.8pt},
    setL/.style={box, draw=red!65!black, fill=red!4},
    setR/.style={box, draw=teal!70!black, fill=teal!4},
    panelL/.style={box, draw=red!65!black, fill=red!2},
    panelR/.style={box, draw=teal!70!black, fill=teal!2},
    map/.style={-{Stealth[length=1.8mm]}, line width=0.65pt, draw=black!32},
    guide/.style={densely dashed, line width=0.65pt, draw=black!42},
    chain/.style={line width=0.65pt, draw=black!20},
    dotL/.style={circle, inner sep=.85pt, draw=red!72!black, fill=red!72!black},
    dotR/.style={circle, inner sep=.85pt, draw=teal!75!black, fill=teal!75!black},
    hitL/.style={circle, inner sep=1.5pt, draw=red!85!black, fill=red!85!black, line width=0.7pt},
    hitR/.style={circle, inner sep=1.5pt, draw=teal!85!black, fill=teal!85!black, line width=0.7pt},
    topname/.style={font=\footnotesize, text=black!70},
    emphasis/.style={
            rounded corners=0.8mm,
            draw=orange!88!black,
            fill=orange!18,
            line width=0.65pt,
            inner xsep=1.8pt,
            inner ysep=1.1pt
        },
    note/.style={font=\scriptsize\itshape, text=black!58}
    ]

    \draw[setL] (0.4,6.20) rectangle (6.05,7.38);
    \draw[setR] (8.10,6.20) rectangle (13.75,7.38);
    \node[topname, anchor=west] at (0.62,7.12) {$S_i$};
    \node[topname, anchor=west] at (8.32,7.12) {$S_j$};

    \node[font=\small] (sia)  at (1.65,6.72) {$e_a$};
    \node[font=\small] (siap) at (2.72,6.72) {$e_{a+1}$};
    \node[emphasis, font=\small] (sit) at (3.82,6.72) {$e_t$};
    \node[font=\small] (sib)  at (5.05,6.72) {$e_b$};
    \node[note] at (4.52,6.72) {$\cdots$};
    \node[note] at (1.06,6.72) {$\cdots$};
    \node[note] at (5.63,6.72) {$\cdots$};

    \node[font=\small] (sjl) at (9.35,6.72) {$f_1$};
    \node[font=\small] (sjm) at (10.45,6.72) {$f_2$};
    \node[emphasis, font=\small] (sjt) at (11.55,6.72) {$e_t$};
    \node[font=\small] (sjr) at (12.80,6.72) {$f_r$};
    \node[note] at (9.95,6.72) {$\cdots$};
    \node[note] at (8.76,6.72) {$\cdots$};
    \node[note] at (13.35,6.72) {$\cdots$};

    \draw[panelL] (0.4,0.75) rectangle (6.55,5.55);
    \draw[panelR] (7.60,0.75) rectangle (13.75,5.55);
    \node[topname, anchor=west] at (0.62,5.27) {$A[I_1]$};
    \node[topname, anchor=west] at (7.82,5.27) {$A[I_2]$};

    \coordinate (L1a) at (1.65,4.55);
    \coordinate (L1b) at (1.75,4.84);
    \coordinate (L2a) at (2.72,3.60);
    \coordinate (L2b) at (2.82,3.89);
    \coordinate (L3a) at (3.82,2.55);
    \coordinate (L3b) at (4.02,2.84);
    \coordinate (L4a) at (5.05,1.50);
    \coordinate (L4b) at (5.25,1.79);
    \coordinate (R1) at (9.35,4.12);
    \coordinate (R2) at (10.45,3.30);
    \coordinate (R3) at (11.55,2.69);
    \coordinate (R4) at (12.80,1.82);

    \draw[map] ($(sia.south)+(0,-0.05)$) -- (L1a);
    \draw[map] ($(siap.south)+(0,-0.05)$) -- (L2a);
    \draw[map] ($(sit.south)+(0,-0.05)$) -- (L3a);
    \draw[map] ($(sib.south)+(0,-0.05)$) -- (L4a);
    \draw[map] ($(sjl.south)+(0,-0.05)$) -- ($(R1)+(0.00,0.26)$);
    \draw[map] ($(sjm.south)+(0,-0.05)$) -- ($(R2)+(0.00,0.26)$);
    \draw[map] ($(sjt.south)+(0,-0.05)$) -- ($(R3)+(0.00,0.24)$);
    \draw[map] ($(sjr.south)+(0,-0.05)$) -- ($(R4)+(0.00,0.24)$);

    \node[dotL] at (L1a) {};
    \node[dotL] at (L1b) {};
    \node[dotL] at (L2a) {};
    \node[dotL] at (L2b) {};
    \node[hitL] (pLo) at (L3a) {};
    \node[hitL] (pHi) at (L3b) {};
    \node[dotL] at (L4a) {};
    \node[dotL] at (L4b) {};
    \node[dotR] at (R1) {};
    \node[dotR] at (R2) {};
    \node[hitR] (pMid) at (R3) {};
    \node[dotR] at (R4) {};

    \draw[red!35, line width=0.55pt] (L1a) -- (L1b);
    \draw[red!35, line width=0.55pt] (L2a) -- (L2b);
    \draw[red!55, line width=0.75pt] (L3a) -- (L3b);
    \draw[red!35, line width=0.55pt] (L4a) -- (L4b);

    \draw[dashed, gray!25!black] (3.4,2.68) -- (12,2.68);

\end{tikzpicture}
    \caption{Reduction from \prba to a two-rectangle $132$ query.
        Top: a set query $(S_i,S_j)$.
        Bottom: the induced query $\indd(I_1,I_1,I_2)$, where
        $A[I_1]=(e_1,e_1+2,e_2,e_2+2,\ldots)$ and
        $A[I_2]=(f_1+1,f_2+1,\ldots)$.}
    \label{fig:to2-hardness-two-rectangles}
\end{figure}

\paragraph{Bypassing The Online Set-Disjointness Barrier.}
\renewcommand{\CR}{\mathcal{C}}
Previously, we reduced the computation of $\indd(Q)$ to $\tf(1)$ multi-rectangle queries $\indd(R_1,R_2,R_3)$ with $R_1,R_2,R_3\in\CR(Q)$, where recall that $\CR(Q)$ is a partition of $Q$ into $\tf(1)$ rectangles from $\RR$, which is a family of $\tf(n)$ rectangles covering $[n]\times[n]$ we obtained from a range tree.
However, achieving our goal of $\tf(1)$-time queries after $\tf(n)$ preprocessing is impossible with this approach, assuming 3SUM requires $\Omega(n^{3/2+\eps})$ time.

To bypass this barrier, we begin with a simple observation.
Suppose that, given $Q$ and three rectangles $R_1,R_2,R_3\in\CR(Q)$, instead of answering $\indd(R_1,R_2,R_3)$ exactly, our data structure answers an easier query $\indd(R_1',R_2',R_3')$, where each $R_i'$ is some rectangle satisfying $R_i\subseteq R_i'\subseteq Q$. We refer to $\indd(R_1',R_2',R_3')$ as a \emph{relaxed} query for $\indd(R_1,R_2,R_3)$ with respect to $Q$.
By monotonicity,
\begin{align*}
    \indd(R_1,R_2,R_3) \leq \indd(R_1',R_2',R_3') \leq \indd(Q)\;.
\end{align*}
Thus, either answer is useful: if $\indd(R_1',R_2',R_3')=0$ then $\indd(R_1,R_2,R_3)=0$, and if $\indd(R_1',R_2',R_3')=1$ then $\indd(Q)=1$ and we are done
although we cannot conclude anything about $\indd(R_1,R_2,R_3)$.

It remains to specify the rectangles $R_1',R_2',R_3'$, and to show how to answer $\indd(R_1',R_2',R_3')$ within the desired time bounds.
For that, we must look into the details of a ``vanilla'' one-dimensional range tree.
Consider a family $\RR_x$ of $O(n\log n)$ rectangles covering $[n]\times[n]$, where each rectangle is a vertical slab of the form $[a,b]\times[1,n]$.
The family $\RR_x$ is constructed recursively: the first rectangle is $[1,n]\times[1,n]$, and each rectangle is split in half along its $x$-interval, e.g., the first split produces $[1,n/2]\times[1,n]$ and $[n/2+1,n]\times[1,n]$.
We identify $\RR_x$ with the nodes of a balanced binary tree $T_x$: the root corresponds to $[1,n]\times[1,n]$, and the $i$th level consists of $2^i$ vertical slabs of width $n/2^i$ that partition $[n]\times[n]$.
Each node $v$ of $T_x$ which associated with a rectangle (vertical slab) $R_v\in\RR_x$ also maintains the set of points $P_v=P\cap R_v$.

Using this simple family of rectangles $\RR_x$, we can solve a relaxed query for specific triples $R_1,R_2,R_3\in\CR(Q)$, as we explain next.
Consider a node $v$ of $T_x$, with rectangle $R_v=[a_v,b_v]\times[1,n]$, and three rectangles $R_1,R_2,R_3\in \CR(Q)$ with $R_i=[a_v,b_v]\times[c_i,d_i]$ for $i=1,2,3$.
That is, all three rectangles share the same $x$-interval $[a_v,b_v]$ as $R_v$. We then say that $R_1,R_2,R_3$ are \emph{aligned} with $R_v$.

We relax the query $\indd(R_1,R_2,R_3)$ to $\indd(Q',Q',Q')=\indd(Q')$, where $Q'\triangleq Q\cap R_v$, which in some sense ignores the original triplet.
To answer the relaxed query $\indd(Q')$, we use the node $v$ perspective, i.e., we restrict our attention to the rectangle $R_v$, and to the set of points $P_v$ contained in $R_v$.
From $v$'s perspective, the query $\indd(Q')$ is a ``horizontal slab'' query, since $Q'=[a_v,b_v]\times[c,d]$. That is the $x$-projection of $Q'$ and $R_v$ is $[a_v,b_v]$ and the $y$-projection of $Q'$ is contained in the $y$-projection of $R_v$.
Thus, if we had a data structure that preprocesses $P_v$ to answer such horizontal slab queries, we could answer $\indd(Q')$ efficiently, and thus answer the relaxed query for $\indd(R_1,R_2,R_3)$ efficiently, bypassing the hardness barrier.

To this end, we devise a new (to the best of our knowledge) data structure \tds that preprocesses a set of $n_v$ points $P_v$ in $\tf(n_v)$ time, and answers slab queries: given $a\leq b$ it computes $\indd([a,b]\times[1,n])$ in $\tf(1)$ time.
This data structure uses elementary dynamic programming techniques, as well as the $\ind_{12}(R)$ data structure from \cite{ben2024approximate}.
Initializing \tds on $P_v$ for every node $v$ of $T_x$ takes $\tf(n)$ total preprocessing time, since $\sum_{v\in T_x} |P_v| = O(n\log n)$, which follows as the tree has $\log n$ levels, and each level is a partition of $P$ so it has total size $n$.

~\\ To summarize, we showed how to answer a relaxed query $\indd(Q')$ for specific triples $R_1,R_2,R_3\in\CR(Q)$ that are aligned with some rectangle $R_v\in\RR_x$, by using a new data structure \tds that answers slab queries from the perspective of $v$.
Can this approach be extended to answer relaxed queries for any triple $R_1,R_2,R_3\in\CR(Q)$, even if they are not aligned with any rectangle in $\RR_x$?
We answer this question in the positive.

For that, we consider a simple variant of the range tree, which we call a \emph{grid range tree}.
Informally, a grid range tree allows us to partition a query rectangle $Q$ into $\tf(1)$ rectangles $\CR(Q)$, and further refine $\CR(Q)$ into a new partition $\Grid(Q)$ of $\tf(1)$ rectangles,
such that every triplet $(R_1,R_2,R_3)\in \Grid(Q)$ satisfies at least one of the following properties:
\begin{enumerate}
    \item
    The triplet $(R_1,R_2,R_3)$ is aligned with some vertex in $T_x$, or
    \item
    The triplet $(R_1,R_2,R_3)$ aligned with some vertex in $T_y$, where $T_y$ is a similar range tree built on the $y$-coordinates of the points: The family $\RR_y$ of rectangles corresponding to $T_y$ consists of horizontal slabs of the form $[1,n]\times[c,d]$, or
    \item
    Two rectangles $R_i,R_j$ out of $(R_1,R_2,R_3)$ have disjoint $x$-projection and disjoint $y$-projection.
\end{enumerate}

In the first and second case, we can reduce the query $\indd(R_1,R_2,R_3)$ to a relaxed query $\indd(Q')$ where $Q'$ is a horizontal or vertical slab which is aligned with some node of $T_x$ or $T_y$, and thus can be answered in $\tf(1)$ time using \tds.
In the last case, the additional structure makes it possible to answer $\indd(R_1,R_2,R_3)$ directly in $\tf(1)$ time, without considering a relaxed query.

\paragraph{Unaligned Triplets.}
Consider a triple query $\indd(R_1,R_2,R_3)$ with $R_1,R_2,R_3\in\Grid(Q)$ that are not all in the same row or column of the grid.
We can assume that each rectangle is non-empty, otherwise $\indd(R_1,R_2,R_3)=0$.
We do case analysis, based on whether the $x$-projection (and $y$-projection) of each pair of rectangles is disjoint (or equal), and so that using $\tf(1)$ emptiness axis parallel queries (i.e., whether an axis parallel rectangle $R'$ contains a point in $P$), we can determine whether $\indd(R_1,R_2,R_3)=1$ or $0$.

The general case, as well as a few additional cases (like three distinct rectangles that are in the same row or column) can be handled by similar case analysis, or more elegantly by presenting the problem as a \emph{binary constraint satisfaction problem} (CSP) following Berendsohn, Kozma, and Marx~\cite{BKM21} formulation.
That is, each query $\indd(R_1,R_2,R_3)$ naturally defines a binary CSP $M=(V,\DD,C)$ with three variables $V=\{v_1,v_2,v_3\}$, each with domain $\DD=\{D_1,D_2,D_3\}$, where $D_i=\set{t\in [n]\mid (t,\pi(t))\in R_i}$ is the set of indices of points in $R_i$, and binary set of constraints $C$ that encodes the constraints $v_1<v_2<v_3$ and $\pi(v_1)<\pi(v_3)<\pi(v_2)$.
The primal graph of $M$ is a multigraph on three vertices where each constraint between $v_i$ and $v_j$ corresponds to an edge.
A well-known result in the CSP literature is that if the primal graph of $M$ is acyclic, then $M$ can be solved in near-linear time via a sequence of domain reductions along the tree.
Here we show that if the primal graph is acyclic, then $M$ can be solved in $\tf(1)$ time (using axis parallel emptiness queries) after $\tf(n)$ preprocessing,
and that the primal graph is indeed acyclic when the triplet is unaligned, or even when it contains three distinct rectangles from $\Grid(Q)$.

\paragraph{Approximate Counting Patterns of Length $6$.}
Our starting point is the reduction of Dell, Lapinskas, and Meeks~\cite{DellLM22} from approximate counting to \emph{colorful} detection.
In our setting, colorful detection asks: given a point set $P$ and a coloring $\psi:[n]\to[k]$, does there exist an embedding $\varphi:[k]\to[n]$ of $\sigma$ in $P$ such that no color appears more than once among $\varphi(1),\ldots,\varphi(k)$?
Following~\cite{berendsohn2019complexity}, we denote this problem by \cppm.
By~\cite{DellLM22}, an algorithm for \cppm with running time $X$ implies an approximate counting algorithm with running time $X\cdot\poly(\log n,1/\eps)$, for any $\eps>0$.
Therefore, for patterns of length $6$, it suffices to solve \cppm efficiently.

We actually solve the stricter problem \pppm~\cite{GM14}.
In \pppm, the input is a pattern $\sigma$ of length $k$, a point set $P$, and a coloring $\psi:[n]\to[k]$.
Writing $A_i=\psi^{-1}(i)$ for each color class, the goal is to find an embedding $\varphi:[k]\to[n]$ such that $(\varphi(1),\ldots,\varphi(k))$ is order-isomorphic to $\sigma$ and $\varphi(i)\in A_i$ for every $i\in[k]$.

For constant $k$, \cppm reduces to \pppm by trying all $k!$ bijections between colors and positions.
Hence, solving \pppm for $k=6$ in time $\tf(n^{7/4})$ yields an $O(n^{7/4}\cdot\poly(\log n,1/\eps))$-time algorithm for approximating $\#_\sigma(P)$.

The algorithm reduces one \pppm instance with $k=6$ to $O(n)$ sub-instances with $k=4$.
Each sub-instance is obtained by fixing some embedding positions and shrinking the remaining color classes according to the induced order constraints.
As a result, an embedding of a sub-instance can be extended to an embedding of the original instance in constant time.
Each sub-instance is a colored multi-rectangle query for a length-$4$ pattern.

\textbf{Solving Each Sub-Instance.}
First, we reduce general $4$-rectangle queries to queries whose rectangles belong to a fixed family $\RR$ of $\tf(n)$ canonical rectangles.
While the fully general problem of $4$-rectangle queries is at least as hard as \textsc{3-OnlineSetDisj} (by \Cref{thm:hardness interval k-1}), the specific length-$4$ patterns arising from our reduction can be solved with $(p,q)=(7/4,3/4)$, i.e., $p+3q=4$. This matches the trade-off $p+(k-1)q=k$ that the standard heavy--light argument gives for \prbb with $k=4$ sets (precompute all $k$-tuples of heavy sets in $\tf(\tau(N/\tau)^k)$ time, and answer any query containing a light set in $O(\tau)$ time), which is the natural benchmark for a query consisting of four ranges.
The cases with this fast running time are characterized by the primal graph of the corresponding CSP: roughly speaking, the desired trade-off is achieved when this graph is a tree plus one extra edge.
Using a computer-assisted case analysis, we verify that all sub-instances produced by our reduction fall into this tractable family.

\textbf{Producing the Sub-Instances.}
To produce the sub-instances, we use several tools.
The simplest one is to fix one point in the embedding.
Pick a color class, say $A_1$, iterate over its points, and fix the first embedding position to each one.
This yields $|A_1|$ new instances of length $5$; for each fixed point, we shrink the remaining color classes according to the order constraints imposed by that point.
Each resulting instance preserves the extension property above.
By itself, this reduces length $6$ to length $5$, not to length $4$.
To reach length $4$, we combine several such reductions, choosing the fixed position according to the structure of $\sigma$.

For example, let $\sigma=124356$, and let $M=(V,\DD,C)$ be the associated CSP instance with $V=\{v_1,\ldots,v_6\}$, domains $D_i=A_i$, and binary order constraints $C$.
Fix $v_3=t$ for some $t\in A_3$, and let $M_t$ be the obtained CSP instance after removing $v_3$ and shrinking the remaining domains according to constraints incident to $v_3$.
The constraint graph of $M_t$ has only one edge between the two vertex sets $\set{v_1,v_2}$ and $\set{v_4,v_5,v_6}$.
We show that a CSP whose constraint graph has a cut of size at most one can be decomposed into two almost independent subproblems, one with $2$ variables and one with $3$ variables, where each oracle call is handled by our length-$4$ data structure.
This cut-based decomposition does not suffice for all patterns of length $6$: some yield constraint graphs with no small cut after a single variable is fixed.
To handle these cases, we additionally borrow ideas from \cite{ben2024approximate} that allow us to discard one $x$-constraint and one $y$-constraint (though not edges of our choosing), reducing the graph to a tractable structure; see \Cref{sec:approx counting6} for details.

\newcommand{\x}{x}
\crefname{enumi}{\unskip}{\unskip}

\subsection{Related Work}
Here we provide a more extensive background on PPM and its variants.

\paragraph{NP-completeness and early algorithms.}
The Permutation Pattern Matching problem was shown to be NP-complete by Bose, Buss, and Lubiw~\cite{BBL98}, who also gave polynomial-time algorithms when the pattern is separable (i.e., avoids both $2413$ and $3142$).
Albert, Aldred, Atkinson, and Holton~\cite{AAAH01} showed that detection of any fixed pattern of length $4$ can be solved in $O(n \log n)$ time.
Ahal and Rabinovich~\cite{AR08} gave an $O(n^{0.47k + o(k)})$ time algorithm for general $k$.

\paragraph{Detection: general $k$.}
A breakthrough was achieved by Guillemot and Marx~\cite{GM14}, who gave the first fixed-parameter tractable (FPT) algorithm for detection, running in $2^{O(k^2 \log k)} \cdot n$ time.
Their proof, which introduced a ``structure versus chaos'' dichotomy, laid the foundations for the subsequent development of the theory of twin-width~\cite{BGHO+22}.
The running time was improved to $2^{O(k^2)} \cdot n$ by Fox~\cite{fox2013stanley}, in his work showing that Stanley--Wilf limits are typically exponential, strengthening the Marcus--Tardos theorem~\cite{marcus2004excluded}.
In the exponential regime (where $k$ is large), Bruner and Lackner~\cite{BL12} gave an $O(1.79^n \cdot n \cdot k)$ time algorithm, which was improved by Berendsohn, Kozma, and Marx~\cite{BKM21} to $O(1.6181^n)$, and further to $O(1.415^n)$ by Gawrychowski and Rzepecki~\cite{gawrychowski2022faster}.

\paragraph{Exact counting:  general $k$.}
For exact counting, Berendsohn, Kozma, and Marx~\cite{BKM21} gave an $O(n^{k/4 + o(k)})$ time algorithm, significantly improving the previous $O(n^{0.47k + o(k)})$ time bound for detection that was given in~\cite{AR08} (which extends to counting).
Their algorithms are based on formulating the problem as a binary CSP and bounding the treewidth of the resulting constraint graph.
On the lower-bound side, the same work~\cite{BKM21} showed that no algorithm can solve exact counting in time $f(k) \cdot n^{o(k/\log k)}$ for any function $f$, unless the Exponential Time Hypothesis (ETH) fails.
Jel{\'\i}nek, Opler, and Pek{\'a}rek~\cite{JOP21} showed that a slightly weaker lower bound of $n^{\Omega(k/\log^2 k)}$ holds even when the pattern avoids a fixed large enough permutation.

\paragraph{Exact counting: small $k$.}
For small values of $k$, the picture is more nuanced.
When $k = 2$, counting matches is equivalent to counting inversions, which can be solved in $O(n \log n)$ time via merge sort, and in $O(n \sqrt{\log n})$ time in the Word RAM model \cite{ChanPatrascu2010}.
When $k = 3$, all patterns can be counted in near-linear time $\TO{n}$, as shown by Even-Zohar and Leng~\cite{EL21}, who introduced the framework of \emph{corner tree formulas}.
For $k = 4$, the same work~\cite{EL21} showed that $8$ of the $24$ patterns can be counted in $\TO{n}$ time, and gave an $\TO{n^{3/2}}$ time algorithm for the remaining $16$ patterns.
Dudek and Gawrychowski~\cite{dudek2020counting} proved a remarkable structural result: counting these ``hard'' $4$-patterns is \emph{equivalent} (via near-linear time reductions) to counting $4$-cycles in sparse graphs;
an $\tf(n^\gamma)$ time algorithm for counting $4$-patterns implies a $\tf(m^\gamma)$ time algorithm for counting $4$-cycles in a graph with $m$ edges, and vice versa.
For counting $4$-cycles the best known algorithm runs in $\tf(m^{\frac{4\omega-1}{2\omega+1}})$ time~\cite[Lemma 4.2]{williams2014finding}, where $\omega \le 9/4$ is the matrix multiplication exponent~\cite{OpenAI26omega}, giving an $O(n^{1.455})$ time algorithm for counting hard $4$-patterns.
On the lower-bound side, Yuster and Zwick~\cite{YusterZwick97} conjectured that detecting a $4$-cycle in an $n$-vertex graph requires $n^{2-o(1)}$ time, which implies that detecting a $4$-cycle in a graph with $m$ edges requires $m^{4/3-o(1)}$ time~\cite[Theorem~2]{DahlgaardKS17}.
Hence, under the same conjecture, counting hard $4$-patterns requires $n^{4/3-o(1)}$ time, by the reduction of Dudek and Gawrychowski~\cite{dudek2020counting} from counting $4$-cycles.
Closing the gap between the $\tf(n^{\frac{4\omega-1}{2\omega+1}})$ upper bound and the conditional $n^{4/3-o(1)}$ lower bound is a major open problem in the area; even if $\omega = 2$, the upper bound would only be $\tf(n^{7/5})$.

More recently, Beniamini and Lavee~\cite{BL2024counting} introduced \emph{pattern-trees}, a generalization of corner trees, and used them together with multidimensional segment trees (which they call \emph{rectangle-trees}) to obtain an $\tf(n^2)$ time algorithm for exactly counting all patterns of length $k \le 7$, and an improved $\TO{n^{7/4}}$ time algorithm for $k = 5$.

\paragraph{Approximate counting: small and general $k$.}
Approximate counting refers to the problem of retrieving a $(1+\eps)$-multiplicative approximation of the true count, where $\eps$ is some small constant.
For $k = 2$, Andersson and Petersson~\cite{AnderssonP95} gave an $O(n \log\log n)$ time $(1+\eps)$-approximation algorithm for counting inversions, and Chan and Pătrașcu~\cite{ChanPatrascu2010} improved this to linear time.
More recently, Ben-Eliezer, Mitrovi\'c and Srivastava \cite{ben2024approximate} proved that approximate counting can be done in $\tilde{O}(n)$ time for all patterns up to length $5$.
Together with the conditional $n^{4/3-o(1)}$ lower bound for exactly counting hard $4$-patterns discussed above, this gives a conditional separation between approximate and exact counting for $k = 4, 5$. However, it turns out that as $k$ grows, approximate counting is qualitatively much more similar to exact counting: Opler \cite{Opler26} proved very recently that no $n^{o(k / \log k)}$ approximate counting algorithm exists, assuming ETH.

\paragraph{\prba Overview.}
Online set disjointness is a well-studied problem~\cite{CohenP10,patrascu2010towards,kopelowitz2016higher,GoldsteinKLP17,williams2020monochromatic,bringmann2025tight}.
Kopelowitz, Pettie, and Porat~\cite[Corollary~1.8]{kopelowitz2016higher} showed that, under the 3SUM conjecture, any data structure for \prba with $\tf(N^p)$ preprocessing time and $\tf(N^q)$ query time, where $q\in[0,1/2]$, must satisfy $p+2q\ge 2$.
The 3SUM conjecture was very recently refuted~\cite{alman2026truly}; the same work gives an improved data structure for \prba over small universes~\cite[Theorem~3]{alman2026truly}.

An upper bound for \prba is given in \cite{Kopelowitz2020TowardsOS}, specified in the following theorem.
\begin{theorem}[{\cite[Theorem~1]{Kopelowitz2020TowardsOS}}]\label{thm:setdis upper bound}
    There exists a data structure for \prba with preprocessing $\tf(N^p)$ and query time $\tf(N^q)$ satisfying
    \begin{align*}
        p + \frac{2}{\omega-1}q & = 2,                      &  & 0 \le q \le \frac{\omega-1}{\omega+1}, \\
        \frac{2}{\omega-1}p + q & = 1 + \frac{2}{\omega-1}, &  & \frac{\omega-1}{\omega+1} \le q \le 1.
    \end{align*}
    If $\omega=2$, the upper bound simplifies to:
    \begin{align*}
        p+2q & = 2, &  & 0\le q\le 1/3, \\
        2p+q & = 3, &  & 1/3\le q\le 1.
    \end{align*}
\end{theorem}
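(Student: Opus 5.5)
We would prove \Cref{thm:setdis upper bound} by a two-level heavy--light decomposition, on sets and on elements, with fast matrix multiplication handling the heavy--heavy part. The parameters are a set threshold $\tau=N^q$ and an element-degree threshold $\Delta$, to be chosen later.

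\textbf{Step 1: light queries.} Call a set $S_i$ \emph{heavy} if $|S_i|>\tau$ and \emph{light} otherwise; there are at most $N/\tau$ heavy sets. In preprocessing we store every set in a hash table (or a sorted array, losing a logarithmic factor), in $\tf(N)$ time. For a query $(i,j)$ in which one of the sets is light, we scan the light set and test membership of each element in the other set. This takes $\tf(\tau)=\tf(N^q)$ time.

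\textbf{Step 2: heavy--heavy pairs.} It remains to precompute the answer for every pair of heavy sets, stored in a table of size $(N/\tau)^2$, so that such queries cost $O(1)$. Call an element \emph{low} if it lies in at most $\Delta$ heavy sets, and \emph{high} otherwise.
\begin{itemize}
    \item For each low element $e$, enumerate all pairs of heavy sets containing $e$ and mark each pair as intersecting. The total cost is $\sum_e \deg(e)^2\le N\Delta$.
    \item There are at most $N/\Delta$ high elements. Form the $0/1$ incidence matrix $M$ of dimensions $(N/\tau)\times(N/\Delta)$, with rows indexed by heavy sets and columns by high elements. Then $(MM^{T})_{ij}>0$ exactly when heavy sets $i$ and $j$ share a high element. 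This product costs $\mathrm{MM}(N/\tau,N/\Delta,N/\tau)$, which we compute by splitting into square blocks.
\end{itemize}
The total preprocessing time is therefore
\[
\tf\bigl(N+N\Delta+(N/\tau)^2+\mathrm{MM}(N/\tau,N/\Delta,N/\tau)\bigr).
\]

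\textbf{Step 3: choosing the parameters.} We check the case $\omega=2$ first.
\begin{itemize}
    \item \emph{Regime $q\le 1/3$.} Take $\Delta=\tau$. The matrix is then at most square of side $N/\tau$, so its product costs about $(N/\tau)^2=N^{2-2q}$. Enumeration costs $N^{1+q}\le N^{2-2q}$, which holds exactly when $q\le 1/3$. This gives $p=2-2q$, i.e.\ $p+2q=2$.
    \item \emph{Regime $q\ge 1/3$.} The inner dimension $N/\Delta$ exceeds $N/\tau$, and the product costs $(N/\tau)^2\cdot\frac{N/\Delta}{N/\tau}=N^2/(\tau\Delta)$. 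Balancing this against $N\Delta$ gives $\Delta^2=N/\tau$, so $\Delta=N^{(1-q)/2}$ and $p=(3-q)/2$, i.e.\ $2p+q=3$.
\end{itemize}
For general $\omega$ we redo the balancing using the square-block bound: with $a\le b$, $\mathrm{MM}(a,b,a)\le (b/a)\,a^{\omega}$, and with $b\le a$, $\mathrm{MM}(a,b,a)\le (a/b)^2\, b^{\omega}$. Solving $N\Delta=\mathrm{MM}(\cdot)$ in each regime should yield the two stated lines, with breakpoint $q=\frac{\omega-1}{\omega+1}$.

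\textbf{Main obstacle.} We expect the only delicate step to be this final exponent bookkeeping for general $\omega$. The task is to identify which rectangular-product regime ($N/\Delta$ below or above $N/\tau$) is optimal on each side of the breakpoint, and to verify that the trivial terms $(N/\tau)^2$ and $N$ never dominate. The algorithm itself is straightforward once the two thresholds are in place.
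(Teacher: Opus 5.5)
The paper does not prove this statement. It quotes the result from Kopelowitz and Vassilevska Williams and uses it as a black box, so there is no in-paper argument to compare yours against.

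Your reconstruction is correct: a size threshold $\tau=N^q$ on sets, a Yuster--Zwick-style degree threshold $\Delta$ on elements, pair enumeration for low-degree elements, and a rectangular Boolean product for the high-degree ones. The general-$\omega$ bookkeeping you left as ``should yield'' does go through. Write $\Delta=N^{\delta}$.

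\emph{Regime $\Delta\ge\tau$ (i.e.\ $\delta\ge q$).} The cost is $N^{1+\delta}+N^{2(1-q)+(1-\delta)(\omega-2)}$. Balancing the two terms gives $\delta=1-\frac{2q}{\omega-1}$, hence $p=2-\frac{2q}{\omega-1}$, which is the first line. The side condition $\delta\ge q$ is exactly $q\le\frac{\omega-1}{\omega+1}$.

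\emph{Regime $\Delta\le\tau$ (i.e.\ $\delta\le q$).} The cost is $N^{1+\delta}+N^{(1-\delta)+(1-q)(\omega-1)}$. Balancing gives $\delta=\frac{(1-q)(\omega-1)}{2}$, hence $p=1+\frac{(1-q)(\omega-1)}{2}$. This rearranges to $\frac{2}{\omega-1}p+q=1+\frac{2}{\omega-1}$, the second line. The side condition $\delta\le q$ is exactly $q\ge\frac{\omega-1}{\omega+1}$.

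\emph{Remaining checks.} The table term $N^{2-2q}$ never dominates. In the first regime this is because $\omega-1\ge 1$; in the second it is because $q\ge\frac{\omega-1}{\omega+1}\ge\frac13$. Both lines give $p=\frac{2\omega}{\omega+1}$ at the breakpoint, so the two regimes meet continuously and nothing else needs to be verified.
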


\renewcommand{\CR}{\mathcal{C}}
\section{Preliminaries}
\label{sec:preliminaries}
\subsection{Permutation Pattern Matching Definitions}
Given a permutation $\pi:[n]\to [n]$, we identify $\pi$ with the point set $P=\set{(i,\pi(i))\mid i\in[n]}\subset [n]\times[n]$.
Two sequences of distinct integers $(a_1,\ldots,a_k)$ and $(b_1,\ldots,b_k)$ are \emph{order-isomorphic} if they have the same relative order, i.e., $a_i < a_j$ if and only if $b_i < b_j$ for all $i,j\in[k]$.

\begin{problem}[Permutation Pattern Matching (PPM)]
Given a text permutation $\pi:[n]\to[n]$ and a pattern $\sigma:[k]\to[k]$, the goal is to decide whether there exists an increasing mapping $\phi:[k]\to[n]$ such that the sequence $(\pi(\phi(1)),\ldots,\pi(\phi(k)))$ is order-isomorphic to $\sigma$. We call such a mapping $\phi$ an \emph{embedding} of $\sigma$ in $\pi$.
\end{problem}
We use $\inds(\pi)$ (or $\inds(P)$) to denote the indicator variable that is $1$ if there exists an embedding of $\sigma$ in $\pi$ (or $P$), and $0$ otherwise. We also use $\#_\sigma(\pi)$ to denote the number of embeddings of $\sigma$ in $\pi$.
The fastest known algorithm for PPM runs in time $O(2^{k^2}n)$~\cite{GM14,fox2013stanley}.
We consider several variations of PPM, each imposing additional constraints on the embedding $\phi$.
In all of the following problems, the input includes a partition of the point set $P$ into $k$ color classes $P_1,\ldots,P_k$.

\begin{problem}[\pppm {\cite{GM14}}]
\label{prob:pppm}
The embedding $\phi$ must respect the colors, i.e., $\phi(i)\in P_i$ for every $i\in[k]$.
\end{problem}
The \pppm problem was considered in~\cite{GM14}, where it was shown that under ETH, this problem cannot be solved in time $f(k)n^{o(k/\log k)}$.
A relaxation of \pppm, which we refer to as \cppm, is the following problem:
\begin{problem}[\cppm]
\label{prob:cppm}
The embedding $\phi$ must take exactly one point from each color class, i.e., $\phi(i)\in P_{c(i)}$ for some injective function $c:[k]\to[k]$.
\end{problem}
This problem was studied in~\cite{berendsohn2019complexity} under the name \textsc{Surjective Colored PPM}. \cite{berendsohn2019complexity} proves that under ETH it cannot be solved in time $f(k)n^{o(k/\log k)}$.
Clearly, an algorithm for \pppm with time $T(n,k)$ implies an algorithm for \cppm with time $O(k!\cdot T(n,k))$ by iterating over all injective functions $c:[k]\to[k]$.

The final problem we consider is a geometric preprocessing query of PPM, which we refer to as \rppm.

\begin{problem}[\rppm]
\label{prob:rppm}
Given a pattern $\sigma:[k]\to[k]$, a text permutation $\pi:[n]\to[n]$, and a partition of $[n]$ into $k$ color classes $P_1,\ldots,P_k$, preprocess $\sigma$ and $(P_1,\ldots,P_k)$ so that, given a query consisting of $k$ axis-parallel rectangles $R_1,\ldots,R_k\subseteq[n]\times[n]$,
the goal is to decide whether there exists an increasing embedding $\phi$ of $\sigma$ in $\pi$ such that
\begin{enumerate}
    \item $\phi(i)\in P_i$ for every $i\in[k]$ \emph{(color constraint)}, and
    \item $(\phi(i),\pi(\phi(i)))\in R_i$ for every $i\in[k]$ \emph{(rectangle constraint)}.
\end{enumerate}
\end{problem}
Let $R_i=[a_i,b_i]\times[c_i,d_i]$ be the $i$-th query rectangle. If $c_i=1$ and $d_i=n$ for every $i\in[k]$, then the rectangles are called vertical slabs, and the problem is denoted by \sppm.
\subsection{Range Trees}
\label{ssec:range-trees}
We build upon the well-known range tree data structure~\cite{bentley1978decomposable,deBerg08}.
A range tree for a set of $n$ points $P\subseteq [1,n]\times[1,n]$ has two levels of balanced binary search trees.
The \emph{outer tree} $T$ is a balanced BST on $x$-coordinates.
Each node $v\in T$ is associated with a vertical slab $R_v=[a_v,b_v]\times[1,n]$ and the subset $P_v=P\cap R_v$.
Nodes at the same depth have pairwise disjoint slabs whose union is $[1,n]\times[1,n]$;
for any node $u$ with children $v_1,v_2$, we have $R_u=R_{v_1}\cup R_{v_2}$ and $\lvert|P_{v_1}|-|P_{v_2}|\rvert\le 1$, so the tree has depth $O(\log n)$.
Each leaf contains exactly one point.

To obtain a two-dimensional range tree, each outer node $v$ additionally stores an \emph{inner tree} $T_v$ which is a balanced BST on the $y$-coordinates of $P_v$.
Each inner node $u\in T_v$ is associated with the rectangle $R_u=[a_v,b_v]\times[c_u,d_u]$ (inheriting the $x$-range from $v$) and the subset $P_u=P\cap R_u$; inner trees satisfy the same structural properties.
The set of all rectangles associated with nodes (outer or inner) is denoted $\RR$; standard bounds give $|\RR|=\tf(n)$.

\paragraph{Canonical Rectangles ($\CR(Q)$).}
Given a query rectangle $Q=[a,b]\times[c,d]$, we define its canonical nodes as follows.
Consider a node $v\in T$ with a parent $v'$ (if it exists) and associated slabs $R_v$ and $R_{v'}$, respectively. Then $v$ is an \emph{outer canonical node} for $Q$ if $R_v\subseteq [a,b]\times[1,n]$ and either $v$ is the root or $R_{v'}\not\subseteq [a,b]\times[1,n]$.
An inner node $u$ inside an inner tree $T_v$ is canonical for $Q$, if $P_u\subseteq Q$ and either $u$ is the root of $T_v$ or $P_{u'}\not\subseteq Q$, where $u'$ is the parent of $u$ in $T_v$.
The set $\CR(Q)$ is the set of all rectangles $R_u$ associated with canonical inner nodes $u$ in the inner trees $T_v$ of all outer canonical nodes $v$ for $Q$.

The canonical nodes and their associated rectangles can be found in
two stages: search for $a$ and $b$ in the outer tree $T$ to identify $O(\log n)$ outer canonical nodes whose slabs partition $[a,b]\times[1,n]$; then, for each such node $v$, search for $c$ and $d$ in $T_v$ to find $O(\log n)$ canonical inner nodes whose sets $P_u$ partition $P_v\cap Q$. This yields $|\CR(Q)|=O(\log^2 n)$ rectangles, computable in $O(\log^2 n)$ time.

\paragraph{Range Trees and \rppm.}
A range tree reduces a single-rectangle query $\inds(Q)$ for a length-$k$ pattern $\sigma$ to $|\CR(Q)|^k = O(\log^{2k} n)$ multi-rectangle queries over~$\RR$:
\begin{align}
    \inds(Q) = \bigvee_{(R_1,\ldots, R_k)\in {\CR(Q)}^k} \inds(R_1,\ldots,R_k)\;.
\end{align}
Correctness holds because every embedding $\phi$ of $\sigma$ in $Q$ maps each $i\in[k]$ to a point $(\phi(i),\pi(\phi(i)))$ lying in some $R_i\in\CR(Q)$. (Note that since the sets $P_u$ of the canonical inner nodes partition $P\cap Q$, each embedding is captured by exactly one tuple on the right-hand side; consequently the analogous decomposition holds for counting queries with a sum in place of the disjunction: $\#_\sigma(Q)=\sum_{(R_1,\ldots,R_k)\in\CR(Q)^k}\#_\sigma(R_1,\ldots,R_k)$.)
This reduction alone does not suffice: by \Cref{thm:equiv osd and rppm} and~\cite[Theorem~1.4]{kopelowitz2016higher}, $\indd(R_1,R_1,R_2)$ cannot be answered in $\tf(1)$ time after $\tf(n)$ preprocessing, assuming 3SUM requires $\Omega(n^{3/2+\eps})$ expected time for some constant $\eps>0$.

\paragraph{Grid Partition $\Grid(Q)$.}
\newcommand{\LW}{\ell}
We also partition $Q$ into $O(\log^2 n)$ rectangles forming a grid $\Grid(Q)$ (see \Cref{fig:grid-partition}), in which every pair of cells has $x$-projections that are either identical or disjoint, and likewise for $y$-projections.
This decomposition is standard in orthogonal range searching~\cite{bentley1978decomposable,Lueker78,Willard85,deBerg08}, and is more convenient than the canonical-rectangle decomposition for reducing rectangle queries to slab queries (\Cref{sec:rect-to-slab}).

Consider two one-dimensional range trees $T_x$ and $T_y$ built on the $x$- and $y$-coordinates of $P$, respectively.
Each node $v\in T_x$ corresponds to a vertical strip $R_v=[a_v,b_v]\times[1,n]$; each node $u\in T_y$ to a horizontal strip $R_u=[1,n]\times[c_u,d_u]$.
Given $Q=[a,b]\times[c,d]$, the \emph{canonical nodes} $\CR_x([a,b])$ are the $O(\log n)$ inclusion-maximal nodes in $T_x$ whose strips lie within $[a,b]\times[1,n]$; their strips partition $[a,b]\times[1,n]$ into $r_x=O(\log n)$ vertical slabs whose boundaries form a set $\LW_y$ of $O(r_x)$ vertical lines.
Similarly, $\CR_y([c,d])$ consists of $r_y=O(\log n)$ nodes partitioning $[1,n]\times[c,d]$ into horizontal slabs, inducing a set $\LW_x$ of $O(r_y)$ horizontal lines.
We define $\Grid(Q)$ as the faces of the arrangement of $\LW_x\cup\LW_y$ restricted to $Q$, giving a partition of $Q$ into $O(r_x\cdot r_y)=O(\log^2 n)$ axis-aligned cells.
As with canonical rectangles,
\begin{align*}
    \inds(Q)=\bigvee_{(R_1,\ldots,R_k)\in\Grid(Q)^k}\inds(R_1,\ldots,R_k)\;.
\end{align*}
While the grid construction above is standard, we highlight the following property which, to the best of our knowledge, is not usually emphasized in the literature.
Every row $L=[a,b]\times[y_1,y_2]$ of $\Grid(Q)$ has a corresponding node $u\in T_y$ with $R_u=[1,n]\times[y_1,y_2]$, so that $R_u$ and $L$ share the same $y$-projection (and symmetrically for columns and $T_x$).
This means that any data structure stored at $u$ for answering slab queries on $P_u$ can be used to answer queries restricted to the row $L$, which is the key enabler for reducing rectangle queries to slab queries in \Cref{sec:rect-to-slab}.
\begin{figure}[t]
    \centering
    \tikzset{
        gp/.style={x=0.30cm, y=0.24cm, every node/.style={font=\footnotesize}},
        gp frame/.style={draw=gray!55, line width=0.4pt},
        gp Q/.style={draw=black, line width=0.9pt},
        gp xslab/.style={draw=blue!70!black, fill=blue!8, line width=0.5pt},
        gp yslab/.style={draw=orange!85!black, fill=orange!10, line width=0.5pt},
        gp tick/.style={draw=gray!60!black, line width=0.4pt},
        gp ticklabel/.style={inner sep=1pt, text height=1.5ex, text depth=.25ex},
    }
    \begin{subfigure}[b]{0.32\textwidth}
        \centering
        \begin{tikzpicture}[gp]
            \useasboundingbox (-1.6,-1.8) rectangle (16.2,16.2);
            \draw[gp frame] (0,0) rectangle (16,16);
            \foreach \l/\r/\s in {2/4/1, 4/8/2, 8/12/3} {
                \draw[gp xslab] (\l,0) rectangle (\r,16);
                \node[blue!70!black] at ({(\l+\r)/2},15) {$V_\s$};
            }
            \draw[gp Q] (2,4) rectangle (12,14);
            \node[anchor=north east] at (12,14) {$Q$};
            \foreach \x/\lab in {2/a, 12/b}
                \draw[gp tick] (\x,0) -- (\x,-0.6) node[below, gp ticklabel] {$\lab$};
        \end{tikzpicture}
        \caption{Slabs of $\CR_x([a,b])$.}
    \end{subfigure}\hfill
    \begin{subfigure}[b]{0.32\textwidth}
        \centering
        \begin{tikzpicture}[gp]
            \useasboundingbox (-1.6,-1.8) rectangle (16.2,16.2);
            \draw[gp frame] (0,0) rectangle (16,16);
            \foreach \b/\t/\s in {4/8/1, 8/12/2, 12/14/3} {
                \draw[gp yslab] (0,\b) rectangle (16,\t);
                \node[orange!85!black] at (1,{(\b+\t)/2}) {$L_\s$};
            }
            \draw[gp Q] (2,4) rectangle (12,14);
            \node[anchor=north east] at (12,14) {$Q$};
            \foreach \y/\lab in {4/c, 14/d}
                \draw[gp tick] (0,\y) -- (-0.4,\y) node[left, gp ticklabel] {$\lab$};
        \end{tikzpicture}
        \caption{Slabs of $\CR_y([c,d])$.}
    \end{subfigure}\hfill
    \begin{subfigure}[b]{0.32\textwidth}
        \centering
        \begin{tikzpicture}[gp]
            \useasboundingbox (-1.6,-1.8) rectangle (16.2,16.2);
            \draw[gp frame] (0,0) rectangle (16,16);
            \fill[orange!9] (0,8) rectangle (16,12);
            \fill[orange!28] (2,8) rectangle (12,12);
            \foreach \x in {2,4,8,12} {
                \draw[blue!70!black, densely dotted, line width=0.5pt] (\x,0) -- (\x,16);
                \draw[blue!70!black, line width=0.5pt] (\x,4) -- (\x,14);
            }
            \foreach \y in {4,8,12,14} {
                \draw[orange!85!black, densely dotted, line width=0.5pt] (0,\y) -- (16,\y);
                \draw[orange!85!black, line width=0.5pt] (2,\y) -- (12,\y);
            }
            \draw[gp Q] (2,4) rectangle (12,14);
            \node[anchor=north east] at (12,14) {$Q$};
            \node at (6,10) {$L$};
            \node at (14,10) {$R_u$};
        \end{tikzpicture}
        \caption{The grid $\Grid(Q)$.}
    \end{subfigure}
    \caption{Construction of $\Grid(Q)$ for $Q=[a,b]\times[c,d]$.
    (a)~The canonical nodes $\CR_x([a,b])$ of $T_x$ give vertical slabs $V_1,V_2,V_3$ partitioning $[a,b]\times[1,n]$; their boundaries form $\LW_y$.
    (b)~Symmetrically, $\CR_y([c,d])$ gives horizontal slabs $L_1,L_2,L_3$ partitioning $[1,n]\times[c,d]$; their boundaries form $\LW_x$.
    (c)~The lines of $\LW_x\cup\LW_y$ cut $Q$ into $\Grid(Q)$.
    The shaded row $L$ has the same $y$-projection as the slab $R_u$ of a node $u\in T_y$ (here $R_u=L_2$); columns and $T_x$ behave symmetrically.}
    \label{fig:grid-partition}
\end{figure}

\paragraph{Well-Aligned Rectangles.}
A tuple $(R_1,R_2,\ldots, R_k)$ of axis-aligned rectangles is \emph{well-aligned} if for every $i,j\in[k]$ the $x$-projections of $R_i$ and $R_j$ are either identical or disjoint, and the same holds for the $y$-projections.
We reduce \rppm to \rppm for well-aligned rectangles (\Cref{fig2:arrangement-faces-disjoint}).
Given a tuple $(R_1,\ldots,R_k)$ with $R_i=[a_i,b_i]\times[c_i,d_i]$, extend the sides of each rectangle to full lines, namely $x=a_i$, $x=b_i$, $y=c_i$ and $y=d_i$.
These lines define a linear arrangement.
For each $i$, let $\FF_i$ be the faces of this arrangement contained in $R_i$.
We then have
\begin{align*}
\inds(R_1,\ldots,R_k)=\bigvee_{(F_1,\ldots,F_k)\in\FF_1\times\cdots\times\FF_k}\inds(F_1,\ldots,F_k)\;,
\end{align*}
proving that we can reduce \rppm into $(O(k^2))^k$ well-aligned \rppm instances, since each $\FF_i$ has $O(k^2)$ faces.

\begin{figure}[]
    \centering
    \begin{subfigure}[t]{0.46\textwidth}
        \centering
        \begin{tikzpicture}[scale=0.9,
                every node/.style={font=\small},
                rect/.style={rounded corners=1.5pt, line width=0.8pt},
            ]
            \useasboundingbox (-0.2,-0.7) rectangle (5.2,3.5);

            \fill[blue!6] (0.5,0.5) rectangle (2.0,1.5);
            \draw[rect, blue!70!black] (0.5,0.5) rectangle (2.0,1.5);
            \node[blue!70!black, font=\footnotesize] at (1.25,1.0) {$R_1$};

            \fill[red!6] (2.7,-0.2) rectangle (4.7,1.0);
            \draw[rect, red!60!black] (2.7,-0.2) rectangle (4.7,1.0);
            \node[red!70!black, font=\footnotesize] at (3.7,0.4) {$R_3$};

            \fill[teal!6] (1.0,2.0) rectangle (3.8,3.0);
            \draw[rect, teal!70!black] (1.0,2.0) rectangle (3.8,3.0);
            \node[teal!70!black, font=\footnotesize] at (2.4,2.5) {$R_2$};
        \end{tikzpicture}
        \caption{Three pairwise disjoint rectangles.}
        \label{fig2:arrangement-input}
    \end{subfigure}\hfill
    \begin{subfigure}[t]{0.46\textwidth}
        \centering
        \begin{tikzpicture}[scale=0.9,
                every node/.style={font=\small},
                rect/.style={rounded corners=1.5pt, line width=0.8pt},
                arrline/.style={gray!45, thin, densely dashed},
                facelbl/.style={font=\scriptsize},
            ]
            \useasboundingbox (-0.2,-0.7) rectangle (5.2,3.5);

            \fill[blue!10] (0.5,1.0) rectangle (1.0,1.5);
            \fill[blue!10] (0.5,0.5) rectangle (1.0,1.0);
            \fill[blue!10] (1.0,0.5) rectangle (2.0,1.0);
            \fill[blue!10] (1.0,1.0) rectangle (2.0,1.5);

            \node[facelbl, blue!70!black] at (0.75,1.25) {$F_1$};
            \node[facelbl, blue!70!black] at (0.75,0.75) {$F_2$};
            \node[facelbl, blue!70!black] at (1.5,0.75)  {$F_3$};
            \node[facelbl, blue!70!black] at (1.5,1.25)  {$F_4$};

            \fill[teal!10] (1.0,2.0) rectangle (2.0,3.0);
            \fill[teal!10] (2.0,2.0) rectangle (2.7,3.0);
            \fill[teal!10] (2.7,2.0) rectangle (3.8,3.0);

            \node[facelbl, teal!70!black] at (1.5,2.5) {$F_5$};
            \node[facelbl, teal!70!black] at (2.35,2.5) {$F_6$};
            \node[facelbl, teal!70!black] at (3.25,2.5) {$F_7$};

            \fill[red!8] (2.7,0.5) rectangle (3.8,1.0);
            \fill[red!8] (2.7,-0.2) rectangle (3.8,0.5);
            \fill[red!8] (3.8,0.5) rectangle (4.7,1.0);
            \fill[red!8] (3.8,-0.2) rectangle (4.7,0.5);

            \node[facelbl, red!70!black] at (3.25,0.75) {$F_8$};
            \node[facelbl, red!70!black] at (3.25,0.15) {$F_9$};
            \node[facelbl, red!70!black] at (4.25,0.75) {$F_{10}$};
            \node[facelbl, red!70!black] at (4.25,0.15) {$F_{11}$};

            \foreach \x in {0.5,1.0,2.0,2.7,3.8,4.7} {
                    \draw[arrline] (\x,-0.6) -- (\x,3.4);
                }
            \foreach \y in {-0.2,0.5,1.0,1.5,2.0,3.0} {
                    \draw[arrline] (-0.1,\y) -- (5.1,\y);
                }

            \draw[rect, red!60!black] (2.7,-0.2) rectangle (4.7,1.0);
            \draw[rect, teal!70!black] (1.0,2.0) rectangle (3.8,3.0);
            \draw[rect, blue!70!black] (0.5,0.5) rectangle (2.0,1.5);

        \end{tikzpicture}
        \caption{The arrangement and its faces inside $R_1\cup R_2\cup R_3$.}
        \label{fig2:arrangement-faces}
    \end{subfigure}
    \caption{Extending all sides of $R_1,R_2,R_3$ to lines yields an axis-parallel arrangement whose faces refine each rectangle into nice sub-rectangles.}
    \label{fig2:arrangement-faces-disjoint}
\end{figure}

\newcommand{\Sb}{\wpa{SlabQuery}}
\section{Rectangle Queries to Slab Queries}
\label{sec:rect-to-slab}
This section is dedicated to proving \Cref{thm:re ppm}, which reduces multi-rectangle detection queries to multi-slab detection queries for patterns of length at most $3$:
\ThmRSppm* %

This theorem is used to prove both \Cref{thm:optimal emptiness} and \Cref{thm:equiv osd and rppm}, as well as \Cref{cor:bl24}, and therefore might be of independent interest.
As a warm-up, we explain how to reduce $\#_{12}$-rectangle queries to $\#_{12}$-slab queries, which is a simpler version of \Cref{thm:re ppm} that captures the main ideas of the reduction.

\begin{lemma}
    \label{lemma:reduction 12}
    Suppose that we have a data structure \Sb that takes a set of $n$ points $P\subseteq[n]\times [n]$ in the plane and answers $\#_{12}$-multi-slab queries in $\tf(n^q)$ time after $\tf(n^p)$ preprocessing time.
    Then, we can answer $\#_{12}$-multi-rectangle queries in $\tf(n^p)$ preprocessing time and $\tf(n^q)$ query time.
\end{lemma}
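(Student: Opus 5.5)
We reduce a multi-rectangle query $\#_{12}(R_1,R_2)$ to $\tf(1)$ multi-slab queries, each on a point set stored at a node of $T_x$ or $T_y$. We build $T_x$ and $T_y$ (the one-dimensional range trees of \Cref{ssec:range-trees}). At every node $v$ of $T_x$ we initialize \Sb on $P_v$, and at every node $u$ of $T_y$ we initialize a copy of \Sb on $P_u$ with the coordinates swapped, so that its vertical slabs are horizontal slabs of $P_u$. Swapping coordinates preserves the $12$ pattern. Since every level of each tree partitions $P$ and $t\mapsto t^p$ is superadditive for $p\ge 1$, the total preprocessing is $\sum_v \tf(|P_v|^p)=\tf(n^p)$.

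\textbf{Query decomposition.} Given $(R_1,R_2)$, first apply the well-aligned refinement of \Cref{fig2:arrangement-faces-disjoint}: counts are additive over the $O(1)$ face pairs $(F_1,F_2)\in\FF_1\times\FF_2$, with a sum in place of the disjunction, since the faces partition each $R_i$. So assume $(R_1,R_2)$ is well-aligned. Then refine each $R_i$ by the grid of its own $T_x$- and $T_y$-canonical decompositions. Better, refine both jointly by the canonical pieces of the common projections: because the tuple is well-aligned, each pair of projections is identical or disjoint. This writes $\#_{12}(R_1,R_2)$ as a sum over $\tf(1)$ pairs of grid cells $(C_1,C_2)$. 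Each pair has $x$-projections equal to the slab of a common $T_x$ node or disjoint, and likewise for $y$ with $T_y$.

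\textbf{Case analysis per pair.}
\begin{enumerate*}[label=(\roman*)]
\item If the $x$-projections are disjoint and the $y$-projections are disjoint, the count is either $0$ or $|P\cap C_1|\cdot|P\cap C_2|$, depending on whether $C_2$ lies to the upper-right of $C_1$. Both factors come from $\tf(1)$-time orthogonal range counting.
\item If the cells share the $x$-projection $[a_v,b_v]$ of a node $v\in T_x$ and have disjoint $y$-projections, the count is again $0$ or a product of range counts. This is because one cell lies entirely above the other, so every pair counts exactly when its $x$-order is right, which is not automatic. In this case I instead use the node $v$: within $P_v$, the cells are horizontal slabs. Symmetrically, I query the swapped \Sb at the $T_y$ node whose slab has the common $y$-projection. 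That is, the pair $(C_1,C_2)$ with common $y$-projection $[c_u,d_u]$ is, inside $P_u$, a pair of vertical slabs, so one multi-slab query on $P_u$ answers it.
\item If the cells share the $y$-projection, we use this same $T_y$ query directly; if they share the $x$-projection, we use the analogous $T_x$ query.
\end{enumerate*}
The remaining case, where disjoint $x$ and $y$ is not a genuine counting problem, is (i). Each pair costs $\tf(n^q)$, and there are $\tf(1)$ pairs, giving $\tf(n^q)$ query time.

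\textbf{Main obstacle.} The delicate step is ensuring that whenever two cells share a projection, that projection is \emph{exactly} the slab of a single tree node. Only then is the restricted problem at that node a pure slab query, rather than a slab query with an additional rectangle constraint. This is the row/column property highlighted for $\Grid(Q)$. Here, though, the two rectangles have different extents. I would handle this by decomposing the common projection interval (when identical) into its $T_x$-canonical nodes and splitting both cells along the same boundaries. This keeps shared projections equal to node slabs while generating only $O(\log^2 n)$ pairs per face pair. Pairs whose pieces come from different canonical nodes then become disjoint in that coordinate, which falls back to cases (i) and (ii).
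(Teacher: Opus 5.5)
Your proposal is correct and follows essentially the paper's route. You build $T_x$ and $T_y$ with a slab structure at every node, and refine into grid cells whose shared rows and columns are exactly node slabs. Same-row or same-column pairs are answered by one multi-slab query at that node, and all remaining pairs by $0$ or a product of range counts. Your well-aligned refinement plus joint grid is precisely how the paper handles several rectangles in the proof of \Cref{thm:re ppm}; its written proof of this lemma only spells out the single-rectangle grid.

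There are a few minor fixes:
\begin{enumerate*}[label=(\roman*)]
\item The opening claim of your case (ii) is false: a shared $x$-projection with disjoint $y$-projections is a genuine counting problem. You do then correctly send it to node $v$.
\item The coordinate swap sits on the wrong tree relative to how you use it. Shared-row cells are vertical slabs of $P_u$ for $u\in T_y$, while shared-column cells are horizontal slabs of $P_v$ for $v\in T_x$. Building both orientations at every node avoids the issue.
\item As the paper remarks, each $P_v$ must be rank-compressed, with query endpoints translated, before the slab structure is instantiated. This is what makes its cost $\tilde{O}(|P_v|^p)$, which your summation relies on.
\end{enumerate*}
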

Recall that multi-slab query takes two intervals $[c_1,d_1]$ and $[c_2,d_2]$ and computes $\#_{12}([1,n]\times[c_1,d_1],[1,n]\times[c_2,d_2])$ for some fixed interval $[a,b]$.
Note that by using the same data structure on the point set $P'=\set{(y,x)\mid (x,y)\in P}$, we can also answer $\#_{12}$-multi-slab queries of the form $\#_{12}([a_1,b_1]\times[1,n],[a_2,b_2]\times[1,n])$. We note that \cite{Duraj2019EquivalencesBT} already proves that single-slab and multi-slab $\#_{12}$ queries are equivalent up to polylogarithmic factors (single- and multi-slab $\#_{12}$ queries are called range inversion queries or 2-range inversion queries in their language, respectively). Combined with \Cref{lemma:reduction 12}, this means that any data structure for \emph{single-slab} inversion counting with $\tf(n^p)$ preprocessing and $\tf(n^q)$ query time can be used to answer $\#_{12}$-\emph{multi-rectangle} queries within the same trade-off.

\begin{proof}[Proof of \Cref{lemma:reduction 12}]
    The input is a set of $n$ points in the plane $P\subseteq [n]\times [n]$.

    \textbf{Preprocessing.}
    We build two one-dimensional range trees $T_x$ and $T_y$ on the $x$-coordinates and $y$-coordinates of the points in $P$, respectively.
    That is $T_x$ is a binary tree where each node $v$ corresponds to a vertical strip $R_v=[a_v,b_v]\times[1,n]$, where the leaves of $T_x$ correspond to vertical slabs of width $1$ that contains a single point from $P$, and the two children of $v$ correspond to the two vertical strips obtained by splitting $R_v$ into two equal-width vertical strips. The tree $T_y$ is defined similarly.
    Each node $v$ in $T_x$ store the set of points $P_v=P\cap R_v$ that lie in the vertical strip $R_v=[a_v,b_v]\times[1,n]$, and additionally we equip $v$ with a data structure for answering $\#_{12}$-multi-slab queries on $P_v$, which we denote by \Sb.
    This data structure takes two interval $[c_1,d_1]$ and $[c_2,d_2]$ and computes $\#_{12}([a_v,b_v]\times[c_1,d_1],[a_v,b_v]\times[c_2,d_2])$ in $\tf(|P_v|^q)$ time after $\tf(|P_v|^p)$ preprocessing time, as this is a multi-slab query on the universe $[a_v,b_v]\times[c,d]$ with the point set $P_v$.
    An equivalent way to view this data structure is that $v$ also holds a set of points $P_v'=\set{(y,x)\mid (x,y)\in P_v}$. Then, to answer a multi-slab query of the form $\#_{12}([a_v,b_v]\times[c_1,d_1],[a_v,b_v]\times[c_2,d_2])$ on $P_v$, we can answer the multi-slab query $\#_{12}([c_1,d_1]\times[a_v,b_v],[c_2,d_2]\times[a_v,b_v])$ on $P_v'$.
    Strictly speaking, \Sb{} is stated for a point set forming a \emph{permutation} of $[n']\times[n']$ (one point per row and per column), while $P_v$ is a general subset of $[n]\times[n]$. Before instantiating \Sb{} on $P_v$ (and, symmetrically, on $P_u$ below), one should first apply the same order-preserving coordinate ``shrinking'', or compression, that is defined in \Cref{lemma:shrink}: map each coordinate of $P_v$ to its rank within $P_v$, and translate the query boundaries via the functions $\psi_{\textsc{left}},\psi_{\textsc{right}}$ defined there. The compression preserves all $\#_{12}$ counts, so the proof is otherwise unchanged; the same remark applies to the proof of \Cref{thm:re ppm} below, which reuses this preprocessing.

    We also equip every node $u$ in $T_y$ that is associated with a rectangle $R_u=[1,n]\times[c_u,d_u]$, with a data structure \Sb that answers $\#_{12}$-slab queries on $P_u=P\cap R_u$; given an interval $[a_1,b_1]$ and $[a_2,b_2]$, it computes $\#_{12}([a_1,b_1]\times[c_u,d_u],[a_2,b_2]\times[c_u,d_u])$.

    \textbf{Queries.}
    We are now ready to answer $\#_{12}$-rectangle queries.
    Given a query rectangle $Q=[a,b]\times[c,d]$, we compute $\CR_x([a,b])$ and $\CR_y([c,d])$, which are the set of canonical nodes in $T_x$ and $T_y$ that partition $[a,b]$ and $[c,d]$, respectively.
    Recall that a node $v$ is in $\CR_x([a,b])$ if the $x$-projection of $R_v$ is contained in $[a,b]$, and it is maximal with respect to inclusion, meaning that no ancestor $v'$ of $v$ in $T_x$ satisfies that the $x$-projection of $R_{v'}$
    is contained in $[a,b]$. The definition of $\CR_y([c,d])$ is similar.
    Using the vertical lines of rectangles in $\CR_x([a,b])$ and the horizontal lines of rectangles in $\CR_y([c,d])$, we can partition $Q$ into a grid of $O(\log^2 n)$ rectangles, which we denote by $\Grid(Q)$.
    A key property of this grid partition is that for every row $L$ in the grid, there is a vertex $v\in \CR_y([c,d])$ such that the $y$-projection of $R_v$ is the same as the $y$-projection of $L$, which is crucial for our reduction.
    To answer $\#_{12}(Q)$, we use
    \begin{align*}
        \#_{12}(Q) = \sum_{R_1,R_2\in \Grid(Q)} \#_{12}(R_1,R_2) \;.
    \end{align*}

    We consider two types of pairs $(R_1,R_2)$, based on whether $R_1$ and $R_2$ are from the same row in $\Grid(Q)$, or the same column.
    If $R_1$ and $R_2$ are not from the same row or the same column, then computing $\#_{12}(R_1,R_2)$ is easy; if the $x$-projections of $R_1$ is to the left of the $x$-projection of $R_2$, and the $y$-projection of $R_1$ is below the $y$-projection of $R_2$, then $\#_{12}(R_1,R_2)=|R_1|\cdot|R_2|$ as every pair of points from $R_1$ and $R_2$ form an $12$ pattern.
    Otherwise, $\#_{12}(R_1,R_2)=0$.
    Since checking projections of two rectangles, as well as counting the number of points inside them takes $\tf(1)$ time, we can compute $\#_{12}(R_1,R_2)$ in $\tf(1)$ time for every pair of rectangles $R_1,R_2$ that are not from the same row or the same column.

    ~\\We are left with pairs $(R_1,R_2)$ that are from the same row or the same column. Assume without loss of generality that $R_1$ and $R_2$ are from the same row, and let $R_i=[a_i,b_i]\times[c,d]$ for $i=1,2$.
    Let $v$ be the vertex in $\CR_y([c,d])$ such that the $y$-projection of $R_v$ is $[c,d]$, where such vertex exists by the property of the grid partition $\Grid(Q)$.
    Then, to compute $\#_{12}(R_1,R_2)$, we can use the data structure \Sb stored in $v$ to compute $\#_{12}([a_1,b_1]\times[c,d],[a_2,b_2]\times[c,d])$ as this is a multi-slab query on $P_v=P\cap R_v$.

    This shows that we reduced $\#_{12}$-rectangle queries to $\#_{12}$-multi-slab queries, making $\tf(1)$ pair queries in total, and therefore at most $\tf(1)$ multi-slab queries, as desired.
\end{proof}

\begin{remark}
    The exact same proof shows that $\#_{\textsc{EQ}}$-rectangle queries can be reduced to $\#_{\textsc{EQ}}$-slab queries, where $\#_{\textsc{EQ}}(Q)$ counts the number of pairs of points $p_1\in Q_1$ and $p_2\in Q_2$ such that $p_1$ and $p_2$ have the same $y$-coordinate.
\end{remark}

We are now ready to explain how to get faster $5$-pattern counting \Cref{cor:bl24} by combining \Cref{lemma:reduction 12} with the algorithm of~\cite{BL2024counting} for counting $5$-patterns, that uses rectangle $\#_{12}$-queries as a subroutine, which we replace with the data structure \Sb implemented in~\cite{Duraj2019EquivalencesBT} with better preprocessing query tradeoffs:
\begin{theorem}[{\cite{Duraj2019EquivalencesBT}}]
    \label{thm:12 slab}
    There is an algorithm for answering $z$ offline $\#_{12}$-slab queries in time:
    \begin{align*}
        T_z\triangleq\begin{cases}
                         \tf(n z^{\frac{\omega-1}{\omega+1}}),                       & z \le n\;, \\[2pt]
                         \tf(n^{\frac{2\omega-2}{\omega+1}} z^{\frac{2}{\omega+1}}), & z > n\;,
                     \end{cases}
    \end{align*}
    where $\omega$ is the exponent of matrix multiplication.
\end{theorem}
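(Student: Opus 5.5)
Since \Cref{thm:12 slab} is quoted from~\cite{Duraj2019EquivalencesBT}, it can simply be invoked. To re-derive it, I would use a block decomposition, counting cross-block inversions with (rectangular) matrix multiplication and boundary contributions by direct counting.

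\textbf{Step 1: blocking.} Sort the $2z$ query endpoints. Cut $[n]$ into $g$ consecutive $x$-blocks, choosing the cut points so that every query interval is a union of whole blocks plus two partial blocks of total size $O(n/g)$. Independently, cut the value range into $h$ buckets of $n/h$ values each.

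\textbf{Step 2: inversions between whole blocks.} Form the $g\times h$ matrix $M[i][v]$, the number of points of block $i$ whose value lies in bucket $v$. Then $M\,L\,M^{\top}$, with $L$ the strictly lower-triangular all-ones $h\times h$ matrix, gives, for every pair of blocks $i<j$, the number of inversions whose two values fall in different buckets. This costs one rectangular product of shape $g\times h\times g$. Pairs whose values share a bucket are counted directly by scanning each bucket, at cost $O(n\cdot n/h)$ overall, or cheaper by sorting inside buckets. Two-dimensional prefix sums over the resulting $g\times g$ table then give, for any run of whole blocks $[i,j]$, the number of inversions inside it in $O(1)$ time.

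\textbf{Step 3: boundary contributions.} For each query, the $O(n/g)$ points in its partial blocks must be charged.
\begin{enumerate*}[label=(\alph*)]
\item Inversions among the boundary points themselves are counted by sorting them.
\item Inversions between a boundary point $p$ and the whole-block run are split into two parts. Points in buckets strictly below $p$'s bucket are read off prefix tables $\mathrm{cnt}[\text{block prefix}][\text{bucket prefix}]$ of size $g\cdot h$. Points in $p$'s own bucket are handled by an offline sweep over $x$ with a Fenwick tree, or by scanning the $n/h$ bucket elements.
\end{enumerate*}
This gives total cost $\tilde O(z\,n/g)$ plus the bucket-scan terms.

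\textbf{Step 4: balancing.} Summing, the total time is roughly
\[
\mathrm{MM}(g,h,g)+\frac{n^2}{h}+\frac{zn}{g}+gh .
\]
For $z\le n$, I would take $h$ proportional to $g$ and solve $g^{\omega}\approx zn/g$, after sharpening the same-bucket term so that it scales with $g$ rather than $n^2/h$. This should give $g\approx (zn)^{1/(\omega+1)}$ and the bound $\tilde O(n z^{(\omega-1)/(\omega+1)})$. For $z>n$, the boundary term dominates differently, and re-balancing gives the second regime $\tilde O(n^{\frac{2\omega-2}{\omega+1}} z^{\frac{2}{\omega+1}})$.

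\textbf{Main obstacle.} The hard part is the same-bucket / same-block interactions. Done naively they cost $n^2/h$ or $nB$, which ruins the $z$-dependence. Making them cost only $\tilde O(n+z\cdot n/g)$, for example by recursing on buckets or by handling them offline together with the boundary sweep, is where the care lies. I would first try the offline sweep with Fenwick trees keyed by bucket.
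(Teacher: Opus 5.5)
Invoking \cite{Duraj2019EquivalencesBT} is exactly what the paper does, since it gives no proof of this statement, so that part is fine. Your re-derivation, however, does not go through as written, and the failure is in Step 4.

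First, the balancing itself is wrong. Setting $g^{\omega}\approx zn/g$ gives total time $(zn)^{\omega/(\omega+1)}$, not $n z^{(\omega-1)/(\omega+1)}$. At $z=1$ this is $n^{\omega/(\omega+1)}=o(n)$, which is impossible. So the term you dropped, the same-bucket cost $n^2/h$, cannot be neglected.

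Second, the ``sharpening'' you describe in your last paragraph is not something to look for. With $h=1$ every pair is a same-bucket pair. Handling same-bucket interactions in $\tilde O(n+zn/g)$ would therefore solve the whole problem, with no matrix multiplication, in $\tilde O(n+g^2+zn/g)$ time. For $z=n$ that is $\tilde O(n^{4/3})$, which for $\omega>2$ is strictly stronger than the theorem you are trying to prove.

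No sharpening is needed. Keep the naive same-bucket cost $ns$, where $s=n/h$ is the bucket size, and choose $h$ independently of $g$. Use $\mathrm{MM}(g,h,g)=O(hg^{\omega-1})$ when $h\ge g$, and $O(g^2h^{\omega-2})$ when $h\le g$.

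For $z\le n$, take $g=z^{2/(\omega+1)}$ and $s=z^{(\omega-1)/(\omega+1)}$, so that $h=n/s\ge g$. Then $hg^{\omega-1}$, $ns$ and $zn/g$ all equal $nz^{(\omega-1)/(\omega+1)}$. The terms $g^2$ and $gh$ are no larger, because $\omega\ge 2$ and $z\le n$.

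For $z>n$, take $g=z^{(\omega-1)/(\omega+1)}n^{(3-\omega)/(\omega+1)}$ and $h=(n^2/z)^{2/(\omega+1)}$, which satisfies $h\le g$. Then $g^2h^{\omega-2}$, $n^2/h$ and $zn/g$ all equal $n^{(2\omega-2)/(\omega+1)}z^{2/(\omega+1)}$. The $g^2$ term is dominated because you may assume $z\le n^2$.

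One small correction to Step 2: the diagonal entries of $MLM^{\top}$ do not count inversions. Within-block inversions must be added to the table separately, in $O(n\log n)$ total time. With these parameter choices and that fix, your Steps 1--3 suffice to establish both regimes.
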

We only use the second case of the above theorem.
\begin{proof}[Proof of \Cref{cor:bl24}]
    We plug our rectangle-to-slab reduction into the \cite{BL2024counting} tradeoff.
    For every choice of parameter $m$, \cite{BL2024counting} give a running time of $\tf(\frac{n^2}{m})$ plus the cost of answering $nm^2$ offline $\#_{12}$-rectangle queries.
    Using their rectangle-query data structure, the total running time of the algorithm is
    \begin{align*}
        \tf(\frac{n^2}{m}+nm^2\cdot n^{1/4}+n^{7/4})\;,
    \end{align*}
    and setting $m=n^{1/4}$ gives $\tf(n^{7/4})$.

    We replace this rectangle-query step by slab queries using \Cref{thm:re ppm}.
    Since $z=nm^2>n$, the second case of \Cref{thm:12 slab} gives total time
    \[
        \tf(n^{\frac{2\omega-2}{\omega+1}}(nm^2)^{\frac{2}{\omega+1}})
        = \tf(n^{\frac{2\omega}{\omega+1}}m^{\frac{4}{\omega+1}}).
    \]
    Therefore the total running time becomes
    $\tf(\frac{n^2}{m}+n^{\frac{2\omega}{\omega+1}}m^{\frac{4}{\omega+1}})$.
    It remains to optimize $m$.
    Write $m=n^\alpha$.
    The two exponents are $2-\alpha$ and $\frac{2\omega}{\omega+1}+\frac{4\alpha}{\omega+1}$.
    Balancing them gives
    \[
        2-\alpha=\frac{2\omega}{\omega+1}+\frac{4\alpha}{\omega+1},
    \]
    hence $\alpha=\frac{2}{\omega+5}$. Thus we choose $m=n^{2/(\omega+5)}$,
    and obtain the improved running time $\tf(n^{\,2-\frac{2}{\omega+5}})$.
\end{proof}

We move on to prove \Cref{thm:re ppm}, which reduces multi-rectangle detection queries to multi-slab detection queries for patterns of length at most $3$.
The reduction is very similar to the one in the previous subsection.

We need one additional claim that we defer to the appendix.
\begin{restatable}{claim}{ClaimSev}\label{claim:17}
    Consider a multi-rectangle query $\indd(Q_1,Q_2,Q_3)$ where $Q_1,Q_2,Q_3$ are axis-parallel rectangles, and let $\Grid(Q_1,Q_2,Q_3)$ be a grid partition that is a refinement of $\Grid(Q_i)$ for every $i=1,2,3$.
    Let $R_1,R_2,R_3\in\Grid(Q_1,Q_2,Q_3)$. If $(R_1,R_2,R_3)$ are not all from the same row or the same column then $\indd(R_1,R_2,R_3)$ can be computed in $\tf(1)$ time after $\tf(n)$ preprocessing time.

    The same is true if $(R_1,R_2,R_3)$ are from the same row (or column) but consist of exactly three distinct rectangles.
\end{restatable}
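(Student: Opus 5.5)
The plan is to use the binary-CSP view from the technical overview. The query $\indd(R_1,R_2,R_3)$ becomes a CSP on variables $v_1,v_2,v_3$ with domains $D_i=\{t:(t,\pi(t))\in R_i\}$. The six order constraints are $v_1<v_2$, $v_2<v_3$, $v_1<v_3$ (the $x$-constraints) and $\pi(v_1)<\pi(v_3)$, $\pi(v_3)<\pi(v_2)$, $\pi(v_1)<\pi(v_2)$ (the $y$-constraints).

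\textbf{Step 1: discharge constraints that are decided by the grid.} Since the cells of $\Grid(Q_1,Q_2,Q_3)$ are well-aligned, any two cells $R_i,R_j$ have $x$-projections that are identical or disjoint, and the same for $y$-projections. If the $x$-projections are disjoint, the $x$-constraint between $v_i$ and $v_j$ is either always true (drop the edge) or always false (answer $0$). The same holds for $y$-projections. Also, the transitive constraints $v_1<v_3$ and $\pi(v_1)<\pi(v_2)$ are implied by the others, so they can be dropped. After a preprocessing of $\tf(n)$ time, a $2$D range-emptiness/range-extreme structure (range tree with stored min/max) lets us first check that each $R_i\cap P\neq\emptyset$. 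What remains is a primal multigraph whose edges are the nontrivial constraints among $v_1<v_2$, $v_2<v_3$, $\pi(v_1)<\pi(v_3)$, $\pi(v_3)<\pi(v_2)$. Only pairs sharing an $x$-projection (same column) keep their $x$-edge, and only pairs sharing a $y$-projection (same row) keep their $y$-edge.

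\textbf{Step 2: show the remaining primal graph is acyclic.} Use the hypothesis: the three cells are not all in one row or one column, or they are three distinct cells in one row or column. A cycle on three vertices needs all three pairs related, or a double edge on one pair. A double edge on $\{i,j\}$ requires $R_i,R_j$ to share both projections, i.e.\ $R_i=R_j$. Let me check which pairs carry two surviving constraints. The pair $(1,2)$ has only the $x$-edge and $(2,3)$ has only the $x$-edge, while $(1,3)$ has only the $y$-edge $\pi(v_1)<\pi(v_3)$ and $(2,3)$ additionally has the $y$-edge $\pi(v_3)<\pi(v_2)$. So the only possible double edge is on $\{2,3\}$, which needs $R_2=R_3$, the same cell.

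In the first case (not all in one row or column), a triangle would require the pairs $\{1,2\}$ and $\{2,3\}$ to share columns and $\{1,3\}$ to share a row. That forces all three into one column and $R_1,R_3$ into one row, hence $R_1=R_3$ and everything in one column, a contradiction. The double-edge cases similarly reduce to configurations lying in a single row or column, so I will enumerate them explicitly.

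In the second case (three distinct cells in one row), all $x$-edges are decided, so only $y$-edges remain: $\{1,3\}$ and $\{2,3\}$. That is a path. A column is symmetric, leaving the path $\{1,2\},\{2,3\}$.

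\textbf{Step 3: solve a tree CSP in $\tf(1)$ queries.} A path or star of comparison constraints is solved by leaf elimination. For a leaf $v_i$ attached to $v_j$ with constraint, say, $\pi(v_i)<\pi(v_j)$, replace the constraint by $\pi(v_j)>\min_{t\in D_i}\pi(t)$. That minimum is a range-min query over the rectangle $R_i$, and it turns $D_j$ into $R_j$ intersected with a half-plane, which is again a rectangle. Repeating along the tree, each domain stays a rectangle. The final check is range-emptiness of a rectangle, and each step uses one $\tf(1)$ range-extreme query on a $2$D range tree built in $\tf(n)$ time. Since there are at most three variables, this takes $O(1)$ queries.

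The main obstacle I expect is Step 2's case enumeration in the mixed situations, e.g.\ two cells equal and the third in the same row. There one must verify that the bad cycle, namely the double edge on $R_2=R_3$ combined with a nontrivial $(1,3)$ or $(1,2)$ edge, cannot arise under the hypothesis. If some such configuration does survive, I would handle it by a direct extremal argument with $\ind_{12}$-type precomputed values as in the warm-up.
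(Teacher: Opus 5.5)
There is a real gap in Step~2, although your fallback points toward the fix. Your route is otherwise the paper's: the CSP view, discharging constraints between cells in different rows or columns by well-alignment, showing the primal graph is acyclic, and peeling with range-extreme queries.

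The gap is the claim that double edges arise only when all cells lie in one row or column. That is false, and Step~3 relies on it. Take $R_2=R_3$, and let $R_1$ be a cell in a different row and a different column, e.g.\ strictly below and to the left of $R_2$. Such triples do occur in $\Grid(Q)^3$ in the single-rectangle algorithm. Then $v_1<v_2$ and $\pi(v_1)<\pi(v_3)$ are discharged and $v_1$ is isolated. What remains is $v_2<v_3$ and $\pi(v_3)<\pi(v_2)$ inside the single cell $R_2$, which is exactly $\ind_{21}(R_2)$.

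This configuration satisfies the hypothesis of the claim, and your leaf elimination cannot even start on it. Each of $v_2,v_3$ carries two constraints, and deciding a $21$-pattern inside a cell is not a single range-extreme query. What is actually true is only the weaker statement in your last paragraph: the doubled edge cannot coexist with a nontrivial $\{1,2\}$ or $\{1,3\}$ edge. That is because $R_1$ sharing a row or a column with $R_2=R_3$ puts all three cells in one row or column.

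So the argument needs one more ingredient for this two-variable doubled component. You need a structure with $\tf(n)$ preprocessing and $\tf(1)$ query for $\ind_{21}$ of an arbitrary rectangle, such as the warm-up structure for $\ind_{12}(Q)$ reflected. That is what your fallback gestures at, and it should be part of the proof rather than a contingency.

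The paper avoids the issue by working with forests \emph{with edge multiplicities} from the outset. If the $\{1,2\}$ edge or the $\{1,3\}$ edge is absent, the multigraph is such a forest. If both are present, the hypothesis forces $R_2$ and $R_3$ apart in both coordinates, so the $\{2,3\}$ edges disappear. These forests are solved by \Cref{lemma:acyclic multigraphs}, whose doubled paths are handled by the monotone-pattern structure of \Cref{lemma:monotone patterns}.

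There is also a minor slip: your triangle argument assumes the $\{2,3\}$ edge is an $x$-edge. If it is the $y$-edge instead, then $R_1=R_2$ and all three cells lie in one row, which is also excluded.
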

The proof of \Cref{claim:17} is deferred to the appendix, and it is based on the fact that the primal graph of the boolean CSP defined by $\indd(R_1,R_2,R_3)$ is acyclic, and therefore can be solved in $\tf(1)$ time using range tree point queries.

\begin{proof}[Proof of \Cref{thm:re ppm}]
    The preprocessing of this step is identical to the preprocessing step in the proof of \Cref{lemma:reduction 12}; build two range trees $T_x$ and $T_y$ on the $x$-coordinates and $y$-coordinates of the points in $P$, respectively, and equip every node in these trees with a data structure for answering multi-slab queries on the points that lie in the slab associated with the node, as described in the proof of \Cref{lemma:reduction 12}.

    \textbf{Queries.}
    Given a query triple of rectangles $(Q_1,Q_2,Q_3)$ which we can assume that is well-aligned as explained in \Cref{ssec:range-trees}, meaning that for every pair of rectangles the $x$-projections are either identical or disjoint, and the same holds for their $y$-projections.
    Let $Q_i=[a_i,b_i]\times[c_i,d_i]$ for $i=1,2,3$.
    We compute a grid partition for all three rectangles together as follows.
    First compute $C_x^i\triangleq\CR_x([a_i,b_i])$ and $C_y^i\triangleq\CR_y([c_i,d_i])$ for every $i=1,2,3$, which are the set of canonical nodes in $T_x$ and $T_y$ that partition $[a_i,b_i]$ and $[c_i,d_i]$, respectively.
    Since the triplet is well-aligned, every pair of nodes in $C_x^1\cup C_x^2\cup C_x^3$ are associated with disjoint slabs, and the same holds for $C_y^1\cup C_y^2\cup C_y^3$.

    Recall that the partition $\Grid(Q_1)$ is obtained by taking the vertical lines of rectangles in $\CR_x([a_1,b_1])$ and the horizontal lines of rectangles in $\CR_y([c_1,d_1])$, and considering the linear arrangement defined by these lines. To get a grid partition $\Grid(Q_1,Q_2,Q_3)$ that is a refinement of $\Grid(Q_1),\Grid(Q_2),\Grid(Q_3)$, we take the union of the lines that define the linear arrangements of $\Grid(Q_1),\Grid(Q_2),\Grid(Q_3)$, and consider the linear arrangement defined by these lines.
    This gives us a grid partition $\Grid(Q_1,Q_2,Q_3)$.
    We use $\FF_i$ to denote the faces of the linear arrangement that are contained in $Q_i$.
    Crucially, the property that every row (or column) of the grid has a node $v\in T_y$ with associated rectangle $R_v$ that has the same $y$-projection as the row still holds for $\Grid(Q_1,Q_2,Q_3)$, as $\Grid(Q_1,Q_2,Q_3)$ is a refinement of $\Grid(Q_i)$ for every $i=1,2,3$.
    A row that is not associated with a node in $\CR_y^1\cup \CR_y^2\cup \CR_y^3$ implies that the slabs of two nodes are not disjoint, contradicting the well-alignment of the triplet of rectangles.
    The second preserved property, which is preserved for any partitions $\FF_i$ of $Q_i$ is that
    \begin{align*}
        \indd(Q_1,Q_2,Q_3) = \bigvee_{(F_1,F_2,F_3)\in \FF_1\times \FF_2\times \FF_3} \indd(F_1,F_2,F_3)\;.
    \end{align*}

    By the first property, we can reduce triplets of rectangles that are in the same row or the same column to multi-slab queries, and use the data structure stored in the node associated with the row or the column to answer these multi-slab queries.
    We are left with dealing with triplets of faces that are not from the same row or the same column. By \Cref{claim:17}, we can answer $\indd(F_1,F_2,F_3)$ in $\tf(1)$ time for every such triplet, which completes the proof.
\end{proof}
\renewcommand{\CR}[1]{\mathcal{C}(#1)}

\section{Single-Rectangle Queries for $132$ Patterns}
\label{sec:optimal emptiness}
This section is devoted to proving \Cref{thm:optimal emptiness}:
\ThmOptEmpty* %
The theorem can be extended to any pattern $\sigma$ of length $3$.
One key ingredient in the proof is a data structure for answering $132$-slab queries, which we prove in \Cref{ssec:132 single interval}:
\begin{restatable}[$132$ Slab Queries]{theorem}{ThmIntervalACB}
    \label{thm:132 interval queries}
    There is a data-structure \Sb for $132$-slab queries with $\tf(n)$ preprocessing time and $\tf(1)$ query time.
\end{restatable}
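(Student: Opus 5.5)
The plan is to reduce a $132$-slab query $\indd([a,b]\times[1,n])$ to a single table lookup, using monotonicity in the slab. For each left endpoint $a\in[n]$ define
\[
f(a)\;=\;\min\{\,k : \text{there are } a\le i<j<k \text{ with } \pi(i)<\pi(k)<\pi(j)\,\},
\]
with $f(a)=\infty$ if no such $k$ exists. Then $\indd([a,b]\times[1,n])=1$ if and only if $b\ge f(a)$. Given the array $f$, a query is answered in $O(1)$ time. The whole task is therefore to compute $f(a)$ for all $a$ in $\tf(n)$ time.

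First I would fix the last element $k$ of the pattern. For each $k$, let
\[
g(k)=\max\{\,i : \text{there is } j \text{ with } i<j<k \text{ and } \pi(i)<\pi(k)<\pi(j)\,\},
\]
that is, the rightmost possible "1" for a copy of $132$ ending at $k$. Then
\[
f(a)=\min\{k : g(k)\ge a\}.
\]
To compute this, I would place each $k$ in bucket $g(k)$, then sweep $a$ from $n$ down to $1$ while maintaining a running minimum of $k$ over the buckets seen so far. This takes $O(n)$ time once all values $g(k)$ are known.

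The main step is computing $g(k)$. Let $j^*(k)$ be the largest $j<k$ with $\pi(j)>\pi(k)$, the previous greater element. All these values can be found in $O(n)$ total time with a monotone stack. I claim that
\[
g(k)=\max\{\,i<j^*(k) : \pi(i)<\pi(k)\,\}.
\]
The argument is as follows. For a candidate $i$ with $\pi(i)<\pi(k)$, a valid middle element $j\in(i,k)$ with $\pi(j)>\pi(k)$ exists if and only if $i<j^*(k)$. Indeed, if any such $j$ exists then $j^*(k)$ is one of them, being the largest, so $j^*(k)>i$. Conversely, if $i<j^*(k)$ then $j^*(k)$ itself works. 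The value $g(k)$ is thus an orthogonal dominance query: find the maximum $x$-coordinate of a point of $P$ in $[1,j^*(k)-1]\times[1,\pi(k)-1]$. I would answer all $n$ such queries offline by sweeping in increasing $x$ and maintaining a Fenwick tree over $y$ that stores maximum index. Alternatively a range tree gives $O(\log^2 n)$ per query. Either way the total is $\tf(n)$, and the algorithm is deterministic.

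Finally, I would verify that $f$ correctly characterizes the query. Monotonicity holds because the slab point sets are nested. Every copy of $132$ with last index $k$ and first index $i\ge a$ satisfies $g(k)\ge i\ge a$. Conversely, $g(k)\ge a$ yields a copy of $132$ inside $[a,k]$. I expect the only delicate point to be the correctness of the reduction of $g$ to the previous-greater element. Horizontal-slab queries, which the rectangle-to-slab framework also needs, follow by applying the same structure to the transposed point set $\{(y,x)\}$. The inverse of $132$ is $132$, so the pattern is unchanged. Other length-$3$ patterns follow via the reflection and rotation symmetries.
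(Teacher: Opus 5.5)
Your argument is correct, and it arrives at the same table as the paper: your $f(a)$ is the paper's $T[a]$, and your bucket sweep plays the role of its suffix-minimum recursion $T[a]=\min(\hat{T}[a],T[a+1])$. The per-anchor computation, however, is genuinely different.

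The paper anchors on the \emph{first} element. For each $a$ it computes $\hat{T}[a]$, the least $b$ such that $[a+1,b]\times[\pi(a)+1,n]$ contains a $21$. It finds this by binary search over $b$, using the rectangle $21$-detection structure of Ben-Eliezer, Mitrovi\'c and Srivastava as a black box.

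You anchor on the \emph{last} element $k$. You use the fact that $\pi(i)<\pi(k)<\pi(j)$ already forces $\pi(i)<\pi(j)$, so the middle element can always be taken to be the previous greater element $j^*(k)$. This turns the search for the best first element into a single offline dominance-maximum query, which a monotone stack and a Fenwick tree handle.

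Your route gives a fully elementary, self-contained algorithm with an explicit $O(n\log n)$ bound and no external data structure. The paper's route buys uniformity: the same first-element binary search covers every length-$3$ pattern by swapping in the appropriate one- or two-rectangle $12$/$21$ detection query. Your nearest-greater trick, by contrast, depends on the specific shape of $132$.

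Your closing remark is slightly too quick. The symmetries that keep vertical slabs vertical only carry $132$ to $231$, $312$ and $213$. The patterns $123$ and $321$ lie outside this orbit and need their own (easy) argument, for instance anchoring on the middle element with its previous-smaller and next-greater indices. This does not affect the theorem, which concerns $132$ only.
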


We explain how to use this data structure to solve $132$-rectangle queries.
Our starting point is as before to build two range trees $T_x$ and $T_y$ on the $x$-coordinates and $y$-coordinates of the points in $P$, respectively, and equip every node $v$ in $T_x$ (and symmetrically for $T_y$) with a data structure \Sb for answering \emph{single}-slab queries on the points $P_v$ in the slab $R_v=[a,b]\times[1,n]$.
That is $\Sb$ takes an interval $[c,d]$ and answers $\indd([a,b]\times[c,d])$.
Formally:
\begin{restatable}[\Sb]{theorem}{ThmSubQB}\label{thm3:subquery}
    There exists a data-structure \Sb that takes $\tf(n)$ preprocessing time.
    It supports the following query in $\tf(1)$ time:
    \begin{enumerate}[leftmargin=*]
        \item Given a vertex $v\in \Th$, with $R_v=[1,n]\times[c_v,d_v]$, and two values $a,b\in[n]$, it computes $\indd([a,b]\times [c_v,d_v])$.
        \item Given a vertex $v\in \Tv$, with $R_v=[a_v,b_v]\times[1,n]$, and two values $c,d\in[n]$, it computes $\indd([a_v,b_v]\times [c,d])$.
    \end{enumerate}
\end{restatable}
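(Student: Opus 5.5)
The plan is to build \Sb by instantiating the single-slab structure of \Cref{thm:132 interval queries} at every node of the two one-dimensional range trees $\Tv$ and $\Th$. For a node $v\in\Tv$ with $R_v=[a_v,b_v]\times[1,n]$, the query $\indd([a_v,b_v]\times[c,d])$ restricted to $P_v$ is a \emph{horizontal} slab query: every point of $P_v$ already has $x$-coordinate in $[a_v,b_v]$, so only the $y$-restriction to $[c,d]$ matters. Symmetrically, for $u\in\Th$ the query $\indd([a,b]\times[c_u,d_u])$ on $P_u$ is a vertical slab query.

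To handle the two orientations, I would use the reflection $(x,y)\mapsto(y,x)$. This maps $P_v$ to $P_v'$ and turns a horizontal slab query for $132$ into a vertical slab query for $\sigma^{-1}$. Since $132^{-1}=132$, the same structure from \Cref{thm:132 interval queries} applies without change. (For general length-$3$ patterns we would need slab structures for the corresponding inverse pattern; the other symmetries reduce these to the same argument.)

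\Cref{thm:132 interval queries} is stated for a permutation of $[n']$, so before building at node $v$ I would compress coordinates of $P_v$ (or $P_v'$) to ranks as in \Cref{lemma:shrink}. At query time the boundaries $c,d$ (or $a,b$) are translated via $\psi_{\textsc{left}},\psi_{\textsc{right}}$, which costs a predecessor search, i.e.\ $O(\log n)$ by binary search on the sorted coordinates stored at $v$. Compression is order preserving, so pattern occurrences inside the query range are exactly preserved.

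For the complexity, the structure at $v$ takes $\tf(|P_v|)$ preprocessing, including sorting. Each level of either tree partitions $P$, and there are $O(\log n)$ levels, so the total is $\sum_v\tf(|P_v|)=\tf(n)$. A query identifies $v$, translates the boundaries, and makes one $\tf(1)$ call, giving $\tf(1)$ total.

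The main obstacle is not this assembly but \Cref{thm:132 interval queries} itself, which we assume here. The only subtlety to check is the orientation/symmetry bookkeeping: that the reflected problem is again a $132$ slab query of the form \Cref{thm:132 interval queries} supports (vertical slab $[a,b]\times[1,n']$), and that empty or degenerate translated ranges return $0$.
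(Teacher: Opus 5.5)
Your proposal is correct and follows essentially the same route as the paper. The paper also compresses $P_v$ at every node of the two one-dimensional range trees via \Cref{lemma:shrink}, builds the slab structure of \Cref{thm:132 interval queries} on the compressed set, translates the query boundaries through $\psi_{\textsc{left}},\psi_{\textsc{right}}$, and bounds the total preprocessing by $\sum_v \tilde{O}(|P_v|)=\tilde{O}(n)$. The only difference is that, for the $T_x$ case, you make explicit the diagonal reflection and the fact that $132$ is an involution, whereas the paper simply calls that case ``symmetric''.
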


We are ready to complete the proof of \Cref{thm:optimal emptiness} using \Cref{thm3:subquery}.
\begin{proof}[Proof of \Cref{thm:optimal emptiness} using \Cref{thm3:subquery}]
    To answer queries of the form $\indd(Q)$ for axis parallel rectangle $Q=[a,b]\times[c,d]$, we partition $Q$ into a grid $\Grid(Q)$ using the trees $T_x$ and $T_y$. This is a special case of computing $\Grid(Q_1,Q_2,Q_3)$ for three rectangles $Q_1,Q_2,Q_3$ that are all equal to $Q$, which we explained in the previous section.
    We can compute $\indd(Q)$ by computing $\bigvee_{(R_1,R_2,R_3)\in \Grid(Q)^3}\indd(R_1,R_2,R_3)$, and therefore we reduce the single rectangle query $Q$ to $O(\log^6 n)$ triplet queries of the form $\indd(R_1,R_2,R_3)$ where $R_i\in \Grid(Q)$ for $i=1,2,3$.

    For each such triplet $(R_1,R_2,R_3)$, we consider two cases based on whether the (not necessarily distinct) rectangles $R_1,R_2,R_3$ are from the same row or the same column in $\Grid(Q)$, or not.
    If they are not from the same row or the same column, then we can answer $\indd(R_1,R_2,R_3)$ in $\tf(1)$ time using \Cref{claim:17}.

    We are left with triplets that fall in the same row (or column) of $\Grid(Q)$, which we call \emph{in-strip} queries.
    Suppose that $(R_1,R_2,R_3)$ is an in-strip query in the same row, meaning that $R_1,R_2,R_3$ have the same $y$-projection (the rectangles do not have to be distinct).
    Let $R_i=[a_i,b_i]\times[c,d]$ for $i=1,2,3$, where $a_1\leq a_2\leq a_3$, and $b_1\leq b_2\leq b_3$.
    Instead of computing $\indd(R_1,R_2,R_3)$, we compute $\indd([a_1,b_3]\times[c,d])$, we use $S=[a_1,b_3]\times[c,d]$ to denote the new query rectangle.
    We show that the new query can be answered by a node $v$ in the range tree $T_y$ that is associated with a rectangle $R_v$ that defines the row that contains $R_1,R_2,R_3$.
    Moreover, $R_v=[1,n]\times[c,d]$ meaning it has the same $y$-projection as $R_1,R_2,R_3$, which is crucial for our reduction.
    We claim that $v$ can answer $\indd(S)$ because this is an interval query on its point set $P_v=P\cap R_v$, and therefore we can use \Cref{thm3:subquery} to answer it in $\tf(1)$ time.
\end{proof}

\subsection{In-Strip Triplets}\label{ssec:in-strip}
\renewcommand{\phi}{\varphi}
\providecommand{\psia}{\psi_{\textsc{left}}}
\providecommand{\psib}{\psi_{\textsc{right}}}

In this subsection we prove \Cref{thm3:subquery}:
\ThmSubQB* %

We restrict our attention to node $v$ in the tree $\Th$ with rectangle $R_v=[1,n]\times[c_v,d_v]$ containing point set $P_v=P\cap R_v$, of size $m_v=|P_v|=d_v-c_v+1$.
Dealing with nodes in $\Tv$ is symmetric and follows the same approach.
Our approach consists of three key steps.
First, we apply coordinate compression: an order-preserving transformation $\rank_v^x,\rank_v^y:[n]\to[m_v]$, that maps each point to the ranks of its coordinates.
This reduction is crucial for applying \Cref{thm:132 interval queries}, which works in the permutation setting of $m_v$ points, one in each row and each column of $[m_v]\times[m_v]$.
We define a compressed set of points $P'_v$ by
\begin{align*}
    P'_v=\set{(\rank_v^x(p.x),\rank_v^y(p.y))\mid p\in P_v}\;.
\end{align*}
Second, we maintain two functions $\psia,\psib:[n]\to[m_v]$, which allows us to transform a query rectangle $Q=[a,b]\times[c_v,d_v]$ into a new rectangle $Q'=[\psia(a),\psib(b)]\times[1,m_v]$ such that
\begin{align*}
    \indd(Q;P_v)=\indd(Q';P'_v)\;,
\end{align*}
where we use $\indd(Q;P_v)$ to denote whether the set of points in $P_v$ that lie in the rectangle $Q$ contains a $132$-pattern, and $\indd(Q';P'_v)$ to denote whether the set of points in $P'_v$ that lie in the rectangle $Q'$ contains a $132$-pattern.
That is, we change the notation $\indd(Q)$ to either $\indd(Q;P)$ or $\indd(Q;P_v)$ to explicitly indicate the point set we are referring to.
The third and last step, is to explain how to answer queries of the form $\indd(Q';P'_v)$ in $\To(1)$ time for any $a',b'\in[m_v]$, after preprocessing $P'_v$ in $\tf(m_v)$ time, which uses \Cref{thm:132 interval queries}.

~\\The following lemma explains how achieve the first two steps.
\begin{lemma}\label{lemma:shrink}
    There is a data structure \shrink with the following guarantees:
    \begin{itemize}
        \item \textbf{Preprocessing:} Given a set of $m$ points $P$ in $[n]\times[n]$, \shrink produces a new set of points $P'$ in $[m]\times[m]$, which is order-preserving. Preprocessing time is $\tf(m)$.
        \item \textbf{Transforming a Query:} Given a query rectangle $Q=[a,b]\times[1,n]$, \shrink computes a new rectangle $Q'=[a',b']\times[1,m]$, such that $\indd(Q;P)=\indd(Q';P')$. The transformation takes $\tf(1)$ time.
    \end{itemize}
\end{lemma}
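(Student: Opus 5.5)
The plan is plain rank-based coordinate compression, with two predecessor/successor tables to translate query boundaries.

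\textbf{Preprocessing.} Sort the $m$ points of $P$ by $x$-coordinate and by $y$-coordinate separately, in $O(m\log m)$ time. Since the points come from a permutation, coordinates are distinct. Let $\rank^x(p.x)$ be the position of $p$ in the $x$-order and $\rank^y(p.y)$ its position in the $y$-order. Set
\[
P'=\set{(\rank^x(p.x),\rank^y(p.y))\mid p\in P},
\]
which is a permutation point set in $[m]\times[m]$. For the query translation, define
\begin{itemize}
\item $\psia(a)=1+\abs{\set{p\in P: p.x<a}}$, the rank of the leftmost point with $x\ge a$;
\item $\psib(b)=\abs{\set{p\in P: p.x\le b}}$, the rank of the rightmost point with $x\le b$.
\end{itemize}
Storing these for all $a,b\in[n]$ would cost $O(n)$, not $\tf(m)$. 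Instead, keep the sorted array of $x$-coordinates and answer $\psia,\psib$ by binary search in $O(\log m)$ time. If a deterministic $O(1)$ bound is preferred, a predecessor structure (e.g.\ van Emde Boas or fusion-style) also fits within the $\tf(\cdot)$ budget.

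\textbf{Correctness.} The map $p\mapsto(\rank^x(p.x),\rank^y(p.y))$ is a bijection $P\to P'$ that is strictly increasing in each coordinate separately. Therefore any triple of points forms a $132$ copy in $P$ if and only if its image does in $P'$: the pattern depends only on the relative order of the $x$'s and of the $y$'s. Moreover, $p.x\in[a,b]$ holds if and only if $\psia(a)\le\rank^x(p.x)\le\psib(b)$, by the definitions. Every point of $P'$ has $y\in[1,m]$, so $P'\cap Q'$ is exactly the image of $P\cap Q$. Together these give $\indd(Q;P)=\indd(Q';P')$.

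\textbf{Edge case.} If $\psia(a)>\psib(b)$, the range is empty; output the empty rectangle, on which the indicator is trivially $0$.

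\textbf{Main obstacle.} The only delicate point is keeping preprocessing at $\tf(m)$ rather than $O(n)$. This means avoiding full lookup tables indexed by $[n]$ and relying on binary search or predecessor search over the $m$ stored coordinates. That costs $O(\log m)=\tf(1)$ per query, which is within the stated bounds.
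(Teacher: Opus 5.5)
Your proposal is correct and is essentially the paper's proof. You rank-compress the coordinates and translate the query boundaries via $\psi_{\textsc{left}}(a)=1+|\{x\in X: x<a\}|$ and $\psi_{\textsc{right}}(b)=|\{x\in X: x\le b\}|$, where $X$ is the set of $x$-coordinates of $P$. These coincide with the paper's $\mathrm{rank}_x(a)+1-\chi_v(a)$ and $\mathrm{rank}_x(b)$. The only difference is that the paper uses a plain shift for $y$, because it works inside a horizontal slab of a permutation, whereas your $y$-rank covers arbitrary subsets.

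One small correction to your aside: neither van Emde Boas nor fusion trees give deterministic $O(1)$ predecessor queries. This does not matter, since binary search already meets the $\tilde{O}(1)$ query bound.
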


To achieve the third step we use a data structure for answering $132$-\emph{slab}-queries \Cref{thm:132 interval queries} which we prove in the next subsection (see \Cref{ssec:132 single interval}).
We explain how to prove \Cref{thm3:subquery} using \Cref{lemma:shrink} and \Cref{thm:132 interval queries}.
\begin{proof}[Proof of \Cref{thm3:subquery} using \Cref{lemma:shrink} and \Cref{thm:132 interval queries}]
    We prove the theorem for nodes in $\Th$; the proof for nodes in $\Tv$ is symmetric.
    We first explain the preprocessing phase. Consider a node $v\in \Th$ with rectangle $R_v=[1,n]\times[c_v,d_v]$ containing point set $P_v=P\cap R_v$, of size $m_v=|P_v|=d_v-c_v+1$.
    We apply \Cref{lemma:shrink} to the set of points $P_v$ to get a new set of points $P'_v$ in $[m_v]\times[m_v]$, and a data-structure that transforms a query rectangle $Q=[a,b]\times[1,n]$ into a new rectangle $Q'=[a',b']\times[1,m_v]$ such that $\indd(Q;P_v)=\indd(Q';P'_v)$.

    We also apply \Cref{thm:132 interval queries} to $P'_v$ to build a data structure \intQuery that answers queries of the form $\indd([a',b']\times[1,m_v];P'_v)$ in $\To(1)$ time for any $a',b'\in[m_v]$.
    This completes the preprocessing phase, which takes $\To(m_v)$ time by the complexities of \Cref{lemma:shrink} and \Cref{thm:132 interval queries}.

    To answer queries, given $a,b\in[n]$, we first use the data structure from \Cref{lemma:shrink} to compute the new rectangle $Q'=[a',b']\times[1,m_v]$ such that $\indd([a,b]\times[1,n];P_v)=\indd(Q';P'_v)$, and then we use the data structure \intQuery from \Cref{thm:132 interval queries} to compute $\indd(Q';P'_v)$ in $\To(1)$ time, and return it as the answer to the query.
    The correctness follows by the correctness of both \Cref{lemma:shrink} and \Cref{thm:132 interval queries}.
\end{proof}

We prove \Cref{lemma:shrink}. The main idea is to replace each $x$ or $y$ coordinate of a point by its rank among the $x$ or $y$ coordinates of the points in $P_v$, respectively, and to applying basically the same transformation to the query rectangle, even if the endpoints of the query rectangle are not in the set of coordinates of the points in $P_v$.
\begin{proof}[Proof of \Cref{lemma:shrink}]
    \newcommand{\fleq}{\rank_x}
    \newcommand{\fgeq}{f_{\geq}}
    Let $R_v=[1,n]\times[c_v,d_v]$ denote the strip of semi-canonical node $v\in\Th$, with $m_v=d_v-c_v+1$ points in $R_v\cap P$.

    \paragraph{Preprocessing.}
    We define two functions $\rank_x,\rank_y:[n]\to[m_v]$.
    The $y$-transformation is straightforward;
    for $a\in [n]$ define $\rank_y(a)\triangleq a-c_v+1$, which is a simple shift of the $y$-coordinates.
    For the $x$-transformation, we ``shrink'' the $x$-coordinates of the points as follows.
    Let $X_v$ denote the set of $x$-coordinates of the points in $R_v\cap P$.
    We define two functions $\fleq$ as follows:
    \begin{align*}
        \forall a\in [n]\quad  \fleq(a)\triangleq |\set{x\in X_v\mid x\leq a}|\;.
    \end{align*}
    In words, $\fleq(a)$ counts how many $x$-coordinates in $X_v$ are at most $a$.
    The new set of points $P'_v$ is defined as follows:
    \begin{align*}
        P'_v=\set{(\fleq(p.x),\rank_y(p.y))\mid p\in R_v\cap P}\;.
    \end{align*}

    \paragraph{Query Transformation.}
    We explain how to transform a query rectangle $Q=[a,b]\times[1,n]$ into a new rectangle $Q'=[a',b']\times[1,m_v]$ such that $\indd(Q;P)=\indd(Q';P')$.
    Let $\chi_v(a)$ be the indicator function of $X_v$, i.e., $\chi_v(a)=1$ if $a\in X_v$ and $\chi_v(a)=0$ otherwise.
    Then, the transformation is as follows:
    \begin{align*}
        \psia(a) & = \fleq(a)+1 - \chi_v(a)\;,  \\
        \psib(a) & = \fleq(a)\;.
    \end{align*}

    In words, if $a$ is in $X_v$, then $\psia(a)=\psib(a)=\fleq(a)$, and otherwise $\psia(a)=\fleq(a)+1$, and $\psib(a)=\fleq(a)$.
    This completes the description of the query transformation.
    The functions $\psia,\psib$ can be stored using $O(m_v\log n)$ space, and computed in $O(\log n)$ time.

    \paragraph{Correctness.}
    Let $Q=[a,b]\times[1,n]$.
    Let $(x_i,y_i)$ be the $i$-th point in $P$, and define a matching point $(z_i,w_i)$ in $P'$ by
    $(z_i,w_i)\triangleq(\fleq(x_i),\rank_y(y_i))$, for every $i\in[m_v]$.
    We prove that $\indd(Q;P)=\indd(Q';P')$.
    Assume that $(x_1,y_1),(x_2,y_2),(x_3,y_3)\in P\cap Q$ are three points that form a $132$-pattern.
    Then,
    \begin{enumerate}
        \item $a\le x_1 \iff \psia(a)\le z_1$\quad and\quad $x_3\le b \iff z_3\le \psib(b)=\fleq(b)$ %
        \item $c_v\leq y_1<y_3<y_2\leq d_v \iff 1\leq w_1<w_3<w_2\leq m_v$.
    \end{enumerate}
    Therefore, $(x_1,y_1),(x_2,y_2),(x_3,y_3)$ forms a $132$-pattern in $P\cap Q$ if and only if $(z_1,w_1),(z_2,w_2),(z_3,w_3)$ form a $132$-pattern in $P'\cap Q'$, which completes the proof.
    \end{proof}

\subsection{Slab Queries ($\indd(\slab)$)}
\label{ssec:132 single interval}
The main goal of this subsection is to prove \Cref{thm:132 interval queries}, which we restate here for convenience:
\ThmIntervalACB* %
The theorem can be extended to any pattern $\sigma$ of length $3$.
We consider a permutation $\pi:[m]\to[m]$, and the set of points $P=\{(i,\pi(i))\mid i\in[m]\}$ it defines.
We also use $P[i]=\pi(i)$, i.e. we treat $P$ as an array.
For an interval $I=[a,b]$ we use $\indd(I)$ to denote the indicator function of whether $P\cap (I\times[1,m])$ contains a $132$ pattern.
We also use $P[a:b]$ to denote the subset of points in $P$ whose $x$-coordinate is in $[a,b]$, i.e., $P[a:b]=P\cap ([a,b]\times[1,m])$.

~\\ We overview the main ideas in the proof of \Cref{thm:132 interval queries}.
We build two tables $T$ and $\hat{T}$ with $m$ entries each.
The table $T$ is defined as follows.
For every $a\in[m]$ the cell $T[a]=b$, where $b$ is the smallest index such that $P[a:b]$ contains a $132$ pattern, and if there is no such index, we set $T[a]=\infty$.
To construct $T$, we use the table $\hat{T}$.
The table $\hat{T}$ is defined by setting for every $a\in [m]$, the cell $\hat{T}[a]$ to be the minimum index $b\in[m]$ such that $\indd([a,b])=1$, with the additional restriction that the first element in the $132$ pattern is $P[a]$.
We show that we can construct $\hat{T}$ in $\tf(m)$ time, use it to compute $T$ in $O(m\log m)$ time, and that using $T$ we can answer queries in $\tf(1)$ time.

~\\The following lemma states the running time to compute $T$:
\begin{lemma}
    \label{lemma2:132 interval queries}
    There exists an algorithm to compute $T$ in time $\tf(m)$.
\end{lemma}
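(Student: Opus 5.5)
The overview suggests a two-stage plan: first compute $\hat{T}$ in $\tf(m)$ time, then obtain $T$ from $\hat{T}$ by a suffix-minimum. The key observation is that a $132$-copy in $P[a:b]$ whose first element is $P[a']$ for some $a'\ge a$ lies inside $P[a':b]$. Therefore
\[
T[a]=\min_{a'\ge a}\hat{T}[a'],
\]
and this is computed for all $a$ by one right-to-left sweep in $O(m)$ time. Correctness is immediate. If $P[a:b]$ contains a copy with first index $a'\ge a$, then $\hat{T}[a']\le b$. Conversely, $\hat{T}[a']\le b$ exhibits a copy inside $P[a':b]\subseteq P[a:b]$. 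So the real work is computing $\hat{T}$.

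For $\hat{T}[a]$ we fix the first element $P[a]$ and need the smallest $b$ such that there are indices $a<j<k\le b$ with $P[a]<P[k]<P[j]$. Equivalently, $b$ is the minimum over $k>a$ with $P[k]>P[a]$ such that some $j\in(a,k)$ has $P[j]>P[k]$. Let $\mathrm{pg}(k)$ be the index of the nearest previous element greater than $P[k]$, computed for all $k$ by a monotone stack in $O(m)$. The condition for $k$ then reads $\mathrm{pg}(k)>a$, since some larger element exists in $(a,k)$ exactly when the nearest one does. Hence
\[
\hat{T}[a]=\min\{k : \mathrm{pg}(k)>a,\ P[k]>P[a]\},
\]
where $k>a$ is implied by $\mathrm{pg}(k)>a$. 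This is an offline two-dimensional dominance-minimum problem. Each $k$ is a point $(\mathrm{pg}(k),P[k])$ with weight $k$, and each query asks for the minimum weight among points with first coordinate $>a$ and second coordinate $>P[a]$.

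We solve all $m$ queries offline by sweeping $a$ from $m$ down to $1$. Before answering the query for $a$, we insert every $k$ with $\mathrm{pg}(k)=a+1$ (bucketing by $\mathrm{pg}$), so that exactly those $k$ with $\mathrm{pg}(k)>a$ are present. Insertion stores the value $k$ at position $P[k]$ in a segment tree or Fenwick tree supporting suffix-minimum over $y$-values. The query is a suffix minimum over $(P[a],m]$, giving $O(m\log m)$ total time. Undefined $\mathrm{pg}$ (no previous greater element) means $k$ is never inserted, and an empty query yields $\infty$.

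The main obstacle is pinning down the characterization of $\hat{T}[a]$, i.e., checking that using the nearest previous greater element rather than an arbitrary $j$ loses nothing. If some $j\in(a,k)$ has $P[j]>P[k]$, then the nearest such $j$ is $\mathrm{pg}(k)$, which lies in $[j,k)$ and so is $>a$. Conversely, $\mathrm{pg}(k)$ itself serves as a witness $j$. The remaining steps are routine. The same scheme handles other length-$3$ patterns by symmetry (reversal and complement).
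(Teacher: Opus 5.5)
Your proof is correct. The first stage, $T[a]=\min_{a'\ge a}\hat T[a']$, is exactly the paper's recursion $T[a]=\min(\hat T[a],T[a+1])$, and you justify it more directly than the paper's two-direction induction.

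You differ from the paper in how $\hat T$ is computed. The paper defines $C[a]$ as the least $b$ for which the rectangle $[a+1,b]\times[P[a]+1,m]$ contains a $21$ pattern. These rectangles are nested in $b$, so the paper binary-searches $b$ using the $\tf(1)$-time rectangle $\ind_{21}$ data structure of \cite{ben2024approximate} as a black box, i.e., $O(\log m)$ black-box queries per $a$. You instead remove the existential quantifier over the middle index by taking the canonical witness $\mathrm{pg}(k)$. This turns $\hat T$ into an offline two-dimensional dominance-minimum problem, solved with a monotone stack and a Fenwick tree in $O(m\log m)$ time.

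Your route is fully self-contained and elementary, and gives an explicit bound. The paper's route is more modular: it reuses the rectangle-$21$ tool already in play, and its correctness is read off directly from the geometry (a $21$ pattern above and to the right of $P[a]$). The cost is dependence on an external data structure.

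One small caveat about your closing remark, which goes beyond the statement. Reversal and complement map $132$ only to $231$, $213$, and $312$, not to the monotone patterns $123$ and $321$. For $123$ the nearest-previous-smaller analogue of your trick does not transfer directly, because the middle witness must also exceed $P[a]$. Those patterns need a slightly different argument, e.g.\ fixing the middle element instead of the first.
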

\begin{proof}[Proof \Cref{thm:132 interval queries} using \Cref{lemma2:132 interval queries}]
    We can answer a query $[a,b]$ using $T$ because $\indd([a,b])=1$ if and only if $T[a]\leq b$.
    The table $T$ is defined by $T[a]=\min_{b\in[m]}\{\indd([a,b])=1\}$, for every $a\in[m]$.
\end{proof}

The following definition is a formal definition of $\hat{T}$.
\begin{definition}[$\hat{T}$]
    For every $a\in [m]$, the cell $\hat{T}[a]$ is the minimum index $b\in[m]$ such that $\indd([a,b])=1$, with the additional restriction that the first element in the $132$ pattern is $P[a]$.
    Namely, there are three indices $a=x<y<z\leq b$ such that $P[x]<P[z]<P[y]$.
\end{definition}

The following lemma states the running time to compute $\hat{T}$:
\begin{lemma}
    \label{lemma2:hat T}
    There exists an algorithm to compute $\hat{T}$ in time $\tf(m)$.
\end{lemma}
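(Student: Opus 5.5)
We compute $\hat{T}$ by a single right-to-left sweep over $a$, supported by one precomputed array and a segment tree.

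\textbf{Step 1: an explicit criterion.} We first show that $\hat{T}[a]$ is the smallest index $z>a$ satisfying
\[
P[z]>P[a] \quad\text{and}\quad \max P[a+1:z-1] > P[z].
\]
Indeed, if such a $z$ exists, let $y\in(a,z)$ be the position of that maximum. Then $P[a]<P[z]<P[y]$, so $(a,y,z)$ is a $132$ copy starting at $P[a]$ and ending by $z$. Conversely, any $132$ copy $(a,y,z)$ with $z\le b$ gives an index $z\le b$ satisfying both conditions. Hence the minimal $b$ equals the minimal such $z$.

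\textbf{Step 2: removing the dependence of the maximum on $a$.} For each $z$, let $L(z)$ be the nearest index $y<z$ with $P[y]>P[z]$ (the previous greater element), or $L(z)=-\infty$ if none exists. All values $L(z)$ can be computed in $O(m)$ total time with the standard monotone stack.

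The condition $\max P[a+1:z-1]>P[z]$ holds if and only if some $y$ with $a<y<z$ has $P[y]>P[z]$, which happens if and only if $L(z)>a$. Therefore
\[
\hat{T}[a]=\min\{\,z : L(z)>a,\ P[z]>P[a]\,\},
\]
with the convention $\min\emptyset=\infty$.

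\textbf{Step 3: offline sweep.} We process $a$ in decreasing order from $m$ down to $1$, maintaining a segment tree over the value domain $[m]$. Each leaf $P[z]$ stores $z$ if $z$ is active and $\infty$ otherwise, and each internal node stores the minimum over its leaves.

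The index $z$ satisfies $L(z)>a$ exactly for $a\le L(z)-1$. So we bucket each $z$ by the value $L(z)-1$, and just before handling $a$ we activate all $z$ in bucket $a$. Activation is monotone, since an index active for $a$ stays active for all smaller $a$.

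We then answer $\hat{T}[a]$ by a range-minimum query over the values $(P[a],m]$. This query correctly returns the minimal active index with value above $P[a]$.

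\textbf{Running time.} Each index is inserted once and each $a$ issues one query. Each operation costs $O(\log m)$, so the total time is $O(m\log m)=\tf(m)$.

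\textbf{Main obstacle.} The only nontrivial point is Step 2: the range maximum in Step 1 depends on $a$, and we must convert it into the $a$-independent predicate $L(z)>a$. Once that is done, the problem becomes a monotone offline two-dimensional dominance-minimum, and the sweep handles it routinely.
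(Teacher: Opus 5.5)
Your proof is correct, but it takes a different route from the paper's.

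\textbf{The paper's route.} The paper defines $C[a]$ as the least $b$ such that the rectangle $[a+1,b]\times[P[a]+1,m]$ contains a $21$ pattern. It notes that this predicate is monotone in $b$. Each $C[a]$ is then found by binary search, using $O(\log m)$ calls to the $\tf(1)$-time $\ind_{21}$ rectangle-query structure of \cite{ben2024approximate}. A witness argument shows $C=\hat{T}$.

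\textbf{Your route.} You use the fact that, for $132$ with the first entry fixed, the middle entry only needs to exceed $P[z]$, which already exceeds $P[a]$. Its existence therefore reduces to the $a$-independent test $L(z)>a$ via previous-greater elements. What remains is an offline dominance-minimum problem, which a right-to-left sweep with a segment tree solves.

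\textbf{What each approach buys.} Your argument is self-contained and elementary: a monotone stack plus one segment tree, in explicit $O(m\log m)$ time. It avoids relying on the rectangle-detection structure as a black box.

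The paper's argument handles each $a$ independently through a generic binary search over a monotone rectangle predicate. This adapts directly to the other length-$3$ patterns by changing the queried rectangle(s) and the length-$2$ pattern. For instance, for $123$ the middle entry must lie between $P[a]$ and $P[z]$, so your previous-greater reduction does not carry over as is. The paper's approach simply queries $\ind_{12}$ on the same rectangle.
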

We prove \Cref{lemma2:132 interval queries} using \Cref{lemma2:hat T}.
\begin{proof}[Proof of \Cref{lemma2:132 interval queries} using \Cref{lemma2:hat T}]
    \newcommand{\st}{T^\star}
    We use the following recursive formula to compute $T$ from $\hat{T}$:
    \begin{align*}
        T[a] & = \min(\hat{T}[a],T[a+1])\;,
    \end{align*}
    where $T[m+1]=\infty$.
    Clearly, this allows to compute $T$ from $\hat{T}$ in $O(m)$ time, by iterating over $a$ from $m$ down to $1$, and in each step comparing two values.
    We are left with proving the correctness of the recursive formula.
    Consider two tables:
    \begin{itemize}
        \item $T$ the correct table, i.e., $T[a]=\min_{b\in[m]}\{\indd([a,b])=1\}$, and
        \item
              $\st$ the table defined by $\st[a]=\min(\hat{T}[a],\st[a+1])$.
    \end{itemize}
    We show that $T=T^\star$.

    $(\Longrightarrow)$ Fix some $a\in[m]$ and suppose that
    $T[a]=b$. We show that $\st[a]\leq b$.
    There are three indices $a\leq x<y<z\leq b$ such that $P[x]<P[z]<P[y]$.
    Thus, $\hat{T}[x]\leq b$, and $\st[x]\leq b$.
    By the recursive formula, $\st[a]\leq \st[x]$, and therefore $\st[a]\leq b$.

    $(\Longleftarrow)$ Fix some $a\in[m]$. We show that $T[a]\leq \st[a]=b$.
    By the recursive formula, it holds that $\st[a]=\hat{T}[a]$ or $\st[a]=\st[a+1]$.
    In the former case, we have $\st[a]=\hat{T}[a]\geq T[a]$, where the inequality is by the definition of $T$ and $\hat{T}$.
    In the latter case, we have $\st[a]=\st[a+1]\geq T[a+1]\geq T[a]$. The first inequality can be proved by induction.
    This completes the proof.
\end{proof}

We are left with computing $\hat{T}$.
We need a new tool that allow us to compute $\ind_{21}(R)$ for any axis parallel rectangle $R$ in $\tf(1)$ time after $\tf(m)$ preprocessing time.
This was achieved by \cite{ben2024approximate} and also highlighted in the introduction.
Using this tool, we can prove \Cref{lemma2:hat T}.

\begin{proof}[Proof of \Cref{lemma2:hat T}]
    We define a new table $C$ with $m$ entries.
    We compute it in time $\tf(m)$, and prove that $\hat{T}$ is exactly $C$.

    The table $C$ is defined as follows.
    For every $a\in[m]$ the entry $C[a]$ is the minimum value $b\in[m]$ such that the rectangle $R_{ab}\triangleq[a+1,b]\times[P[a]+1,m]$ contains a $21$ pattern.
    Note that for every $b_1<b_2$ it holds that $R_{ab_1}\subseteq R_{ab_2}$, and therefore $\ind_{21}(R_{ab_1})\leq \ind_{21}(R_{ab_2})$.
    Thus, to compute $C[a]$ we use binary search that requires $O(\log m)$ queries of the form $\ind_{21}(R_{ab})$ for different values of $b$.
    Using the data-structure from \cite{ben2024approximate}, each query takes $\tf(1)$ time, resulting in computing $C$ in $\tf(m)$ time.
    We show that $C$ is exactly $\hat{T}$ by proving that $\hat{T}[a]=C[a]$ for every $a\in[m]$.

    ($\Longleftarrow$)
    Suppose that $\hat{T}[a]=b$. We prove that $C[a]\leq b$.
    If $\hat{T}[a]=b$, then there are three indices $a=x<y<z\leq b$ such that $P[x]<P[z]<P[y]$.
    Therefore, the rectangle $R_{ab}=[a+1,b]\times[P[a]+1,m]$ contains the two points $(y,P[y])$ and $(z,P[z])$, which form a $21$ pattern, and therefore $\ind_{21}(R_{ab})=1$, and thus $C[a]\leq b$.

    ($\Longrightarrow$)
    Suppose that $C[a]=b$. We prove that $\hat{T}[a]\leq b$.
    If $C[a]=b$, then the rectangle $R_{ab}=[a+1,b]\times[P[a]+1,m]$ contains two points $(y,P[y])$ and $(z,P[z])$ that form a $21$ pattern, meaning that $y<z$ and $P[y]>P[z]$. Then the indices $a,y,z$ form a $132$ pattern in $P$, and therefore $\hat{T}[a]\leq b$.

    \medskip
    \noindent This completes the proof that $\hat{T}[a]=C[a]$ for every $a\in[m]$, and therefore we can compute $\hat{T}$ in $\tf(m)$ time.
\end{proof}

\renewcommand{\CR}{\mathcal{C}}
\section{Hardness of \rppm}
\label{sec:hardness}
In this section we relate \rppm to online set disjointness (\Cref{prob:osd a}).
For $k=2$, we prove that multi-rectangle $132$ detection is equivalent to \prba.
For every $k\ge 3$, we prove only one direction: \prbb reduces to multi-slab detection of the pattern $(1,k+1,k,k-1,\ldots,2)$ of length $k+1$.

\probOSDa*    %

Our first result is showing that \rppm for the pattern $132$ is equivalent to \prba:
\ThmOSDEquiv* %

To prove this theorem, we show a chain of reductions, as illustrated in \Cref{fig:hardness-equivalence-chain}. We also define a succinct variant of \prba, denoted by \prbd, which we show is equivalent to \prba:
\begin{problem}[\prbd]
\label{def:preprocessing-set-disjointness k succinct}
Preprocess the family of $m$ sets $\FF=(S_1,\ldots,S_m)$, where $S_i=\set{[a_1,b_1],\ldots,[a_{h},b_h]}$ is a set of closed intervals with $|S_i|=h$, and $N=\sum_{S\in\FF}|S|$ is the total input size.
Given a query $(i,j)$, the task is to determine whether there exists $[a,b]\in S_i$ and $[c,d]\in S_j$ such that $[c,d]\subseteq[a,b]$.
\end{problem}

\begin{figure}[h!]
    \centering
    \begin{tikzpicture}[scale=1,
            prob/.style={draw, rounded corners=2pt, minimum width=20mm, minimum height=8mm, fill=gray!8, font=\small},
            redarr/.style={<->, draw=red!70!black, line width=0.95pt}
        ]
        \node[prob] (prba) at (0,0)  {\prba};
        \node[prob] (rppm) at (6,0)  {\rppmc};
        \node[prob] (sppm) at (6,-2)  {\sppmc};
        \node[prob] (prbd) at (0,-2) {\prbd};

        \draw[->, draw=red!70!black, line width=0.95pt] (prba) -- node[midway, above=1pt, font=\scriptsize] {\Cref{thm:hardness interval k-1}} node[midway, below=1pt, font=\scriptsize] {(Special Case)} (rppm);
        \draw[->, draw=red!70!black, line width=0.95pt] (rppm) -- node[midway, ,right=0pt, font=\scriptsize] {\Cref{thm:re ppm}} (sppm);
        \draw[->, draw=red!70!black, line width=0.95pt] (sppm) -- node[midway, above=0pt, font=\scriptsize] {\Cref{lemma:red interval to prbd}} (prbd);
        \draw[->,draw=red!70!black, line width=0.95pt] (prbd) to node[midway, left=0pt, font=\scriptsize] {\Cref{lemma:eq prba and prbd}} (prba);

        \node[prob] (prbb) at  (0,-3.6)  {\prbb};
        \node[prob] (rppmk) at (6,-3.6)  {$(k+1)$-\rppm};
        \draw[->, draw=red!70!black, line width=0.95pt] (prbb) -- node[midway, above=1pt, font=\scriptsize] {\Cref{thm:hardness interval k-1}} (rppmk);

    \end{tikzpicture}
    \caption{Equivalence chain used in this subsection.}
    \label{fig:hardness-equivalence-chain}
\end{figure}

The second result is showing that \rppm for patterns of length $k+1$ is at least as hard as \prbb for $k$ sets.

\ThmHardLongInter* %
A special case of \Cref{thm:hardness interval k-1}, allows us to reduce \prba to $\ind_{132}(Q_1,Q_1,Q_2)$.

\paragraph{Roadmap.}
We first reduce \rppmc to \prbd, and then show that \prbd and \prba are equivalent up to the same preprocessing/query tradeoff.
Together with the reduction from \prba to $132$-\rppm (the case $k=2$ of \Cref{thm:hardness interval k-1}), this proves \Cref{thm:equiv osd and rppm}.
We then prove the general reduction from \prbb to $(k+1)$-\rppm, showing that $(k+1)$-\rppm is at least as hard as \prbb.

\subsection{From \sppmc to \prbd}
\newcommand{\intQ}{\wpa{IntervalQuery}}
In this subsection we prove a reduction from \sppmc to \prbd stated in the following lemma:
\begin{lemma}
    \label{lemma:red interval to prbd}
    \sppmc is reducible to \prbd.
\end{lemma}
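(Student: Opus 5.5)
The plan is to reduce a \sppmc query $\indd(I_1,I_2,I_3)$ to $\tf(1)$ queries of \prbd, plus $\tf(1)$ additional queries that can be answered directly in $\tf(1)$ time after $\tf(n)$ preprocessing. By the well-alignment reduction of \Cref{ssec:range-trees}, I may assume that any two of the slabs $I_1,I_2,I_3$ are identical or disjoint. Since an embedding $x<y<z$ forces $I_1\le I_2\le I_3$ in left-to-right order, only four cases remain: $I_1=I_2=I_3$, $I_1<I_2<I_3$, $I_1=I_2<I_3$ and $I_1<I_2=I_3$ (in particular $I_1=I_3\neq I_2$ is impossible). The case $I_1=I_2=I_3$ is a single-slab query, answered by \Cref{thm:132 interval queries}. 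For $I_1<I_2<I_3$, I take $x$ minimizing $P[x]$ over $I_1$ and $y$ maximizing $P[y]$ over $I_2$; this choice is optimal since the two points are unconstrained relative to each other. The query then reduces to an orthogonal emptiness query on $I_3\times(\min_{I_1},\max_{I_2})$. For $I_1<I_2=I_3$, I again take $x$ of minimum value in $I_1$, and ask whether $I_2\times(P[x],n]$ contains a $21$ pattern, which is an $\ind_{21}$ rectangle query answered in $\tf(1)$ by the structure of \cite{ben2024approximate}. Range-min/max and emptiness are standard range-tree queries.

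The only genuinely hard case is $I_1=I_2<I_3$, which is exactly the set-disjointness-like case. Here I decompose $I_1$ into its canonical nodes in $T_x$. If $x$ and $y$ lie in different canonical pieces $A<B$, I handle the pair as in the disjoint case: take the minimum value of $A$ and the maximum value of $B$, then query emptiness inside $I_3$. Over all pairs of pieces this costs $O(\log^2 n)$ queries.

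When $x,y$ lie in the same node $u$, I need a $12$ pair $(x,y)$ in $P_u$ and some $z\in I_3$ with $P[x]<P[z]<P[y]$. I build the interval family as follows.
\begin{itemize}
\item For every node $u$ of $T_x$, I form the set $S_u$ of intervals $[2P[x]+1,\,2P[y]-1]$. It is not necessary to include all $12$ pairs: it suffices, for each $y\in P_u$, to take $x$ to be the point left of $y$ in $P_u$ whose value is the largest value below $P[y]$. Any other valid $x$ yields a superset interval, so these intervals are containment-minimal.
\item I compute $S_u$ by a left-to-right sweep with a balanced BST (predecessor queries) in $\tf(|P_u|)$ time, so $|S_u|\le|P_u|$.
\item For every node $w$ of $T_x$, I form the set $S'_w$ of degenerate intervals $[2P[z],2P[z]]$ for $z\in P_w$.
\end{itemize}
The doubling turns the strict inequalities into containment. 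The family $\{S_u\}\cup\{S'_w\}$ has total size $N=\sum_u O(|P_u|)=O(n\log n)$, and I preprocess it with the given \prbd structure. A query then issues the \prbd queries $(S_u,S'_w)$ for every canonical piece $u$ of $I_1$ and every canonical piece $w$ of $I_3$, which is $O(\log^2 n)$ queries in total. A positive answer gives an interval of $S'_w$ inside an interval of $S_u$, which is precisely a witness $x<y$ in $u$ and $z$ in $w$; conversely, any witness can be replaced by the minimal interval for its $y$.

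Since $N=\tf(n)$, a \prbd structure with $\tf(N^p)$ preprocessing and $\tf(N^q)$ query time gives $\tf(n^p)$ preprocessing and $\tf(n^q)$ query time here. The main point needing care is the minimality argument, namely that restricting to one predecessor-based interval per $y$ loses no witnesses. The remaining work is to check that the case analysis under well-alignment is exhaustive.
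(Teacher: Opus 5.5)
\paragraph{The interval choice in $S_u$ is reversed.} Your reduction fails at the step you flagged yourself. A \prbd query $(S_u,S'_w)$ asks whether some point $2P[z]$ of $S'_w$ lies in some interval of $S_u$. This predicate is monotone \emph{increasing} in the intervals of $S_u$. For a fixed $y$, every valid $x$ gives the interval $(P[x],P[y])$, and you keep the one with the largest $P[x]$, i.e.\ the \emph{narrowest}. So your claim that ``any witness can be replaced by the minimal interval for its $y$'' is false: from $P[x]<P[z]<P[y]$ and $P[x^\ast]\ge P[x]$ you cannot conclude $P[x^\ast]<P[z]$.

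Here is a concrete failure. Let $P_u$ consist of three points with values $4,1,5$ from left to right, let $I_1=I_2$ be exactly the slab of $u$, and let $I_3$ contain a point of value $2$. The values $1,5,2$ form a $132$ copy with its first two points in $u$. Your $S_u$, however, contains only the interval coming from the pair with values $(4,5)$, since the points of values $4$ and $1$ have no smaller value to their left. That interval does not contain $2$. Since $I_1$ has a single canonical piece, your algorithm answers NO.

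\paragraph{The fix.} For each $y$, take $x$ to be the point to the left of $y$ in $P_u$ with the \emph{minimum} value, i.e.\ the prefix minimum, kept only if it is below $P[y]$. This gives the containment-maximal interval for that $y$. There are still at most $|P_u|$ intervals, and they can be computed by a linear sweep without a BST. Now any witness $(x,y,z)$ yields the witness $(x_{\min}(y),y,z)$, because $P[x_{\min}(y)]\le P[x]<P[z]$.

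This is exactly what the paper does. Its $S_v$ consists of the inclusion-maximal intervals $[\pi(i),\pi(j)]$, paired with the set $D_v$ of $y$-values, and a query $(R_u,R_u,R_v)$ becomes the \prbd query $(S_u,D_v)$. With this correction your proof follows the paper's route. The rest of your argument is correct and somewhat more explicit than the paper's appeal to \Cref{claim:17}:
\begin{itemize}
\item range minimum/maximum plus emptiness for $I_1<I_2<I_3$ and for different canonical pieces of $I_1$;
\item the $\ind_{21}$ rectangle structure for $I_1<I_2=I_3$;
\item the slab structure for $I_1=I_2=I_3$;
\item the doubling trick for strict inequalities.
\end{itemize}
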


\begin{proof}[Proof of \Cref{lemma:red interval to prbd}]
    \newcommand{\JJ}{\mathcal{J}}
    We show how to reduce slab queries to \prbd.
    We only prove how to reduce vertical slab $[a,b]\times[1,n]$ queries. The reduction for horizontal slab queries follows along the exact same lines.

    Let $P=\{(i,\pi(i))\mid i\in[n]\}$ be the set of points, we answer $\indd(I_1,I_2,I_3)$ queries using a special \prbd data structure.
    The reduction has a preprocessing phase where we construct the sets $\FF_1$ and $\FF_2$, and an online phase, where given a query $(I_1,I_2,I_3)$ we reduce it to a \prbd query on the sets in $\FF_1$ and $\FF_2$.
    As before, we can assume that $I_1=I_2\neq I_3$, otherwise the query is solvable in $\tf(1)$ time, see \Cref{claim:17} (for horizontal slab queries, we can assume that $I_1=I_3\neq I_2$).
Note that the case $I_1=I_2=I_3$ is addressed by \Cref{thm3:subquery}.

    \paragraph{Preprocessing.}
    We build one dimensional range tree $T_x$ on the $x$-coordinates of the points, namely every node $v$ in $\Tv$ is associated with a rectangle $R_v=[a_v,b_v]\times[1,n]$. Let $\RR$ denote the set of rectangles associated with the nodes of $\Tv$.
    For every node $v$ in $\Tv$, with associated rectangle $R_v=[a_v,b_v]\times[1,n]$, we construct two sets $S_v$ for $\FF_1$ and $D_v$ for $\FF_2$.
    The set $D_v$ is simply the set of $y$ values of the points in $P_v=P\cap R_v$, that is $D_v=\set{\pi(i)\mid (i,\pi(i))\in P_v}$.
    The set $S_v$ contains the maximal intervals with respect to inclusion.
    That is, for every two values $i<j\in [a_v,b_v]$, with $\pi(i)<\pi(j)$, we add the interval $[\pi(i),\pi(j)]$ to $S_v$, unless there are two values $i'<j'\in [a_v,b_v]$ such that $[\pi(i),\pi(j)]\subseteq[\pi(i'),\pi(j')]$, in which case we ignore $[\pi(i),\pi(j)]$
    Finding these intervals can be done in $\tf(|P_v|)$ time, using a simple line sweep algorithm.
    This completes the preprocessing phase, where we have constructed the sets $\FF_1$ and $\FF_2$.
    Importantly, the set $S_v$ contains at most $O(|P_v|)$ intervals, as the intervals are maximal with respect to inclusion, so each endpoint is in at most one interval. This is crucial to ensure that the total size of the sets in $\FF_1$ does not blow up.

    \paragraph{Query.}
    Given a query $\indd(I_1,I_1,I_2)$, we use the range tree $T_x$ to reduce to slab query including only slabs from $\RR$, the set of rectangles associated with the nodes of $T_x$.
    Then, each slab query is reduced to a \prbd query on the sets in $\FF_1$ and $\FF_2$.
    The first part is straightforward, we find the canonical nodes $\CR(I_1)$ and $\CR(I_2)$ in $T_x$, and reduce $\indd(I_1,I_1,I_2)$ to $\tf(1)$ queries $\JJ=\CR(I_1)\times\CR(I_1)\times\CR(I_2)$.
    We can solve all simple queries in $\JJ$ in $\tf(1)$ time, and if the answer is YES, we are done. Therefore, we assume that all queries in $\JJ$ of the first type has negative response, and focus on the queries of the second type.
    For every triplet $(R_u,R_u,R_v)$ of rectangles (slabs) from $\RR$ can be reduced to \prbd query $(S_u,D_v)$, where $S_u\in \FF_1$ and $D_v\in\FF_2$.

    We show that $\indd(R_u,R_u,R_v)=\indd(S_u,D_v)$, which completes the reduction.

    $(\Longrightarrow)$ Suppose $\indd(R_u,R_u,R_v)=1$. Then, there are three points $p_1,p_2,p_3$ such that $p_1,p_2\in R_u$, and $p_3\in R_v$ forming an $132$ pattern.
    Let $p_i=(x_i,y_i)$ for $i\in[3]$.
    Since $p_1,p_2\in R_u$, we have that $(y_1,y_2)\in S_u$ and since $p_3\in R_v$, we have that $y_3\in D_v$
    Since $(p_1,p_2,p_3)$ form an $132$ pattern, we have $y_1<y_3<y_2$ so $y_3\in (y_1,y_2)$, proving that $S_u\cap D_v\neq\emptyset$.

    $(\Longleftarrow)$ Suppose that $\indd(S_u,D_v)=1$. Then, there is an interval $[a,b]\in S_u$ such that $[a,b]\cap D_v\neq\emptyset$, meaning that there is a value $y_3\in D_v$ such that $y_3\in [a,b]$. Since $[a,b]\in S_u$, there are two points $p_1,p_2\in R_u$, where $p_i=(x_i,y_i)$ for $i\in[2]$, such that $x_1<x_2$ and $[y_1,y_2]=[a,b]$. Since $y_3\in [a,b]$, we have that $y_3\in [y_1,y_2]$, so the three points $p_1,p_2,p_3$ form an $132$ pattern, proving that $\indd(R_u,R_u,R_v)=1$.
    It holds that $x_3>x_2$ as $I_2$ is to the right of $I_1$, otherwise the query is always negative and is simply ignored.
\end{proof}

\subsection{From \prbd\ to \prba}
\label{ssec:hardness prbd to prba}
\renewcommand{\AA}{\mathcal{A}}
\renewcommand{\AB}{\mathcal{B}}
\providecommand{\Ffam}{\mathcal{F}}
\providecommand{\Rbb}{\mathbb{R}}
\providecommand{\Uosd}{U_{\mathrm{OSD}}}
\providecommand{\Path}{\mathsf{Path}}
\providecommand{\PathX}{\mathsf{Path}_x}
\providecommand{\PathY}{\mathsf{Path}_y}
\providecommand{\CovSuffixX}{\CR_x}
\providecommand{\CovSuffixY}{\CR_y}
\providecommand{\OutSet}{\mathsf{Out}}
\providecommand{\InSet}{\mathsf{In}}
In this section we prove the following reduction:
\begin{lemma}
    \label{lemma:eq prba and prbd}
    \prbd and \prba are equivalent.
\end{lemma}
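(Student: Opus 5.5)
We prove the two directions separately. Both must preserve the preprocessing/query tradeoff up to polylogarithmic factors, with input sizes related by $N'=\tf(N)$.

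\textbf{From \prba to \prbd (easy direction).} Given $\FF=(S_1,\ldots,S_m)$, we build two families. For each $i$ we form $S_i^{(1)}=\set{[e,e]\mid e\in S_i}$, used as the "outer" family. We also form $S_i^{(2)}=\set{[e,e]\mid e\in S_i}$, used as the "inner" family. A degenerate interval $[e,e]$ contains $[e',e']$ iff $e=e'$. Hence the query $(i,j)$ of \prbd on $(S_i^{(1)},S_j^{(2)})$ answers whether $S_i\cap S_j\neq\emptyset$. The total number of intervals is $2N$, so the tradeoff transfers directly. (If \prbd requires a single family, take the concatenated family of size $2m$.)

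\textbf{From \prbd to \prba (main direction).} The containment $[c,d]\subseteq[a,b]$ is a two-dimensional dominance condition: $a\le c$ and $d\le b$. We encode it as an intersection of sets over a universe of range-tree nodes. Build a one-dimensional range tree $T$ over the (compressed) coordinate universe of all interval endpoints. Write $\PathX$ and $\PathY$ for root-to-leaf paths, and $\CovSuffixX$ for canonical decompositions. Then $a\le c$ holds iff some canonical node of the suffix range $[a,\infty)$ is an ancestor of the leaf $c$.

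Doing this in both coordinates simultaneously is handled by a two-level range tree on points $(c,d)$:
\begin{itemize}
    \item For each outer interval $[a,b]\in S_i$, let $\OutSet([a,b])$ be the set of $O(\log^2 N)$ canonical two-level nodes covering the quadrant $[a,\infty)\times(-\infty,b]$.
    \item For each inner interval $[c,d]\in S_j$, let $\InSet([c,d])$ be the set of $O(\log^2 N)$ two-level nodes whose rectangle contains $(c,d)$. These are the inner-tree path nodes inside the outer-tree path nodes.
\end{itemize}
A point lies in a quadrant iff exactly one canonical node of the quadrant contains it. Hence $\OutSet([a,b])\cap\InSet([c,d])\neq\emptyset$ iff $[c,d]\subseteq[a,b]$.

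We then define $\hat S_i=\bigcup_{[a,b]\in S_i}\OutSet([a,b])$ and $\hat D_j=\bigcup_{[c,d]\in S_j}\InSet([c,d])$. By the previous equivalence, $\hat S_i\cap\hat D_j\neq\emptyset$ iff some interval of $S_j$ is contained in some interval of $S_i$. The total size is $O(N'\log^2 N')$, and the construction takes the same time. We run the \prba structure on the family $(\hat S_1,\ldots,\hat S_m,\hat D_1,\ldots,\hat D_m)$ and answer $(i,j)$ by querying $(\hat S_i,\hat D_j)$. The polylog blowup in $N$ changes $\tf(N^p),\tf(N^q)$ only by polylog factors.

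\textbf{Main obstacle.} The main point to get right is the correctness of the union step. We must make sure that an element shared by $\hat S_i$ and $\hat D_j$ necessarily comes from a single pair $([a,b],[c,d])$ with the containment property. This holds because membership of a node in $\OutSet([a,b])\cap\InSet([c,d])$ certifies that particular pair, and unions do not create spurious witnesses.

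We also need to handle endpoint ties (closed intervals) via the stable-rank compression. Finally, the universe of the \prba instance becomes the node set of the two-level tree, of size $O(N\log N)$. This is harmless, since \prba's complexity is measured in $N$.
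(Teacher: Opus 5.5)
Your proof is correct. The easy direction is exactly the paper's: elements become degenerate intervals $[e,e]$.

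For the main direction you use the same encoding as the paper:
\begin{itemize}
\item containment is treated as dominance;
\item an outer interval contributes the canonical cover of its dominating quadrant;
\item an inner interval contributes the nodes containing its point;
\item each set is replaced by the union of these node sets.
\end{itemize}
The difference is the data structure. You realize the encoding with a two-level range tree built on the interval points $(c,d)$. The paper avoids the two-level structure altogether. It builds two independent one-dimensional trees, $T_x$ on the left endpoints and $T_y$ on the negated right endpoints, and takes the universe $V(T_x)\times V(T_y)$. It then sets
\[
\mathsf{Out}([a,b])=\mathcal{C}_x([a,U])\times\mathcal{C}_y([-b,0]),\qquad
\mathsf{In}([c,d])=\mathsf{Path}_x(c)\times\mathsf{Path}_y(-d).
\]
Containment is a conjunction of two one-dimensional conditions. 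Two Cartesian products intersect iff their factors intersect coordinatewise. So correctness reduces to the one-dimensional fact you stated before switching to the two-level tree: some canonical node of $[a,U]$ lies on the root-to-leaf path of $c$ iff $a\le c$.

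The product construction therefore needs neither the ``each point of the range lies in exactly one canonical node'' property of a 2D range tree nor inner trees built on a specific point set. Your two-level version gives a smaller universe, $O(N\log N)$ nodes instead of pairs of nodes. That gain is immaterial here, since \prba is measured by total set size. Both constructions give the same $O(\log^2 N)$ blowup per interval, so the preprocessing/query tradeoff transfers identically.
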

We assume that the two problems have two families of sets $\FF_1$ and $\FF_2$;
instead of a single family $\FF$, where queries $(i,j)$ include the set $S_i\in \FF_1$ and $S_j\in \FF_2$. This is an equivalent formulation just by taking $\FF_1=\FF$ and $\FF_2=\FF$ and answering query $(i,j)$ by OSD query $(i,j)$. The two-family formulation is more natural for the reduction.

The reduction from \prba to \prbd is trivial:
In \prba a set $S$ is a union of elements, that can be turned into a union of intervals by replacing each element $x$ as the interval $[x,x]$.
Here we prove the reduction in the other direction, from \prbd to \prba, which is more involved.
The input of \prbd consists of two families
\(\AA=\{A_1,\dots,A_r\}\) and \(\BB=\{B_1,\dots,B_s\}\), where each \(A_i\) and each \(B_j\) is a set of intervals. Let
\[
    N=\sum_{i=1}^{r}|A_i|+\sum_{j=1}^{s}|B_j|
\]
be the total number of intervals. For query $(i,j)$ the goal is to decide if there is an interval $[a,b]\in A_i$ and an interval $[c,d]\in B_j$ such that $[c,d]\subseteq[a,b]$.
For every pair of indices $(i,j)$ we use $\chi_{i,j}=1$ to denote the event that there is a pair of intervals $[a,b]\in A_i$ and $[c,d]\in B_j$ such that $[c,d]\subseteq[a,b]$, and $\chi_{i,j}=0$ otherwise.
We reduce this to $\tf(1)$ queries of \prba over two families of sets, \(\Ffam_1\) and \(\Ffam_2\) that we define shortly, of size $\tf(N)$ each.

\begin{proof}[Proof of \Cref{lemma:eq prba and prbd}]
    We start with describing the construction of the family sets $\FF_1$ and $\FF_2$.
    We map each interval $[u,v]$ to the point $(u,-v)\in\Rbb^2$.
    We use $\phi$ to denote this mapping, and $\phi(A)$ denotes the set of points obtained by applying $\phi$ to each interval in $A$.
    Then
    \[
        [c,d]\subseteq[a,b]
        \iff
        \phi([a,b])\preceq\phi([c,d]),
    \]
    where \((x_1,y_1)\preceq(x_2,y_2)\) means \(x_1\le x_2\) and \(y_1\le y_2\), and referred to as dominance pair.
    Now a query $(i,j)$ asks if $P_i=\phi(A_i)$ contains a point that dominates some point in $P_j=\phi(B_j)$.

    We use two one-dimensional segment trees to encode this query as an online set disjointness query.
    Let $X$ and $Y$ denote the set of all left endpoints, and right endpoints, respectively, of all intervals in $\AA$ and $\BB$, and let $U$ be the maximum value of $X$ and $Y$, meaning that $X,Y\subseteq[U]$.
    We define two range trees $T_x$ on $X$ and $T_y$ on $-Y$, and use $V(T_x)$ and $V(T_y)$ to denote the node sets of each tree.
    Each leaf node in $T_x$ corresponds to the trivial interval $[a,a]$ for some $a\in X$, and each leaf node in $T_y$ corresponds to the trivial interval $[-b,-b]$ for some $b\in Y$.
    The universe for the \prba instance is $\Uosd=V(T_x)\times V(T_y)$, that is, each set in $\FF_1$ and $\FF_2$ is a subset of $V(T_x)\times V(T_y)$.

    We explain how to turn each set $A_i\in\AA$ and $B_j\in\BB$ into sets $S_i\in\FF_1$ and $T_j\in\FF_2$ such that $S_i\cap T_j\nemp$ if and only if
    $\chi_{i,j}=1$.
    For that, we define two functions $\OutSet$ and $\InSet$ that takes an interval $[a,b]$ and maps it to a subset of elements in $\Uosd$.
    The function $\OutSet$ maps intervals that are candidates for being the outer interval $[a,b]$ in the containment query, and the function $\InSet$ maps intervals that are candidates for being the inner interval $[c,d]$.
    For every $A_i\in\AA$, and $B_j\in\BB$, we define
    \begin{align*}
        S_i=\bigcup_{[a,b]\in A_i}\OutSet([a,b])\;,\quad
        T_j=\bigcup_{[c,d]\in B_j}\InSet([c,d])\;.
    \end{align*}

    The function $\OutSet$ is as follows.
    Given an interval $[a,b]$, let $\CR_x([a,U])$ be the set of nodes in the canonical cover of $[a,U]$ in $T_x$.
    We also define $\CR_y([-b,0])$ to be the set of nodes in the canonical cover of $[-b,0]$ in $T_y$.
    We define $\OutSet([a,b])=\CR_x([a,U])\times \CR_y([-b,0])$.
    The function $\InSet$ is as follows. Given an interval $[c,d]$, let $\Path_x(c)$ be the root-to-leaf path from the root to the leaf $[c,c]$ in $T_x$, which we also treat as the set of nodes in that path.
    Similarly, let $\Path_y(-d)$ be the root-to-leaf path from the root to the leaf $[-d,-d]$ in $T_y$.
    We define $\InSet([c,d])=\Path_x(c)\times \Path_y(-d)$.
    This completes the description of the reduction.

    To see that there is no blowup in the size of the sets, we note that for every interval $I$, and any value $z\in X\cup (-Y)$,
    $|\OutSet(I)|,|\InSet(I)|,|\Path_x(z)|,|\Path_y(z)|=O(\log^2 N)$.
    Therefore $\abs{S_i}=\tf(\abs{A_i})$ and $\abs{T_j}=\tf(\abs{B_j})$, and the total size of all sets in $\FF_1$ and $\FF_2$ is $\tf(N)$.

    \paragraph{Correctness.}
    We prove two things.
    \begin{enumerate}
        \item For any intervals $[a,b]$ and \([c,d]\), it holds that \([c,d]\subseteq[a,b]\) if and only if $\OutSet([a,b])\cap\InSet([c,d])\nemp$.
        \item For every $i,j$, it holds that $\chi_{i,j}=1$ if and only if $S_i\cap T_j\nemp$.
    \end{enumerate}

    \begin{proof}[Proof of (1)]

        $(\Longrightarrow)$
        Suppose \([c,d]\subseteq[a,b]\), then $a\leq c$ and $-b\leq -d$.
        Clearly, $[c,c]\subseteq[a,U]$, and $[-d,-d]\subseteq[-b,0]$.
        Let $u$ be the node in $\CR_x([a,U])$ whose interval contains $c$, and let $v$ be the node in $\CR_y([-b,0])$ whose interval contains $-d$.
        Then $(u,v)\in \OutSet([a,b])\cap \InSet([c,d])$, so $\OutSet([a,b])\cap\InSet([c,d])\nemp$.

        $(\Longleftarrow)$
        Suppose \(\OutSet([a,b])\cap\InSet([c,d])\nemp\), and pick \((u,v)\) in the intersection.
        Then
        \[
            u\in \CR_x([a,U])\cap \Path_x(c),\qquad
            v\in \CR_y([-b,0])\cap \Path_y(-d).
        \]
        Since \(u\in\Path_x(c)\), the interval represented by \(u\) in \(T_x\) contains \(c\). Since \(u\in\CR_x([a,U])\), that represented interval is contained in \([a,U]\). Hence \(a\le c\).
        Similarly, \(v\in\Path_y(-d)\) means the interval represented by \(v\) in \(T_y\) contains \(-d\), and \(v\in\CR_y([-b,0])\) implies that represented interval is contained in \([-b,0]\). Hence \(-b\le -d\), i.e., \(d\le b\).
        Therefore \(a\le c\le d\le b\), so \([c,d]\subseteq[a,b]\).
    \end{proof}
    \begin{proof}[Proof of (2)]

        \leavevmode
        $(\Longrightarrow)$
        Suppose \(\chi_{i,j}=1\). Then there are intervals \([a,b]\in A_i\) and \([c,d]\in B_j\) with \([c,d]\subseteq[a,b]\).
        By (1), \(\OutSet([a,b])\cap\InSet([c,d])\nemp\). Since \(\OutSet([a,b])\subseteq S_i\) and \(\InSet([c,d])\subseteq T_j\), it follows that \(S_i\cap T_j\nemp\).

        $(\Longleftarrow)$
        Suppose \(S_i\cap T_j\nemp\), and pick \((u,v)\in S_i\cap T_j\).
        By definition of the unions, there exist \([a,b]\in A_i\) and \([c,d]\in B_j\) such that
            $(u,v)\in \OutSet([a,b])\cap\InSet([c,d])$.
        By (1), \([c,d]\subseteq[a,b]\). Hence \(\chi_{i,j}=1\).
    \end{proof}
    This completes the proof of \Cref{lemma:eq prba and prbd}.
\end{proof}

\subsection{From \prba to \rppmc (and \Cref{thm:hardness interval k-1})}
\label{ssec:hardness interval k-1}
In this subsection we prove \Cref{thm:hardness interval k-1}.
This reduction, which reduces \prbb to \rppm, clearly also reduces \prba to the $132$-\rppm pattern. An overview of simplified reduction for patterns of length $3$ is given in \Cref{ssec:tech-ov} and specifically in \Cref{fig:to2-hardness-two-rectangles}.
\begin{proof}[Proof of \Cref{thm:hardness interval k-1}]
    We reduce \prbb to $\sigma$-\rppm, where $\sigma= \brak{1,(k+1),k,k-1,\ldots,2}$, in two steps: we first map the family of sets to an array $A$, and then map each query to $k+1$ vertical slabs.
    To reduce the input of \prbb to the input of \rppm, we construct an array $A$ of size $N\cdot (k+1)$ from the family of sets $\FF={S_1,\ldots S_m}$, where $N=\sum_{i=1}^m |S_i|$. We use $N_i=|S_i|$ for $i\in[m]$.
    We assume that each set $S$ is a subset of some universe $U=[\ell]$.
    We map the elements of $U$ into a new universe $U'=\set{2k\cdot u\mid u\in U}$ by the mapping $x\mapsto 2k x$.
    For each set $S_i$, let $E_i=(e_1,e_2,\ldots,e_{N_i})$ be the elements of $S_i$ (under the mapping) in increasing order.
    We associate with $S_i$ new $k$ new subarrays $B_1^i,\ldots,B_k^i$ defined as follows:
    \begin{enumerate}
        \item For $j\in\set{2,3,\ldots,k}$, let $z_j=k+1-j$.
              Define $B_j^i= E_i+z_j$, i.e., $B_j^i=(e_1+z_j,e_2+z_j,\ldots)$.
        \item Define $B_1^i$ as an array of $2|S_i|$ entries.
              Let $E_i'$ be the elements of $S_i$ in decreasing order, i.e., $E_i'=(e_{N_i},e_{N_i-1},\ldots)$.
              The array $B_1^i$ is defined as the interleaving of two shifted copies of $E_i'$. That is,
              \begin{align*}
                  B_1^i=(e_{N_i},e_{N_i}+k,e_{N_i-1},e_{N_i-1}+k,\ldots)\;.
              \end{align*}
    \end{enumerate}
    Then, we define the final array $A$ as the concatenation of all these subarrays, i.e.,
    \begin{align*}
        A = B_1^1 \circ B_1^2 \circ \cdots \circ B_1^m
        \circ B_2^1 \circ B_2^2 \circ \cdots \circ B_2^m
        \circ \cdots \circ B_k^1 \circ B_k^2 \circ \cdots \circ B_k^m\;.
    \end{align*}
    \begin{remark}
        The construction above defines an array $A$, which need not be a permutation as values can repeat across different sets.
        To solve this, we can replace each value $A[i]$ with its stable rank, i.e., its rank when the pairs $(A[i],i)$ are ordered lexicographically.
        This does not affect the reduction: the intended witnesses use distinct values, while equal values can only be ordered according to their indices and therefore cannot create the decreasing part of the pattern.
        For readability, the correctness proof below is written using the original values of $A$; the stable-rank replacement translates these comparisons to a permutation.
    \end{remark}

    For each $i\in[m]$ and $j\in[k]$, let $L^i_j\subseteq[N(k+1)]$ denote
    the interval of indices such that $B^i_j$ appears in $A$ at these indices.

    \paragraph{Queries.}
    To answer the query $(h_1,h_2,\ldots,h_k)$ of whether $\bigcap_{i=1}^k S_{h_i}=\emptyset$, we prepare $k+1$ rectangles $R_0,R_1,\ldots,R_k$ as follows.
    For $j\in\set{1,2,\ldots, k}$ the rectangle $R_j$ is defined by $R_j=L^{h_j}_j\times [N(k+1)]$.
    The rectangle $R_0=R_1$.
    We then query the \rppm data structure with the rectangles $(R_0,R_1,\ldots,R_k)$ and the pattern $\sigma$, and answer accordingly.

    \paragraph{Correctness.}
    We show that $\bigcap_{i=1}^k S_{h_i}\neq \emptyset$ if and only if
    $\ind_{\sigma}(R_1,R_2,\ldots,R_k)=1$.

    ($\Longrightarrow$) Suppose $\bigcap_{i=1}^k S_{h_i}\neq \emptyset$, and let $e\in \bigcap_{i=1}^k S_{h_i}$.
    Since $e \in S_{h_1}$, the element $e$ appears in the sorted list $E_{h_1}$. By the interleaving construction of $B_1^{h_1}$, the pair $(e, e+k)$ appears at two consecutive positions in $B_1^{h_1}$. Let $t_0 < t_1$ be the corresponding indices in $A$ (both lying in $L_1^{h_1}$).
    For $j \in \{2,\ldots,k\}$, since $e \in S_{h_j}$, the element $e$ appears in $E_{h_j}$, and thus $e + z_j = e + (k+1-j)$ appears in $B_j^{h_j}$. Let $t_{j}$ be the corresponding index in $A$ (lying in $L_j^{h_j}$).
    Since $A$ concatenates all $B_1$ blocks before all $B_2$ blocks, and so on, we have $t_0 < t_1 < t_2 < \cdots < t_k$, and the values are:
    \begin{align*}
        A[t_0] = e,\quad A[t_1] = e+k,\quad A[t_{j}] = e + (k+1-j) \text{ for } j=2,\ldots,k.
    \end{align*}
    That is, the values are $e, e+k, e+(k-1), \ldots, e+1$.
    The ranks are:
    \begin{itemize}
        \item $A[t_0]=e$ has rank $1$,
        \item $A[t_1]=e+k$ has rank $k+1$,
        \item $A[t_{j}]=e+(k+1-j)$ has rank $k+2-j$ for $j=2,\ldots,k$.
    \end{itemize}
    This is exactly the pattern $(1, k+1, k, k-1, \ldots, 2) = \sigma$. Moreover, $t_0, t_1 \in L_1^{h_1}$ and $t_{j} \in L_j^{h_j}$ for $j \geq 2$, so the rectangle constraints are satisfied. Hence $\ind_{\sigma}(R_1,R_2,\ldots,R_k)=1$.

    \medskip
    ($\Longleftarrow$)
    Suppose $\ind_{\sigma}(R_1,R_2,\ldots,R_k)=1$, witnessed by indices
    $t_0 < t_1 < t_2 < \cdots < t_k$ that are order-isomorphic to $\sigma=(1,k{+}1,k,\ldots,2)$.
    That is,
    \begin{align}\label{eq:rank-order}
        A[t_0] < A[t_k] < A[t_{k-1}] < \cdots < A[t_2] < A[t_1]\;.
    \end{align}

    Since $t_0,t_1\in L_1^{h_1}$, both index into $B_1^{h_1}$.
    We prove that $t_1=t_0+1$, and that $t_0$ is odd. This proves that $A[t_0]=e$ and $A[t_1]=e+k$.

    \begin{claim}\label{claim:t1 t0 plus 1}
        $t_1=t_0+1$, and $t_0$ is odd.
    \end{claim}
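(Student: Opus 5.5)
The plan is to work entirely inside the block $B_1^{h_1}$ and use only the inequality $A[t_0]<A[t_1]$ from \eqref{eq:rank-order}. The other $k-1$ witness points are not needed. Throughout, ``odd'' refers to the position of $t_0$ counted from the start of the block $B_1^{h_1}$, which is what \Cref{claim:t1 t0 plus 1} needs.

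\textbf{Step 1: the block is a sequence of disjoint increasing pairs that decrease from pair to pair.} Write the block as
\[
B_1^{h_1}=(f_1,f_1+k,f_2,f_2+k,\ldots,f_{N'},f_{N'}+k),
\]
where $f_1>f_2>\cdots>f_{N'}$ are the elements of $S_{h_1}$ after the mapping $x\mapsto 2kx$. Hence all $f_r$ are multiples of $2k$, and $f_{r+1}\le f_r-2k$. Position $2r-1$ of the block holds $f_r$, and position $2r$ holds $f_r+k$. Consequently, every entry at a position of pair $r'>r$ is at most $f_{r+1}+k\le f_r-k<f_r$. All values within one block are distinct. Therefore the stable-rank replacement from the Remark preserves every comparison inside the block, and I can argue with the original values.

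\textbf{Step 2: case analysis on the position of $t_0$.} Suppose $t_0$ lies in pair $r$.
\begin{itemize}
\item If $t_0$ is the even position $2r$, then $A[t_0]=f_r+k$. Every later entry of the block is $\le f_r-k<A[t_0]$, so no $t_1>t_0$ inside the block satisfies $A[t_1]>A[t_0]$. This contradicts \eqref{eq:rank-order}.
\item So $t_0$ is the odd position $2r-1$, with $A[t_0]=f_r$. Any $t_1$ in a later pair $r'>r$ has $A[t_1]<f_r$, again a contradiction.
\end{itemize}
The only remaining option is $t_1=2r$, that is, $t_1=t_0+1$.

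\textbf{Consequence.} Setting $e=f_r$, we obtain $A[t_0]=e$ and $A[t_1]=e+k$ with $e\in S_{h_1}$, which is exactly what is needed to continue the ($\Longleftarrow$) direction.

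\textbf{Expected obstacle.} The only delicate point is the bookkeeping. The parity must be stated relative to the start of the block, not as a global index in $A$. One also has to make sure that the tie-breaking permutation does not create new increasing pairs inside $B_1^{h_1}$. This holds because the block's values are pairwise distinct, so their relative order is unchanged by stable ranking.
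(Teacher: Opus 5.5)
Your proof is correct and is essentially the paper's: both use the fact that every entry of $B_1^{h_1}$ exceeds every entry of all later pairs (the paper writes this as $B[2p]\ge B[2p+z]+2k$ and $B[2p-1]\ge B[2p+z]+k$ for $z\ge 1$), which rules out an even position for $t_0$ and any $t_1$ beyond $t_0+1$. Your check that stable ranking preserves comparisons inside the block, whose values are pairwise distinct, is a welcome addition. The parity caveat is harmless but unnecessary: every block $B_1^i$ has even length and all of them come first in $A$, so block parity and global $1$-based parity coincide.
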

    \begin{proof}[Proof of \Cref{claim:t1 t0 plus 1}]
        \newcommand{\Mm}{W}
        Write $B = B_1^{h_1}$, and let $\Mm = |S_{h_1}|$.
        Let $e_1 < e_2 < \cdots < e_{\Mm}$ be the elements of $S_{h_1}$, so that
        \[
            B=(e_{W},e_{W}+k,e_{W-1},e_{W-1}+k,\ldots,e_1,e_1+k)\;.
        \]
        We rewrite this as
        \[
            B=(f_1,f_1+k,f_2,f_2+k,\ldots,f_{\Mm},f_{\Mm}+k),
        \]
        where $f_i = e_{\Mm+1-i}$ for $i=1,\ldots,\Mm$ to simplify the notation in the proof, where $f_1 > f_2 > \cdots > f_{\Mm}$ are the elements of $S_{h_1}$ in decreasing order.
        We use $1$-based indexing for $B$, so $B[1]=f_1$, $B[2]=f_1+k$, and so on.
        Let $s_0 < s_1$ be the positions of $t_0, t_1$ within $B$.
        We show that $s_1 = s_0 + 1$ and that $s_0$ is odd.

        We claim that for every even position $2p$ in $B$, it holds that $B[2p] \geq B[2p+z]+2k$ for every $z \geq 1$.
        \begin{enumerate}
            \item If $z$ is even, i.e., $z=2d$, then $B[2p] = f_p+k$ and $B[2p+z] = f_{p+d}+k$. Since $f_p>f_{p+d}$, and both belong to $U'$, we have $f_p \geq f_{p+d}+2k$, and hence $B[2p] \geq B[2p+z]+2k$. We used the fact that the elements of $U'$ are spaced $2k$ apart, so $f_p-f_{p+d}\ge 2k$.
            \item If $z$ is odd, i.e., $z=2d-1$, then $B[2p] \geq B[2p+2d]+2k=B[2p+2d-1]+3k$, which completes the proof.
        \end{enumerate}
        To complete the proof, we combine the proven inequality, with the equality $B[2p-1]=B[2p]-k$ for every $p$.
        We get that for $z\geq 1$,
        \begin{align*}
            B[2p-1] = B[2p]-k \geq B[2p+z]+2k -k = B[2p+z]+k,
        \end{align*}
        which implies that $B[2p-1] \geq B[2p+z]+k$.
        This implies that if $s_1 > s_0 + 1$, then $B[s_1] < B[s_0]$, contradicting the order in \Cref{eq:rank-order}. Hence, $s_1$ must be equal to $s_0 + 1$.
        We also get that if $s_0$ is even, then $B[s_0] > B[s_1]$ contradicting the order in \Cref{eq:rank-order}. Hence, $s_0$ must be odd, which completes the proof of the claim.
    \end{proof}

    Since $t_1=t_0+1$ and $t_0$ is odd, the two indices $t_0$ and $t_1$ correspond to two consecutive entries in $B_1^{h_1}$, which are of the form $(e,e+k)$ for some $e\in S_{h_1}$.
    We are left with showing that $e\in S_{h_j}$ for every $j\in\{2,\ldots,k\}$, which implies that $e\in \bigcap_{i=1}^k S_{h_i}$, and completes the proof of the theorem.
    \paragraph{Showing $e \in S_{h_j}$.}
    Since $t_j\in L_j^{h_j}$, the index $t_j$ falls in $B_j^{h_j}$.
    Recall that the entries of $B_j^{h_j}$ are of the form
    $E_{h_j}+z_j$, where $z_j=k+1-j\in\{1,\ldots,k-1\}$.
    Thus $A[t_j]=f+z_j$ for some $f\in S_{h_j}\cap U'$.
    We show that $f$ must be equal to $e$, which implies that $e\in S_{h_j}$, and completes the proof of the claim.
    By~\eqref{eq:rank-order}, $e = A[t_0] < A[t_j] < A[t_1] = e+k$, i.e.,
    \[
        e \;<\; f + z_j \;<\; e+k\;.
    \]
    \begin{itemize}
        \item By subtracting $f$ from the first inequality, we get $e-f < z_j<k$.
        \item By subtracting $(e+z_j)$ from the second inequality, we get $f-e< k-z_j<k$.
    \end{itemize}
    In both cases we get $|f-e|<k$.
    But $f$ and $e$ are both in $U'$, whose elements are spaced $2k$ apart,
    so $|f-e|<k$ forces $f=e$. Hence $e\in S_{h_j}$.
\end{proof}


\section{Pattern Detection via CSP}
\label{sec:csp basic}
In this section, we reformulate the \rppm problem as a binary constraint satisfaction problem~(CSP), following the framework of Berendsohn, Kozma, and Marx~\cite{BKM21}.
We then show how to solve special classes of CSP instances efficiently, based on the primal graph of the CSP, including instances whose primal graph is acyclic or one edge away from being acyclic.
One of the main results of this section is \Cref{thm3:csp Gm} which we state later because it relies on the definitions we have not yet introduced.
\subsection{\rppm as CSP}
Recall that the \rppm problem (\cref{prob:rppm}) asks whether there exists a monotonically increasing embedding
$\phi:[k]\to[n]$ such that
\begin{enumerate}[label=$(\roman*)$]
    \item the points $(\phi(i),\pi(\phi(i)))_{i\in[k]}$ are order-isomorphic to $\sigma$,
    \item $\phi(i)\in P_i$ for every $i\in[k]$, and
    \item $(\phi(i),\pi(\phi(i)))\in R_i$ for every $i\in[k]$.
\end{enumerate}
We now reformulate \rppm as a binary CSP.

\begin{definition}[CSP formulation]\label{def:csp}
    Given a \rppm instance $(\pi,\sigma,(P_i)_{i\in[k]},(R_i)_{i\in[k]})$, we define a CSP instance $(V,\mathcal{D},C)$ as follows.
    \begin{itemize}
        \item \textbf{Variables.} $V=\{v_1,\ldots,v_k\}$, one for each element of the pattern.
        \item \textbf{Domains.} For each $i\in[k]$, the domain of $v_i$ is
              $D_i = \sset{t\in P_i \mid (t,\pi(t))\in R_i}$.
        \item \textbf{Constraints.} For every pair $1\le i<j\le k$, we impose two binary constraints:
              \begin{description}[labelindent=2em]
                  \item[($x$-order)] $v_i < v_j$, encoding the left-to-right order of the pattern.
                  \item[($y$-order)] $\pi(v_{\si(i)}) < \pi(v_{\si(j)})$, encoding the bottom-to-top order of the pattern.
              \end{description}
    \end{itemize}
    A \emph{satisfying assignment} is a function $\phi:[k]\to[n]$ with $\phi(i)\in D_i$ for every $i\in[k]$ that satisfies all constraints in~$C$.
\end{definition}
This is a binary CSP with $k$ variables, $\binom{k}{2}$ binary constraints, and $k$ unary constraints ensuring that the assignment of each variable falls inside the corresponding domain.
The number of binary constraints can be reduced from $\binom{k}{2}$ to at most $2(k-1)$ by observing that transitivity makes most constraints redundant. Specifically, it suffices to retain:
\begin{enumerate}
    \item The $x$-order constraints $v_i < v_{i+1}$ for $i \in [k-1]$.
    \item The $y$-order constraints $\pi(v_{\si(i)}) < \pi(v_{\si(i+1)})$ for $i \in [k-1]$.
\end{enumerate}
Further constraints may become redundant by the domains. For instance, if every value in $D_i$ is smaller than every value in $D_{i+1}$, then the constraint $v_i < v_{i+1}$ is automatically satisfied and can be omitted without affecting feasibility.

To formalize this, we distinguish between a maximal graph determined only by the pattern and the actual graph induced by the constraints that are retained in a given CSP instance.
\begin{definition}[Full Primal Graph $G_\sigma$]
    Given a pattern $\sigma:[k]\to[k]$, the \emph{full primal graph} $G_\sigma$ is the multigraph on the vertex set $[k]$ whose edges are the union of the two Hamiltonian paths
    \begin{align*}
        P_{x} & = (1 , 2 , \cdots , k),                                        \\
        P_{y} & = (\sigma^{-1}(1) , \sigma^{-1}(2) , \cdots , \sigma^{-1}(k)).
    \end{align*}
    \begin{itemize}
        \item
              The edge $(i,i+1)$ of $P_{x}$ represents the $x$-order constraint $v_{i}<v_{i+1}$.
        \item
              The edge $(\sigma^{-1}(i),\sigma^{-1}(i+1))$ of $P_{y}$ represents the $y$-order constraint $\pi(v_{\sigma^{-1}(i)}) < \pi(v_{\sigma^{-1}(i+1)})$.
    \end{itemize}
\end{definition}
Thus $G_\sigma$ has at most $2(k-1)$ edges, maximum degree at most~$4$, and each edge has multiplicity at most~$2$ (when both paths share an edge).
It is the densest primal graph associated with the pattern $\sigma$, and it arises exactly when all of the consecutive $x$-order and $y$-order constraints are kept.

Given a \rppm instance $\Lambda=(\pi,\sigma,(R_i)_{i\in[k]})$, or equivalently a CSP instance $M=(V,\mathcal{D},C)$ as in \Cref{def:csp}, we define the \emph{primal graph} of $\Lambda$ (or $M$) to be the subgraph of $G_\sigma$.
\begin{definition}[CSP Primal Graph]
    Given a CSP instance $M=(V,\mathcal{D},C)$, the \emph{primal graph} of $M$, denoted by $G_M$, is a subgraph of $G_\sigma$, with the following edges:
    \begin{itemize}
        \item
              The edge $(i,i+1)$ is in $G_M$ if the $x$-order constraint $v_{i}<v_{i+1}$ is in $C$.
        \item
              The edge $(\sigma^{-1}(i),\sigma^{-1}(i+1))$ is in $G_M$ if the $y$-order constraint $\pi(v_{\sigma^{-1}(i)}) < \pi(v_{\sigma^{-1}(i+1)})$ is in $C$.
    \end{itemize}
\end{definition}
For a \rppm instance $\Lambda$ with corresponding CSP instance $M$, we write $G_\Lambda$ and $G_M$ to denote their primal graph.
This primal graph allows us to identify specific structural properties of the instance that we can solve efficiently.
Such properties include being acyclic or having small cut, which we explain how to exploit in the next subsection.

We also define the sub-CSP instance.
We say that $\mathcal{D}'=(D_1',\ldots, D_k')$ is a \emph{narrowing} of $\mathcal{D}=(D_1,\ldots, D_k)$, if $D_i'\subseteq D_i$ for every $i\in[k]$.
Then, $M'=(V',\mathcal{D}',C)$ is a \emph{sub-instance} of $M=(V,\mathcal{D},C)$ if $V'\subseteq V$, $C'\subseteq C$, and $\mathcal{D}'$ is a narrowing of $\mathcal{D}$.
Note that the satisfiability of $M'$ or $M$ does not necessarily imply the satisfiability of the other.
Suppose $M$ is not satisfiable but removing all the constraints might make it satisfiable.
On the other hand if $M$ is satisfiable, then narrowing the domains might make it unsatisfiable.

Here, we care about a special type of sub-instance, which we call a \emph{rectangular narrowing}, where the domains are narrowed by intersecting them with axis-parallel rectangles. We call this type of sub-instance a \emph{rectangular sub-instance} of the original instance.

\subsection{Decomposing a CSP Across a Cut of Size One}
\label{ssec:cut-csp}
The main goal of this subsection, is to show how to decompose a CSP instance into two almost independent CSP instances, given that the primal graph has a cut of size at most one.
Informally, if by removing a single constraint we get two independent sub-instances, then we can solve each sub-instance independently and combine their solutions to solve the original instance.

Let $M=(V,\mathcal{D},C)$ be a CSP instance, and partition $V$ into two parts $V_1$ and $V_2$. This induces a partition of $C$ into $C_1, C_2$, and $C_{1,2}$, where $C_i$ contains the constraints with both endpoints in $V_i$, and $C_{1,2}$ contains the rest. If $C_{1,2}=\emptyset$, then $M$ decomposes into two independent instances
\[
    M_1=(V_1,\mathcal{D}_{1},C_1) \qquad\text{and}\qquad M_2=(V_2,\mathcal{D}_{2},C_2),
\]
where $\mathcal{D}_{i}$ is the (trivial) restriction of $\mathcal{D}$ to $V_i$, and any satisfying assignments $\phi_1$ of $M_1$ and $\phi_2$ of $M_2$ combine into a satisfying assignment $\phi_1\cup\phi_2$ of $M$.

We show that this decomposition extends to the case $|C_{1,2}|=1$. Before stating the lemma, we set up some terminology.
Recall that $\mathcal{D}'=(D_1',\ldots, D_k')$ is a \emph{rectangular-narrowing} of $\mathcal{D}=(D_1,\ldots, D_k)$ if $D_i'$ is the intersection of $D_i$ and some axis-parallel rectangle $R_i$ for every $i\in[k]$. Additionally, $M'=(V',\mathcal{D}',C')$ is a \emph{sub-instance} (or \emph{rectangular-sub-instance}) of $M=(V,\mathcal{D},C)$ if $V'\subseteq V$, $C'\subseteq C$, and $\mathcal{D}'$ is a narrowing (rectangular-narrowing) of $\mathcal{D}$.

\begin{definition}[Oracle for a CSP Instance]
    Let $M=(V,\mathcal{D},C)$ be a binary CSP instance. An \emph{oracle} for $M$ is an algorithm that takes as input a rectangular-sub-instance $M'$ of $M$ and decides whether $M'$ is satisfiable.
\end{definition}

\begin{claim}\label{claim:binary-search}
    Let $M=(V,\mathcal{D},C)$ be a binary CSP instance with an oracle $O_1$.
    Using $O(k\log n)=O(\log n)$ queries to $O_1$, one can either certify that $M$ is unsatisfiable or find a satisfying assignment $\phi$ to $M$.
    Moreover, if $M$ is satisfiable, then given a single index $i\in[k]$ and one of the four coordinate orders $\prec$ (left-to-right, right-to-left, bottom-to-top, or top-to-bottom), the algorithm can ensure that $\phi(i)$ is $\prec$-minimal among all satisfying assignments.
\end{claim}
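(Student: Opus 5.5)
We find the assignment one variable at a time. At each step we shrink one domain by binary search, and every membership test is a single call to the oracle $O_1$ on a rectangular sub-instance. First call $O_1$ on $M$ itself. If it reports unsatisfiable, we have certified that $M$ is unsatisfiable and we stop. Otherwise we process the variables in order. When the prescribed index $i$ and order $\prec$ are given, we start with $v_i$.

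For the current variable $v_j$ with current domain $D_j$ and order $\prec$, we binary search for the $\prec$-smallest threshold $t$ with the following property: the sub-instance obtained by intersecting $D_j$ with the half-plane rectangle $\{p : p \preceq t\}$ is satisfiable. For left-to-right this rectangle is $[1,t]\times[1,n]$, and the other three orders are analogous. Each probe is one oracle call, because the narrowed instance is a rectangular sub-instance: we intersect $D_j$ with an axis-parallel rectangle and keep the same constraints. Satisfiability is monotone in $t$, since a larger rectangle gives a larger domain, so binary search over $[n]$ is valid and takes $O(\log n)$ calls.

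At the minimal threshold $t^\ast$, the instance restricted to $\{p\preceq t^\ast\}$ is satisfiable and the one restricted to $\{p\prec t^\ast\}$ is not. Coordinates are distinct because $P$ is a permutation, so exactly one point of $D_j$ lies on the line at $t^\ast$. Every satisfying assignment of the narrowed instance therefore uses that point, and we fix $\phi(j)$ to it by replacing $D_j$ with the singleton rectangle around it. For the prescribed index $i$, this point is $\prec$-minimal among the values $\phi(i)$ over all satisfying assignments of $M$: any satisfying assignment with a $\prec$-smaller value would make a smaller threshold succeed.

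We then repeat for the remaining variables. We use, say, the left-to-right order for each, and keep the earlier singleton domains fixed. The invariant is that the current rectangular sub-instance is satisfiable, and every step preserves it. After $k$ rounds all domains are singletons in a satisfiable instance, so we have a satisfying assignment $\phi$. The total is $O(k\log n)=O(\log n)$ oracle calls, since $k$ is constant.

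The one point that needs care is that every intermediate instance is a legitimate rectangular sub-instance, i.e., a domain intersected with a single axis-parallel rectangle. Each domain is shrunk either to a half-plane cut of its rectangle or to a point. The intersection of two axis-parallel rectangles is again an axis-parallel rectangle, so this is maintained throughout. The remaining minor point is the distinct-coordinates argument that pins down a unique point at $t^\ast$.
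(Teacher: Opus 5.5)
Your proof is correct and uses the same search-to-decision self-reduction as the paper: one oracle call checks satisfiability, then $O(\log n)$ oracle calls on half-plane (hence rectangular) narrowings shrink the domain of $v_i$ to a singleton, and that singleton is the $\prec$-minimal value. The differences are cosmetic: the paper halves $D_i$ by point count rather than binary-searching a threshold over $[n]$, and it only spells out the shrinking of $v_i$, whereas you also fix the remaining variables and check that rectangular narrowings compose, which is what actually justifies the $O(k\log n)$ bound.
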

Note that this guarantee applies to one index-order pair at a time; minimality under multiple indices or orders simultaneously is not claimed.
We give the details for completeness.
\begin{proof}[Proof of \Cref{claim:binary-search}]
    The proof is the usual search-to-decision self-reduction. We keep a satisfiable rectangular-sub-instance and shrink one domain to a singleton without losing all satisfying assignments.

    We explain how to find a satisfying assignment $\phi$ to $M$
    where $\phi(i)$ is $\prec$-minimal, for specific $i\in[k]$.
    First we check that $M$ is satisfiable by querying $O_1$, otherwise we certify that $M$ is unsatisfiable.

    For simplicity, we assume that the $\prec$ order is the total ordering with respect to the $x$-coordinate, $p\prec q$ if and only if $p_x<q_x$. The other three orders are handled symmetrically.

    Now we consider a new rectangular-sub-instance of $M$ that is obtained by restricting the domain of $v_i$.
    Specifically, we partition the domain $D_i$ into two almost equal halves using the halfplane $H=\{p\in D_i : p_x\le t\}$ for some $t$ that gives a balanced partition of $D_i$ into two halves.
    The new input $N^1$ is the rectangular-sub-instance of $M$ obtained by restricting the domain of $v_i$ to $H$. If $N^1$ is satisfiable, then we can recurse on $N^1$. Otherwise, we recurse on the rectangular-sub-instance $N^2$ obtained by restricting the domain of $v_i$ to the other half, $D_i\setminus H$. In either case, we have reduced the size of the domain of $v_i$ by a factor of at least $2$, and we can continue this process until the domain of $v_i$ is a singleton. The value in this singleton is exactly the $\prec$-minimal value that can appear in a satisfying assignment of $M$.
    Since the domain of $v_i$ has size at most $n$, the number of oracle queries is $O(\log n)$.
\end{proof}

The following lemma is the main result of this subsection:
\begin{lemma}[Eliminating Cuts of Size One]
    \label{lemma:P3}
    Let $M=(V,\mathcal{D},C)$ be a binary CSP instance, and let $V_1,V_2$ be a partition of $V$ that induces a partition of $C$ into $C_1,C_2,C_{1,2}$ as above. Assume that $|C_{1,2}|\leq 1$.
    Given two oracles $O_1$ and $O_2$ for $M_1=(V_1,\mathcal{D}_1,C_1)$ and $M_2=(V_2,\mathcal{D}_{2},C_2)$, respectively, we can find an assignment $\phi$ to $M$ or certify that $M$ is unsatisfiable, using $O(\log n)$ queries to $O_1$ and $O(\log n)$ queries to $O_2$.
\end{lemma}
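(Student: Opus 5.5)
If $C_{1,2}=\emptyset$ the statement is immediate. We query $O_1$ on $M_1$ and $O_2$ on $M_2$. If both are satisfiable, we use \Cref{claim:binary-search} to extract a satisfying assignment of each with $O(\log n)$ queries apiece, and output their union. Otherwise $M$ is unsatisfiable. So assume $C_{1,2}=\{c\}$, where $c$ joins $v_i\in V_1$ and $v_j\in V_2$. In our setting every constraint is an order constraint along one coordinate, say (by symmetry) $v_i\prec v_j$ for one of the four coordinate orders, e.g.\ $x$-order $v_i<v_j$ or $y$-order $\pi(v_i)<\pi(v_j)$.

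\emph{Key reduction.} Since $c$ is monotone in a single coordinate, $M$ is satisfiable if and only if there are satisfying assignments $\phi_1$ of $M_1$ and $\phi_2$ of $M_2$ with $\phi_1(i)\prec\phi_2(j)$. Let $p^\ast$ be the $\prec$-minimal value of $\phi_1(i)$ over all satisfying assignments of $M_1$. It suffices to test whether some satisfying assignment $\phi_2$ of $M_2$ has $\phi_2(j)\succ p^\ast$. If one exists, pairing it with the assignment $\phi_1$ realizing $p^\ast$ satisfies $c$, and all other constraints lie inside $C_1$ or $C_2$, so $\phi_1\cup\phi_2$ satisfies $M$. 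Conversely, if any satisfying pair $(\phi_1,\phi_2)$ exists, then $\phi_2(j)\succ\phi_1(i)\succeq p^\ast$, so the test succeeds.

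\emph{Algorithm.} First query $O_1$ on $M_1$; if it is unsatisfiable, so is $M$. Otherwise apply \Cref{claim:binary-search} to $M_1$ with index $i$ and order $\prec$, which uses $O(\log n)$ queries to $O_1$ and returns $\phi_1$ with $\phi_1(i)=p^\ast$ minimal. Next, form the rectangular-sub-instance $M_2'$ of $M_2$ by intersecting $D_j$ with the half-plane (a degenerate axis-parallel rectangle) of points strictly $\prec$-after $p^\ast$; for $x$-order this is $x>p^\ast_x$, for $y$-order it is $y>\pi(p^\ast)$. Query $O_2$ on $M_2'$. If it is unsatisfiable, certify that $M$ is unsatisfiable. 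Otherwise apply \Cref{claim:binary-search} to $M_2'$, using $O(\log n)$ queries to $O_2$, obtain $\phi_2$, and output $\phi_1\cup\phi_2$. Because $M_2'$ is a rectangular-sub-instance of $M_2$, it is a legal oracle input.

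The only delicate step is justifying that restricting attention to the $\prec$-minimal $\phi_1(i)$ loses nothing. This relies on $c$ being a one-sided threshold constraint in a single coordinate, which holds by \Cref{def:csp}, and on the narrowing needed for $M_2'$ being rectangular. A general binary constraint would not admit this monotonicity argument, so I would state explicitly that the cut constraint is an order constraint.
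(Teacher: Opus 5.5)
Your proposal is correct and follows essentially the same argument as the paper. Like the paper, you extract a satisfying assignment of $M_1$ with $\prec$-minimal $\phi_1(v_i)$ via \Cref{claim:binary-search}, narrow $D_j$ by the axis-parallel half-plane strictly $\prec$-after that value to obtain a rectangular sub-instance $M_2'$, and prove that $M$ is satisfiable if and only if $M_2'$ is, using $O(\log n)$ oracle queries on each side.
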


\begin{proof}[Proof of \Cref{lemma:P3}]
    If $C_{1,2}=\emptyset$, then $M$ decomposes into the two independent instances $M_1$ and $M_2$, and by \Cref{claim:binary-search} we can either certify that one of them is unsatisfiable or find satisfying assignments to both using $O(\log n)$ queries per oracle.
    In the latter case, their union is a satisfying assignment to $M$.

    Assume then that $C_{1,2}=\{C_{i,j}\}$, where $v_i\in V_1$, $v_j\in V_2$, and $C_{i,j}$ is the constraint $v_i\prec v_j$ for one of the four coordinate orders.

    We first query $O_1$ on $M_1$. If $M_1$ is unsatisfiable, so is $M$. Otherwise, by \Cref{claim:binary-search} we find a satisfying assignment $\phi_1$ to $M_1$ in which $\phi_1(v_i)$ is $\prec$-minimal among all satisfying assignments to $M_1$, using $O(\log n)$ queries to $O_1$.
    Let $M_2'=(V_2,\mathcal{D}_2',C_2)$ be the sub-instance of $M_2$ obtained by restricting the domain of $v_j$ to $\{p\in D_j : \phi_1(v_i) \prec p\}$ and leaving all other domains unchanged.
    Since $\prec$ is a coordinate order, this restriction is an axis-parallel halfplane and therefore $M_2'$ is a rectangular-sub-instance.
    We show below that $M$ is satisfiable if and only if $M_2'$ is.
    Assuming this, we query $O_2$ on $M_2'$; if satisfiable, \Cref{claim:binary-search} yields a satisfying assignment $\phi_2$ using $O(\log n)$ queries, and $\phi_1\cup\phi_2$ satisfies $M$.

    \paragraph{Correctness.}
    We claim that $M$ is satisfiable if and only if $M_2'$ is satisfiable.

    $(\Longleftarrow)$ Let $\phi_2$ be a satisfying assignment to $M_2'$. Then $\phi=\phi_1\cup\phi_2$ satisfies all constraints in $C_1\cup C_2$, and since $\phi_2(v_j)\in D_j'$ we have $\phi_1(v_i)\prec \phi_2(v_j)$, so the crossing constraint $C_{i,j}$ is satisfied as well.

    $(\Longrightarrow)$ Let $\psi$ be a satisfying assignment to $M$, with restrictions $\psi_1$ and $\psi_2$ to $V_1$ and $V_2$.
    Since $\psi_1$ satisfies $M_1$ and $\phi_1(v_i)$ is $\prec$-minimal among all satisfying assignments to $M_1$, we have $\phi_1(v_i)\preceq\psi_1(v_i)\prec\psi_2(v_j)$, where the second inequality holds because $\psi$ satisfies $C_{i,j}$.
    Hence $\psi_2(v_j)\in D_j'$, so $\psi_2$ satisfies $M_2'$.
\end{proof}

\begin{claim}[Peeling Degree-One Variables]
    \label{claim:peeling-degree-one}
    After $\tf(n)$ preprocessing, given a rectangular sub-instance $M=(V,\mathcal{D},C)$ of an \rppm CSP and a variable $v_i$ incident to at most one constraint, one can in $\tf(1)$ time either certify that $M$ is unsatisfiable or construct a rectangular sub-instance $M'$ on $V\setminus\{v_i\}$ such that $M$ is satisfiable if and only if $M'$ is satisfiable.
    Moreover, any satisfying assignment of $M'$ extends to one of $M$ in $\tf(1)$ time.
\end{claim}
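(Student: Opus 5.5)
The plan is to handle the two cases, degree zero and degree one, separately. In both cases the only primitive used is a 2D range tree, built in $\tf(n)$ preprocessing on each color class $P_j$. It supports emptiness queries and extremal-point queries in $\tf(1)$ time: given an axis-parallel rectangle $R$, return the leftmost, rightmost, bottommost or topmost point of $P_j\cap R$, or report that none exists. This is standard: each range-tree node stores its four extremal points, and a query combines the $O(\log^2 n)$ canonical nodes.

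\emph{Degree zero.} Suppose $v_i$ is incident to no constraint. Since $M$ is rectangular, $D_i=P_i\cap R_i$ for some rectangle $R_i$. We test $D_i$ for emptiness with one range-tree query.
\begin{itemize}
\item If $D_i$ is empty, we certify that $M$ is unsatisfiable.
\item Otherwise we store any point $p\in D_i$ and let $M'$ be $M$ with $v_i$ removed, keeping the same constraints and domains.
\end{itemize}
Any satisfying assignment of $M'$ extends to $M$ by setting $v_i=p$. Conversely, any satisfying assignment of $M$ restricts to one of $M'$.

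\emph{Degree one.} Suppose $v_i$ is incident to exactly one constraint, with neighbor $v_j$, of the form $v_i\prec v_j$ (or $v_j\prec v_i$) for one of the four coordinate orders. Take the $x$-order $v_i<v_j$ as the representative case; the others are symmetric.
\begin{enumerate}
\item Query the range tree of $P_i$ for the $\prec$-minimal point $p^*$ of $D_i=P_i\cap R_i$, here the leftmost point. If $D_i$ is empty, certify that $M$ is unsatisfiable.
\item Form $M'$ on $V\setminus\{v_i\}$ by replacing the domain of $v_j$ with $D_j\cap\{q: p^*\prec q\}$, and dropping the constraint between $v_i$ and $v_j$.
\end{enumerate}
Because $\prec$ is a coordinate order, $\{q: p^*\prec q\}$ is an axis-parallel halfplane. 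Its intersection with $R_j$ is again a rectangle, so $M'$ is a rectangular sub-instance and can be described in $O(1)$ words.

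Correctness follows the $(\Longleftarrow)/(\Longrightarrow)$ argument in the proof of \Cref{lemma:P3}. If $\psi$ satisfies $M$, then $p^*\preceq\psi(v_i)\prec\psi(v_j)$, so $\psi(v_j)$ lies in the narrowed domain and $\psi|_{V\setminus\{v_i\}}$ satisfies $M'$. Conversely, given a satisfying assignment of $M'$, setting $v_i=p^*$ satisfies the single dropped constraint. It also satisfies $v_i$'s unary constraint, since $p^*\in D_i$. No other constraint involves $v_i$, so the extension is valid and takes $\tf(1)$ time.

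The main obstacle I anticipate is ensuring that domains stay of the form $P_j\cap(\text{rectangle})$ throughout. This matters because repeated peeling, or mixing with the narrowing of \Cref{claim:binary-search}, must never produce a non-rectangular domain that the range tree cannot query. The halfplane argument above handles it: an intersection of rectangles and axis-parallel halfplanes is a rectangle, computable in $O(1)$ time.

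A second subtlety is the choice of extremal direction. It must be minimal with respect to exactly the order appearing in the constraint, and oriented correctly: for $v_j\prec v_i$ we take the $\prec$-maximal point of $D_i$ and narrow $D_j$ to $\{q: q\prec p^*\}$. For $y$-order constraints, minimality is with respect to $\pi(\cdot)$, which is the bottom-to-top extremal query on the range tree.
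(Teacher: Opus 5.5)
Your proposal is correct and follows essentially the same argument as the paper. The paper also builds a range tree per color class for extremal-point queries, deletes an unconstrained variable after an emptiness check, and, for a degree-one variable, fixes it to its $\prec$-extremal point in $D_i$ and narrows the neighbor's domain by the corresponding axis-parallel halfplane, proving correctness via $p^*\preceq\psi(v_i)\prec\psi(v_j)$ just as you do.
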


\begin{proof}[Proof of \Cref{claim:peeling-degree-one}]
    The preprocessing step is to build one range tree for each color class of the input point set, i.e., for each $P_i$,
    that supports emptiness queries in $\tf(1)$ time, and thus allows to get the leftmost, rightmost, bottommost, and topmost points in any axis-parallel rectangle in $\tf(1)$ time.

    If $D_i=\emptyset$, then $M$ is unsatisfiable.
    If $v_i$ has no incident constraint, choose any $p_i\in D_i$ and let $M'$ be obtained by deleting $v_i$.
    This preserves satisfiability, and any solution of $M'$ extends by setting $v_i=p_i$.

    Otherwise, write the unique incident constraint as $v_i\prec v_j$, where $\prec$ is one of the four coordinate orders.
    Let $p_i=\min_{\prec}D_i$.
    Let $D_j'=D_j\cap\{q:p_i\prec q\}$,
    and let $M'$ be obtained by deleting $v_i$ and replacing $D_j$ by $D_j'$.
    This is still a rectangular narrowing, since $\{q:p_i\prec q\}$ is an axis-parallel halfplane.
    If $M'$ is satisfiable, then any satisfying assignment to $M'$ extends to one of $M$ by setting $v_i=p_i$, since the new domain of $v_j$ guarantees $p_i\prec v_j$.
    Conversely, suppose $M$ has a satisfying assignment $\psi$.
    By the choice of $p_i$, we have $p_i\preceq \psi(v_i)$.
    Since $\psi$ satisfies the unique incident constraint, $\psi(v_i)\prec \psi(v_j)$.
    By transitivity, $p_i\prec \psi(v_j)$, and therefore $\psi(v_j)\in D_j'$.
    Thus the restriction of $\psi$ to $V\setminus\{v_i\}$ satisfies $M'$.
\end{proof}

\begin{corollary}[Iterated Degree-One Peeling]
    \label{cor:iterated-degree-one-peeling}
    After $\tf(n)$ preprocessing, any rectangular sub-instance $M$ can be reduced in $\tf(1)$ time, unless unsatisfiability is certified, to a rectangular sub-instance $M^\circ$ whose primal graph is empty or has minimum degree at least $2$, while preserving satisfiability and allowing any solution of $M^\circ$ to be extended to a solution of $M$.
\end{corollary}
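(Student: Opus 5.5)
The plan is to iterate \Cref{claim:peeling-degree-one} until no variable of degree at most one remains. Start with $M^{(0)}=M$. At step $t$, scan the primal graph $G_{M^{(t)}}$ for a variable $v_i$ incident to at most one retained constraint. If none exists, stop and output $M^\circ=M^{(t)}$. Otherwise apply \Cref{claim:peeling-degree-one} to $v_i$. This either certifies unsatisfiability, in which case we stop, or produces a rectangular sub-instance $M^{(t+1)}$ on $V\setminus\{v_i\}$ that is equisatisfiable with $M^{(t)}$. Each step deletes a variable, so there are at most $k$ steps. Since $k=O(1)$ and each step costs $\tf(1)$, the total is $\tf(1)$. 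The preprocessing is the single $\tf(n)$ one from the claim: one range tree per color class, shared across all steps.

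Three invariants need checking.

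\textbf{Rectangular sub-instance.} The claim only intersects one domain $D_j$ with an axis-parallel halfplane. The domains stay rectangular because the intersection of a rectangle with a halfplane is again a rectangle. The constraint set only shrinks, because the constraint incident to $v_i$ is removed together with $v_i$. Hence $M^{(t+1)}$ is again a rectangular sub-instance of the original \rppm CSP, and the claim applies at the next step with the same preprocessing.

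\textbf{Satisfiability.} Each $M^{(t+1)}$ is equisatisfiable with $M^{(t)}$ by the claim, so by induction $M^\circ$ is equisatisfiable with $M$.

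\textbf{Extension of solutions.} Unwind the sequence in reverse. A solution of $M^{(t+1)}$ extends to one of $M^{(t)}$ in $\tf(1)$ time, by assigning the stored point $p_i$ (the $\prec$-minimum of $D_i$, or an arbitrary point if $v_i$ was isolated). Doing this for all at most $k$ peeled variables extends a solution of $M^\circ$ to one of $M$ in $\tf(1)$ time.

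\textbf{Termination condition.} When the loop stops, every remaining variable has degree at least $2$, or no variables remain. Here degree is counted with multiplicity, since $G_\sigma$ is a multigraph; the claim's hypothesis ``incident to at most one constraint'' matches degree $\le 1$ with multiplicity. So the primal graph of $M^\circ$ is empty or has minimum degree at least $2$.

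The one point needing care is that peeling $v_i$ lowers the degree of its neighbor $v_j$, which may create a new degree-one vertex. This is exactly why we iterate rather than peel once. It causes no problem with running time, since the number of iterations is bounded by $k$ regardless. I expect no real obstacle beyond this bookkeeping.
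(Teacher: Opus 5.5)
Your proposal is correct and is the paper's own argument: repeatedly apply \Cref{claim:peeling-degree-one} to a variable of degree at most one, counted with multiplicity, until none remains. There are at most $k=O(1)$ applications, all sharing the single $\tf(n)$ preprocessing. Your explicit invariant checks (rectangularity, equisatisfiability, reverse-order extension of solutions) are bookkeeping the paper leaves implicit.
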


\begin{proof}[Proof of \Cref{cor:iterated-degree-one-peeling}]
    Repeatedly apply \Cref{claim:peeling-degree-one} until no vertex of degree at most one remains.
    There are at most $k$ applications, and $k$ is constant.
\end{proof}

Applying \Cref{cor:iterated-degree-one-peeling} to forests yields:
\begin{corollary}
    \label{cor:csp is a forest}
    Let $M=(V,\mathcal{D},C)$ be a binary CSP instance whose primal graph $G_M$ is a forest (with no edge multiplicities).
    $M$ can be solved in $\tf(1)$ time, and after $\tf(n)$ preprocessing on the input point set.
\end{corollary}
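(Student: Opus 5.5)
The plan is to apply \Cref{cor:iterated-degree-one-peeling} directly and then check that a forest always peels down to the empty graph. After the stated $\tf(n)$ preprocessing, we run iterated degree-one peeling on $M$. The corollary guarantees that after $\tf(1)$ time we either certify unsatisfiability, or obtain a rectangular sub-instance $M^\circ$ whose primal graph is empty or has minimum degree at least $2$. Satisfiability is preserved, and any solution of $M^\circ$ extends back to one of $M$.

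The key structural step is to show that the minimum-degree-$2$ alternative cannot occur. Peeling only deletes vertices and their incident edges, so the primal graph of $M^\circ$ is an induced subgraph of $G_M$. An induced subgraph of a forest is again a forest. A nonempty finite forest whose edge multiplicities are all one always has a vertex of degree at most one: a leaf, or an isolated vertex. Hence peeling continues until no variables remain.

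The hypothesis of no edge multiplicities is exactly what is needed here. A double edge (an $x$-constraint and a $y$-constraint on the same pair) would give each endpoint degree $2$ and block peeling, even though it is acyclic as a simple graph. So we count degrees in the multigraph, and use the hypothesis to conclude that every vertex of a tree component with at most one neighbour is incident to at most one constraint, which is the condition \Cref{claim:peeling-degree-one} requires.

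Once all variables are peeled, the remaining instance is trivially satisfiable by the empty assignment. We then unwind the at most $k=O(1)$ extension steps, each taking $\tf(1)$ time by \Cref{claim:peeling-degree-one}, to produce a satisfying assignment of $M$. If unsatisfiability was certified at some step, it is correct by the same claim. The only substantive point is the leaf-existence argument together with the multiplicity caveat; everything else is inherited from the earlier claim.
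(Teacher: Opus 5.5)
Your proposal is correct and is essentially the paper's argument: the paper obtains this corollary in one line by applying \Cref{cor:iterated-degree-one-peeling} to forests. Your observation that every nonempty subgraph of a multiplicity-free forest has a vertex incident to at most one constraint is the implicit reason the peeling removes every variable. (Strictly, the residual primal graph is a subgraph of $G_M$ rather than necessarily an induced one, since narrowing can make constraints redundant, but it is still a forest, so your conclusion is unaffected.)
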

We extend this to the case where the primal graph is a forest with edge multiplicities, i.e., a multigraph whose underlying simple graph is a forest:

\begin{lemma}[Acyclic Multigraphs]
    \label{lemma:acyclic multigraphs}
    Let $M=(V,\mathcal{D},C)$ be a binary CSP instance whose primal graph $G_M$ is a forest (with edge multiplicities).
    Then, there exists an algorithm that computes an oracle $O_1$ for $M$ in $\tf(n)$ preprocessing time, where $O_1$ answers any rectangular-sub-instance $M'$ of $M$ in $\tf(1)$ time.
\end{lemma}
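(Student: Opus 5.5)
The plan is to reduce the multigraph case to the simple-forest case of \Cref{cor:csp is a forest} by merging parallel constraints. Since $G_M\subseteq G_\sigma$, every edge has multiplicity at most $2$. A doubled edge $\{v_i,v_j\}$ carries exactly one $x$-order constraint and one $y$-order constraint between the same two variables. Together these say that the point assigned to one variable dominates the other in one of the four quadrant senses: NE, NW, SE or SW. So the task is to generalize the degree-one peeling step of \Cref{claim:peeling-degree-one} to a leaf $v_i$ whose unique neighbour $v_j$ is joined to it by such a doubled (quadrant) constraint. The oracle $O_1$ will again be iterated peeling of leaves, with each step costing $\tf(1)$. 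As before, $k$ is constant, and at most $k$ peelings are needed because the underlying simple graph is a forest.

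\textbf{Key step: peeling a leaf with a quadrant constraint.} Say the constraint is $v_i$ lies to the lower-left of $v_j$, i.e.\ $v_i<v_j$ in $x$ and $\pi(v_i)<\pi(v_j)$. Other orientations follow by reflection. Eliminating $v_i$ replaces $D_j$ by
\[
D_j'=\{q\in D_j:\ \exists p\in D_i,\ p\text{ is lower-left of } q\}.
\]
This is the set of points of $D_j$ lying strictly above a monotone staircase determined by the minima of $D_i$. In general it is not a rectangular narrowing, which is the main obstacle.

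I would handle it as follows. Preprocess, for each color class $P_i$, a range tree over $P_i$. For a query rectangle $R_i$, the staircase of $D_i=P_i\cap R_i$ cannot be listed in $\tf(1)$ time. However, all remaining constraints on $v_j$ are single coordinate-order constraints or further quadrant constraints, and the neighbour $v_j$ is ultimately eliminated, or tested for nonemptiness, by extremal queries. So instead of materialising $D_j'$, I would answer the needed questions about it directly.

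The needed question is of the form: what is the $\prec$-minimal point of $D_j'\cap H$ for an axis-parallel rectangle $H$? This reduces to a $\ind_{12}$-type two-rectangle query with one point from $D_i$ and one from $D_j\cap H$. The multi-rectangle $\ind_{12}(R_1,R_2)$ structure of \cite{ben2024approximate}, recalled in the warm-up, answers such queries in $\tf(1)$ time after $\tf(n)$ preprocessing. Binary search over the coordinate of the witness in $D_j$, as in \Cref{claim:binary-search}, then gives $O(\log n)$ queries to find extremal points.

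To keep the recursion uniform, I would process the forest from the leaves toward a root. At each node I maintain a constant-size description consisting of a rectangle together with a constant number of pending ``12-witness'' requirements from already-peeled children. I then show that extremal queries for this description reduce to $\tf(1)$ calls to range emptiness and two-rectangle $\ind_{12}$/$\ind_{21}$ structures, each built on the relevant pair of color classes.

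The step I expect to be hardest is this bookkeeping. When a node has several doubled-edge children, its effective domain is an intersection of several staircase regions. I would first try to show that, because each child leaf contributes only its staircase of minima, the $\prec$-extremal point of such an intersection can still be found by nested binary searches. The predicate tested in these searches is ``some point of $D_j\cap H$ is dominated-above by each child's domain.'' Fixing the candidate point's coordinate threshold decouples the children, so each child becomes an independent $\ind_{12}$ query, which preserves the $\tf(1)$ bound.
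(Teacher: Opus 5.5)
\textbf{There is a genuine gap, in the step you flag as bookkeeping.} Your one-level step is sound. If a leaf $v_i$ whose domain is still rectangular is joined to $v_j$ by a doubled edge, the $\prec$-extremal point of $D_j'\cap H$ can be found by binary search over two-rectangle $\ind_{12}$ or $\ind_{21}$ queries.

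The trouble is that doubled edges need not form a matching. Positions $j-1,j,j+1$ are joined by two doubled edges whenever $\sigma$ takes consecutive values on them, which forces a monotone run. For $\sigma=12\cdots k$ the whole $G_\sigma$ is a doubled path, and that instance is within the scope of the lemma. No peeling order avoids the problem:
\begin{itemize}
\item If you peel $v_{j-1}$ into $v_j$ and then $v_j$ into $v_{j+1}$, the requirement on $v_{j+1}$ is the existence of a dominance chain $p\prec q\prec r$ with $p\in D_{j-1}$ and $q\in D_j$. That is a three-rectangle $\ind_{123}$ query, not an $\ind_{12}$ query.
\item If you peel both neighbours into $v_j$, they impose \emph{opposite} quadrants (say $v_{j-1}$ south-west and $v_{j+1}$ north-east of $v_j$). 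The feasible set is then the part of $D_j$ above the staircase of $D_{j-1}$ \emph{and} below the staircase of $D_{j+1}$, which can be a union of several disjoint boxes.
\end{itemize}
Your decoupling claim fails in the second case. A threshold on one coordinate of the candidate point does not make the two existential conditions independent, because the same point must witness both. For instance, the leftmost point of $D_j$ above the first staircase may be dominated by no point of $D_{j+1}$, while a point further right is feasible. So evaluating your binary-search predicate is already the full query $\ind_{123}(R_{j-1},R_j\cap H,R_{j+1})$. Longer doubled runs give $\ind_{12\cdots\ell}$ queries. Two-rectangle $\ind_{12}$ structures do not answer these in $\tf(1)$ time without a further idea.

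\textbf{What is missing is a multi-rectangle data structure for monotone chains.} The paper never forms staircase domains. In a forest, every edge of multiplicity one is a cut of size one. The paper splits there and recombines the two sides by induction using \Cref{lemma:P3}: a $\prec$-extremal witness on one side becomes a halfplane narrowing on the other. What remains are doubled paths, i.e.\ monotone sub-patterns, and these are handled by \Cref{lemma:monotone patterns}, which works as follows:
\begin{itemize}
\item It decomposes the query rectangles into canonical, pairwise well-aligned range-tree pieces.
\item If two consecutive pieces differ, one of the two parallel constraints becomes automatically satisfied (or violated). This again yields a cut of size one, and it recurses on shorter monotone patterns.
\item If all pieces coincide, the answer is read from the precomputed longest increasing subsequence (capped at $k$) of that canonical rectangle.
\end{itemize}
As written, your argument only covers instances whose doubled edges form a matching. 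To complete it you would need to replace the $\ind_{12}$ oracle by such a monotone multi-rectangle oracle, at which point you are essentially following the paper's proof.
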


The only missing ingredient for the proof of \Cref{lemma:acyclic multigraphs} is a data structure for the \rppm problem with monotone patterns. (Note that we did not try to optimize the exponents in the claim below.)
\begin{restatable}[Monotone Patterns]{claim}{LemmaMono}\label{lemma:monotone patterns}
    There is a data structure for the \rppm problem with monotone pattern $\sigma=12\cdots k$ (or $\sigma=k(k-1)\cdots 1$) that takes $2^{O(k^2)} \cdot O(n \log^{3k} n)$ preprocessing time and $(O(\log n))^{3k^2}$ query time for every $k$.
\end{restatable}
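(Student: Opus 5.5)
We prove the claim for $\sigma=12\cdots k$; the decreasing case follows by reflecting the $y$-coordinates. The plan is induction on $k$ via \emph{well-aligned decomposition} and splitting along a separating line. Throughout, color classes are handled by building every structure separately on each $P_i$; this is only a constant-factor blowup, since the $i$-th point must come from $P_i$.

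\textbf{Preprocessing.} Build a two-level range tree, so that the family $\RR$ of canonical rectangles has $|\RR|=\tf(n)$ and every point lies in $O(\log^2 n)$ rectangles. For every canonical rectangle $R\in\RR$ and every $k'\le k$, we would like to store recursively a structure for $12\cdots k'$ multi-rectangle queries restricted to $P\cap R$. The $2^{O(k^2)}$ and $\log^{3k}$ factors in the statement leave room for $k$ levels of such nesting, since each level multiplies the total size by $O(\log^2 n)$ up to constants.

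\textbf{Query reduction.} Given $(R_1,\ldots,R_k)$:
\begin{enumerate}
\item Make the tuple well-aligned, as in \Cref{ssec:range-trees}. This costs $(O(k^2))^k=2^{O(k\log k)}$ sub-queries.
\item Replace each $R_i$ by its grid cells from $\Grid(R_1,\ldots,R_k)$, as in the proof of \Cref{thm:re ppm}. Each tuple then consists of cells of one common grid, with $O(k^2\log^2 n)$ choices per coordinate, hence $(O(\log n))^{2k}\cdot 2^{O(k\log k)}$ tuples in total.
\end{enumerate}

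\textbf{Handling one tuple.} For a fixed tuple of grid cells, we look at the first index $j$ at which the tuple can be split: $R_1,\ldots,R_j$ lie weakly left of and below $R_{j+1},\ldots,R_k$, separated strictly in both coordinates. Whenever $R_i$ and $R_{i+1}$ are in different rows and different columns, this strict separation holds for that consecutive pair. The order constraints $v_i<v_{i+1}$ and $\pi(v_i)<\pi(v_{i+1})$ across the split are then automatically satisfied, or violated, and violation is checked in $O(1)$. The query therefore factors into two independent monotone queries of smaller length, which are solved recursively.

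The remaining case is a tuple in which every consecutive pair shares a row or a column. Because the pattern is monotone, the cells then form a staircase chain, and consecutive constraints relate only cells in a common row or column. A maximal run of cells in one row lies in a horizontal slab aligned with a node $u\in T_y$. There, the $x$-constraints inside the run are trivial once the cells are ordered left to right, and only the $y$-chain matters. Runs in a column are symmetric.

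A row run and the next column run meet at a single shared cell. We handle the run via the structure stored at $u$ on the compressed point set $P_u$, using \Cref{lemma:shrink}, with cells as slabs. To bound the work we use \Cref{claim:binary-search}-style extremal search: for the first run, find the $\prec$-minimal (lowest, then leftmost) feasible endpoint. Then narrow the domain of the next run's first variable accordingly, exactly as in \Cref{lemma:P3}, since the pair of consecutive constraints can be replaced by one dominance halfplane narrowing. Each narrowing is a rectangle, which keeps everything rectangular. The cost is $O(\log n)$ oracle calls per run and $k$ runs, giving depth-$k$ recursion with $(\log n)^{O(k)}$ branching per level, which fits within $(O(\log n))^{3k^2}$.

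\textbf{Main obstacle.} The delicate step is the junction between runs. The last point of a row run must be dominated by the first point of the following run in both coordinates simultaneously, which is a two-dimensional constraint rather than a single coordinate order. Choosing one $\prec$-minimal assignment may not be safe. My first attempt would be to note that for monotone patterns the dominance order is a partial order whose Pareto-minimal frontier is a decreasing staircase. Then, instead of a single extremal point, iterate over the $O(\log n)$ canonical pieces of the frontier: range-tree nodes, each giving a rectangular narrowing. This exploits that $12\cdots k$ inside a single slab is itself a lower-dimensional monotone query solvable by the base case $k=2$ from \cite{ben2024approximate}. I expect the bookkeeping of this frontier decomposition, and verifying that it only adds $(\log n)^{O(k)}$ factors per level, to be the hardest part.
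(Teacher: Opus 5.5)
Your proposal has a genuine gap, and it is the step you yourself flag as the main obstacle: the junction between runs is never resolved. The Pareto-frontier fix you sketch is not established. The frontier of feasible endpoints of a run can have $\Theta(n)$ steps, and you give no argument that the region it dominates splits into polylogarithmically many rectangular narrowings.

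The obstacle is self-inflicted, because you only split a tuple when $R_j$ lies strictly left of \emph{and} below $R_{j+1}$, i.e.\ at a cut of size zero. A cut of size one already suffices. If $R_i\neq R_{i+1}$ are distinct cells of one grid (or distinct canonical pieces of a well-aligned tuple), then at least one pair of their projections is disjoint. If that pair is disjoint in the wrong order, the answer is $0$. Otherwise the corresponding order constraint is implied by the domains, so only one of the two parallel edges between $v_i$ and $v_{i+1}$ survives, and it is a single coordinate order. All other constraints across $\{v_1,\ldots,v_i\}\mid\{v_{i+1},\ldots,v_k\}$ follow by transitivity. \Cref{lemma:P3} then decides the tuple with $O(\log n)$ calls to oracles for the monotone patterns of lengths $i$ and $k-i$. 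These oracles are simply the inductive structures, since a rectangular narrowing of a monotone instance is again a monotone multi-rectangle query.

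The two-dimensional dominance constraint you worry about arises only between consecutive variables placed in the \emph{same} cell, and you never need to cut there: cut at the first index with $R_i\neq R_{i+1}$. The only tuple that cannot be split is one where all $k$ rectangles coincide. The paper decomposes into canonical rectangles of the two-dimensional range tree, rather than grid cells (which are not in $\RR$), precisely so that this case is answered by a precomputed bit $[\min(k,\mathrm{LIS}(P\cap R))\ge k]$ for $R\in\RR$. This yields $Q(k)\le (O(\log n))^{3k}Q(k-1)$ and hence the stated bound, with no runs, row or column slab structures, or frontier decompositions.

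Two further points in your write-up fail as stated.
\begin{itemize}
\item Your preprocessing stores, for every $R\in\RR$ and every $k'\le k$, a structure of the same kind on $P\cap R$. This is circular: it includes $k'=k$ and the root rectangle, whose point set is all of $P$. Nothing decreases from one level of nesting to the next, since a single row run can have length $k$, so the claimed $k$ levels of nesting are not justified. The induction must instead be on the pattern length, with structures for $k'<k$ built on the whole point set and queried with rectangular narrowings.
\item Your claim that the $x$-constraints inside a row run are trivial once cells are ordered is false when a cell repeats within the run. Two variables in the same cell still require $v_i<v_{i+1}$.
\end{itemize}
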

In other words, \Cref{lemma:monotone patterns} states that there is an algorithm taking as input a CSP $M$ on $k$ variables (for constant $k$) whose primal graph is a path, and in $\tf(n)$ time outputs an oracle $O_1$ for $M$ that answers any rectangular-sub-instance $M'$ of $M$ in $\tf(1)$ time.

This is a natural extension of the technique presented in \cite{ben2024approximate} that solve the \rppm problem for the pattern $12$ in $\tf(n)$ preprocessing time and $\tf(1)$ query time.
The proof follows the same idea of using a range tree, and is deferred to the appendix.

\begin{proof}[Proof of \Cref{lemma:acyclic multigraphs} Using \Cref{lemma:monotone patterns}]
    We prove the lemma by induction on the number of non-fixed variables.
    The base case is trivial.
    To prove the induction step, we first show that if $G_M$ is a forest with edge multiplicities, then at least one of the following holds:
    \begin{enumerate}
        \item $G_M$ has a cut of size at most one.
        \item $G_M$ is a path where all edge has multiplicity two.
    \end{enumerate}
    If $G_M$ is not connected, it has a cut of size zero, so $(1)$ holds.
    We assume $G_M$ is connected, and therefore is a tree, with edge multiplicities.
    If one edge has multiplicity one, then $G_M$ has a cut of size one, and $(1)$ holds.
    We are left with proving that if every edge has multiplicity two, then $G_M$ is a path.
    Recall that the edge set of the full primal graph $G_\sigma$ is the union of two hamiltonian paths $P_x$ and $P_y$, where
    \begin{align*}
        P_{x} = (1 , 2 , \cdots , k)\;, &&
        P_{y} = (\si(1) , \si(2) , \cdots , \si(k))\;.
    \end{align*}
    If every edge $(i,i+1)$ has multiplicity two, then the path $P_y$ must share the exact same edges as the path $P_x$, hence:
    \begin{enumerate*}[label=$(\roman*)$]
        \item $P_y=P_x$ and $\sigma=12\cdots k$, or
        \item $P_y$ is the reverse of $P_x$ and $\sigma=k(k-1)\cdots 1$.
    \end{enumerate*}
    In both cases, $G_M$ is a path, completing the proof that at least one of the two cases holds.

    We explain how to solve $M$ in each of the two cases.
    Case $(2)$ is handled by \Cref{lemma:monotone patterns}, so we focus on case $(1)$.
    If $M$ has a cut of size at most one, let $M_1$ and $M_2$ be the two rectangular sub-instances on the two sides of the cut.
    Since they have fewer variables (and edges) than $M$, the primal graph of each sub-instance is also a forest with edge multiplicities.
    Thus, by the induction hypothesis we can construct two oracles $O_1$ and $O_2$ for $M_1$ and $M_2$, respectively, in $\tf(n)$ preprocessing time, where each oracle answers any rectangular-sub-instance of its corresponding instance in $\tf(1)$ time.
    Using these oracles, we can apply \Cref{lemma:P3} to solve $M$ in $\tf(1)$ time.
    This proves the existence of an oracle $O_{1,2}$ for $M$ that answers any rectangular-sub-instance of $M$ in $\tf(1)$ time, after $\tf(n)$ preprocessing time.
\end{proof}

\newcommand{\JJ}{\mathcal{J}}

\renewcommand{\CR}{\mathcal{C}}

\subsection{Solving Special \rppm Instances}
In this subsection, we isolate additional structural properties of the primal graph of a CSP instance that allow us to solve it efficiently.
We already showed how to solve instances whose primal graph, which is a multigraph, is acyclic, meaning the simple graph obtained by keeping only one copy of each edge is acyclic.
We now show how to solve instances whose primal graph is a multigraph that is a cycle with edge multiplicities.
Finally, we show how to solve instances whose primal graph can be turned into a cycle, by iteratively removing vertices of degree at most one.
We call this graph a whiskered cycle.
We use whiskered $c$-cycle to denote a whiskered cycle whose underlying cycle has at most $c$ vertices.

Our main result in this section is a data structure for solving special classes of \rppm instances $\Lambda$ whose primal graph is a whiskered cycle.
\begin{restatable}[Preprocessing-Query \rppm]{theorem}{thmRRPM}
    \label{thm2:csp Gm}
    There is a data structure that takes as input a permutation $\pi$ of length $n$ and two parameters $c\leq k$.
    After $\tf(n^{2-1/c})$ time, the data structure can answer in time $\tf(n^{1-1/c})$ on \rppm queries $\Lambda = (\pi,\sigma, (R_i)_{i \in [k]})$ with $|\sigma| \leq k$ where $G_\Lambda$ is a whiskered $c$-cycle.
\end{restatable}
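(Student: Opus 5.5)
First peel the whiskers. By \Cref{cor:iterated-degree-one-peeling}, after $\tf(n)$ preprocessing, any query $\Lambda$ whose primal graph is a whiskered $c$-cycle reduces in $\tf(1)$ time to a rectangular sub-instance $M^\circ$ on the cycle vertices only. This is because peeling degree-one vertices of a whiskered cycle leaves exactly the underlying cycle. The vertices of $M^\circ$ are $u_1,\dots,u_{c'}$ with $c'\le c$, and it has one constraint $u_\ell\prec_\ell u_{\ell+1}$ (indices mod $c'$) for each cycle edge. Edge multiplicities may occur. Any solution of $M^\circ$ extends to one of $\Lambda$. So it suffices to build, in $\tf(n^{2-1/c})$ time, a structure that decides any rectangular cycle-CSP on at most $c$ variables in $\tf(n^{1-1/c})$ time. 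There are only $O_k(1)$ cycle shapes: choices of $c'$, of the order type of each edge, and of which pattern indices appear. We build one structure per shape.

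The core is a heavy--light split with threshold $\tau=n^{1-1/c}$. The domains are $D_\ell=P_{i_\ell}\cap R_\ell$. Using a range tree on each color class, we first decompose each query rectangle $R_\ell$ into canonical rectangles, as in \Cref{ssec:range-trees}. This gives $\tf(1)$ sub-queries in which every domain is a canonical node of a fixed family $\RR$ of size $\tf(n)$.

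Call a canonical node \emph{light} if it holds at most $\tau$ points and \emph{heavy} otherwise.
\begin{itemize}
\item \textbf{Light case.} Some domain $D_\ell$ is light. We enumerate its at most $\tau$ points $p$ and fix $u_\ell=p$. Each of the two constraints incident to $u_\ell$ becomes a halfplane restriction on a neighbor's domain, and what remains is a path, hence acyclic. By \Cref{lemma:acyclic multigraphs} (with \Cref{lemma:monotone patterns} in the multiplicity-two case), we prepare in $\tf(n)$ time an oracle for this path instance answering each rectangular sub-instance in $\tf(1)$ time. The total cost is $\tf(\tau)=\tf(n^{1-1/c})$.
\item \textbf{Heavy case.} All $c'$ domains are heavy. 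Each level of the range tree contains at most $n/\tau$ heavy nodes, so the number of heavy $c'$-tuples is $\tf((n/\tau)^{c'})\le\tf(n^{c'/c})$.
\end{itemize}

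The heavy case is the main obstacle. A heavy canonical node is not yet the final domain, because the query also narrows it by the rectangle side not covered by the canonical decomposition. Precomputing answers per tuple therefore requires the answer to depend only on the tuple. I would handle this by refining with the grid decomposition $\Grid(\cdot)$ from \Cref{sec:rect-to-slab}, so that each cell is exactly the intersection of a $T_x$-node and a $T_y$-node. A heavy cell is then determined by a pair of nodes.

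For each heavy $c'$-tuple, I would precompute satisfiability by solving the cycle CSP directly. Fix the heavy variable $u_1$ and iterate over its points; each choice leaves a path, solved in $\tf(1)$ by the acyclic oracle. This costs $\tf(n)$ per tuple. Balancing gives total preprocessing $\tf((n/\tau)^{c}\cdot\tau)$, which equals $\tf(n^{2-1/c})$ for $\tau=n^{1-1/c}$. This matches the $p+(c-1)q=c$ heavy--light benchmark noted in the introduction.

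I expect two things to need care. The first is the counting of heavy tuples so that the bound really is $(n/\tau)^{c}$ rather than picking up an extra polylogarithmic power in the exponent. The second is the interaction between the grid refinement and the extra rectangle constraints coming from peeled whiskers. I would check that these constraints are still absorbed into the canonical decomposition, at only a polylogarithmic blowup in the number of sub-queries.
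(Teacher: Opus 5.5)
\textbf{Gap in the heavy case.} Your whisker peeling, canonical decomposition, threshold $\tau=n^{1-1/c}$ and light case all match the paper. The heavy-case accounting, however, does not go through. There are $\tilde O((n/\tau)^c)=\tilde O(n)$ heavy tuples. You solve each one by enumerating the points of $u_1$'s domain, which you yourself price at $\tilde O(n)$ per tuple, so the preprocessing is $\tilde O(n^2)$, not $\tilde O(\tau(n/\tau)^c)=\tilde O(n^{2-1/c})$. The ``balancing'' step silently replaces the per-tuple cost $n$ by $\tau$. A heavy node can contain up to $n$ points, so nothing in your argument pays only $\tau$ per tuple.

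\textbf{The missing idea, as in \Cref{claim2:heavy}.} Solve from scratch only the tuples of \emph{barely heavy} canonical nodes, those with between $\tau$ and $4\tau$ points.
\begin{itemize}
\item There are $\tilde O((n/\tau)^c)$ such tuples. Each is $4\tau$-light, so the light routine solves it in $\tilde O(\tau)$.
\item Every other heavy tuple is then filled bottom-up by monotonicity. Sizes roughly halve along a root-to-leaf path of the inner tree, so each point of a heavy node lies in some barely heavy descendant. Hence any witness for a heavy tuple is already a witness for some barely heavy descendant tuple.
\item Each barely heavy tuple has only $\tilde O(1)$ ancestor tuples to update.
\end{itemize}

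\textbf{An alternative repair.} You can keep your direct approach if you branch on the \emph{smallest} domain of each tuple instead of on $u_1$. Group tuples by the size class $[2^j,2^{j+1})$ of their minimum node. There are $\tilde O((n/2^j)^c)$ such tuples, each costing $\tilde O(2^j)$. The total is $\tilde O\bigl(\sum_{2^j\ge\tau} n^c 2^{-j(c-1)}\bigr)=\tilde O(n^c\tau^{-(c-1)})=\tilde O(n^{2-1/c})$ for $c\ge 2$.

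\textbf{The grid refinement is unnecessary.} The canonical rectangles of the two-dimensional range tree are inner nodes $u$ with $P_u\subseteq Q$. They already partition $P\cap Q$ exactly, with no uncovered rectangle side, so the answer for a tuple of canonical nodes depends only on the tuple. Grid cells would also work, since each point lies in only $O(\log^2 n)$ pairs of one-dimensional nodes, but they add nothing here. Likewise, the halfplane narrowings left by peeled whiskers just make the residual cycle instance a rectangular query, which the canonical decomposition handles directly.
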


We prove the above theorem using a slightly modified version.

\begin{restatable}[Preprocessing-Query \rppm: modified version]{theorem}{thmRRPMB}
    \label{thm3:csp Gm}
    There is a data structure that takes as input a permutation $\pi$ of length $n$ and two parameters $c$ and $k$, where $c\leq k$.
    After $\tf(n^{2-1/c})$ time, the data structure can answer in time $\tf(n^{1-1/c})$ on \rppm queries $\Lambda = (\pi,\sigma, (R_i)_{i \in [k]})$ with $|\sigma| \leq k$, assuming $G_\Lambda$ is a subgraph of a $c$-cycle (with possible edge multiplicities).
\end{restatable}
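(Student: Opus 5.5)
We would prove \Cref{thm3:csp Gm} by a heavy--light argument on a single cycle vertex, combined with the acyclic oracle of \Cref{lemma:acyclic multigraphs}. Let $\Lambda$ be a query whose primal graph $G_\Lambda$ is a subgraph of a cycle $v_{i_1},\ldots,v_{i_c}$, possibly with parallel edges. If $G_\Lambda$ is acyclic, \Cref{lemma:acyclic multigraphs} answers the query in $\tf(1)$ time after $\tf(n)$ preprocessing. So assume $G_\Lambda$ is exactly a $c$-cycle with multiplicities. Fixing any one cycle variable to a single point turns the remaining instance into a rectangular sub-instance whose primal graph is a path with multiplicities. Indeed, the constraints incident to the fixed variable become axis-parallel halfplane restrictions on the domains of its two neighbours, so the result is still a rectangular narrowing. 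Each such fixed instance can then be answered in $\tf(1)$ time by the acyclic oracle.

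\textbf{Light case.} Set a threshold $\tau=n^{1-1/c}$. Since $k$ and $c$ are constants, the query algorithm first computes $|D_{i}|=|P\cap R_{i}|$ for every cycle variable by orthogonal range counting. If some cycle variable has $|D_{i}|\le\tau$, we enumerate its $\le\tau$ points. We can list them in $O(\tau+\log n)$ time from a range tree, or equivalently by repeated successor queries. For each point, we restrict the neighbours' domains and call the acyclic oracle. This costs $\tf(\tau)=\tf(n^{1-1/c})$.

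\textbf{Heavy case.} Otherwise every cycle domain contains more than $\tau$ points. Here we want to reduce to canonical rectangles. Using the grid/canonical decomposition of \Cref{ssec:range-trees}, each $R_i$ is split into $\tf(1)$ canonical rectangles from $\RR$, giving $\tf(1)$ tuples of canonical rectangles. For each tuple, any canonical piece with at most $\tau$ points is handled as in the light case. It remains to handle tuples in which every cycle variable is assigned a canonical rectangle containing more than $\tau$ points.

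The main obstacle is bounding the number of heavy canonical configurations so that they can be precomputed within $\tf(n^{2-1/c})$. At a fixed level of the range tree, the canonical rectangles are disjoint, so at most $n/\tau=n^{1/c}$ of them are heavy; over all $O(\log^2 n)$ levels this gives $\tf(n^{1/c})$ heavy canonical rectangles. There are thus $\tf(n^{c/c})=\tf(n)$ heavy $c$-tuples, and for each the non-cycle (whisker) variables are removed by \Cref{cor:iterated-degree-one-peeling}. But a single canonical tuple does not determine the query: the whisker domains and parallel-edge constraints still vary.

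I would try the following fix, which I expect to be the delicate step. For each heavy tuple, precompute the answer for the whole cycle part by the light enumeration inside the smallest heavy rectangle, at cost $\tf(n)$ per tuple. This is too expensive: it gives total $\tf(n^{2})$. To reach $\tf(n^{2-1/c})$, I would instead precompute, for each heavy vertex $v_{i_1}$-rectangle $R$ and each point $p\in R$, the answers of the path oracle restricted to the remaining heavy canonical rectangles. There are $\tf(n^{1/c})\cdot$ choices per other variable, so the table entries can be bounded and charged as $\sum_R |R|\cdot \tf(n^{1-1/c})=\tf(n^{2-1/c})$ using $\sum_{R\in\RR}|R|=\tf(n)$. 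Balancing $\tau$ so that preprocessing is $n\cdot n/\tau$ and query time is $\tau$ yields the stated tradeoff $p+q=\ldots$ with $(2-1/c,\,1-1/c)$. Finally, \Cref{thm2:csp Gm} follows by first peeling whiskers with \Cref{cor:iterated-degree-one-peeling}, which leaves a subgraph of a $c$-cycle.
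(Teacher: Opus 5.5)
Your argument is correct in substance. It differs from the paper only in how the table of answers for heavy canonical tuples is filled.

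\textbf{The paper's route.} In \Cref{claim2:heavy}, the paper computes answers directly only for tuples of ``just-heavy'' rectangles, namely $S_H$, those with between $\tau$ and $4\tau$ points. Each such tuple costs $\tilde{O}(\tau)$ via the light algorithm, and there are $\tilde{O}((n/\tau)^c)$ of them. Each answer is then pushed up to the $\tilde{O}(1)$ ancestor tuples by monotonicity. This relies on the fact that the just-heavy descendants of any heavy node cover its points.

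\textbf{Your route.} You fill every heavy tuple $(R_1,\ldots,R_c)$ directly: enumerate the points of one fixed cycle coordinate's rectangle and call the path oracle for each. You charge the total as $\sum_{R_1\in\RR_H}|R_1|\cdot|\RR_H|^{c-1}\le\tilde{O}(n)\cdot\tilde{O}((n/\tau)^{c-1})$, using $\sum_{R\in\RR}|R|=\tilde{O}(n)$.

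\textbf{Comparison.} Since $\tau(n/\tau)^c=n(n/\tau)^{c-1}$, the two bounds coincide for every $\tau$. Yours is simpler: it needs neither the $S_H$ layer nor the bottom-up extension pass and its covering argument. The paper's version keeps every direct computation at cost $\tilde{O}(\tau)$.

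\textbf{Points to fix in your write-up.}
\begin{enumerate}
\item Your worry that a canonical tuple does not determine the answer is unfounded in \Cref{thm3:csp Gm}. There are no whiskers here; they are peeled beforehand in \Cref{thm2:csp Gm}. The parallel edges are fixed by $\sigma$ and the tuple, so one table per pattern is well defined. Fill only tuples whose primal graph is a subgraph of a $c$-cycle; narrowing only deletes constraints, so these are the only tuples a valid query reaches.
\item Your $\tilde{O}(n^2)$ estimate for per-tuple enumeration is just a loose count. Your ``fix'' is the same computation with the correct charging.
\item You must store, for each heavy tuple, the OR over $p\in R$ of the path-oracle answers. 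A table of per-point answers alone would force $\Theta(|R|)$ time at query.
\item The preprocessing cost is $n\cdot(n/\tau)^{c-1}$, not $n\cdot n/\tau$. With $\tau=n^{1-1/c}$ this gives $n^{2-1/c}$.
\end{enumerate}
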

\begin{proof}[Proof of \Cref{thm2:csp Gm} using \Cref{thm3:csp Gm}]
    We are given a permutation $\pi$ of length $n$ and two parameters $c\leq k$, and construct a data structure $\AC$, that can answer \rppm queries $\Lambda = (\pi,\sigma, (R_i)_{i \in [k]})$ with $|\sigma| \leq k$ where $G_\Lambda$ is a $c$-cycle, in $\tf(n^{1-1/c})$ time.
    We use $\AC$ to get a new data structure $\AC'$ that can answer \rppm queries $\Lambda' = (\pi,\sigma, (R_i)_{i \in [k]})$ with $|\sigma| \leq k$ where $G_\Lambda$ is a whiskered $c$-cycle, in $\tf(n^{1-1/c})$ time.
    Given the query $\Lambda'$, we peel the vertices of degree at most one in $G_{\Lambda'}$ using \Cref{cor:iterated-degree-one-peeling}.
    We thus get a new query $\Lambda''$ whose primal graph $G_{\Lambda''}$ has no vertices of degree at most one, keeping only the vertices of the underlying $c$-cycle.
    We then answer $\Lambda''$ using $\AC$, and return the answer for $\Lambda'$.
\end{proof}
For the rest of this section we prove \Cref{thm3:csp Gm}. We start with an intuition.

We use two ideas.
The first one is to use heavy-light decomposition of the queries.
We say that a rectangle $R$ is $\tau$-\emph{heavy} if $|P\cap R|>\tau$, i.e., if it contains more than $\tau$ points of $P$, and otherwise we say that it is $\tau$-\emph{light}.
A query $\Lambda=(\pi,\sigma, (R_i)_{i \in [k]})$ is $\tau$-\emph{light} if it contains at least one light rectangle, and \emph{heavy} otherwise.
We solve light queries by a simple brute-force approach, and preprocess all answers for heavy queries.

The second idea, which we already used, is to reduce an \rppm query $\Lambda=(\pi,\sigma, (R_i)_{i \in [k]})$ with general ranges $R_i$ to polylogrithmically many \rppm queries $\Lambda'=(\pi,\sigma, (R'_i)_{i \in [k]})$ where each $R'_i$ is a rectangle from a family $\RR$ of $\tf(n)$ rectangles.
This is obtained by range tree decomposition, as explained in \Cref{sec:preliminaries}.

We first explain how to solve light queries, and then we explain how to preprocess all heavy queries.
\begin{claim}\label{claim2:light}
    Every $\tau$-light query $\Lambda=(\pi,\sigma, (R_i)_{i \in [k]})$, where the primal graph $G_\Lambda$ is a subgraph of a cycle (with edge multiplicities), can be solved in $\tf(\tau)$ time.
\end{claim}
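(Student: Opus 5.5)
The plan is to fix a light rectangle and brute-force over its points, reducing the remaining problem to an acyclic CSP that the earlier machinery solves in $\tf(1)$ time. Let $\Lambda=(\pi,\sigma,(R_i)_{i\in[k]})$ be $\tau$-light, and fix an index $i$ with $|P\cap R_i|\le\tau$. The domain $D_i$ of $v_i$ then has at most $\tau$ elements. We enumerate all of them, for instance by reporting the points of $P_i\cap R_i$ with a range tree in $\tf(1+\tau)$ time. For each candidate $p\in D_i$ we set $v_i=p$ and delete $v_i$. Each constraint incident to $v_i$ has the form $v_i\prec v_j$ or $v_j\prec v_i$ for one of the four coordinate orders. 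Such a constraint becomes the unary condition that $v_j$ lies in an axis-parallel halfplane determined by $p$. Intersecting $D_j$ with that halfplane is a rectangular narrowing. The result is a rectangular sub-instance $M_p$ on $V\setminus\{v_i\}$, and $M$ is satisfiable if and only if some $M_p$ is satisfiable.

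The key structural point is that deleting a vertex from a cycle leaves a path. Since $G_\Lambda$ is a subgraph of a cycle with edge multiplicities, the primal graph $G_{M_p}$ is therefore a forest with edge multiplicities. Here we must be careful that only constraints are removed, which keeps us inside a subgraph of $G_\sigma$. By \Cref{lemma:acyclic multigraphs}, after $\tf(n)$ preprocessing we have an oracle that decides any rectangular sub-instance of such a forest-type CSP in $\tf(1)$ time. This preprocessing is built once for each of the $O_k(1)$ possible ways of choosing which variable is removed, or equivalently which path shape remains. Querying the oracle on each $M_p$ costs $\tf(1)$, so the total query cost is $\tf(\tau)$. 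A satisfying assignment, if needed, comes from \Cref{claim:binary-search} together with $v_i=p$.

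The main obstacle is justifying that \Cref{lemma:acyclic multigraphs} applies uniformly to the sub-instances $M_p$. Their primal graph depends on which constraints survive, and their domains are narrowings of the original color classes. We handle this by preparing, for each pattern $\sigma$ of length at most $k$ and each vertex $i$ of the cycle, one oracle for the forest obtained by deleting $i$. Any further missing constraints only yield a subgraph, and the inductive proof of \Cref{lemma:acyclic multigraphs}, which uses cuts of size at most one and \Cref{lemma:monotone patterns}, handles subgraphs equally well. The number of such oracles is a function of $k$ only, so preprocessing stays $\tf(n)$.

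A cheaper fallback avoids the oracles altogether. Iterated degree-one peeling (\Cref{cor:iterated-degree-one-peeling}) applies directly whenever $G_{M_p}$ is a simple forest. It fails only when doubled edges remain, which is exactly the situation covered by \Cref{lemma:monotone patterns}.
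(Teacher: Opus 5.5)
Your proposal is correct and is essentially the paper's proof. You enumerate the at most $\tau$ points of a light rectangle, fix that variable so its incident constraints become halfplane narrowings, note that deleting a vertex from a subgraph of a multi-cycle leaves a forest with multiplicities, and decide each residual rectangular sub-instance in $\tf(1)$ time via \Cref{lemma:acyclic multigraphs}. Your remark about preparing one oracle per deleted vertex makes explicit a detail the paper leaves implicit.

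One small caveat, which concerns only your optional fallback. Peeling can stall on a path whose two end edges are doubled but whose interior contains a single edge, and such a path is not itself a monotone instance. So that route would still need the cut-of-size-one step of \Cref{lemma:P3}; your main argument does not depend on it.
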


\begin{proof}[Proof of \Cref{claim2:light}]
    Let $M=(V,\mathcal{D},C)$ be the CSP instance corresponding to $\Lambda$, with primal graph $G_M$.
    If $G_M$ is acyclic, then $M$ can be solved in $\tf(1)$ time by \Cref{lemma:acyclic multigraphs}, so assume $G_M$ is a cycle.

    Let $R_i$ be a rectangle minimizing $s=|P\cap R_i|$; since $\Lambda$ is light, $s\le\tau$.
    For each of the $s$ points $p\in P\cap R_i$, fixing the $i$th variable to $p$ yields a CSP instance $M_p$ whose primal graph is acyclic, as removing a vertex from a cycle leaves a path.
    Then $M$ is satisfiable if and only if some $M_p$ is, and each $M_p$ is solved in $\tf(1)$ time by \Cref{lemma:acyclic multigraphs}, for a total of $\tf(s)\le\tf(\tau)$.
\end{proof}

We move on to preprocessing heavy queries.
Instead of preprocessing all possible heavy query, we use a range tree decomposition of the plane $[n]\times[n]$ into a family $\RR$ of $\tf(n)$ rectangles, with the property that every rectangle $Q$ has a partition into $O(\log^2 n)$ rectangles from $\RR$, where such partition can be computed in $\tf(1)$ time.
Then, we reduce each heavy query $\Lambda=(\pi,\sigma, (R_i)_{i \in [k]})$ to $\tf(1)$ queries $\Lambda'=(\pi,\sigma, (R'_i)_{i \in [k]})$ where each $R'_i$ is a rectangle from $\RR$.
The family of rectangles $\RR$ is obtained from a standard two-dimensional range tree decomposition of the plane, as explained in \Cref{sec:preliminaries}.
It has three important properties:
\begin{enumerate}
    \item $|\RR|=O(n\log^2 n)$.
    \item Every axis-parallel rectangle $Q$ can be partitioned into $O(\log^2 n)$ rectangles from $\RR$, denoted by $\CR(Q)$. Furthermore, $\CR(Q)$ can be found in $\tf(1)$ time.
    \item For every $R\in \RR$, let $|R|$ denote the number of points in $P$ that fall in $R$. For every $x\ge 1$, there are at most $O(n\log^2 n/x)$ rectangles in $\RR$ with $|R|\ge x$.
\end{enumerate}

We first prove the following restricted version, and then reduce the general problem to it.
\begin{claim}\label{claim2:heavy}
    Every $\tau$-heavy query $\Lambda=(\pi,\sigma, (R_i)_{i \in [c]})$, with  $(i)$ $R_i$ is a rectangle from $\RR$ for every $i\in[c]$, and $(ii)$
    the primal graph $G_\Lambda$ is a subgraph of a $c$-cycle (with edge multiplicities), can be solved in $\tf(1)$ time, after $\tf(\tau (n/\tau)^c)$ preprocessing time. The preprocessing step takes $\pi$, $\RR$, and $c$ as input.
\end{claim}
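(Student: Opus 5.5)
The plan is to precompute a table indexed by all $c$-tuples of $\tau$-heavy rectangles of $\RR$. Each entry records whether the corresponding \rppm query is satisfiable, for every pattern $\sigma$ of length $c$ and every admissible choice of retained constraints (a constant number of combinations). A heavy query with all $R_i\in\RR$ is then answered by a single table lookup in $\tf(1)$ time, e.g., via a hash table or by indexing heavy rectangles by their rank.

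\textbf{Counting the tuples.} By property~3 of $\RR$, the number of $\tau$-heavy rectangles is $H=O(n\log^2 n/\tau)=\tf(n/\tau)$. So there are $H^c=\tf((n/\tau)^c)$ tuples, and it suffices to fill each entry in amortized $\tf(\tau)$ time.

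\textbf{Filling the table.} Filling each entry independently from scratch would cost $\tf(\min_i|R_i|)$ by \Cref{claim2:light}, which is fine only when some rectangle has $O(\tau)$ points. Heavy rectangles can be much larger, so I would amortize by walking down the range tree.

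Fix the $c$-cycle, and pick a variable $v_1$ with domain $R_1$. If $R_1$ is a leaf-level heavy rectangle, or more generally has at most $2\tau$ points, I solve the entry by fixing $v_1$ to each of its points. Removing $v_1$ leaves a path, so each resulting instance is solved by \Cref{lemma:acyclic multigraphs} in $\tf(1)$ time, for $\tf(\tau)$ per entry.

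Otherwise $R_1$ has two children $R_1',R_1''$ in $\RR$ whose point sets partition $P\cap R_1$. The CSP is then satisfiable iff it is satisfiable with $v_1$ restricted to $R_1'$ or with $v_1$ restricted to $R_1''$. If a child is still heavy, that restricted instance is itself a table entry. If a child is light, it has at most $\tau$ points, so the restricted instance is solved in $\tf(\tau)$ time by \Cref{claim2:light}.

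Processing the tuples in increasing order of $|R_1|$, i.e.\ bottom-up in the tree of the first coordinate, each entry costs $O(1)$ lookups plus at most $\tf(\tau)$ brute force. The total preprocessing is therefore $\tf(\tau\cdot(n/\tau)^c)$, as claimed.

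\textbf{Main obstacle.} The delicate point is the structure of $\RR$, which is two-level (outer $x$-tree, inner $y$-trees). I need each non-minimal rectangle $R\in\RR$ to split into $O(1)$ rectangles of $\RR$ that partition its points. For an inner node this is given by its two inner children. For the root of an inner tree $T_v$, it equals the slab $R_v$, which is the union of the inner roots of the two outer children of $v$. So in all cases there is a constant-size partition into members of $\RR$, possibly of smaller or equal size, and I would order the dynamic program by $(\text{outer depth},\text{inner depth})$ to keep it well-founded.

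I would also double-check that restricting $v_1$'s domain keeps the instance a rectangular sub-instance, so that \Cref{claim2:light} and \Cref{lemma:acyclic multigraphs} apply. It does, because the new domain is again a rectangle of $\RR$. Finally, I would confirm that the $\tf(1)$ oracles from \Cref{lemma:acyclic multigraphs} need only $\tf(n)$ preprocessing once, shared by all entries.
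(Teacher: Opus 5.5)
Your proposal is correct, and it organizes the dynamic program differently from the paper.

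The paper first fills only the ``medium'' tuples, in which every rectangle has between $\tau$ and $4\tau$ points, by brute force via \Cref{claim2:light}. It then pushes values upward: for each medium tuple $X$ it enumerates the polylogarithmically many tuples $Y$ of range-tree ancestors and sets $B[Y]\gets\max(B[Y],B[X])$. Correctness of that scheme rests on two things the paper leaves implicit. One is a covering fact: every witness point of a heavy canonical rectangle lies in some descendant with $\Theta(\tau)$ points, since sizes roughly halve along root-to-leaf paths. The other is a somewhat delicate notion of ``ancestor'' that spans both the outer and inner trees.

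You instead recurse on a single coordinate, using the exact identity
$B[R_1,R_2,\ldots,R_c]=B[R_1',R_2,\ldots,R_c]\vee B[R_1'',R_2,\ldots,R_c]$
for the two children of $R_1$. You fall back to brute force only when $|R_1|\le 2\tau$ or a child is light. This needs only that each rectangle of $\RR$ with more than one point is partitioned by two strictly smaller members of $\RR$. You checked this both for inner nodes and for inner-tree roots, whose children are the roots of the two outer children. Your route avoids enumerating ancestor tuples and makes the correctness of each entry immediate. The cost is the same as the paper's: $\tilde{O}(\tau)$ per entry over $\tilde{O}((n/\tau)^c)$ entries.

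One point worth stating explicitly: restricting a domain can only make more order constraints redundant. So every looked-up or brute-forced sub-instance still has a primal graph inside a $c$-cycle, which is what licenses your uses of \Cref{claim2:light} and \Cref{lemma:acyclic multigraphs}.
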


\begin{proof}[Proof of \Cref{claim2:heavy}]
    \newcommand{\Rhp}{\RR_H^\prime}
    Let $\RR_H$ denote the subset of $\RR$ that contains only rectangles with at least $\tau$ points from $P$.
    There are at most $O(n\log^2 n/\tau)$ rectangles in $\RR_H$, by the properties of range trees.
    We also use $S_H$ to denote the set of all rectangles in $\RR_H$ that contains at most $4\tau$ points from $P$.
    Clearly, $|S_H|\leq |\RR_H|$, as $S_H\subseteq \RR_H$.
    We construct a table $B$ that stores the answers to all queries $\Lambda=(\pi,\sigma, (R_i)_{i \in [c]})$ where each $R_i$ is in $\RR_H$.

    To construct the table $B$, we first fill the entries that correspond to queries $X=(R_1,\ldots,R_k)$ where each $R_j$ is in $S_H$.
    There are $\tf((n/\tau)^k)$ such entries to fill, and each can be filled in time $\tf(\tau)$ by \Cref{claim2:light}, since each entry corresponds to a $4\tau$-light query.

    Now, we explain how to fill the rest of the table $B$ using only the entries that we have already filled.
    Let $X=(R_1,\ldots,R_k)$ and $Y=(R_1',\ldots,R_k')$ be two queries.
    We say that $Y$ is an \emph{extension} of $X$ if for every $j\in[k]$, the node $u_j'$ in the range tree corresponding to $R_j'$ is an ancestor of the node $u_j$ corresponding to $R_j$.
    This implies that $R_j\subseteq R_j'$ for every $j\in[k]$.
    Note that if $Y$ is an extension of $X$, then, $\ind_\sigma(Y)\geq \ind_\sigma(X)$ by monotonicity, since any witness for $X$ is also a witness for $Y$.
    To fill the entire table, we perform a bottom-up dynamic programming.
    After computing $B[X]$ for $X=(R_1,\ldots,R_k)$ with rectangles in $S_H$,
    we consider all extensions $Y$ of $X$ and set $B[Y]=\max(B[Y],B[X])$.

    Finding all extensions of $X$ can be done efficiently using the structure of the range tree; namely, by traversing the paths from the nodes corresponding to the rectangles in $X$ to the root of the range tree, and looking for all tuples of nodes on these paths that correspond to queries $Y$.
    In other words, for every $X=(R_1,\ldots,R_k)$ with rectangles in $S_H$, we have to update $\tf(1)$ entries in $B$, and therefore the total time for this step is $\tf(\tau(n/\tau)^k)$, which is the same as the time to fill the entries corresponding to queries with rectangles in $S_H$.
\end{proof}

Since every query is either $\tau$-light or $\tau$-heavy, we can combine \Cref{claim2:light} and \Cref{claim2:heavy} to obtain:
\begin{conclusion}\label{conc:tau queries}
    There is a data structure that takes as input a permutation $\pi$ of length $n$ and a parameter $c$.
    After $\tf(n^{2-1/c})$ preprocessing, the data structure can answer \rppm queries $\Lambda = (\pi,\sigma, (R_i)_{i \in [c]})$, in time $\tf(n^{1-1/c})$, assuming $(i)$ that $G_\Lambda$ is a subgraph of a $c$-cycle (with possible edge multiplicities), and $(ii)$ that each $R_i$ is a rectangle from $\RR$.
\end{conclusion}
Note that \Cref{conc:tau queries} is a restricted version of \Cref{thm3:csp Gm}, since it assumes that each $R_i$ is a rectangle from $\RR$.
\begin{proof}[Proof of \Cref{conc:tau queries}]
    We set $\tau=n^{1-1/c}$.
    By \Cref{claim2:light}, every $\tau$-light query can be answered in $\tf(\tau)=\tf(n^{1-1/c})$ time.
    By \Cref{claim2:heavy}, every $\tau$-heavy query can be answered in $\tf(1)$ time after $\tf(\tau(n/\tau)^c)=\tf(n^{2-1/c})$ preprocessing time.
\end{proof}
We now show how to remove the restriction $(ii)$ in \Cref{conc:tau queries}, and thus prove \Cref{thm3:csp Gm}:

\begin{proof}[Proof of \Cref{thm3:csp Gm}]
    Let $\Lambda=(\pi,\sigma, (R_i)_{i \in [c]})$ be a \rppm query, where $G_\Lambda$ is a subgraph of a $c$-cycle (with edge multiplicities).
    For each $j\in[c]$, we find a partition $\CR(R_j)$ of $R_j$ into $O(\log^2 n)$ rectangles from $\RR$ in $\tf(1)$ time.
    Define
    \begin{align*}
        \JJ=\CR(R_1)\times\cdots\times \CR(R_c)\;.
    \end{align*}
    Each element of $\JJ$ is a tuple $(R_1',\ldots,R_c')$ with $R_j'\in \CR(R_j)$ for every $j\in[c]$.
    Since every $R_j'$ lies in $\RR$, each $(R_1',\ldots,R_c')$ is a restricted \rppm query.
    Moreover,
    \begin{align*}
        \abs{\JJ}=\prod_{j=1}^c \abs{\CR(R_j)}=O(\log^{2c} n)=\tf(1)\;.
    \end{align*}
    The query algorithm evaluates $\ind_\sigma((R_1',\ldots,R_c'))$ for every $(R_1',\ldots,R_c')\in \JJ$ using \Cref{conc:tau queries}, and returns their disjunction.

    ~\textbf{Correctness.}
    Let $X=(R_1,\ldots,R_c)$ be the original query.
    We claim that
    \[
        \ind_\sigma(X)=\bigvee_{Y\in \JJ}\ind_\sigma(Y).
    \]
    Recall that $\ind_\sigma(X)=1$ if there exist indices $t_1<\cdots<t_c$ such that the sequence $A[t_1],\ldots,A[t_c]$ is order-isomorphic to $\sigma$, and for every $j\in[c]$ the point $(t_j,A[t_j])$ is in the rectangle $R_j$.

    $(\Longrightarrow)$ Suppose that $\ind_\sigma(X)=1$.
    Then there exist indices $t_1<\cdots<t_c$ such that $(t_j,\pi(t_j))\in R_j$ for every $j\in[c]$, and the sequence $\pi(t_1),\ldots,\pi(t_c)$ is order-isomorphic to $\sigma$.
    Since $\CR(R_j)$ is a partition of $R_j$, for each $j$ there is a unique rectangle $R_j'\in \CR(R_j)$ containing $(t_j,\pi(t_j))$.
    Hence the tuple $Y=(R_1',\ldots,R_c')$ belongs to $\JJ$, and $\ind_\sigma(Y)=1$, by the same witness $t_1<\cdots<t_c$.
    Therefore the right-hand side is $1$.

    $(\Longleftarrow)$ Suppose that $\ind_\sigma(Y)=1$ for some $Y=(R_1',\ldots,R_c')\in \JJ$.
    Since $R_j'\subseteq R_j$ for every $j\in[c]$, any witness for $Y$ is also a witness for $X$. Thus $\ind_\sigma(X)=1$.

    ~\textbf{Running Time.}
    By \Cref{conc:tau queries}, each restricted query is answered in $\tf(n^{1-1/c})$ time after $\tf(n^{2-1/c})$ preprocessing.
    Since $\abs{\JJ}=\tf(1)$, the total query time remains $\tf(n^{1-1/c})$.
    The preprocessing is exactly that of the restricted data structure, namely $\tf(n^{2-1/c})$.
    This proves \Cref{thm3:csp Gm}.
\end{proof}

\section{Approximate Counting 6-Patterns}
\label{sec:approx counting6}
This section is devoted to proving \Cref{thm:approx counting6}.
\thmApproxCountingSix* %

To prove it, we first explain how to solve special cases of \rppm instances based on their primal graph.

\subsection{Global Separators}
\label{ssec:global-separators}

In this subsection we show how to reduce a CSP instance~$M$ to $\tf(n)$ simpler
sub-instances, each of which has two fewer edges in its constraint graph
(one $P_x$-edge and one $P_y$-edge removed).
The technique is adapted from~\cite[Section~6.1]{ben2024approximate}.

\medskip

Let $\Lambda=(\pi,\sigma,(R_i)_{i\in[k]})$ be an \rppm instance with
corresponding CSP instance $M=(V,\mathcal{D},C)$ and constraint graph~$G_M$.
Recall that $G_M$ is a subgraph of the full constraint graph~$G_\sigma$,
whose edge set is the union of the two Hamiltonian paths
\[
    P_x = (1,2,\ldots,k)
    \qquad\text{and}\qquad
    P_y = (\sigma^{-1}(1),\sigma^{-1}(2),\ldots,\sigma^{-1}(k)).
\]
We call the edges of~$P_x$ the \emph{$x$-edges} and those of~$P_y$ the \emph{$y$-edges}.
To state the main lemma of this section, we need to define the sub-instances we obtain from~$M$ by removing one $x$-edge and one $y$-edge.
For that we need to use the structure of the two-dimensional range tree built on the point set $P=\{(t,\pi(t)):t\in[n]\}$, which we recall next.
\paragraph{Range Tree Recap.}
The reduction uses the two-dimensional range tree~$\TT$ built on the point set $P=\{(t,\pi(t)):t\in[n]\}$, or simply one range tree for every color class $
(P_1,\ldots, P_k)$.
We briefly recall its structure; full details appear in \Cref{ssec:range-trees}.
The range tree has two levels.
The first level is a balanced binary tree on the $x$-coordinates, called the \emph{outer tree}.
Each outer node~$v$ is associated with a vertical slab
$R_v=[a_v,b_v]\times[1,n]$ and contains the subset $P_v=P\cap R_v$.
The root corresponds to $[1,n]\times[1,n]$; the two children of a node with
slab $[a,b]\times[1,n]$ correspond to
$[a,c]\times[1,n]$ and $[c+1,b]\times[1,n]$, where $c$ is the midpoint of $[a,b]$. We stop splitting a slab into two rectangles, as soon as it contains a single point of~$P$, and therefore each leaf of the outer tree contains a single point of~$P$.

The second level augments each outer node~$v$ with an \emph{inner tree}~$T_v$,
which is a balanced binary tree on the $y$-coordinates of~$P_v$.
Each inner node~$w$ of~$T_v$ is associated with a rectangle
$R_w=[a_v,b_v]\times[c_w,d_w]$ and contains $P_w=P_v\cap R_w$.
The root of~$T_v$ corresponds to $R_v$ itself, and each leaf contains a single point of~$P_v$.
In the inner tree~$T_v$, the two children of a node with rectangle $[a_v,b_v]\times[c,d]$ correspond to $[a_v,b_v]\times[c,e]$ and $[a_v,b_v]\times[e+1,d]$, where $e$ is not the midpoint of $[c,d]$, but rather the median of the $y$-coordinates of the points in $P_v\cap ([a_v,b_v]\times[c,d])$, so that the two children contain roughly the same number of points of~$P_v$.

~\\We are now ready to define the set of sub-instances.
Every sub-instance is associated with four parameters $(v,w,h_x,h_y)$ where $v$ is an outer node of~$\TT$, $w$ is an inner node in $T_v$ the inner tree of~$v$, and $h_x,h_y$ are integers in $\set{1,2,\ldots,k-1}$.
Therefore, the number of sub-instances is $\tf(nk^2)$ as we soon prove.

\paragraph{The Instance $M_{v,w,h_x,h_y}$.}
Let $R_v=[a_v,b_v]\times[1,n]$ and $R_w=[a_v,b_v]\times [c_w,d_w]$ be the rectangles associated with $v$ and $w$, respectively.
Suppose the two children of~$v$ are $v_L,v_R$ with rectangles $[a_v,e_v]\times[1,n]$ and $[e_v+1,b_v]\times[1,n]$, respectively, and the two children of~$w$ are $w_B,w_T$ with rectangles $[a_v,b_v]\times[c_w,f_w]$ and $[a_v,b_v]\times[f_w+1,d_w]$, respectively.
For $i\in[k]$:
\begin{itemize}
    \item Narrow $D_i$ to $\set{t\in D_i\mid (t,\pi(t))\in R_w}$.
    \item if $i\leq h_x$, narrow $D_i$ to $D_i\cap \set{t\in [n]\mid t\le e_v}$.
    \item if $i>h_x$, narrow $D_i$ to $D_i\cap \set{t\in [n]\mid t> e_v}$.
    \item if $\sigma(i)\leq h_y$, narrow $D_i$ to $D_i\cap \set{t\in [n]\mid \pi(t)\le f_w}$.
    \item if $\sigma(i)>h_y$, narrow $D_i$ to $D_i\cap \set{t\in [n]\mid \pi(t)> f_w}$.
\end{itemize}
Note that as a byproduct of the above narrowing, the $x$-edge $(h_x,h_x+1)$ and the $y$-edge $(\sigma^{-1}(h_y),\sigma^{-1}(h_y+1))$ become redundant as the domain of the $h_x$th and $(h_x+1)$th variables are now separated by the vertical line $x=e_v$, and the domain of the $\sigma^{-1}(h_y)$th and $(\sigma^{-1}(h_y+1))$th variables are now separated by the horizontal line $y=f_w$
Therefore, the edges $(h_x,h_x+1)$ and $(\sigma^{-1}(h_y),\sigma^{-1}(h_y+1))$ can be removed from the constraint graph.
We emphasize that the instance $M_{v,w,h_x,h_y}$ is obtained from $M$ by narrowing the domains of the variables.
Therefore, any satisfying assignment to $M_{v,w,h_x,h_y}$ is also a satisfying assignment to $M$.

~\\We are ready to state the main lemma of this section showing that $M$ is satisfiable if and only if at least one of the sub-instances $M_{v,w,h_x,h_y}$ is satisfiable.
\begin{lemma}[Global Separators]\label{lem:global-sep}
    Let $M=(V,\mathcal{D},C)$ be a CSP instance as above.
    There is a family of $\tf(nk^2)$ sub-instances
    $\{M_{v,w,h_x,h_y}\}$ as above.
    It holds that $M$ is satisfiable if and only if at least one of the sub-instances $M_{v,w,h_x,h_y}$ is satisfiable.
\end{lemma}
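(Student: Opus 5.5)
The plan has three parts: count the sub-instances, prove the easy direction by inclusion, and prove the hard direction by locating, for any satisfying assignment, a tree-node pair whose splits separate it.

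\textbf{Counting.} Every outer node $v$ of $\TT$ has an inner tree $T_v$ with $O(|P_v|)$ nodes. Summing over outer nodes gives $\sum_v O(|P_v|)=O(n\log n)$, since each level of the outer tree partitions $P$. There are at most $(k-1)^2$ choices of $(h_x,h_y)$. So the family has $\tf(nk^2)$ members.

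\textbf{Soundness ($\Longleftarrow$).} Each $M_{v,w,h_x,h_y}$ is obtained from $M$ only by narrowing domains. The two dropped constraints are implied by the narrowing: the separating lines $x=e_v$ and $y=f_w$ force $v_{h_x}<v_{h_x+1}$ and $\pi(v_{\sigma^{-1}(h_y)})<\pi(v_{\sigma^{-1}(h_y+1)})$. Hence any satisfying assignment of a sub-instance satisfies $M$.

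\textbf{Completeness ($\Longrightarrow$).} Fix a satisfying assignment $\phi$ of $M$, with points $p_i=(t_i,\pi(t_i))$, where $t_1<\cdots<t_k$ and the $y$-order follows $\sigma$. If $k=1$ there is nothing to separate; we assume $k\ge2$ and that the points are distinct.

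First choose the outer node $v$. Let $v$ be the lowest common ancestor in the outer tree of the leaves containing $t_1$ and $t_k$. Then all $t_i$ lie in $[a_v,b_v]$, since that interval contains $t_1$ and $t_k$ and is an interval. Because $v$ is the LCA, $t_1\le e_v<t_k$. Set $h_x=\max\{i: t_i\le e_v\}$, so $1\le h_x\le k-1$. By monotonicity of the $t_i$, every $i\le h_x$ has $t_i\le e_v$ and every $i>h_x$ has $t_i>e_v$.

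Next choose the inner node $w$. Inside $T_v$, all $p_i$ belong to $P_v$. Let $w$ be the LCA in $T_v$ of the leaves holding the lowest point $p_{\sigma^{-1}(1)}$ and the highest point $p_{\sigma^{-1}(k)}$. As before, all $p_i$ lie in $R_w$, and $\pi(t_{\sigma^{-1}(1)})\le f_w<\pi(t_{\sigma^{-1}(k)})$. Let $h_y$ be the largest rank $j$ with $\pi(t_{\sigma^{-1}(j)})\le f_w$. Since the values $\pi(t_{\sigma^{-1}(j)})$ increase with $j$, every $i$ with $\sigma(i)\le h_y$ lies at or below $f_w$ and every $i$ with $\sigma(i)>h_y$ lies above it.

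With these choices each $t_i$ survives all five narrowing rules, so $\phi$ satisfies $M_{v,w,h_x,h_y}$.

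\textbf{Main obstacle.} The delicate step is the inner-tree choice. The splits of $T_v$ are medians of the $y$-values of $P_v$, not of the whole range $[c_w,d_w]$, so we need that the LCA in $T_v$ actually separates the two extreme points. This holds because the leaves of $T_v$ are exactly the points of $P_v$. The two extreme points are distinct leaves, so their LCA is an internal node whose split threshold $f_w$ lies between their $y$-values.

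One must also check two convention details. First, the rule "$t\le e_v$" must match the children's ranges $[a_v,e_v]$ and $[e_v+1,b_v]$; it does. Second, the extreme points are distinct whenever $k\ge2$, which holds.
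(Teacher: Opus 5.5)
Your proof is correct and follows essentially the paper's argument: soundness because each sub-instance only narrows domains, and completeness by taking lowest common ancestors, in the outer tree to determine $(v,h_x)$ and in $T_v$ to determine $(w,h_y)$. Taking the LCA of just the two extreme leaves is equivalent to the paper's LCA of all $k$ leaves. Your count $\sum_v O(|P_v|)=O(n\log n)$ of the pairs $(v,w)$ is a slightly tighter form of the paper's $O(n\log^2 n)$ bound.
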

Clearly, any $T$ time algorithm for solving all sub-instances $M_{v,w,h_x,h_y}$ implies a $T$ time algorithm for solving $M$.

\begin{proof}[Proof of \Cref{lem:global-sep}]
    The proof proceeds in two parts.
    We first show that if a sub-instance is satisfiable then so is~$M$ (\emph{soundness}), and then show that if~$M$ is satisfiable then at least one sub-instance is satisfiable (\emph{completeness}).

    \paragraph{Soundness.}
    Every sub-instance $M_{v,w,h_x,h_y}$ is obtained from~$M$ by narrowing domains. Therefore any satisfying assignment to $M_{v,w,h_x,h_y}$ is also a satisfying assignment to~$M$.

    \paragraph{Completeness.}
    Suppose $M$ is satisfiable, and let $\phi$ be a satisfying assignment.
    We show that there exists a choice of $(v,w,h_x,h_y)$ such that $\phi$ satisfies $M_{v,w,h_x,h_y}$.
    We explain how given $\phi$ we can find such a choice of parameters, and then verify that $\phi$ indeed satisfies the corresponding sub-instance.

    First, we find $v$ and $h_x$.
    For each $i\in[k]$, let $u_i$ be the leaf of the outer tree containing the point $(\phi(i),\pi(\phi(i)))$.
    Let $v$ be the least common ancestor~(LCA) of $u_1,\ldots,u_k$ in the outer tree, and let $v_L,v_R$ be the two children of~$v$ with rectangles $[a_v,e_v]\times[1,n]$ and $[e_v+1,b_v]\times[1,n]$, respectively.
    Since $v$ is the LCA, not all leaves $u_1,\ldots,u_k$ lie in the same child's subtree.
    Let $h_x\in[k-1]$ be the unique index such that exactly $h_x$ of the leaves lie in the subtree of~$v_L$ and the remaining $k-h_x$ leaves lie in the subtree of~$v_R$.
    Since $\phi$ satisfies the $x$-order constraints, i.e., $\phi(1)<\phi(2)<\cdots<\phi(k)$, it must be the case that the leaves $u_1,\ldots,u_{h_x}$ lie in the subtree of~$v_L$ and $u_{h_x+1},\ldots,u_k$ lie in the subtree of~$v_R$, in other words, $\phi(i)\le e_v$ for $i\le h_x$ and $\phi(i)>e_v$ for $i>h_x$.

    Next, we find $w$ and $h_y$.
    All points $(\phi(i),\pi(\phi(i)))$ lie in~$R_v$, so they all appear in the inner tree~$T_v$ of~$v$.
    For each $i\in[k]$, let $\hat{u}_i$ be the leaf of~$T_v$ containing $(\phi(i),\pi(\phi(i)))$.
    Let $w$ be the LCA of $\hat{u}_1,\ldots,\hat{u}_k$ in~$T_v$, and let $w_B,w_T$ be the two children of~$w$ with rectangles $[a_v,b_v]\times[c_w,f_w]$ and $[a_v,b_v]\times[f_w+1,d_w]$, respectively.
    Since $w$ is the LCA, not all leaves $\hat{u}_1,\ldots,\hat{u}_k$ lie in the same child's subtree.
    Let $h_y\in[k-1]$ be the unique index such that exactly $h_y$ of the leaves lie in the subtree of~$w_B$ and the remaining $k-h_y$ leaves lie in the subtree of~$w_T$.
    Since $\phi$ satisfies the $y$-order constraints, i.e., $\pi(\phi(\si(1)))<\cdots<\pi(\phi(\si(k)))$, it must be the case that $\hat{u}_{r}$ lies in the subtree of~$w_B$ for $\sigma(r)\le h_y$ and in the subtree of~$w_T$ for $\sigma(r)>h_y$, in other words, $\pi(\phi(i))\le f_w$ when $\sigma(i)\le h_y$ and $\pi(\phi(i))>f_w$ when $\sigma(i)>h_y$.

    \medskip
    We now verify that $\phi$ satisfies $M_{v,w,h_x,h_y}$.
    Since all points $(\phi(i),\pi(\phi(i)))$ lie in~$R_w\subseteq R_v$, the narrowing $D_i\cap R_w$ still contains $\phi(i)$.
    The $x$-coordinate narrowings are respected since $\phi(i)\le e_v$ for $i\le h_x$ and $\phi(i)>e_v$ for $i>h_x$.
    The $y$-coordinate narrowings are respected since $\pi(\phi(i))\le f_w$ when $\sigma(i)\le h_y$ and $\pi(\phi(i))>f_w$ when $\sigma(i)>h_y$.
    The completeness follows.

    \paragraph{Number of sub-instances.}
    We count how many sub-instances $M_{v,w,h_x,h_y}$ we have, i.e., how many choices of parameters $(v,w,h_x,h_y)$ are there.
    There are $(k-1)^2$ choices for $h_x$ and $h_y$.
    We show that there are $\tf(n)$ choices for $v$ and $w$, resulting in $\tf(nk^2)$ sub-instances.

    The outer tree has $O(n)$ nodes, so there are $O(n)$ choices for $v$.
    For each outer node~$v$, the inner tree~$T_v$ has $|T_v|$ choices for $w$, where $|T_v|$ is the number of nodes in the inner tree~$T_v$.
    Note that $T_v$ has $|P_v|$ leaves, and therefore at most $|P_v|\log(n)$ internal nodes.
    On the other hand, each point $(t,\pi(t))$ is contained in $O(\log n)$ outer nodes of~$T_v$, since the depth of the outer tree is $O(\log n)$ and each point is contained in the slab of exactly one node at each level of the outer tree.
    We therefore get
    \begin{align*}
        \#(v,w) \text{ pairs}
        =\sum_v |T_v|
        \leq \sum_v |P_v|\log(n)
        \leq \log n\cdot \sum_v |P_v|
        \leq n\log^2 n\;.
    \end{align*}
    Therefore, the number of sub-instances is $O(n\log^2 n k^2)$ as claimed.
\end{proof}
We now analyze the total time needed to solve all sub-instances $M_{v,w,h_x,h_y}$.
Recall that each sub-instance $M_{v,w,h_x,h_y}$ is associated with an inner node~$w$ of the inner tree~$T_v$, and that $P_w$ denotes the set of points in the rectangle~$R_w$.
The following claim is the main application of \Cref{lem:global-sep}: it shows that if each sub-instance can be solved in time depending on~$|P_w|$, then $M$ can be solved efficiently.

\begin{claim}\label{claim:solving subinstances}
    Suppose that there is an algorithm to solve any sub-instance $M_{v,w,h_x,h_y}$ in $O(|P_w|^q)$ time for some $q\geq 1$. Then, $M$ can be solved in $\tf(n^q)$ time.
\end{claim}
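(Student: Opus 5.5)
By \Cref{lem:global-sep}, $M$ is satisfiable if and only if at least one sub-instance $M_{v,w,h_x,h_y}$ is satisfiable. The algorithm therefore enumerates all quadruples $(v,w,h_x,h_y)$, where $v$ ranges over outer nodes, $w$ over nodes of $T_v$, and $h_x,h_y\in[k-1]$. It solves each sub-instance with the assumed algorithm and returns the disjunction of the answers; any satisfying assignment found is also returned, since by soundness it satisfies $M$. Building each narrowed sub-instance only requires recording the rectangle $R_w$ together with the split values $e_v$ and $f_w$. This is $O(k)=O(1)$ work per quadruple, given the range tree built once in $\tf(n)$ time. Correctness is immediate from \Cref{lem:global-sep}, so the whole task is the running-time bound.

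The total time is $\tf(n)$ for preprocessing plus
\[
\sum_{v}\sum_{w\in T_v}(k-1)^2\, O(|P_w|^q).
\]
Since $k$ is constant, it suffices to bound $\sum_v\sum_{w\in T_v}|P_w|^q$.

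I would split this sum by depth in the inner tree. Fix $v$ and a level $\ell$ of $T_v$. The rectangles $R_w$ of the nodes at that level partition $R_v$, so $\sum_{w\text{ at level }\ell}|P_w|=|P_v|$. Because $q\ge 1$ and $|P_w|\le n$, we have $|P_w|^q\le n^{q-1}|P_w|$, and hence the level contributes at most $n^{q-1}|P_v|$. Summing over the $O(\log n)$ levels of $T_v$, which is balanced since splits are at medians, gives $O(n^{q-1}|P_v|\log n)$ for each $v$.

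Summing over $v$ uses $\sum_v |P_v|=O(n\log n)$: each level of the outer tree partitions $P$, and there are $O(\log n)$ levels. The total is therefore $O(n^{q-1}\cdot n\log^2 n)=\tf(n^q)$.

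The step needing the most care is the convexity bound $|P_w|^q\le n^{q-1}|P_w|$. This is where the hypothesis $q\ge 1$ is used; for $q<1$ the number of nodes would dominate, though the count of nodes is still $\tf(n)$. Also, if a sub-instance has an empty domain it is trivially rejected in $O(1)$ time, so there is no hidden additive cost per node beyond $O(1)$. Since the number of nodes is $\tf(n)\le\tf(n^q)$, that per-node overhead is absorbed.
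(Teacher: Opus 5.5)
Your proof is correct and follows the paper's argument: correctness comes from \Cref{lem:global-sep}, and the running time is bounded level by level over each inner tree $T_v$ and then summed over the outer tree. The one difference is minor. You bound $|P_w|^q\le n^{q-1}|P_w|$ (this is monotonicity of $x\mapsto x^{q-1}$ for $q\ge 1$ together with $|P_w|\le n$, not convexity) and then use $\sum_v|P_v|=O(n\log n)$. The paper instead applies superadditivity $a^q+b^q\le(a+b)^q$ (\Cref{fact:subtree-sum}) to both trees, and both routes give $O(n^q\log^2 n)$.
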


We use the following standard fact about balanced binary trees.
\begin{fact}\label{fact:subtree-sum}
    Let $T$ be a balanced binary tree with root~$r$ and depth $O(\log |L_r|)$, where for every node~$w$ we write $L_w$ for the set of leaves in the subtree rooted at~$w$.
    Then, for every $q \geq 1$,
    \[
        \sum_{w \in T} |L_w|^q \leq |L_r|^q \cdot O(\log |L_r|)\;.
    \]
\end{fact}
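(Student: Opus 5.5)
We will prove \Cref{fact:subtree-sum} by grouping the nodes of $T$ by depth. This avoids any fine analysis of the tree's shape. Let $d=O(\log|L_r|)$ denote the depth of $T$, and for $0\le j\le d$ let $N_j$ be the set of nodes at depth $j$.

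\textbf{Step 1: each level is disjoint.} For a fixed $j$, the subtrees rooted at distinct nodes of $N_j$ are disjoint, so their leaf sets $\{L_w\}_{w\in N_j}$ are pairwise disjoint subsets of $L_r$. Hence
\[
\sum_{w\in N_j}|L_w|\le |L_r|.
\]

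\textbf{Step 2: convexity bound per level.} Since $q\ge 1$ and $0\le |L_w|\le |L_r|$, we have $|L_w|^q=|L_w|\cdot|L_w|^{q-1}\le |L_w|\cdot|L_r|^{q-1}$. Summing over $w\in N_j$ and applying Step 1 gives
\[
\sum_{w\in N_j}|L_w|^q\le |L_r|^{q-1}\sum_{w\in N_j}|L_w|\le |L_r|^q.
\]
This is the superadditivity of $x\mapsto x^q$ on disjoint parts. It is the only point where $q\ge1$ is used.

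\textbf{Step 3: sum over levels.} Summing the per-level bound over the $d+1=O(\log|L_r|)$ levels gives
\[
\sum_{w\in T}|L_w|^q\le (d+1)\,|L_r|^q=|L_r|^q\cdot O(\log|L_r|),
\]
which is the claimed bound.

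There is no real obstacle here. The only subtlety is that we must use levels rather than arbitrary antichains, so that the number of groups is bounded by the depth; the balancedness hypothesis supplies exactly this bound. Degenerate nodes with $|L_w|=0$ contribute nothing, so they cause no issue.
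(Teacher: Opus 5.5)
Your proof is correct and follows the same route as the paper: bound $\sum_{w}|L_w|^q$ by $|L_r|^q$ at each depth, then multiply by the $O(\log|L_r|)$ levels. The only difference is in justifying the per-level bound. You use $|L_w|^q\le |L_w|\,|L_r|^{q-1}$ together with disjointness of the leaf sets, whereas the paper inducts on depth using superadditivity $a^q+b^q\le(a+b)^q$; these amount to the same fact.
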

\begin{proof}
    Since $q \geq 1$, the function $x \mapsto x^q$ is convex, so for any $a,b\geq 0$ we have $a^q + b^q \leq (a+b)^q$.
    It follows by induction that at each depth~$d$ of the tree, $\sum_{w \text{ at depth } d} |L_w|^q \leq |L_r|^q$.
    Summing over $O(\log |L_r|)$ depths gives the result.
\end{proof}

\begin{proof}[Proof of \Cref{claim:solving subinstances}]
    By \Cref{lem:global-sep}, it suffices to solve all sub-instances $M_{v,w,h_x,h_y}$ and check whether at least one is satisfiable.
    The total solving time is
    \[
        \sum_{(v,w,h_x,h_y)} O(|P_w|^q)
        \;\leq\; k^2 \cdot \sum_{v\in \TT}\;\sum_{w\in T_v} |P_w|^q\;,
    \]
    where the factor of~$k^2$ accounts for the $(k-1)^2$ choices of $(h_x,h_y)$.
    We bound the sum:
    \begin{align*}
        \sum_{v\in \TT}\sum_{w\in T_v} |P_w|^q
        \leq \sum_{v\in \TT} |P_v|^q \cdot \log n
        \leq \log n \cdot (\log n \cdot n^q)
        = O(n^q \log^2 n)
        = \tf(n^q)\;.
    \end{align*}
    The first inequality follows by applying \Cref{fact:subtree-sum} to the inner tree $T_v$ (whose root is $v$ and has $|P_v|$ points and leaves).
    The second inequality follows by applying \Cref{fact:subtree-sum} to the outer tree (whose root has $n$ points).
\end{proof}

\subsection{Putting Everything Together}
\label{ssec:computer assisted}
In this subsection we show how to approximate the number of embeddings of any \pppm instance with pattern $\sigma$ of length $6$:
\begin{theorem}
    \label{thm:partitioned6}
    There is a $\tf(n^{7/4})$ time algorithm for the \pppm problem for every pattern $\sigma$ of length $6$.
\end{theorem}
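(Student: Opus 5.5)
We reduce the length-$6$ \pppm instance to $\tf(n)$ length-$4$ multi-rectangle queries, each answered by \Cref{thm2:csp Gm} with $c=4$. Concretely, the preprocessing costs $\tf(n^{2-1/4})=\tf(n^{7/4})$ and each query costs $\tf(n^{3/4})$.

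Let $M$ be the CSP of \Cref{def:csp} with domains $D_i=P_i$ and full primal graph $G_\sigma$, the union of the two Hamiltonian paths $P_x,P_y$ on $[6]$. For each of the $720$ patterns $\sigma$ we choose one of two routes.

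\textbf{Route A (fix one point).} Choose an index $i$ and iterate over all $t\in P_i$, $O(n)$ choices in total. Fixing $v_i=t$ turns every constraint incident to $v_i$ into a half-plane narrowing of the neighbouring domain. The resulting $5$-variable instance $M_t$ is therefore a rectangular sub-instance, with primal graph $G_\sigma-i$.

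If $G_\sigma-i$ has a cut of size at most one splitting $V\setminus\{v_i\}$ into parts of sizes $\le 4$, we apply \Cref{lemma:P3}. This costs $O(\log n)$ oracle calls to each side, and each side is a rectangular sub-instance on at most $4$ variables. After degree-one peeling (\Cref{cor:iterated-degree-one-peeling}), each side's graph is a forest or a whiskered $\le4$-cycle. Forests are answered in $\tf(1)$ by \Cref{lemma:acyclic multigraphs}; the remaining cases are answered in $\tf(n^{3/4})$ by \Cref{thm2:csp Gm}. Summing over $t$ gives a total of $\tf(n\cdot n^{3/4})$.

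\textbf{Route B (global separators).} When no single fixed point leaves such a cut, we first apply \Cref{lem:global-sep}. It produces sub-instances $M_{v,w,h_x,h_y}$ in which one $x$-edge and one $y$-edge are deleted, and whose domains lie in $R_w$. We then apply Route A inside each sub-instance, iterating only over the points of $P_w$. The cost of a sub-instance is then $\tf(|P_w|\cdot n^{3/4})$; more carefully, if the oracles are built locally on $P_w$, it is $\tf(|P_w|^{7/4})$. \Cref{claim:solving subinstances} with $q=7/4$ then gives $\tf(n^{7/4})$ overall.

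To keep preprocessing within budget, we build the \Cref{thm2:csp Gm} structure at every node of the two-level range tree on its own point set $P_w$. By \Cref{fact:subtree-sum} this costs $\sum_w|P_w|^{7/4}=\tf(n^{7/4})$. Queries of a sub-instance then cost $|P_w|^{3/4}$ each, with $|P_w|$ of them per sub-instance. The freedom in choosing $(h_x,h_y)$ is existential, so every choice is enumerated, costing only a factor $k^2$.

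\textbf{Main obstacle: the case analysis over all $\sigma$.} For each $\sigma$ (up to the $8$ symmetries of the square) we must find a fixing index $i$, and if needed a separator route, such that every resulting piece has at most $4$ variables. Each piece's primal graph must also peel down to a forest or a single cycle of length $\le4$ with multiplicities, never to two cycles or a cycle of length $5$.

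This verification is finite, since the graphs have $6$ vertices and at most $10$ edges, and we would perform it by computer. For each $\sigma$ we would enumerate the choice of $i$, the cut, and whether to delete an $x$- and a $y$-edge, checking the tractability criterion on each side. The risk is that for some $\sigma$ the separator deletes edges that are not of our choosing, since $h_x,h_y$ range over all values. Tractability must therefore hold for every pair $(h_x,h_y)$, not just one.

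We expect this uniformity requirement to be the delicate step. If it fails for some $\sigma$, we would first try fixing a different vertex, chosen after the separator step for each $(h_x,h_y)$ separately.

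Finally, \Cref{thm:approx counting6} follows by combining \cppm$\le 6!\cdot$\pppm with the reduction of \cite{DellLM22}.
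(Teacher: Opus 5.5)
Your proposal matches the paper's proof. The paper applies global separators (\Cref{lem:global-sep}), fixes one vertex $u$ in each sub-instance $M_{v,w,h_x,h_y}$, and solves each residual five-variable instance in $\tf(m^{3/4})$. It does this through a size-one cut (\Cref{lemma:P3}) together with the $c\le 4$ cycle data structure built on $P_w$ in $\tf(m^{7/4})$ time, and then sums the costs via \Cref{claim:solving subinstances} with $q=7/4$. The finite case analysis you leave to a computer, with $u$ allowed to depend on the removed $x$- and $y$-edge, is exactly the paper's computer-assisted \Cref{claim:csp graph} that every $G_\sigma$ is ``almost nice''. So your fallback is precisely what the paper verifies, and the separator-free Route A is never needed there.
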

By combining \Cref{thm:partitioned6} with the black box reduction of \cite{DellL21,DellLM22}, we get \Cref{thm:approx counting6}.
Before proving \Cref{thm:partitioned6}, we need one additional statement on the primal graph of the CSP instances obtained after applying the global separators.
\begin{restatable}[Nice Graph]{definition}{DefNiceGraph}
    \label{def:nice graph}
    Let $\Lambda=(\pi,\sigma,(P_i)_{i\in[k]},(R_i)_{i\in[k]})$ be a \rppm instance with $k\leq 5$, and let $M=(V,\mathcal{D},C)$ be the corresponding CSP instance with primal graph $G_M$.
    Let $G_M'$ be the graph obtained from $G_M$ by repeatedly removing any vertex of degree at most one, where degrees count edge multiplicities.
    We say that $G_M$ is \emph{nice} if the underlying simple graph of $G_M'$ is a forest, or if the variables of $G_M'$ can be partitioned into two sets $V_1$ and $V_2$, with the following properties.
    \begin{enumerate*}[label=(\roman*)]
        \item $|V_1|\geq 3 > |V_2|$,
        \item The cut between $V_1$ and $V_2$ has size at most $1$ counting edge multiplicities,
        \item The underlying simple graph of $G_M'[V_1]$ is a subgraph of a $c$-cycle for some $c\leq 4$.
    \end{enumerate*}
\end{restatable}

\begin{restatable}[Almost Nice Graph]{definition}{DefAlmostNiceGraph}
    \label{def:almost nice graph}
    Let $\Lambda=(\pi,\sigma,(P_i)_{i\in[6]},(R_i)_{i\in[6]})$ be a \rppm instance, and let $M=(V,\mathcal{D},C)$ be the corresponding CSP instance with primal graph $G_M$.
    We say that $G_M$ is \emph{almost nice} if, after removing any $x$-edge and any $y$-edge from $G_M$, there exists a vertex $v\in V$ such that deleting $v$ and all its incident edges leaves a nice graph.
\end{restatable}
\begin{proposition}\label{claim:nice graph solve}
    There is a data structure that takes as input a permutation $\pi$ of length $n$ and the color classes $(P_i)_{i\in[k]}$, where $k\leq 5$.
    After $\tf(n^{7/4})$ preprocessing time, the data structure solves any \rppm instance $\Lambda=(\pi,\sigma,(P_i)_{i\in[k]},(R_i)_{i\in[k]})$ whose primal graph $G_M$ is nice, in $\tf(n^{3/4})$ time.
    In particular, any such instance can be solved in $\tf(n^{7/4})$ time.
\end{proposition}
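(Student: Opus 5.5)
We build the structure from pieces already in the paper. The preprocessing runs \Cref{thm3:csp Gm} with $c=4$ and $k=5$. This costs $\tf(n^{2-1/4})=\tf(n^{7/4})$, and afterwards any \rppm query whose primal graph is a subgraph of a $4$-cycle (with multiplicities) is answered in $\tf(n^{3/4})$ time. We also add the $\tf(n)$-preprocessing oracles of \Cref{lemma:acyclic multigraphs} for acyclic multigraph instances, together with the range trees used in \Cref{claim:peeling-degree-one}. One point needs care. We need the $4$-cycle structure also on sub-patterns of length at most $4$ induced on subsets of variables, with color classes given. So we instantiate these structures for every subset $V_1$ of the at most $5$ variables and every induced sub-pattern. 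That is $O(1)$ instances, each on the color-restricted point sets.

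Given a query $\Lambda$ with nice primal graph $G_M$, we first apply \Cref{cor:iterated-degree-one-peeling} in $\tf(1)$ time. This yields a rectangular sub-instance $M^\circ$ with primal graph $G_M'$, and any solution of it extends to one of $M$. If the underlying simple graph of $G_M'$ is a forest, \Cref{lemma:acyclic multigraphs} answers in $\tf(1)$ time. Otherwise we take the partition $(V_1,V_2)$ from \Cref{def:nice graph}; the cut $C_{1,2}$ has size at most $1$. We invoke \Cref{lemma:P3} with oracles $O_1$ for $M_1=M^\circ[V_1]$ and $O_2$ for $M_2=M^\circ[V_2]$; this makes $O(\log n)$ calls to each.

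Oracle $O_2$ is cheap. Since $|V_2|\le 2$, the primal graph of $M_2$ is a multigraph on at most two vertices, which is acyclic. Hence \Cref{lemma:acyclic multigraphs} answers any rectangular sub-instance of it in $\tf(1)$ time.

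Oracle $O_1$ carries the cost. $G_M'[V_1]$ has a simple graph that is a subgraph of a $c$-cycle with $c\le 4$, so each rectangular sub-instance of $M_1$ is an \rppm query on the pattern induced on $V_1$, with rectangles and color classes. Its primal graph is a subgraph of a $c$-cycle, possibly with multiplicities. By \Cref{thm3:csp Gm} each such query costs $\tf(n^{1-1/c})\le\tf(n^{3/4})$. Multiplying by the $O(\log n)$ calls gives query time $\tf(n^{3/4})$.

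The main obstacle is justifying that a sub-instance on $V_1$ really is a valid query to the prebuilt structure. Its constraints must be exactly the constraints of the induced pattern, restricted to a subgraph of the cycle. Two sources of discrepancy need checking. First, some edges of $G_\sigma$ were dropped as redundant; dropping constraints only yields a subgraph, which \Cref{thm3:csp Gm} allows, but we must confirm that this keeps equivalence. It does, because the dropped edges are redundant under the current domains, and rectangular narrowing preserves that redundancy. Second, transitive constraints between $V_1$-variables that pass through $V_2$ are not induced edges. These are restored correctly by \Cref{lemma:P3}, since it handles the single crossing edge explicitly. The domain restriction it imposes is a halfplane, so the result stays a rectangular sub-instance, which is what \Cref{thm3:csp Gm} accepts.

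For the final sentence, "in particular" follows by running the preprocessing and then one query, for $\tf(n^{7/4})+\tf(n^{3/4})=\tf(n^{7/4})$ total.
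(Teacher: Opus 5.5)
Your proposal follows the paper's proof essentially step for step. You peel degree-$\le 1$ variables via \Cref{cor:iterated-degree-one-peeling} and settle the forest case with \Cref{lemma:acyclic multigraphs}; otherwise you combine, through \Cref{lemma:P3}, a $\tf(n^{3/4})$ oracle for $V_1$ from \Cref{thm3:csp Gm} with a $\tf(1)$ acyclic oracle for $V_2$.

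The one slip is that you instantiate \Cref{thm3:csp Gm} only with $c=4$, yet later invoke it for $c\le 4$. A triangle is not a subgraph of a $4$-cycle, and it does occur (e.g.\ when $|V_1|=3$). As the paper does, you should also build the $c=3$ structure, which costs only an extra $\tf(n^{5/3})$.
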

\begin{proof}[Proof of \Cref{claim:nice graph solve}]
    \paragraph{Preprocessing.}
    We build the data structure of \Cref{thm3:csp Gm} with parameters $(c,k)$, once for $c\in\set{2,3,4}$ and $k=5$.
    For $c=2$, the data structure is the one of \Cref{lemma:acyclic multigraphs}, which solves any \rppm instance whose primal graph is a forest (with edge multiplicities) after $\tf(n)$ preprocessing time, and answers any query in $\tf(1)$ time.
    For $c\in\set{3,4}$, we use the data structure of \Cref{thm3:csp Gm}.
    This takes $\tf(n^{2-1/2}+n^{2-1/3}+n^{2-1/4})=\tf(n^{7/4})$ time for the preprocessing, and then we can answer any \rppm query whose underlying simple primal graph is a subgraph of a $c$-cycle in $\tf(n^{1-1/c})\leq \tf(n^{3/4})$ time.

    \paragraph{Answering a Query.}
    Let $\Lambda$ be a \rppm instance whose primal graph $G_M$ is nice, and let $M=(V,\mathcal{D},C)$ be the corresponding CSP instance.
    We can reduce $M$ into a new instance $M'$, where every variable of degree at most one is fixed, while having the same satisfiability as $M$.
    that is, $M$ is satisfiable if and only if $M'$ is satisfiable.
    If the underlying simple graph of $G_{M'}$ is a forest, then \Cref{lemma:acyclic multigraphs} solves $M'$ in $\tf(1)$ time.
    Otherwise, since $G_M$ is nice, the primal graph $G_{M'}$ can be partitioned into two sets $V_1$ and $V_2$, with the above properties.
    Let $M_1'$ be the sub-instance of $M'$ induced by $V_1$, and let $M_2'$ be the sub-instance of $M'$ induced by $V_2$.
    We have an oracle $O_1$ for $M_1'$ that answers any rectangular-sub-instance of $M_1'$ in $\tf(n^{3/4})$ time, and an oracle $O_2$ for $M_2'$ that answers any rectangular-sub-instance of $M_2'$ in $\tf(1)$ time.
    We can then apply \Cref{lemma:P3} to combine the two oracles and solve $M'$ in $\tf(n^{3/4})$ time, which is the desired running time.
\end{proof}

\begin{restatable}[Computer-Assisted]{claim}{claimCA}\label{claim:csp graph}
    Let $\sigma$ be a pattern of length $6$, and let $G_\sigma$ be its full primal graph. Then $G_\sigma$ is almost nice.
\end{restatable}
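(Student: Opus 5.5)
The plan is a finite, exhaustive case check over all $6!=720$ patterns $\sigma$ of length $6$, carried out by a computer program that follows \Cref{def:almost nice graph} and \Cref{def:nice graph} literally. I would first use symmetry to shrink the work. Reversing the pattern, complementing it, and inverting it (swapping the roles of $P_x$ and $P_y$) all map $G_\sigma$ to an isomorphic multigraph. They also map $x$-edges to $x$-edges or to $y$-edges, and they preserve the niceness conditions, which depend only on multigraph structure and edge multiplicities. So it suffices to check one representative from each orbit of the dihedral-type group of order $8$ acting on $S_6$. This is a convenience, not a necessity: $720$ patterns are trivially checkable directly.

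For a fixed $\sigma$, the program builds $G_\sigma$ as the multiset union of $P_x=(1,\ldots,6)$ and $P_y=(\sigma^{-1}(1),\ldots,\sigma^{-1}(6))$. It then iterates over all $5\cdot 5=25$ choices of one $x$-edge $e_x\in P_x$ and one $y$-edge $e_y\in P_y$. When an edge lies in both paths, removing it as an $x$-edge removes only one copy. For each resulting multigraph $G'$ it searches for a vertex $v$ such that $G'-v$ is nice. To test niceness of a $5$-vertex multigraph $H$, the program does the following.
\begin{enumerate*}[label=(\alph*)]
\item Repeatedly delete vertices of degree at most one, counting multiplicity, to obtain $H'$.
\item Accept if the underlying simple graph of $H'$ is a forest.
\item Otherwise, enumerate all bipartitions $(V_1,V_2)$ of $V(H')$ with $|V_1|\ge 3>|V_2|$. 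Accept if the crossing multiplicity is at most $1$ and the simple graph underlying $H'[V_1]$ embeds as a subgraph of a cycle of length at most $4$, meaning it has at most $4$ vertices, maximum degree at most $2$, and is a cycle or a disjoint union of paths.
\end{enumerate*}
Since $|V_1|\le 5$, requiring a subgraph of a $c$-cycle with $c\le4$ forces $|V_1|\in\{3,4\}$ whenever the cycle case is used. Everything here is constant-size, so the whole search is at most $720\cdot 25\cdot 6\cdot 2^5$ elementary checks.

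The claim holds exactly when every $(\sigma,e_x,e_y)$ admits such a $v$. The program would output a certificate table listing, for each triple, the chosen $v$ and either ``forest'' or the partition $(V_1,V_2)$, so the reader can verify any entry by hand. As a sanity check, I would confirm the paper's worked example $\sigma=124356$. Fixing $v_3$ there leaves a single edge between $\{v_1,v_2\}$ and $\{v_4,v_5,v_6\}$, which is a cut of size one with $|V_1|=3$, and $G[V_1]$ should lie inside a short cycle.

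The main obstacle is not computational. It is making sure the formalization matches how \Cref{claim:nice graph solve} and \Cref{lem:global-sep} actually use the claim. Two points need care. First, the quantifier: ``after removing any $x$-edge and any $y$-edge'' must be read as universal over $(e_x,e_y)$, because the global separator fixes the removed pair $(h_x,h_y)$ from the unknown witness rather than letting us choose it. Second, the peeling step and the cut must count multiplicities consistently, for example when a shared edge has one copy removed. If some pattern fails under the strict reading, I would first double-check the handling of these multiplicities. Next I would consider allowing the deletion of $v$ before or after peeling, or placing $V_1$ on either side of the cut. Only then would I conclude that the claim needs strengthening.
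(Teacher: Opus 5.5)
Your plan is correct and is essentially the paper's proof. Both are an exhaustive check, over every length-$6$ pattern $\sigma$ and every one of the $5\times 5$ choices of a removed $x$-edge and $y$-edge, that some single-vertex deletion leaves a graph satisfying \Cref{def:nice graph} read literally (universally over the removed pair, with multiplicities counted), which is what the paper's computer-assisted verification does.

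As an aside, the search can even be done by hand. After the two removals $8$ edges remain and every degree is at most $4$, so deleting a degree-$4$ vertex leaves at most $4$ edges, and any such graph is nice. If the maximum degree is $3$, deleting any degree-$3$ vertex leaves $5$ edges, whose core is non-nice only if it is $C_5$ or $K_4$ minus an edge (the only other non-nice $5$-edge core needs a degree-$4$ vertex), and in those two configurations deleting a different, suitably chosen vertex works.
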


We prove \Cref{claim:csp graph} using a computer-assisted case analysis, which is described in full detail in the appendix. The case-analysis code can be found here: \href{https://github.com/TomerEven/PPM}{https://github.com/TomerEven/PPM}.

\begin{proof}[Proof of \Cref{thm:partitioned6}]
    The input is a permutation $\pi$ of length $n$, which defines the point set $P=\{(i,\pi(i))\mid i\in[n]\}$, a partition $P_1,\ldots,P_k$ of the point set $P$, and a pattern $\sigma$ of length $k=6$.

    \paragraph{Preprocessing.}
    We build the two-dimensional range tree $\TT$ used by the global separator construction.

    \paragraph{The Algorithm.}
    Let $M=(V,\mathcal{D},C)$ be the CSP instance corresponding to the \srppm problem we need to solve.
    The first step of the algorithm is to apply the global separators to get new CSP instances $M'=(V,\mathcal{D}',C')$ by removing pairs of constraints from $M$.
    Each new instance $M'$ is encoded using a quadruple $(v,w,h_x,h_y)$, where
    $v$ is an outer vertex in the range tree $\TT$, $w$ is an inner vertex in the tree $T_v$ of rooted by $v$, and $h_x,h_y\in\set{1,2,\ldots, k-1}$.

    Recall that by \Cref{lem:global-sep}, $M$ is satisfiable if and only if at least one of the instances $M'$ is satisfiable.
    By \Cref{claim:solving subinstances}, it remains to solve each instance $M_{v,w,h_x,h_y}$ in $\tf(|P_w|^q)$ time, for some $q\geq 1$.

    \paragraph{Solving Each Instance $M'=M_{v,w,h_x,h_y}$.}
    Put $m=|P_w|$.
    By \Cref{claim:csp graph}, there is a vertex $u$ such that deleting $u$ from $G_{M'}$ leaves a nice graph.
    We build the data structure of \Cref{claim:nice graph solve} on $P_w$ and the remaining five color classes, in $\tf(m^{7/4})$ time.
    Then we iterate over all $t\in D_u$, fix $u=t$, narrow the domains of its neighbors according to the incident constraints, and delete $u$.
    Each residual instance is nice, and is solved in $\tf(m^{3/4})$ time.
    Since $|D_u|\leq m$, this solves $M'$ in $\tf(m^{7/4})+\tf(m\cdot m^{3/4})=\tf(m^{7/4})$ time.
    Applying \Cref{claim:solving subinstances} with $q=7/4$ completes the proof.
\end{proof}

\begin{remark}
    Obtaining an $\tf(n^p)$ preprocessing time and $\tf(n^q)$ query time for solving \rppm instances for patterns of length $5$ would result in an $\tf(n^{p}+n^{q+1})$ time algorithm for \pppm for patterns of length $7$, as well as approximate counting for patterns of length $7$, by the same reduction of \cite{DellL21,DellLM22}.
\end{remark}

\phantomsection
\addcontentsline{toc}{section}{Acknowledgments}
\section*{Acknowledgments}
Omri Ben-Eliezer is supported by an Alon Scholarship and by a Taub Family Foundation ``Leaders in Science \& Technology'' Fellowship.

\phantomsection
\addcontentsline{toc}{section}{AI/LLM Disclosure}
\section*{AI/LLM Disclosure}
The authors made light use of several LLMs during the course of the project, including the latest versions of ChatGPT, Claude, and Gemini. LLMs were used for $(i)$ minor editorial purposes, such as polishing the text where needed; $(ii)$ literature review including looking up known reductions and techniques from the literature; and $(iii)$ spotting minor inaccuracies in the written draft and suggesting corrections. The authors proved all of the main results in the paper, and take full responsibility for all content.

\bibliographystyle{alpha}
\bibliography{refs}

\appendix

\section{Missing Proofs}
In this section we provide the missing proofs for several claims from the main body.

\subsection{Proof of \Cref{claim:17}}
We restate the claim here for convenience:
\ClaimSev* %

Recall that a triplet of rectangles $(R_1,R_2,R_3)$ is called well-aligned if for every pair of rectangles, either they share the same $x$-projection or they have disjoint $x$-projections, and the same holds for the $y$-projections.
The proof shows that three well-aligned rectangles, that do not all share the same $x$-projection or the same $y$-projection, can be solved in $\tf(1)$ time after $\tf(n)$ preprocessing time.
We prove this by showing that the primal graph of the CSP instance defined by the triplet is acyclic, and therefore we can solve it in $\tf(1)$ time by \Cref{lemma:acyclic multigraphs}.
\begin{proof}
    Consider the CSP $M=(V,\mathcal{D},C)$ defined by the triplet of rectangles $(R_1,R_2,R_3)$, where $V=\{v_1,v_2,v_3\}$, $\mathcal{D}=(R_1\cap P,R_2\cap P,R_3\cap P)$, and $C$ contains the constraints $v_1<v_2<v_3$ and $\pi(v_1)<\pi(v_3)<\pi(v_2)$.
    Let $G_M$ be the primal graph of $M$, and $G_{\sigma}$ be the full primal graph of $\sigma=132$.
    The primal graph $G_\sigma$ is a triangle with the edge $(v_2,v_3)$ having multiplicity two.
    Since the rectangles are well-aligned, some of the constraints might hold trivially, and therefore $G_M$ might be a proper subgraph of $G_\sigma$.
    Recall that if $G_M$ is acyclic, then we can solve $M$ in $\tf(1)$ time by \Cref{lemma:acyclic multigraphs}.

    Note that if the edges $(v_1,v_2)$ or $(v_1,v_3)$ are not in $G_M$, then $G_M$ is acyclic.
    \begin{itemize}
        \item
              The edge $(v_1,v_2)$, matching the constraint $v_1<v_2$, is in $G$ if and only if the $x$-projections of $R_1$ and $R_2$ are not disjoint.
        \item
              The edge $(v_1,v_3)$, matching the constraint $\pi(v_1)<\pi(v_3)$, is in $G$ if and only if the $y$-projections of $R_1$ and $R_3$ are not disjoint.
    \end{itemize}
    But if both of these constraints hold, then $R_2$ and $R_3$ have disjoint $x$-projections and disjoint $y$-projections, and therefore the edge $(v_2,v_3)$ is not in $G_M$, and $G_M$ is acyclic.
\end{proof}

\subsection{Proof of \Cref{lemma:monotone patterns} (Monotone Patterns)}
Again, we restate the claim for convenience.

\LemmaMono*

\begin{proof}
    We consider $\sigma=12\cdots k$; the decreasing case is symmetric.
    We prove, by induction on $k$, that after a shared preprocessing of the point set there is a data structure answering arbitrary multi-rectangle queries $\ind_{12\cdots k}(R_1,\ldots,R_k)$ with the above preprocessing and query times.

    \emph{Preprocessing.}
    Build the two-dimensional range tree of \Cref{ssec:range-trees} with its canonical family $\RR$; for every $R\in\RR$ precompute $\ell(R)=\min(k,\operatorname{LIS}(P\cap R))$, the length of a longest increasing subsequence inside $R$ capped at $k$, in $O(|P\cap R|\log n)$ time per rectangle, for a total of $\tf(n)$ time; and recursively build the data structures of the induction hypothesis for all lengths $k'<k$ (the same shared trees are reused, so the preprocessing time sums to $2^{O(k^2)}O(n\log^{3k} n)$).

    \emph{Query.}
    Given $(R_1,\ldots,R_k)$, first apply the well-alignment refinement of \Cref{ssec:range-trees} and then decompose each rectangle into canonical rectangles, reducing the query, at a multiplicative cost of $O(\log^{2k}n)$, to queries $(R_1',\ldots,R_k')$ in which the rectangles are canonical and pairwise well-aligned.
    Fix such a tuple.
    If $R_1'=\cdots=R_k'=R$, the answer is the precomputed bit $[\ell(R)\ge k]$.
    Otherwise, let $i$ be the first index with $R_i'\neq R_{i+1}'$.
    By well-alignment, the $x$-projections of $R_i'$ and $R_{i+1}'$ are identical or disjoint, and likewise the $y$-projections, and not both are identical.
    If some projection is disjoint in the \emph{wrong} order (the projection of $R_{i+1}'$ precedes that of $R_i'$), the corresponding order constraint is infeasible and the answer is~$0$.
    Otherwise a disjoint projection in the right order makes one of the two order constraints between $v_i$ and $v_{i+1}$ automatically satisfied, so the primal graph (a doubled path) loses one of the two parallel edges between $v_i$ and $v_{i+1}$ and acquires a cut of size one separating $\{v_1,\ldots,v_i\}$ from $\{v_{i+1},\ldots,v_k\}$.
    Apply \Cref{lemma:P3}: the two sides are monotone instances of lengths $i$ and $k-i$, and the required oracles are supplied by the induction hypothesis (every rectangular narrowing of a monotone instance is again a monotone instance).
    This costs $O(\log n)\cdot\big(Q(i)+Q(k-i)\big)$ per tuple.
    We thus get $Q(k)\le O(\log^{2k}n)\cdot O(\log n)\cdot 2\,Q(k-1) \leq (O(\log n))^{3k} Q(k-1)$. Unrolling the recursion gives $Q(k) = (O(\log n))^{3k^2}$.
\end{proof}

\subsection{Computer-assisted proof of \Cref{claim:csp graph}}
\sloppy{
In this subsection we describe the computer-assisted proof of \Cref{claim:csp graph}. The code is available at \href{https://github.com/TomerEven/PPM}{https://github.com/TomerEven/PPM}.}
We first restate the definitions of nice and almost nice graphs and the claim itself, then give a representative example showing how the algorithm uses global separators, vertex fixing, degree-one peeling, and the nice-graph solver, and finally describe the exhaustive verification.
The example illustrates the simplest outcome; the general search is more involved because a non-forest core may remain after the deletion and peeling steps.
\DefNiceGraph* %
\DefAlmostNiceGraph* %
\claimCA*

We first provide an example of the statement.
\paragraph{Example: The Pattern $\sigma=351624$.}
The full primal graph $G_\sigma$ of the pattern is drawn in \Cref{fig:351624_1}: the $x$-edges (red) form the path $P_x=(1,2,\ldots,6)$, connecting consecutive positions, and the $y$-edges (blue) form the path $P_y=(\si(1),\ldots,\si(6))=(3,5,1,6,2,4)$, connecting positions whose values are consecutive.
\begin{figure}[ht!]\centering
    \begin{tikzpicture}[scale=0.9,
            every node/.style={draw,circle,fill=white,font=\small},
            xedge/.style={blue},
            yedge/.style={red}
        ]
        \draw[step=1cm, very thin, gray!25] (0,0) grid (7,7);
        \draw[step=0.5cm, very thin, gray!15] (0,0) grid (7,7);
        \draw[thick, gray!50] (0,0) rectangle (7,7);

        \node (v1) at (1,3) {$1,3$};
        \node (v2) at (2,5) {$2,5$};
        \node (v3) at (3,1) {$3,1$};
        \node (v4) at (4,6) {$4,6$};
        \node (v5) at (5,2) {$5,2$};
        \node (v6) at (6,4) {$6,4$};

        \draw[red] (v1) -- (v2);
        \draw[red] (v2) -- (v3);
        \draw[red] (v3) -- (v4);
        \draw[red] (v4) -- (v5);
        \draw[red] (v5) -- (v6);

        \draw[blue] (v3) -- (v5);
        \draw[blue] (v1) -- (v5);
        \draw[blue] (v1) -- (v6);
        \draw[blue] (v2) -- (v6);
        \draw[blue] (v2) -- (v4);
    \end{tikzpicture}
    \caption{The full primal graph of $\sigma=351624$. Red edges are the $x$-edges, forming the path $P_x=(1,2,\ldots,6)$; blue edges are the $y$-edges, forming the path $P_y=(3,5,1,6,2,4)$.}\label{fig:351624_1}
\end{figure}
Recall that the global separators of \Cref{lem:global-sep} replace the input instance by $\tf(n)$ sub-instances.
In each sub-instance the domains are split by one vertical and one horizontal line, which removes one $x$-order constraint and one $y$-order constraint.
Suppose, for example, that the removed edges are the $x$-edge $(1,2)$ and the $y$-edge $(\si(5),\si(6))=(2,4)$, as in \Cref{fig:351624_2}.

\begin{figure}[ht!]\centering
    \begin{tikzpicture}[scale=0.9,
            every node/.style={draw,circle,fill=white,font=\small},
            xedge/.style={blue},
            yedge/.style={red}
        ]
        \draw[step=1cm, very thin, gray!25] (0,0) grid (7,7);
        \draw[step=0.5cm, very thin, gray!15] (0,0) grid (7,7);
        \draw[thick, gray!50] (0,0) rectangle (7,7);

        \node (v1) at (1,3) {$1,3$};
        \node (v2) at (2,5) {$2,5$};
        \node (v3) at (3,1) {$3,1$};
        \node (v4) at (4,6) {$4,6$};
        \node (v5) at (5,2) {$5,2$};
        \node (v6) at (6,4) {$6,4$};

        \draw[red, opacity=0.2] (v1) -- (v2);
        \draw[red] (v2) -- (v3);
        \draw[red] (v3) -- (v4);
        \draw[red] (v4) -- (v5);
        \draw[red] (v5) -- (v6);

        \draw[blue] (v3) -- (v5);
        \draw[blue] (v1) -- (v5);
        \draw[blue] (v1) -- (v6);
        \draw[blue] (v2) -- (v6);
        \draw[blue, opacity=0.2] (v2) -- (v4);

        \draw[red, dashed, opacity=0.5] (1.5,0) -- (1.5,7);
        \draw[blue, dashed, opacity=0.5] (0,5.5) -- (7,5.5);
    \end{tikzpicture}
    \caption{The primal graph after removing the $x$-edge $(1,2)$ and the $y$-edge $(\si(5),\si(6))=(2,4)$; the removed edges are faded. The dashed lines are the vertical and horizontal separating lines that make the two removed constraints redundant.}\label{fig:351624_2}
\end{figure}

It remains to solve each sub-instance in $\tf(n^{7/4})$ time.
For this, we reduce the instance to several residual instances on five vertices, and solve each residual instance in $\tf(n^{3/4})$ time.
This is done by deleting a single vertex, or equivalently, by fixing the assignment of a single variable and solving the residual instance on the remaining variables.
Informally, we fix a vertex whose removal leaves a graph that our tools can solve.
The existence of such a vertex is exactly what we verify using the computer-assisted proof of \Cref{claim:csp graph}.
Here $v_5$ is such a vertex.
We iterate over the at most $n$ possible assignments of $v_5$; fixing $v_5$ turns every constraint incident to it into a narrowing of the domain of the other endpoint. Thus, each such choice results in a new instance with the primal graph shown in \Cref{fig:351624_3}.

For this example, the peeling step of \Cref{claim:peeling-degree-one} removes all remaining vertices.
Hence each residual instance is decided by \Cref{lemma:acyclic multigraphs}, and we continue over the assignments of $v_5$ until we either find a solution or exhaust all possibilities.

\begin{figure}\centering
    \begin{tikzpicture}[scale=0.9,
            every node/.style={draw,circle,fill=white,font=\small},
            nodeFix/.style={draw,circle,fill=white,font=\small,ultra thick},
            xedge/.style={blue},
            yedge/.style={red}
        ]
        \draw[step=1cm, very thin, gray!25] (0,0) grid (7,7);
        \draw[step=0.5cm, very thin, gray!15] (0,0) grid (7,7);
        \draw[thick, gray!50] (0,0) rectangle (7,7);

        \node (v1) at (1,3) {$1,3$};
        \node (v2) at (2,5) {$2,5$};
        \node (v3) at (3,1) {$3,1$};
        \node (v4) at (4,6) {$4,6$};
        \node[nodeFix] (v5) at (5,2) {$5,2$};
        \node (v6) at (6,4) {$6,4$};

        \draw[red, opacity=0.2] (v1) -- (v2);
        \draw[red] (v2) -- (v3);
        \draw[red] (v3) -- (v4);
        \draw[red, dashed, opacity=0.2] (v4) -- (v5);
        \draw[red, dashed, opacity=0.2] (v5) -- (v6);

        \draw[blue, dashed, opacity=0.2] (v3) -- (v5);
        \draw[blue, dashed, opacity=0.2] (v1) -- (v5);
        \draw[blue] (v1) -- (v6);
        \draw[blue] (v2) -- (v6);
        \draw[blue, opacity=0.2] (v2) -- (v4);

        \draw[red, dashed, opacity=0.5] (1.5,0) -- (1.5,7);
        \draw[blue, dashed, opacity=0.5] (0,5.5) -- (7,5.5);
    \end{tikzpicture}
    \caption{The primal graph after fixing the variable $v_5$ to a specific value in its domain; its incident edges (faded) become domain narrowings of its neighbors and are removed. The remaining graph is the path $(4,3,2,6,1)$.}\label{fig:351624_3}
\end{figure}

\paragraph{The General Verification.}
The example above is simple: after fixing one vertex and applying the peeling step, no cyclic core remains.
In the general case, more tools are used:
\begin{enumerate}
    \item Peeling degree-one vertices.
    \item Partitioning the variables into two sets with cut size at most one between then, where each part has primal graph that is a subgraph of a cycle of length at most four.
    \item Checking if the graph is acyclic (with edge multiplicities).
\end{enumerate}
Each one of these cases can be solved in the desired running time.

\end{document}